\documentclass[runningheads]{llncs}

\usepackage[T1]{fontenc}

\usepackage{graphicx}

\input xy 
\xyoption{all}
\usepackage{array}
\usepackage{adjustbox}
\usepackage{tikz-cd}
\usepackage{tikz}
\usetikzlibrary{cd,arrows}
\usepackage{mathpartir}
\usepackage{caption}
\usepackage{subcaption}
\usepackage{amsfonts}
\usepackage{amsmath}
\usepackage[renew-matrix,renew-dots]{nicematrix}
\usepackage{float}
\usepackage{enumitem}
\usepackage{xcolor}
\usepackage{booktabs}
\usepackage{rotating}
\usepackage{multirow}
\usepackage{comment}
\usepackage{graphicx}
\usepackage{braket}
\usepackage{enumitem}
\usepackage{xparse}
\usepackage{xstring}
\usepackage{mathtools}
\usepackage{xifthen}
\usepackage{makecell}
\usepackage{txfonts}
\usepackage{hyperref}
\usepackage{cleveref}
\hypersetup{
    colorlinks,
    linkcolor={black!50!black},
    citecolor={blue!50!black},
    urlcolor={blue!80!black}
}
\usepackage{mathtools}
\usepackage{changepage}

\usepackage{quiver}

\usepackage{todonotes}

\usepackage{tabularx}

\usepackage{wrapfig}
\input{macros}

\begin{document}

\title{Tapes as Stochastic Matrices of String Diagrams} %TODO Please add

\author{Filippo Bonchi\inst{} \and Cipriano Junior Cioffo\inst{} \thanks{This research was partly funded by the Advanced Research + Invention Agency (ARIA) Safeguarded AI Programme. Bonchi is supported by the Ministero dell'Università e della Ricerca of Italy grant PRIN 2022 PNRR No. P2022HXNSC - RAP (Resource Awareness in Programming). This study was carried out within the National Centre on HPC, Big Data and Quantum Computing - SPOKE 10 (Quantum Computing) and received funding from the European Union Next-GenerationEU - National Recovery and Resilience Plan (NRRP) – MISSION 4 COMPONENT 2, INVESTMENT N. 1.4 – CUP N. I53C22000690001.}
}

%\author{Anonymous}{anonymous}{}{}{} 
%\author{Filippo Bonchi}{University of Pisa}{}{}{Bonchi is supported by the Ministero dell'Università e della Ricerca of Italy grant PRIN 2022 PNRR No. P2022HXNSC - RAP (Resource Awareness in Programming). This study was carried out within the National Centre on HPC, Big Data and Quantum Computing - SPOKE 10 (Quantum Computing) and received funding from the European Union Next-GenerationEU - National Recovery and Resilience Plan (NRRP) – MISSION 4 COMPONENT 2, INVESTMENT N. 1.4 – CUP N. I53C22000690001.}

%\author{Cipriano Junior Cioffo}{University of Pisa}{}{}{}

%\authorrunning{Anonymous}
\authorrunning{F. Bonchi, C.J. Cioffo} %TODO mandatory. First: Use abbreviated first/middle names. Second (only in severe cases): Use first author plus 'et al.'

\institute{University of Pisa}

\maketitle

\begin{abstract}
%It is well known that, for a category enriched over commutative monoid, its category of matrices is the finite biproduct category freely generated by it.  
%In this paper, we introduce an analogous construction, but for stochatic matrices. We give an equational presentation base don monoidal algebra that we exploit to provide a complete axiomatisation for a diagrammatic calculus of probabilistic boolean circuits.
Tape diagrams provide a graphical notation for categories equipped with two monoidal products, $\otimes$ and $\oplus$, where $\oplus$ is a biproduct. Recently, they have been generalised to handle Kleisli categories of arbitrary monoidal monads. In this work, we show that for the subdistribution monad, tapes are isomorphic to stochastic matrices of subdistributions of string diagrams. We then exploit this result to provide a complete axiomatisation of probabilistic Boolean circuits.

\keywords{string diagrams, probabilistic systems, rig categories} %TODO mandatory; please add comma-separated list of keywords
\end{abstract}

\section{Introduction}\label{sec:intro}

Driven by the growing interest in compositional semantics for probabilistic systems, an increasing body of work~\cite{cho2019disintegration,jacobs2019causal,piedeleu2025boolean,DBLP:journals/corr/abs-2410-10627,DBLP:journals/corr/abs-2502-03477,DBLP:journals/corr/abs-2301-12989,jacobs2021logical,moss2023category,fritz2021finetti,fritz2023dilations,fritz2018bimonoidal,stein2024probabilistic,perrone2023markov,fritz2023d,fritz2023weakly} employs \emph{string diagrams}~\cite{joyal1991geometry,selinger2010survey}, which formally represent morphisms in a strict symmetric monoidal category freely generated by a monoidal signature. These diagrams are typically interpreted in \( \KlD \), the Kleisli category of the subdistribution monad \( \Dis \) (or suitable variants), with the monoidal structure induced by the cartesian product \( \otimes \).
Much research has focused on the role of the \emph{copier} \( \CBcopier \) and the \emph{discharger} \( \CBdischarger \), and their correspondence to marginals and joint distributions.%

In contrast, significantly less attention \cite{fritz2009presentation} has been devoted to the second monoidal tensor \( \oplus \) in $\KlD$, which corresponds to disjoint union of sets, even though its interaction with \( \otimes \) plays a pivotal role in many probabilistic frameworks~\cite{stein2024probabilistic,introductioneffectus,DBLP:journals/pacmpl/LiellCockS25}. One possible reason is that standard string diagrams are inherently limited to representing a single monoidal structure at a time.  To overcome this limitation, we focus on \emph{tape diagrams}~\cite{bonchi2023deconstructing}, which can be informally understood as string diagrams of string diagrams. This more expressive graphical language supports simultaneous reasoning about both \( \otimes \) and \( \oplus \), in particular in categories where \( \oplus \) forms a \emph{biproduct}, i.e., serves as both categorical product and coproduct. Moreover, as recently shown~\cite{bonchi2025tapediagramsmonoidalmonads}, tape diagrams extend naturally to Kleisli categories of arbitrary monoidal monads and thus, in particular, to $\KlD$, which is the main focus of this work.

\medskip

Our starting observation is that the coproduct \( \oplus \) in \( \KlD \) enjoys an additional universal property, akin to that of a product, which makes it resemble a biproduct. We refer to categories where \( \oplus \) satisfies this property as \emph{convex biproduct categories} (Definition~\ref{def:convbicat}), and we study their associated monoidal algebras. While categories with finite biproducts are equipped with natural monoids and \emph{comonoids}, convex biproduct categories carry natural monoids and \emph{co-pointed convex algebras}, which are the dual of pointed convex algebras (pca)--the Eilenberg-Moore algebras for the subdistribution monad~\cite{stone1949postulates,bonchi2017power,DBLP:journals/lmcs/BonchiSV22,DBLP:conf/lics/MioSV21}.

There is a further striking analogy with the theory of finite biproduct categories: given any category \( \Cat{C} \) enriched over monoids, one can construct the category \( \Cat{Mat}(\Cat{C}) \) of matrices over \( \Cat{C} \)~\cite{mac_lane_categories_1978}, which possesses finite biproducts. Analogously, starting from a category \( \Cat{C} \) enriched over pcas, one can define the convex biproduct category \( \stmat{\Cat{C}} \) of \emph{stochastic matrices} over \( \Cat{C} \) (Proposition~\ref{prop:cmatrixcb}). Composition is defined via matrix multiplication: sums are given by the convex sums of the pca-enrichment and products by composition of arrows in $\Cat{C}$.
We show that this construction yields a functor $\stmat{-}$ from the category $\Cat{PCACat}$ of pca-enriched categories to the category $\Cat{CBCat}$ of convex biproduct categories, which is left adjoint to the forgetful functor \( U \) (Theorem~\ref{thm:matfree}):

%\[\begin{xymatrix}
%A & B & C
%\end{xymatrix}
%\]
\[
\xymatrix{
\Cat{Cat} \ar@/^/[rr]^{(-)^+}& \;\;{\tiny{\bot}} & \Cat{PCACat}  \ar@/^/[ll]^{U} \ar@/^/[rr]^{\stmat{-}} &\;{\tiny{\bot}}  & \Cat{CBCat} \ar@/^/[ll]^{U}
}
\]
The leftmost adjunction is standard (see, e.g.,~\cite{borceux2,villoria2024enriching}): the functor \( (-)^+ \) freely enriches a locally small category \( \Cat{C} \) over pcas. The resulting category \( \Cat{C}^+ \) has arrows given by subdistributions of arrows in \( \Cat{C} \).

Our key result is that the composite of the two adjunctions admits a syntactic presentation in terms of generators and equations. Specifically, we define a functor \( \CatT{-} \colon \Cat{Cat} \to \Cat{CBCat} \) that freely adds to each category \( \Cat{C} \) a monoidal structure with natural monoids, and co-pointed convex algebras. We show that \( \CatT{-} \) is left adjoint to the forgetful functor \( U \colon \Cat{CBCat} \to \Cat{Cat} \) (Theorem~\ref{thm:syntacticadjunction}) and we thus conclude that $\CatT{\Cat{C}}$ is isomorphic to $\stmat{\Cat{C}^+}$ (Corollary \ref{cor:isotapematrices}). %and thus that \( \CatT{\Cat{C}} \cong \stmat{\Cat{C}^+} \) (Corollary~\ref{cor:isotapematrices}).

%Now, our key result is that the composition of the two adjunctions can be nicely presented in terms of generators and equations: we illustrate a functor $\CatT{-}\colon \Cat{Cat} \to \Cat{CBCat}$ that takes a category $\Cat{C}$ and freely add the structure of monoidal category with natural monoids and co-pcas. We show that such functor is left adjoint to the forgetful functor $U\colon \Cat{CBCat} \to \Cat{Cat}$ (Theorem \ref{thm:syntacticadjunction}) and we thus conclude that $\CatT{\Cat{C}}$ is isomorphic to $\stmat{\Cat{C}^+}$ (Corollary \ref{cor:isotapematrices}).

Importantly, when \( \Cat{C} \) is a category of string diagrams \( \DiagS \), \( \CatT{\DiagS} \) coincides with the category of tape diagrams for the monad \( \Dis \) introduced in~\cite{bonchi2025tapediagramsmonoidalmonads}. A result from~\cite{bonchi2025tapediagramsmonoidalmonads} shows that \( \CatT{\DiagS} \) admits two monoidal products, \( \oplus \) and \( \otimes \), forming a \emph{rig category} (also known as a bimonoidal category)~\cite{laplaza_coherence_1972,johnson2021bimonoidal}. The isomorphism \( \CatT{\DiagS} \cong \stmat{\DiagS^+} \) then provides a concrete interpretation of tapes as stochastic matrices whose entries are probability distributions of string diagrams. Interestingly, $\piu$ is direct sum of matrices and $\otimes$ is (an extended) Kronecker product.

We argue that this isomorphism can be fruitfully applied to give more principled axiomatisations of probabilistic languages based on string diagrams, and to streamline their completeness proofs. As a case study, we revisit the complete axiomatisation of \emph{probabilistic Boolean circuits} from~\cite{piedeleu2025boolean}. These can be encoded into tape diagrams of Boolean circuits resulting in a more effective realisation of probabilistic control (see \cite[Example 30]{bonchi2025tapediagramsmonoidalmonads}). We show that (Corollary \ref{cor:final}) adding a single extra axiom--originally appearing in~\cite{Niels}-- to the axioms of Boolean circuits and the tape structure (i.e., natural monoids and co-pcas) suffices to derive an alternative complete axiomatisation. The proof of completeness amounts to several categorical observations, two almost trivial lemmas and the universal properties of convex biproducts.

% summary, our main results are:
%\begin{enumerate}
%\item The introduction of convex biproduct category
%\end{enumerate}
%\emph{Related and future works.} The correspondence between free finite biproduct categories and matrices of arrows of pre-additive categories (see \cite{mac_lane_categories_1978},\cite{coecke2017two}) relies on the trivial correspondence between functions of the form $f\colon \bigoplus_{i=1}^n U_i \to \bigoplus_{j=1}^mV_j$ and list of functions $f_{ij}:U_i\to V_j$ indexed by $(i,j)\in \{1,\dots,n\}\times \{1,\dots,m\}$. Such correspondence, does not hold in convex biproduct categories, which only allow to build from the $f_{ij}$'s an arrow $f$ which takes into account scaling values $p_{i,j}\in[0,1]$ for every $f_{ij}$. Our framework seems promising to be extended to other kind of enrichment in place of PCA, to obtain diagrams enriched with other algebraic operations such as in \cite{villoria2024enriching}. Another interesting development would be to extend probabilist tape diagrams with traces for $\piu$, as done in \cite{bonchi2024diagrammatic} to develop program logic, and study their algebraic counterpart given by traces of matrices. Such extension could be enable the modelling of stochastic processes. Moreover, it is left for future investigations a possible connection with the theory of effectus \cite{introductioneffectus} and probabilistic tape diagrams.

\emph{Synopsis.} We begin in Section~\ref{sec:pca} by recalling the notions of pointed convex algebras, pca-enriched categories, and the leftmost adjunction in the diagram above. Section~\ref{sec:cbproducts} introduces convex biproduct categories and their associated monoidal algebraic structures. In Section~\ref{sec:cmatrix}, we define stochastic matrices and establish the rightmost adjunction. The syntactic construction \( \CatT{-} \), corresponding to the composite adjunction, is presented in Section~\ref{sec:syntactic}. This construction is then applied to categories of string diagrams in Section~\ref{sec:tapediagrams} to obtain tape diagrams. Finally, Section~\ref{sec:probboolcircuits} revisits probabilistic Boolean circuits from~\cite{piedeleu2025boolean}, explains their encoding as probabilistic tapes from~\cite{bonchi2025tapediagramsmonoidalmonads}, and presents the alternative axiomatisation along with the completeness proof.

\section{Pointed Convex Algebras}\label{sec:pca} %\marginpar{Introduce the notation for the Delta of Dirac}
We commence our exposition by recalling several well known algebraic structures. 

A \emph{pointed convex algebra} (pca) \cite{stone1949postulates}, consists of a set $X$, a designed element $\star \in X$ and, for all $p$ in the open real interval $(0,1)$, a function $+_p \colon X\times X \to X$ such that, by fixing $\tilde{p}\defeq pq$ and $\tilde{q}\defeq \frac{p(1-q)}{1-pq}$, the following laws hold for all $x_1,x_2,x_3\in X$. % the following laws hold for .
\begin{equation}\label{eq:pca}(x_1+_q x_2)+_p x_3 =  x_1 +_{\tilde{p}} (x_2+_{\tilde{q}} x_3) \qquad x_1+_px_2=x_2+_{1-p}x_1 \qquad  x_1+_p x_1 = x_1\end{equation}
A morphism of pcas is a function preserving $\star$ and $+_p$. %Pointed convex algebras and their morphisms form a category, hereafter denoted by $\Cat{PCA}$.
We denote by $\Cat{PCA}$ the category of PCAs and their morphisms.  

In any pca, $+_p$ can be defined for all $p\in[0,1]$ by setting $x+_1 y \defeq x$ and $x+_0 y \defeq y$. For all $n\in \mathbb{N}$, $p_1, \dots, p_n \in [0,1]$ such that $\sum_i p_i\leq 1$, one defines $\sum_{i=1}^n p_i\cdot (-)_i \colon X^n \to X$ inductively as
%\begin{equation}\label{eq:def somma n pca}
%    \sum_{i=1}^0 p_i\cdot (-)_i \defeq \star \qquad \sum_{i=1}^{n+1} p_i\cdot (-)_i \defeq (-)_{n+1} +_{p_{n+1}} \sum_{i=1}^n \frac{p_i}{1-p_{n+1}}\cdot (-)_i\text{.}
%\end{equation}
%
\begin{equation}\label{eq:def somma n pca}
    \sum_{i=1}^0 p_i\cdot (-)_i \defeq \star \qquad \sum_{i=1}^{n+1} p_i\cdot (-)_i \defeq (-)_{1} +_{p_{1}} \sum_{j=1}^n q_j \cdot (-)_j\text{.}
\end{equation}
where $(-)_j=(-)_{i+1}$ and $q_j$ is $\frac{p_{i+1}}{1-p_1}$ if $p_1 \neq 1$ and $0$ otherwise.
Note that when $n=1$, $\sum_{i=1}^1p_i \cdot (-)_i$ is, by definition, $(-)_1 +_{p_1} \star$. Since this operation will play a crucial role we name it \emph{multiplication by a scalar} $p\in [0,1]$ and we fix the following notation. \[p\cdot x \defeq x+_p \star\] 
%These $n$-ary sums enjoy the following properties: (a) $\sum_{i=1}^n p_i\cdot x_i =x_j$ if $p_j =1$ and (b) $\sum_{i=1}^n p_1 \cdot ( \sum_{j=1}^n q_{i,j} \cdot x_j) = \sum_{j=1}^m(\sum_{i=1}^n p_i q_{i,j} x_i)$. Actually, one can prove (see \cite[Proposition 7]{bonchi2017power}) that $n$-ary sums with the properties (a) and (b) provide an alternative but equivalent presentation for pcas.

The standard example of pca is provided by the set $\Dis(X)$ of finitely supported probability subdistributions over some set $X$. Recall that a finitely supported subdistribution over  $X$ is a function $d\colon X \to [0,1]$ such that $\sum_{x\in X}d(x)\leq1$ and $d(x)\neq 0$ for finitely many $x$. The designed element $\star$ in $\Dis(X)$ is the null subdistribution, i.e., $\star(x)\defeq 0$ for all $x\in X$; given $d_1,d_2\in \Dis(X)$, for all $p\in(0,1)$, $(d_1+_pd_2)(x) \defeq p\cdot d_1(x)+ (1-p)\cdot d_2(x)$. At this point, it is convenient to fix some extra notation: $\DisF(X)$ is the set of all finitely supported distributions (i.e., $\sum_{x\in X}d(x)=1$) and, for all $x\in X$, $\delta_x$ is the Dirac distribution at $x$ (i.e., $\delta_x(x')=1$ if $x=x'$ and $0$ otherwise).

\medskip

A category $\Cat{C}$ is \emph{$\Cat{PCA}$-enriched} if every homset $\Cat{C}[X,Y]$ carries a pca and composition of arrows is a pcas morphism, i.e.,  for all $p\in(0,1)$ and properly typed arrows $e,f,g,h$, the followings hold.\begin{equation}\label{eq:enr}e; (f+_p g) = (e;f)+_p (e;g) \qquad (f+_pg) ;h= (f;h +_p g;h) \qquad f;\star =\star= \star;f\end{equation}A functor $F$ is $\Cat{PCA}$-enriched if it preserves the structure of pca over each homset. $\Cat{PCA}$-enriched categories and functors form a category, denoted by $\Cat{PCACat}$.
All categories considered hereafter are tacitly assumed to be locally small: $\Cat{Cat}$ stands  for the category of locally small categories. Note that there is a forgetful functor $U\colon \Cat{PCACat} \to \Cat{Cat}$.

Any category $\Cat{C}$, gives rise to the $\Cat{PCA}$-enriched category $\Cat{C}^+$: objects of  $\Cat{C}^+$ are those of $\Cat{C}$; for all objects $X,Y$, the homset is defined as $\Cat{C}^+[X,Y] \defeq \Dis(\Cat{C}[X,Y])$. For $d_1\colon X \to Y$ and $d_2\colon Y \to Z$, their composition $d_1;d_2 \colon X\to Z$ is defined for all $h\in \Cat{C}[X,Z]$ as $d_1;d_2(h) \defeq \sum_{\{(f,g)\mid f;g=h\}}d_1(f) \cdot d_2(g)$; The identity $\id{X}\colon X \to X$ is given by $\delta_{id_X}$. One can easily see that $\Cat{C}^+$ is a $\Cat{PCA}$-enriched category. Moreover, the assignment $\Cat{C} \mapsto \Cat{C}^+$ gives rise to a functor $(-)^+\colon \Cat{Cat} \to \Cat{PCACat}$ which is left adjoint to the forgetful functor $U$ (see e.g. \cite[Prop. 6.4.7]{borceux2} or \cite[Cor. 1]{villoria2024enriching}). % for a general result.

\begin{theorem}[\cite{borceux2,villoria2024enriching}]\label{thm:freeenriched}
 $(-)^+\colon \Cat{Cat} \to \Cat{PCACat}$ is left adjoint to $U\colon \Cat{PCACat} \to \Cat{Cat}$.
\end{theorem}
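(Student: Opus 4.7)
The plan is to establish the adjunction directly by exhibiting a unit and verifying its universal property, i.e., that every functor $F \colon \Cat{C} \to U \Cat{D}$ into the underlying category of a pca-enriched category $\Cat{D}$ extends uniquely to a pca-enriched functor $\hat{F} \colon \Cat{C}^+ \to \Cat{D}$. The unit at $\Cat{C}$ is the functor $\eta_{\Cat{C}} \colon \Cat{C} \to U(\Cat{C}^+)$ sending an object $X$ to itself and a morphism $f \colon X \to Y$ to the Dirac distribution $\delta_f$. Functoriality of $\eta_{\Cat{C}}$ follows from $\delta_{\id{X}} = \id{X}$ in $\Cat{C}^+$ and from the computation $\delta_f ; \delta_g = \delta_{f;g}$, obtained by specialising the composition formula of $\Cat{C}^+$.

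Given $F \colon \Cat{C} \to U \Cat{D}$, I would define $\hat{F}$ by setting $\hat{F}(X) \defeq F(X)$ on objects, and on a morphism $d \in \Dis(\Cat{C}[X,Y])$ with finite support $\{f_1,\dots,f_n\}$ and weights $p_i \defeq d(f_i)$, by the formula $\hat{F}(d) \defeq \sum_{i=1}^n p_i \cdot F(f_i)$, where the scalar multiplication and finite sum are those of the pca $\Cat{D}[F(X),F(Y)]$ as defined in \eqref{eq:def somma n pca}. The key conceptual point, which I would either invoke or prove as a preliminary lemma, is that $\Dis(X)$ is the free pca on the set $X$: any set-function from $X$ into the carrier of a pca extends uniquely to a pca-morphism out of $\Dis(X)$. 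This gives at once that $\hat{F}$ is well-defined on each homset and is a pca-morphism there, and it also gives the uniqueness clause, since any pca-enriched $G$ with $G \circ \eta_{\Cat{C}} = F$ must agree with $\hat{F}$ on Dirac distributions and hence everywhere.

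The remaining obligation is to check that $\hat{F}$ is a functor. Identity preservation is immediate from $\hat{F}(\delta_{\id{X}}) = 1 \cdot F(\id{X}) = \id{F(X)}$. For composition, given $d_1 \in \Dis(\Cat{C}[X,Y])$ and $d_2 \in \Dis(\Cat{C}[Y,Z])$, the definition of composition in $\Cat{C}^+$ yields $d_1 ; d_2 = \sum_{i,j} p_i q_j \cdot \delta_{f_i ; g_j}$, whence $\hat{F}(d_1;d_2) = \sum_{i,j} p_i q_j \cdot F(f_i ; g_j) = \sum_{i,j} p_i q_j \cdot (F(f_i) ; F(g_j))$. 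The bilinearity of composition in the pca-enriched category $\Cat{D}$, expressed by the equations in \eqref{eq:enr}, lets one expand $\hat{F}(d_1) ; \hat{F}(d_2) = \bigl(\sum_i p_i F(f_i)\bigr) ; \bigl(\sum_j q_j F(g_j)\bigr)$ into the same double sum, giving the required equality.

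The main technical obstacle I anticipate is the freeness lemma invoked above: namely, expressing a general $d \in \Dis(\Cat{C}[X,Y])$ as an iterated $+_p$-combination of Dirac distributions and showing that the value $\hat{F}(d)$ does not depend on the chosen bracketing or ordering. This is precisely where the three pca axioms \eqref{eq:pca} are needed, and where one has to be careful with the inductive definition \eqref{eq:def somma n pca} when some weights degenerate to $0$ or $1$; after that preliminary is in place, the remaining calculations (functoriality and the bilinear expansion) reduce to bookkeeping using the enrichment axioms. Finally, naturality of $\eta$ is routine and the bijection $\Cat{PCACat}[\Cat{C}^+,\Cat{D}] \cong \Cat{Cat}[\Cat{C}, U\Cat{D}]$ follows by standard adjunction-by-unit arguments, completing the proof.
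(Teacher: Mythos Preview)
Your proposal is correct and follows essentially the same approach as the paper: both exhibit the unit $\eta_{\Cat{C}}(f) = \delta_f$ and the extension $\hat{F}(d) = \sum_f d(f)\cdot F(f)$ (the paper's $F^\sharp$), relying on the freeness of $\Dis(X)$ as the free pca on $X$. The paper's proof is in fact terser than yours—it largely defers to the cited references \cite{borceux2,villoria2024enriching} for the general enrichment-over-a-theory result and only sketches the involved structures—so your more explicit verification of functoriality and uniqueness is a welcome elaboration rather than a departure.
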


\begin{example}\label{ex:PCA}
Let $\Cat{1}$ be a category with a single object and a single arrow. The $\Cat{PCA}$-enriched category $\Cat{1}^+$ has a single object, arrows are $p\in[0,1]$ and composition is given by multiplication.

For a set $A$, the set of words  over $A$ (hereafter denoted by $A^*$) carries the structure of a category with a single object.  The $\Cat{PCA}$-enriched category $(A^*)^+$ has a single object and arrows are subdistributions over $A^*$. Composition of $d_1,d_2\in \Dis(A^*)$ is defined for all words $w\in A^*$ as $d_1;d_2(w)\defeq \sum_{u;v=w}d_1(u)\cdot d_2(v)$. 
\end{example}

Our main example of $\Cat{PCA}$-enriched category is $\KlD$, the Kleisli category of the monad $\subdistr\colon \Sets \to \Sets$ (see e.g.,~\cite{hasuo2007generic}).
Objects are sets; morphisms \(f \colon X \to Y\) are functions \(X \to \subdistr(Y)\). % mapping each \(x \in X\) into a finitely supported subdistribution over $Y$.
  We often write \(f(y \mid x)\) for \(f(x)(y)\)  as this number represents the probability that \(f\) returns \(y\) given the input \(x\).
  Identities \(\id{X} \colon X \to \subdistr(X)\) map each element \(x \in X\) to \(\delta_{x}\).
  For two functions \(f \colon X \to \subdistr(Y)\) and \(g \colon Y \to \subdistr(Z)\), their composition in $\KlD$ is defined as \(f ; g (z \mid x) \defeq \sum_{y \in Y} f(y \mid x) \cdot g(z \mid y)\).

In our work, it is fundamental that $\KlD$ carries two symmetric monoidal structures: $ (\KlD, \per, \uno)$ and $(\KlD, \piu, \zero)$. The monoidal product $\per$ is defined on objects as the cartesian product of sets, often denoted by $\times$, with unit the singleton $\uno \defeq \{\bullet\}$; $\piu$ is the disjoint union of sets with unit the empty set $\zero \defeq \{\}$. Hereafter we denote the disjoint union of two sets $X$ and $Y$ by $X\oplus Y \defeq \{(x,0) \mid x\in X\} \cup \{(y,1) \mid y \in Y\}$ where $0$ and $1$ are tags used to distinguish the set of provenance. For arrows  \(f \colon X \to Y\) and \(g \colon X' \to Y'\), $f\per g\colon X\otimes X' \to Y \otimes Y'$ and $f \piu g \colon X\oplus X' \to Y\oplus Y'$ are defined, for all $x\in X$, $x'\in X'$, $y\in Y$, $y'\in Y'$, $u\in X\oplus X'$ and $v\in Y\oplus Y'$, as follows.
\begin{equation}\label{ex:products}
\begin{aligned}
f \per g(y,y' \mid x,x') &\defeq f(y \mid x) \cdot g(y' \mid x')\\
f\piu g(v \mid u) &\defeq 
\begin{cases} 
f(y \mid x) & \text{if } u=(x,0) \text{ and } v=(y,0)\\ 
g(y' \mid x') & \text{if } u=(x',1) \text{ and } v=(y',1) \\  
0 & \text{otherwise} 
\end{cases}
\end{aligned}
\end{equation}

Symmetries $\symmt{X}{Y}\colon X \otimes Y \to Y \otimes X$ and $\symmp{X}{Y}\colon X \oplus Y \to Y \oplus X$ are defined as in $\Sets$. Actually, $ (\KlD, \piu, \per, \zero, \uno)$ forms a rig (aka bimonoidal) category in the sense of \cite{laplaza_coherence_1972}. We will come back to rig categories in Section \ref{sec:tapediagrams}. Until then, we will focus only on the monoidal structure provided by $\oplus$. 

\section{Convex Biproduct Categories}\label{sec:cbproducts}  

As is the case for all Kleisli categories of monads on $\Sets$, the category $\KlD$ inherits coproducts from $\Sets$: the operation $\oplus$  in \eqref{ex:products}, serves as a coproduct in $\KlD$. Our initial observation is that $\oplus$ satisfies an additional universal property, which we refer to as the \emph{convex product}, described below.

%As with all Kleisli categories of monads on $\Sets$, the category $\KlD$ inherits coproducts from $\Sets$: the operation $\oplus$, defined as in \eqref{ex:products}, serves as a coproduct in $\KlD$. Our key observation is that $\oplus$ also satisfies an additional universal property, which we describe below.
%So far, we have recalled that $\KlD$ is $\Cat{PCA}$-enriched. Like all Kleisli category of a monad over $\Sets$, $\KlD$ inherits coproducts from $\Sets$ \cite{}. In this section, we observe that such coproducts in $\KlD$ carries what we named \emph{convex products}.

\begin{definition}
Let $X_1$, $X_2$  be two objects of a $\Cat{PCA}$-enriched category  $\Cat{C}$. The \emph{convex product} of $X_1$ and $X_2$ is an object $Z$ with two arrows $\pi_1\colon Z\to X_1$ and $\pi_2\colon Z\to X_2$ satisfying the following property: for all $p_1,p_2 \in [0,1]$ such that $p_1+p_2\le 1$ and all arrows $f\colon A\to X_1$, $g\colon A \to X_2$, there exists a unique arrow $h\colon A \to Z$ such that $h;\pi_1 = p_1\cdot f$ and $h;\pi_2 = p_2 \cdot g$.
\end{definition}

Similarly, the convex product of $n$ objects $X_1,\ldots,X_n$ is an object $Z$ with arrows $\pi_i\colon Z\to X_i$ for $i=1,\ldots,n$ satisfying the following property: for all $p_1, \dots p_n \in [0,1]$ where $\sum_{i=1}^n p_i \le 1$ and arrows $f_i\colon A\to X_i$, there exists a unique arrow $h\colon A \to Z$ such that $h;\pi_i = p_i\cdot f_i$ for all $i=1,\ldots,n$. Observe that, by definition, the 0-ary convex product is a final object. Hereafter, we will denote the unique arrow $h$ by $\langle f_1, \dots, f_n \rangle_{\vec{p}}$ where $\vec{p}$ is a compact notation for $p_1, \dots , p_n$.

%\begin{definition}\marginpar{Vedi composizione di proiezione con iniezioni}
%A \emph{convex biproduct category} is a cocartesian monoidal category $(\Cat{C}, \oplus, \zero)$ such that
%\begin{itemize}
%\item $\Cat{C}$ is $\Cat{PCA}$-enriched;
%\item the initial object $\zero$ is also a final object;
%\item the coproduct $\oplus$ is a also convex product;
%\end{itemize} 
%\end{definition}

\begin{definition}\label{def:convbicat}%\marginpar{Vedi composizione di proiezione con iniezioni}
A \emph{convex biproduct category} is a $\Cat{PCA}$-enriched category $\Cat{C}$ with an object $\zero$ which is both initial and final and, for every pair of objects $X_1,X_2$, an object $X_1\oplus X_2$ and morphisms $\pi_i \colon X_1\oplus X_2 \to X_i$ and $\iota_i \colon X_i \to X_1 \oplus X_2$ such that $(X_1\oplus X_2, \iota_1, \iota_2)$ is a coproduct, $(X_1\oplus X_2, \pi_1, \pi_2)$ is a convex product and
\begin{equation}\label{eq:delta}\iota_i; \pi_j = \delta_{i,j}\end{equation}
where $\delta_{i,j} \colon X_i \to X_j$ is defined as $\delta_{i,j}\defeq \id{X_i}$ if $i=j$ and $\delta_{i,j}\defeq\star_{X_i,X_j}$ otherwise. A morphism of convex biproduct categories is a $\Cat{PCA}$-enriched functor $F\colon \Cat{C} \to \Cat{D}$ preserving finite coproducts. We write $\Cat{CBCat}$ for the category of convex biproduct categories and their morphisms.
\end{definition}

%\marginpar{Per adesso aggiungo qui il lemma rigurdante l'assioma.}

\begin{lemma}
    In a convex biproduct category $\Cat{C}$,  the enrichment is compatible with respect to  the (convex-co)products, namely, 
\begin{equation}\label{eq: assioma aggiuntivo}
    f+_p g = \langle f,g \rangle_{p,1-p};[\id{Y},\id{Y}]%\text{.}
\end{equation}
for all $f,g\colon X \to Y$ and $p\in [0,1]$. 
\end{lemma}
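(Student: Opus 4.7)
The plan is to prove the equation by first finding a convenient alternative description of $\langle f,g\rangle_{p,1-p}$ purely in terms of the coproduct injections and the PCA enrichment, and then post-composing with $[\id{Y},\id{Y}]$ and distributing.

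First, I would show the auxiliary identity
$$\langle f, g\rangle_{p,1-p} \;=\; (f;\iota_1) +_p (g;\iota_2)\text{.}$$
By the universal property of the convex product on $Y\oplus Y$, it suffices to check that the right-hand side is sent by $\pi_1$ to $p\cdot f$ and by $\pi_2$ to $(1-p)\cdot g$. Using the right-distributivity of composition over $+_p$ from \eqref{eq:enr}, together with $\iota_i;\pi_j = \delta_{i,j}$ from \eqref{eq:delta}, we compute
$$\big((f;\iota_1) +_p (g;\iota_2)\big);\pi_1 \;=\; (f;\iota_1;\pi_1) +_p (g;\iota_2;\pi_1) \;=\; f +_p \star \;=\; p\cdot f\text{,}$$
and symmetrically, using the axiom $x +_p y = y +_{1-p} x$ from \eqref{eq:pca}, we get $((f;\iota_1)+_p(g;\iota_2));\pi_2 = \star +_p g = (1-p)\cdot g$, as required.

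Second, I would post-compose with $[\id{Y},\id{Y}]$ and distribute again via \eqref{eq:enr}:
$$\langle f,g\rangle_{p,1-p};[\id{Y},\id{Y}] \;=\; \big((f;\iota_1) +_p (g;\iota_2)\big);[\id{Y},\id{Y}]$$
$$=\; (f;\iota_1;[\id{Y},\id{Y}]) +_p (g;\iota_2;[\id{Y},\id{Y}]) \;=\; f +_p g\text{,}$$
using the defining property $\iota_i;[\id{Y},\id{Y}] = \id{Y}$ of the codiagonal.

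No step here looks particularly hard: the whole argument reduces to a clean application of the universal property of the convex product, the biproduct identity $\iota_i;\pi_j = \delta_{i,j}$, and two distributivity computations using the PCA-enrichment axioms. The only subtle point is the manipulation of $\star +_p g$, which requires invoking the symmetry axiom $x +_p y = y +_{1-p} x$ to rewrite it as $(1-p)\cdot g = g +_{1-p} \star$.
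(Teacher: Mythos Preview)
Your proof is correct and follows essentially the same route as the paper: both arguments first establish the auxiliary identity $\langle f,g\rangle_{p,1-p} = (f;\iota_1) +_p (g;\iota_2)$ via the universal property of the convex product together with \eqref{eq:delta}, and then post-compose with $[\id{Y},\id{Y}]$ using the distributivity from \eqref{eq:enr}.
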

\begin{proof}
    Consider the arrows $f;\iota_1$ and $g;\iota_2$ from $X$ to $Y\oplus Y$ and their sum $f;\iota_1 +_p g;\iota_2$. The following computation shows that 
    \begin{align}
        (f;\iota_1 +_p g;\iota_2);\pi_1 & = f,\iota_1;\pi_1 +_p g;\iota_2;\pi_1 \tag{pca-enrichment}\\
        & = f +_p \star_{X,Y} \tag{by \eqref{eq:delta}}\\
        & = p \cdot f \notag
    \end{align} 
    and similarly $(f;\iota_1 +_p g;\iota_2);\pi_2 = (1-p)\cdot g$. By the universal property of convex products, we conclude that $f;\iota_1 +_p g;\iota_2 = \langle f,g \rangle_{p,1-p}$. Composing both sides with $[\id{Y},\id{Y}]$ we obtain 
    \begin{align}
        f +_p g & = (f;\iota_1 +_p g;\iota_2);[\id{Y},\id{Y}] \tag{pca-enrichment}\\
        & = \langle f,g \rangle_{p,1-p};[\id{Y},\id{Y}]. \notag
    \end{align}
\end{proof}

%\begin{remark}
%    The first part of the definition is the one of biproduct category where products are replaced by convex products. Observe that the PCA-enrichment implies that for every $X,Y\in \Cat{C}$ it holds that
%    \[\star_{X,Y} = \bang{X};\cobang{Y}\]
%    where $\bang{X}\colon X \to \zero$ is the unique arrow to the final object $\zero$ and $\cobang{Y}\colon \zero \to Y$ is the unique arrow from the initial object $\zero$ to $Y$.
%    Moreover the above condition $\iota_i; \pi_j = \delta_{i,j}$ is equivalent to requiring that the projections $\pi_1\colon X_1\oplus X_2 \to X_1$ and $\pi_2\colon X_1\oplus X_2 \to X_2$ are equal to $(\id{X_1}\otimes \bang{X_1});\runit{X_1}$ and $(\bang{X_2}\otimes \id{X_2});\lunit{X_2}$, respectively. Indeed,  since $\iota_1;(\id{X_1}\otimes \bang{X_1});\runit{X_1} = \id{X_1}$ and $\iota_2;(\bang{X_2}\otimes \id{X_2});\lunit{X_2} = \bang{X_2};\cobang{X_2}$ the universal property of coproducts implies the equality.
%\end{remark}

%    QUESTA FRASE NON SERVE PIù The first part of the definition is the one of  category with finite biproducts where products are replaced by convex products. The additional compatibility of the enrichment holds in finite biproduct categories, but  \eqref{eq: assioma aggiuntivo} has to be explicitly added in Definition~\ref{def:convbicat}.
  As for categories with finite biproducts, morphisms of convex biproduct categories preserve (convex) products (see Proposition \ref{prop: functor1} in Appendix~\ref{app:sec:cbproducts}).
\begin{lemma}\label{lemma:naryconvexproduct}
Let $\Cat{C}$ be a convex biproduct category. Then $\Cat{C}$ has $n$-ary convex products.
\end{lemma}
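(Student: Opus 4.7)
The plan is to proceed by induction on $n$. The base cases are immediate: $n=0$ is given by the terminal object $\zero$; $n=1$ is handled by taking $Z := X_1$ with $\pi_1 := \id{X_1}$, the unique arrow being $p_1\cdot f_1$; and $n=2$ is exactly the assumption in Definition~\ref{def:convbicat}.

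For the inductive step, suppose the $n$-ary convex product $Y$ of $X_2,\ldots,X_{n+1}$ with projections $\pi_i^Y$ exists. Set $Z := X_1 \oplus Y$ and define $\pi_1^Z := \pi_1$ and $\pi_i^Z := \pi_2;\pi_i^Y$ for $i \geq 2$. Given weights $p_1,\ldots,p_{n+1}$ with $\sum_i p_i \leq 1$ and arrows $f_i\colon A\to X_i$, let $q := p_2+\cdots+p_{n+1}$. When $q > 0$, the normalised weights $p'_i := p_i/q$ sum to $1$, so the inductive hypothesis yields a unique $k\colon A\to Y$ with $k;\pi_i^Y = p'_i\cdot f_i$; the binary convex product applied to $(f_1,k)$ with weights $(p_1,q)$ (which satisfy $p_1+q\leq 1$) then gives the required $h\colon A\to Z$. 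Indeed, $h;\pi_1^Z = p_1\cdot f_1$ by construction, and for $i \geq 2$ we compute
\[h;\pi_i^Z = (h;\pi_2);\pi_i^Y = (q\cdot k);\pi_i^Y = q\cdot(k;\pi_i^Y) = q\cdot(p'_i\cdot f_i) = p_i\cdot f_i.\]
The third equality uses the identity $(q\cdot g);\phi = q\cdot(g;\phi)$, which follows by rewriting $q\cdot g = g +_q \star$ and applying the $\Cat{PCA}$-enrichment laws \eqref{eq:enr} together with $\star;\phi = \star$; the last equality uses $q\cdot(p'\cdot f)=(qp')\cdot f$, a direct consequence of pca associativity \eqref{eq:pca} combined with the idempotence axiom $\star+_{\tilde q}\star=\star$.

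Uniqueness of $h$ follows by combining the uniqueness clauses of the inductive and binary universal properties: any $h'$ satisfying the defining equations must have $h';\pi_2 = q\cdot k$ by the inductive uniqueness applied to the values $h';\pi_i^Z = p_i\cdot f_i$, and hence $h' = h$ by binary uniqueness. The main and essentially only obstacle is the degenerate case $q = 0$, where $p_i = 0$ forces $h;\pi_i^Z = \star$ for all $i \geq 2$ but the normalised weights $p'_i$ are undefined; this is settled by invoking the binary convex product with weights $(p_1,0)$, since then $h;\pi_2 = \star$ and composing with any $\pi_i^Y$ yields $\star$ by the enrichment law $\star;\phi = \star$.
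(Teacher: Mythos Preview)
Your proof is correct and follows the same inductive strategy as the paper: build the $(n{+}1)$-ary convex product from the $n$-ary one via the binary convex product. The only cosmetic difference is that you split off $X_1$ on the left and use the weight $q=\sum_{i\geq 2}p_i$, whereas the paper splits off $X_{n+1}$ on the right and uses the weight $1-p_{n+1}$; your treatment is in fact more complete, since you spell out the uniqueness argument and the degenerate case $q=0$, neither of which the paper's sketch addresses.
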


Since any convex biproduct category $\Cat{C}$ has finite coproducts, by Fox theorem \cite{fox1976coalgebras}, it carries a symmetric monoidal category $(\Cat{C}, \oplus, \zero)$ where every object $X$ is equipped with a commutative monoid that is natural and coherent, i.e., morphisms $\codiag{X} \colon X\oplus X \to X$ and $\cobang{X} \colon \zero \to X$ satisfying the laws in Figures~\ref{fig:monoidax} and \ref{fig:fccoherence} where  $\assoc{X}{Y}{Z}$, $\lunit{X}$, $\runit{Y}$ denote associators, left and right unitors. Recall that $\codiag{X}$ is given as the copairing of the identities $[\id{X},\id{X}]$ while $\cobang{X}$ is the unique arrow from the initial object.

Similarly, the convex product structure equips any object $X$ with a natural and coherent \emph{co-pca}: for all $p\in (0,1)$, there exist an arrow $\diagp{X} \colon X \to X \oplus X$, defined as $\langle id_X, id_X \rangle_{p,1-p}$, and an arrow $\bang{X}\colon X \to \zero$, defined as the unique map to the final object $\zero$, satisfying the laws in Figures~\ref{fig:co-pca axioms} and \ref{fig:copcacoherence}.
\begin{proposition}\label{lemma: copca objects in convbicat}
Let $\Cat{C}$ be a convex biproduct category. Then for every object $X$,  the triple $(X,\diagp{X},\bangp{X})$ is a natural and coherent co-pca, i.e., the axioms in Figures~\ref{fig:co-pca axioms} and \ref{fig:copcacoherence} hold. %\marginpar{Fare un'altra figura con naturalia' ed aggiungere qua il riferimento}
\end{proposition}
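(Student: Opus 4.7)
The plan is to reduce every co-pca axiom and coherence diagram to the universal property of the (possibly $n$-ary) convex product, together with the compatibility equation \eqref{eq: assioma aggiuntivo} and the pca axioms \eqref{eq:pca} holding pointwise on homsets. The single technical fact I would establish first, and then use repeatedly, is that $\diagp{X};\pi_1 = p\cdot \id{X}$ and $\diagp{X};\pi_2 = (1-p)\cdot \id{X}$ (which is just the defining property of $\langle \id{X},\id{X}\rangle_{p,1-p}$), together with the standard biproduct computation $(f\piu g);\pi_i = \pi_i; f_i$, which follows from $\iota_i;\pi_j = \delta_{i,j}$ by composing with each injection.

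For the three co-pca axioms (dual to \eqref{eq:pca}) I would proceed as follows. Co-idempotency $\diagp{X};\codiag{X} = \id{X}$ is immediate from \eqref{eq: assioma aggiuntivo}: it equals $\id{X}+_p \id{X}$, which is $\id{X}$ by pca idempotency on the homset $\Cat{C}[X,X]$. Co-commutativity $\diagp{X};\symmp{X}{X} = \diagpbar{X}$ is checked by post-composing with $\pi_1$ and $\pi_2$: since $\symmp{X}{X};\pi_i = \pi_{3-i}$, both sides yield respectively $(1-p)\cdot \id{X}$ and $p\cdot \id{X}$, so by uniqueness in the convex product the two arrows agree. Co-associativity follows in the same style using the ternary convex product from Lemma~\ref{lemma:naryconvexproduct}: I post-compose both sides with the three projections $\pi_1,\pi_2,\pi_3\colon X\piu X\piu X \to X$, reducing each composition to a scalar $r\cdot \id{X}$; the matching of scalars on the two sides is exactly the reparametrisation identity $pq = \tilde{p}$, $\frac{p(1-q)}{1-pq} = \tilde{q}$ built into \eqref{eq:pca}. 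Axioms involving $\bangp{X}$ are immediate because $\zero$ is terminal, so any two arrows with target $\zero$ coincide, and any equation with target $X$ obtained by composing $\diagp{X}$ with $\id{X}\oplus\bangp{X}$ or $\bangp{X}\oplus\id{X}$ is again tested by post-composition with $\pi_i$ and reduced to pca identities.

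Naturality of $\diagp{X}$ in $X$ amounts to $f;\diagp{Y} = \diagp{X};(f\piu f)$ for $f\colon X\to Y$; post-composing with $\pi_i$ gives $p\cdot f$ on both sides (using the two basic facts above and pca-enrichment \eqref{eq:enr}), so uniqueness in the convex product closes the case. Naturality of $\bangp{X}$ is trivial by terminality of $\zero$. Finally, the coherence diagrams of Figure~\ref{fig:copcacoherence} are handled uniformly: each such diagram has codomain an $n$-fold copower $\bigoplus X$, so equality of its two sides is tested projection by projection, and each projection turns into an equation between scalar multiples $r\cdot \id{X}$; the resulting numerical identities are exactly those guaranteed by the reparametrisation laws of \eqref{eq:pca}.

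The main obstacle I anticipate is purely bookkeeping: tracking the precise scalars produced by iterated uses of $\diagp{X}$ in the coherence pentagons and triangles, since the parameters $p$ get combined non-linearly (as with $\tilde{p},\tilde{q}$ in \eqref{eq:pca}). This is, however, a mechanical check once the universal-property reduction is set up, because every diagram collapses to an identity between scalars in $[0,1]$ that the pca axioms on $[0,1]$ already validate.
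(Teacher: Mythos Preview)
Your approach is essentially the paper's: reduce everything to the universal property of convex products by post-composing with projections, after establishing $\diagp{X};\pi_1 = p\cdot\id{X}$ and $\diagp{X};\pi_2 = (1-p)\cdot\id{X}$. The treatment of (PCA2), (PCA3), and naturality is identical to the paper's. For (PCA1) the paper uses the \emph{binary} convex product (composing with the two projections $(X\piu X)\piu X \to X\piu X$ and $\to X$, invoking an auxiliary computation $p\cdot\diagq{X} = \langle\id{X},\id{X}\rangle_{pq,p(1-q)}$) rather than your ternary version; both work, and yours is arguably cleaner.

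One imprecision: your blanket claim that each coherence diagram ``has codomain an $n$-fold copower $\bigoplus X$'' and therefore reduces to scalar identities is not literally true for (Coh1), whose codomain is $(X\piu Y)\piu(X\piu Y)$ and whose domain is $X\piu Y$, so projections alone give you arrows $X\piu Y \to X$, not scalar multiples of an identity. The paper handles (Coh1) by a different and tidier route: it pre-composes with the coproduct injections $\iota_1,\iota_2\colon X,Y \to X\piu Y$ and then appeals to the already-proved naturality of $\diagp{}$, so that the coproduct universal property (rather than the convex product one) closes the argument. Your approach can be repaired by combining projections with injections, but the paper's method avoids the bookkeeping altogether.
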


In Appendix \ref{app:examples}, we illustrate some examples of convex biproduct categories, such as $\KlD$ and its continuous analogue. % and those finitely partially additive categories \cite{chophd} that are $\Cat{PCA}$-enriched (see Proposition \ref{prop:fpac}).
The most relevant for our work is $\KlD$.
For all sets $X$,  co-pcas and monoids in $\KlD$ are illustrated below.
%Within the two monoidal structures, for all objects $X$, there are arrows particularly relevant for us:
%
%$\copier{X}\colon X \to X \times X$ and $\discharger{X}\colon X\to \uno$ and $\codiag{X} \colon X+X \to X$ and $\cobang{X}\colon \zero \to X$
%
%
\begin{equation}\label{ex:comonoids}
\arraycolsep=2pt%\def\arraystretch{2.2}
\begin{array}{cccc}
\begin{array}{rcl}
\diagp{X} \colon  X & \to & X \oplus X \\
x & \mapsto & \delta_{(x,0)}+_p\delta_{(x,1)}%\begin{cases} (x,0)\mapsto p\\ (x,1)\mapsto 1-p\end{cases}
\end{array}
&
\begin{array}{rcl}
\bang{X}  \colon X & \to & \zero\\
x & \mapsto& \star
\end{array}
&
\begin{array}{rcl}
\cobang{X}\colon  \zero &\to& X\\
\text{ }
\end{array}
&
\begin{array}{rcl}
\codiag{X}  \colon  X\oplus X & \to & X\\
(x,i) & \mapsto& \delta_{x}
\end{array}
\end{array}
\end{equation}
Monoids and co-pcas will be useful later to freely generate convex biproduct categories. In particular, morphisms of convex biproduct categories  can be characterised as follows.

\begin{proposition}\label{prop: monoidal functors}
A functor $F\colon\Cat{C}\to \Cat{D}$ is a morphism of convex biproduct categories if and only if it is a strong monoidal functor preserving monoids and co-pcas.
\end{proposition}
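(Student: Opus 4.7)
The plan is to prove the two directions separately, exploiting the fact that in any convex biproduct category each structural operation---coproduct injections and copairings, convex product projections and pairings, and the PCA operations $+_p$ and $\star$---can be expressed as a composite involving only the monoids $(\codiag{X},\cobang{X})$, the co-pcas $(\diagp{X},\bang{X})$, and the monoidal product $\oplus$.

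For the only-if direction, I would assume $F$ is PCA-enriched and preserves finite coproducts. Preservation of $\oplus$ and of the initial object equips $F$ with a canonical strong monoidal structure over $(\oplus,\zero)$. Since $\codiag{X} = [\id{X},\id{X}]$ is the copairing of identities and $\cobang{X}$ is the unique map from the initial object, $F$ preserves the natural monoid structure. To obtain preservation of the co-pcas, I would invoke Proposition~\ref{prop: functor1} in the appendix, according to which every morphism of convex biproduct categories also preserves convex products. Hence $F$ preserves $\diagp{X} = \langle \id{X}, \id{X} \rangle_{p,1-p}$ together with $\bang{X}$, the unique map to the terminal object.

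For the if direction, I would assume $F$ is strong monoidal and preserves monoids and co-pcas. First, to establish PCA-enrichment, a short computation using the universal property of convex products gives $\langle f, g\rangle_{p,1-p} = \diagp{X}\,;\,(f \oplus g)$, so equation (\ref{eq: assioma aggiuntivo}) rewrites as $f +_p g = \diagp{X}\,;\,(f \oplus g)\,;\,\codiag{Y}$. Similarly, $\star_{X,Y} = \bang{X}\,;\,\cobang{Y}$, since $\zero$ is both initial and final so the enrichment forces $\bang{X}$ and $\cobang{Y}$ to coincide with the zero elements of the relevant hom-pcas. Strong monoidality together with preservation of $\diagp{}$, $\codiag{}$, $\bang{}$, $\cobang{}$ then yields $F(f +_p g) = Ff +_p Fg$ and $F(\star_{X,Y}) = \star_{FX,FY}$. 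For coproduct preservation, I would then use the standard presentations $\iota_1 = \Irunit{X}\,;\,(\id{X} \oplus \cobang{Y})$ (and symmetrically for $\iota_2$) and $[f,g] = (f \oplus g)\,;\,\codiag{Z}$. Strong monoidality combined with preservation of $\cobang{}$ and $\codiag{}$ thus forces $F$ to preserve injections, copairings, and the initial object, hence finite coproducts.

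The main obstacle is bookkeeping with the strong monoidal coherence isomorphisms $\varphi^0 \colon \zero \to F(\zero)$ and $\varphi_{X,Y} \colon FX \oplus FY \to F(X \oplus Y)$, which are needed to give a precise meaning to ``preserving $\codiag{X}$, $\cobang{X}$, $\diagp{X}$, $\bang{X}$'' in the target category. This is routine coherence reasoning, completely parallel to the classical biproduct case, so the real content of the argument lies in the two equational presentations highlighted above.
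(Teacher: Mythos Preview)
Your proposal is correct and follows essentially the same route as the paper: both directions reduce to Fox's theorem for the coproduct/monoid equivalence, and to the equational characterisations $f+_p g = \diagp{X};(f\oplus g);\codiag{Y}$ and $\star_{X,Y} = \bang{X};\cobang{Y}$ for the PCA-enrichment/co-pca equivalence. The only minor difference is that for the only-if direction you appeal to Proposition~\ref{prop: functor1} to obtain preservation of $\diagp{}$, whereas the paper instead unwinds the same argument by directly verifying that $F(\diagp{X});\psi_{X,X}^{-1}$ satisfies the two projection equations and then invoking uniqueness of the convex pairing (note that in either version the step from ``$F$ preserves convex products'' to ``$F$ preserves $\diagp{}$'' still uses PCA-enrichment to identify $F(p\cdot\id{X})$ with $p\cdot\id{F(X)}$).
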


%We have relegated the axioms in Figures \ref{fig:monoidax}, \ref{fig:fccoherence}, \ref{fig:co-pca axioms} and \ref{fig:copcacoherence} to the Appendix, since these laws appear in Table \ref{fig:freestrictfccat} and \ref{fig:freecopcacat} for the cacse of monoidal categories that are \emph{strict} (i.e., the structural isomorphism  are identities). Indeed, hereafter, we will always tacitly assume strictness. On the one hand, this assumption is without loss of generality (by \cite{}); on the other, it allows to radically simplify calculations and the sound use of diagrammatic notations. 

The axioms in Figures~\ref{fig:monoidax}, \ref{fig:fccoherence}, \ref{fig:co-pca axioms}, and \ref{fig:copcacoherence} have been placed in the Appendix, since their \emph{strict} counterparts already appear in Tables \ref{fig:freestrictfccat} and \ref{fig:freecopcacat}. From now on, we assume monoidal categories to be strict, meaning that the structural isomorphisms  $\assoc{X}{Y}{Z}$, $\lunit{X}$, $\runit{X}$ are identities. This assumption is made without loss of generality \cite{mac_lane_categories_1978}, it greatly simplifies calculations and allows for diagrammatic notations: see the last three rows of Figure \ref{fig:tapesax} for a diagrammatic representation of the axioms of natural monoids and co-pcas.

\begin{table}[t]
\resizebox{\textwidth}{!}{%
\begin{tabular}{cc cc}
    \toprule
    $(\id{ P}\piu \codiag{ P}) ; \codiag{ P} = (\codiag{ P}\piu \id{ P}) ; \codiag{ P}$ & (\newtag{$\codiag{}$-as}{eq:codiag assoc}) &  $(\cobang{ P}\piu \id{ P}) ; \codiag{ P}  = \id{ P}  $ & (\newtag{$\codiag{}$-un}{eq:codiag unital}) \\[0.3em]
    $\cobang{\zero} = \id{\zero} \qquad\codiag{\zero} = \id{\zero}$ & (\newtag{$\cobang{\zero},\codiag{\zero}$-coh}{eq:codiag zero coherence}) & $\sigma_{ P, P};\codiag{ P}=\codiag{ P}$ & (\newtag{$\codiag{}$-sym}{eq: codiag symmetry}) \\[0.3em]
    $\cobang{P \piu Q} = \cobang{P} \piu \cobang{Q}$ & (\newtag{\,$\cobang{}$-coh}{eq:cobang coherence}) & $\codiag{P \piu Q} = (\id{P} \piu \sigma_{P,Q} \piu \id{Q}) ; (\codiag{P}\piu \codiag{Q}) $ & (\newtag{$\codiag{}$-coh}{eq:codiag coherence}) \\[0.3em]
    $\cobang{P};f =\cobang{Q}$ & (\newtag{\,$\cobang{}$-nat}{eq:cobang nat}) & $\codiag{P};f =(f\piu f); \codiag{Q}$ & (\newtag{$\codiag{}$-nat}{eq:codiag nat}) \\
    \bottomrule
\end{tabular}}
\caption{Axioms for natural and coherent monoids in a strict monoidal category.}\label{fig:freestrictfccat}
\end{table}

\begin{table}[t]
\resizebox{\textwidth}{!}{%
\begin{tabular}{cc cc}
    \toprule
    $\diagp{P};(\diagq{P}\piu \id{P})=\, \diagptilde{P};(\id{P}\piu \,\diagqtilde{})$ & (\newtag{$\diagp{}$-as}{eq:diagp assoc}) &  $\tilde{p}= pq\qquad \tilde{q}= \frac{p(1-q)}{1-pq}$ &  \\[0.3em]
    $\diagp{P};\codiag{P}  = \id{ P}$ & (\newtag{$\diagp{}$-idem}{eq:diagp idempotency}) & $\diagp{P}\sigma_{ P, P}=\,\diagpbar{P}$ & (\newtag{$\diagp{}$-sym}{eq:diagp symmetry}) \\[0.3em]
    $\diagp{\zero} = \id{\zero}$ & (\newtag{$\diagp{\zero}$-coh}{eq:diagp zero coherence}) & $\bangp{\zero} = \id{\zero}$ & (\newtag{\,$\bangp{\zero}$-coh}{eq:bangp0 coherence}) \\[0.3em]
    $\bangp{P \piu Q} = \bangp{P} \piu\, \bangp{Q}$ & (\newtag{\,$\bangp{}$-coh}{eq:bangp coherence}) & $\diagp{P \piu Q} = (\diagp{P}\piu\, \diagp{Q});(\id{P} \piu \sigma_{P,Q} \piu \id{Q}) $ & (\newtag{$\diagp{}$-coh}{eq:diagp coherence}) \\[0.3em]
    $f; \bangp{Q}=\bangp{P}$ & (\newtag{\,$\bangp{}$-nat}{eq:bangp nat}) & $f; \diagp{Q}=\,\diagp{P}; (f \piu f)$ & (\newtag{$\diagp{}$-nat}{eq:diagp nat}) \\
    \bottomrule
\end{tabular}}
\caption{Axioms for natural and coherent co-pcas in a strict monoidal category.}\label{fig:freecopcacat}
\end{table}

%The structures illustrated above provide the fundamental linguistic constructs for the language that we are going to introduce. Of course they are not arbitrary: 
%the structures in \eqref{ex:products} and \eqref{ex:comonoids} live in the Kleisli category of any \emph{symmetric monoidal monad}; those in \eqref{ex:op} are specific to the monad \(\subdistr\), but analogous structures can similarly be found in monads equipped with an algebraic presentation $\mathbb{T}$.
%
%Before introducing our diagrammatic language in Section \ref{sec:tapediagrams}, we study how these structure interact:  $\per, \uno,\copier{},\discharger{}$ provide a copy discard category (Definition \ref{def:cd}); $\piu, \zero,\codiag{},\cobang{}$  a finite coproduct category (Definition \ref{def:fp});
% together they form what we named a \emph{finite coproduct-copy discard rig category} (Definition \ref{def: distributive gs-monoidal category}). Then, by adding the arrows provided by the algebraic theory $\mathbb{T}$, one obtains a \emph{$\mathbb{T}$-copy discard rig category} (Definition \ref{def:Trig cat}).
%
%

%\input{sections/mixafterreviews}

%
%
\section{Stochastic Matrices over PCA-enriched categories}\label{sec:cmatrix}    
It is well known (see, e.g., \cite[Exercises VIII.2.5-6]{mac_lane_categories_1978}) that from a category enriched over commutative monoids, one can freely generate the finite biproduct category of its matrices. In this section, we show that a similar construction holds for $\Cat{PCA}$-enriched categories: given such a category $\Cat{C}$, one can construct the convex biproduct category $\stmat{\Cat{C}}$ of \emph{stochastic matrices} over $\Cat{C}$.

Objects of $\stmat{\Cat{C}}$ are words in $Ob(\Cat{C})^*$. We will write $\bigoplus_{k=1}^m U_k$ for the word $U_1\dots U_n$ and $\zero$ for the empty word. We will denote objects of $\Cat{C}$ by $U,V$ and those of $\stmat{\Cat{C}}$ by $P,Q$. In $\stmat{\Cat{C}}$, an arrow $M\colon\bigoplus_{k=1}^n U_k\to \bigoplus_{k=1}^m V_k$ is an equivalence class of  $m\times n$ matrices with $(j,i)$-entries given by pairs $ (p_{ji}, f_{ji})$ where $f_{ji}\in\Cat{C}[U_i,V_j]$ and $p_{ji}\in[0,1]$ satisfy $\sum_{j=1}^{m}p_{ji}\le 1$. Two matrices $M$ and $M'$ are equivalent, in symbols $M\equiv M'$, if $p_{ji}\cdot f_{ji}= p'_{ji}\cdot f'_{ji}$ in $\Cat{C}[U_i,V_j]$ for all $i,j$. 
\begin{remark}
The use of pairs $ (p_{ji}, f_{ji})$ as entries of matrices is necessary to specify the constraints $\sum_{j=1}^{m}p_{ji}\le 1$. However, $\equiv$ ensures that one can safely write $M_{ji}= p_{ji}\cdot f_{ji}$.
\end{remark}

The composition of two morphisms $M\colon\bigoplus_{k=1}^n U_k\to \bigoplus_{k=1}^m V_k$ and $M'\colon\bigoplus_{k=1}^m V_k\to \bigoplus_{k=1}^l W_k$ %, given by an $l\times m$ matrix with entries $N_{uj}=q_{uj}\cdot g_{uj}$,  
is obtained by matrix multiplication $M'M$: for all $u\in \{1,\dots,l\}$ and $i\in \{1,\dots,n\}$ the entry at $(u,i)$ is 
\begin{equation}\label{eq:matrixmult}(M'M)_{ui} \defeq  r_{ui}\cdot \left(\sum_{j=1}^{m} \frac{p'_{uj}p_{ji}}{r_{ui}}\cdot (f_{ji};f'_{uj})\right) = \sum_{j=1}^{m} p'_{uj}p_{ji}\cdot (f_{ji};f'_{uj})\end{equation} 
where the convex sums $\sum_k p_k\cdot (-)_k$ are those provided by the $\Cat{PCA}$ enrichment of $\Cat{C}$, the composition $;$ is in $\Cat{C}$ and
$r_{ui}= (\sum_{j=1}^{m} p'_{uj}p_{ji})$. Simple computations confirm that $\sum_{u=1}^{l}r_{ui}\leq 1$, $\frac{p'_{uj}p_{ji}}{r_{ui}}\in[0,1]$ and that the equality in (\ref{eq:matrixmult}) follows from Lemma~\ref{lemma:initialproperties}.\ref{lemma q per somma} in Appendix~\ref{app:sec:pca}.
%The composition is well defined since $\sum_{u=1}^{l}\sum_{j=i}^{m} q_{uj}p_{ji}=\sum_{j=i}^{m}p_{ji}\sum_{u=1}^{l}q_{uj}\le 1$ and $\frac{q_{uj}p_{ji}}{\sum_{j=i}^{m} q_{uj}p_{ji}}\in[0,1]$. 
For all objects $P=\bigoplus_{k=1}^n U_k$, $\id{P}$ is the $n\times n$ matrix with entries $(\id{P})_{jj}=1\cdot \id{U_j}$ and, for $i\neq j$, $(\id{P})_{ji}=0 \cdot \star$.

\begin{example}\label{ex:matrices}
Recall $\Cat{1}^+$ and $(A^*)^+$ from Example \ref{ex:PCA}.
In $\stmat{\Cat{1}^+}$ objects are natural numbers and arrows $n \to m$ are the usual $m\times n$ sub-stochastic matrices (i.e., $\sum_{j=1}^mp_{ji}\leq 1$).
In $\stmat{(A^*)^+}$ objects are natural numbers and arrows $n \to m$ are $m\times n$ matrices with entries in $\Dis(A^*)$. Consider for instance the matrices $N\colon 2\to 3$ and $M\colon 2 \to 2$ on the left below.  The composition $M;N\colon 2 \to 3$ is the matrix on the right.
\begin{equation*}
\begin{pmatrix}
    \frac{1}{2} \cdot a & \frac{1}{2} \cdot c\\
    \frac{1}{3} \cdot ab & 0 \cdot \star \\
    0 \cdot \star & \frac{1}{3} \cdot \id{}
\end{pmatrix} 
\begin{pmatrix}
    \frac{1}{2} \cdot a & 1 \cdot c\\
    \frac{1}{2} \cdot ab & 0 \cdot \star \\
\end{pmatrix} 
=
\begin{pmatrix}
    \frac{1}{2} \cdot (\frac{1}{2} \cdot aa +\frac{1}{2}abc) & \frac{1}{2} \cdot ca\\
    \frac{1}{6} \cdot aab & \frac{1}{3} \cdot cab \\
    \frac{1}{6} \cdot ab  & 0 \cdot \star
\end{pmatrix} 
\end{equation*}
\end{example}

\begin{proposition}\label{prop:cmatrixcb}
     $\stmat{\Cat{C}}$ is a convex biproduct category.
\end{proposition}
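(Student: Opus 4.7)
The plan is to verify the three data and axioms of Definition~\ref{def:convbicat} in order, reducing everything to the PCA-enrichment of the underlying category $\Cat{C}$.

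First, I would check that $\stmat{\Cat{C}}$ is a well-defined PCA-enriched category. The composition formula~(\ref{eq:matrixmult}) is well-defined on equivalence classes $\equiv$ because $p_{ji}\cdot f_{ji}$ is the only information that survives the quotient, and matrix multiplication only uses those products. Associativity and unit laws follow from the analogous laws in $\Cat{C}$ together with the associativity of convex sums from Lemma~\ref{lemma:initialproperties} in Appendix~\ref{app:sec:pca}. The PCA structure on each homset $\stmat{\Cat{C}}[P,Q]$ is defined componentwise by $(M +_p N)_{ji} \defeq M_{ji} +_p N_{ji}$, with zero the matrix whose entries are all $0\cdot \star$; this inherits the PCA laws~(\ref{eq:pca}) directly from $\Cat{C}$. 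Bilinearity of composition with respect to $+_p$ (the three laws in~(\ref{eq:enr})) reduces, using~(\ref{eq:matrixmult}), to the bilinearity of $;$ in $\Cat{C}$ combined with the distributivity of convex sums.

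Second, the empty word $\zero$ is the zero object: the only matrix in $\stmat{\Cat{C}}[\zero,P]$ or $\stmat{\Cat{C}}[P,\zero]$ is the empty matrix, giving initiality and finality. For binary (co)products I would take $P \oplus Q$ to be the concatenation of words, with injections $\iota_1 \colon P \to P\oplus Q$ and $\iota_2\colon Q \to P\oplus Q$ given by the block matrices $\binom{\id{P}}{0}$ and $\binom{0}{\id{Q}}$, and projections $\pi_1,\pi_2$ by their transposes. Equation~(\ref{eq:delta}) is then a direct calculation from the matrix multiplication formula: the diagonal blocks produce $\id{}$'s and the off-diagonal blocks are annihilated by the zero entries.

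Third, for the coproduct property I would take the copairing $[M,N] \colon P\oplus Q \to R$ to be the matrix horizontally juxtaposing $M$ and $N$; uniqueness follows because any arrow $H \colon P\oplus Q \to R$ equals $[\iota_1;H, \iota_2;H]$ by direct inspection of blocks. The convex product is the more delicate part: given $M\colon R \to P$, $N\colon R \to Q$ and scalars $p_1,p_2$ with $p_1+p_2\le 1$, I would define $\langle M,N\rangle_{p_1,p_2}$ as the vertically stacked matrix $\binom{p_1\cdot M}{p_2\cdot N}$, where multiplication of a matrix by a scalar acts on each entry via the scalar-multiplication operation on PCAs defined in Section~\ref{sec:pca}. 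The universal property then states that $\binom{p_1\cdot M}{p_2\cdot N};\pi_i = p_i\cdot (-)_i$, which is immediate from~(\ref{eq:matrixmult}). Uniqueness requires showing that any $H$ satisfying $H;\pi_i = p_i\cdot (-)_i$ is determined up to $\equiv$ by the values $p_i\cdot H_{\text{blocks}}$, which is exactly the equivalence relation we imposed. The $n$-ary case extends by iteration as in Lemma~\ref{lemma:naryconvexproduct}. The main obstacle is handling the scaling by the $p_i$ correctly: one must check that the entries of $\binom{p_1\cdot M}{p_2\cdot N}$ still satisfy the column-sum constraint $\sum_j p_{ji}\le 1$, which follows from $\sum_j p_1\cdot M_{ji} + \sum_j p_2 \cdot N_{ji} \le p_1 + p_2 \le 1$, and that composition interacts correctly with scalar multiplication, using Lemma~\ref{lemma:initialproperties} (as already invoked to justify~(\ref{eq:matrixmult})).
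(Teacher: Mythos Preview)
Your proposal is correct and complete in outline. The paper takes a slightly different organisational route: rather than verifying the coproduct universal property directly via block decomposition, it invokes Fox's theorem by exhibiting natural and coherent commutative monoids $(\codiag{P},\cobang{P})$ on every object (Lemma~\ref{lemma:stmat coproduct}); and rather than defining the PCA-enrichment componentwise as you do, it first establishes the co-pca structure $(\diagp{P},\bang{P})$ on every object and then \emph{derives} the enrichment as $M+_p N \defeq\ \diagp{};(M\oplus N);\codiag{}$ (Lemma~\ref{lemma:stmat pca enriched}). Your direct block-matrix verification is more elementary and self-contained; the paper's monoidal-algebraic detour pays off later, because Proposition~\ref{prop: monoidal functors} characterises morphisms of convex biproduct categories as monoidal functors preserving exactly those monoids and co-pcas, which is the form needed for the syntactic presentation in Section~\ref{sec:syntactic}. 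For the convex product itself (Lemma~\ref{lemma:stmat convex prod}) both arguments are the same explicit column calculation.
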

%
%
%$\stmat{\Cat{C}}$ has a strict symmetric monoidal structure as follows: $\bigoplus_{k=1}^n A_k \piu \bigoplus_{k=1}^{n'} A'_k$ is given by the formal $\piu$'s of the elements $A_k$ for $i=1,\dots,n$ and $A'_k$ for $i=1,\dots,n'$. For two matrices $M\colon\bigoplus_{k=1}^n A_k\to \bigoplus_{k=1}^m B_k$ and $N\colon\bigoplus_{k=1}^{n'} A'_k\to \bigoplus_{k=1}^{m'} B'_k$ is given by the $(m+m')\times (n+n')$ matrix 
%It is convenient to illustrate the underlying monoidal structure as well as the natural and coherent monoids and PCAs.
For arbitrary matrices $M$ and $N$, $M\oplus N$ is defined as below on the left
\begin{equation}\label{eq:matsmc}
M \piu N \defeq \begin{pmatrix}
    M & \emptyset\\
    \emptyset & N
\end{pmatrix}
\qquad
\symm{U}{Q}\defeq \begin{pmatrix}
    \emptyset & \id{Q}\\
    \id{P} & \emptyset
\end{pmatrix}
\end{equation}
where $\emptyset$ is the matrix with all entries $ 0\cdot \star$. The symmetry $\symmp{P}{Q}\colon P\oplus Q \to Q \oplus P$ is defined as on the right above. %Simple matrix multiplications confirms that these structures make $\stmat{\Cat{C}}$ a strict symmetric monoidal structure.
%
%$M\colon U'\to U$ and $N\colon V'\to V$ it holds $(M\piu I);\sigma_{U,V} =\sigma_{U',V};(I\piu M)$ and $(I\piu N);\sigma_{U,V}= \sigma_{U,V'};(N\piu I)$.
%
%This
%
%The unit is given by the $0\times 0$ matrix, i.e.\ the empty matrix.
%
%The symmetry for $U=\bigoplus_{k=1}^n A_k$ and $V=\bigoplus_{k=1}^m B_k$ is  given  by 
%\[\symm{U}{V}= \begin{pmatrix}
%    \emptyset_{m\times n} & I_V\\
%    I_U & \emptyset_{n\times m}
%\end{pmatrix}\]
%and a simple matrix multiplication shows that for $M\colon U'\to U$ and $N\colon V'\to V$ it holds $(M\piu I);\sigma_{U,V} =\sigma_{U',V};(I\piu M)$ and $(I\piu N);\sigma_{U,V}= \sigma_{U,V'};(N\piu I)$.
% 
%
%
Every object $P$ is equipped with  co-pca and monoids. These are defined for all objects $U\in \Cat{C}$ as follows, where $!_U$ (respectively $?_U$) is the unique matrix with $0$ rows (columns).
\begin{equation}\label{eq:matmonpca}
\diagp{U}\defeq\begin{pmatrix}
    p\cdot \id{U} \\
    (1-p)\cdot \id{U}
\end{pmatrix}
\qquad
\bang{U}\defeq !_U
\qquad 
\cobang{U}\defeq ?_U
\qquad
\codiag{U}\defeq \begin{pmatrix}
    1\cdot \id{U} & 1\cdot \id{U}
\end{pmatrix} 
\end{equation}
 For arbitrary objects $P$ of $\stmat{\Cat{C}}$, co-pca and monoids are defined inductively as:
\begin{equation}\label{eq:indmoncopca}
\begin{array}{c|c}
\begin{array}{ccc}
 \diagp{\zero}\defeq \id{\zero}\; &\; \bang{\zero}\defeq \id{\zero}\; &\; \bang{U \oplus P}\defeq \bang{U}\oplus \bang{P}
 \end{array}
 &
 \begin{array}{ccc}
\codiag{\zero}\defeq \id{\zero}\; &\; \cobang{\zero}\defeq \id{\zero} \;& \;\cobang{U \oplus P}\defeq \cobang{U}\oplus \cobang{P}
\end{array} \\
\diagp{U \oplus P}\defeq (\diagp{U} \piu\, \diagp{P});(\id{U} \piu \symm{U}{P} \piu \id{P})&
\codiag{U\oplus P}\defeq (\id{U} \piu \symm{P}{U} \piu \id{P});(\codiag{U}\piu \codiag{P})
\end{array}
\end{equation}

%\[\bang{\zero}\defeq \id{\zero}\]
%\[\]
%
%
%\[\]
%\[
%
%\]
%
%\[\cobang{\zero}\defeq \id{\zero}\]
%\[\cobang{U \oplus P}\defeq \cobang{U}\oplus \cobang{P}\]

Like categories of matrices are freely generated finite biproduct categories \cite{mac_lane_categories_1978}, similarly  $\stmat{\Cat{C}}$ is the convex biproduct category freely generated by $\Cat{C}$. We are going to illustrate this below. 

First, for all $\Cat{PCA}$-enriched functors $F\colon \Cat{C}\to \Cat{D}$, one can define $\stmat{F}\colon \stmat{\Cat{C}}\to \stmat{\Cat{D}}$ as the functor mapping an object $\bigoplus_{k=1}^n U_k$ to $\bigoplus_{k=1}^n F(U_k)$ and a matrix
$M\colon\bigoplus_{k=1}^n U_k\to \bigoplus_{k=1}^m V_k$ with entries $M_{ji}= p_{ji}\cdot f_{ji}$  into the matrix with entries $\stmat{F}(M)_{ji}\defeq p_{ji}\cdot F(f_{ji})$.%=F(p_{ji}\cdot f_{ji})
\begin{proposition}\label{prop:stmatfun}
    Let $F\colon \Cat{C}\to \Cat{D}$ be a PCA-enriched functor. $\stmat{F} \colon\stmat{\Cat{C}}\to \stmat{\Cat{D}}$ is a morphism of convex biproduct categories.
\end{proposition}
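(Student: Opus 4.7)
The plan is to verify the three conditions characterising morphisms of convex biproduct categories: $\stmat{F}$ is a well-defined functor, it is $\Cat{PCA}$-enriched, and it preserves finite coproducts. For the last point I will use Proposition~\ref{prop: monoidal functors} and instead verify that $\stmat{F}$ is strong monoidal and preserves the monoids and co-pcas induced on every object by the convex biproduct structure.

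First I would check that $\stmat{F}$ is well defined on equivalence classes: if $M \equiv M'$, then $p_{ji}\cdot f_{ji} = p'_{ji}\cdot f'_{ji}$ in $\Cat{C}[U_i,V_j]$, and applying the $\Cat{PCA}$-enriched functor $F$ gives $p_{ji}\cdot F(f_{ji}) = p'_{ji}\cdot F(f'_{ji})$, so $\stmat{F}(M)\equiv \stmat{F}(M')$. Preservation of identities is immediate from $F(\id{U})=\id{F(U)}$ and $F(\star)=\star$. Preservation of composition reduces, entrywise, to verifying that
$F\!\left(\sum_{j=1}^m p'_{uj}p_{ji}\cdot (f_{ji};f'_{uj})\right)=\sum_{j=1}^m p'_{uj}p_{ji}\cdot (F(f_{ji});F(f'_{uj}))$,
which follows directly from $F$ preserving composition and convex sums (and hence the scalar multiplication defined from them via~\eqref{eq:def somma n pca}).

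Next, I would establish that $\stmat{F}$ is $\Cat{PCA}$-enriched. The convex sums on the hom-sets of $\stmat{\Cat{C}}$ come from the convex biproduct structure via~\eqref{eq: assioma aggiuntivo}. It is therefore enough to show that $\stmat{F}$ is strict monoidal with respect to $\oplus$ and preserves the convex (co)products, and then invoke the characterisation in Proposition~\ref{prop: monoidal functors}. Strict monoidality on objects holds by concatenation of words, with $\stmat{F}(\zero)=\zero$; on arrows, the block-diagonal formula in~\eqref{eq:matsmc} is preserved because $F(\star)=\star$, and the symmetry $\symm{P}{Q}$ is built from identities and $\star$, hence preserved. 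Preservation of the monoids and co-pcas defined in~\eqref{eq:matmonpca} for single-object words is immediate, since the entries are $1\cdot\id{U}$, $p\cdot\id{U}$, $(1-p)\cdot\id{U}$, or the empty-row/column matrices, all of which are preserved by $\stmat{F}$ using that $F$ preserves identities. The inductive clauses in~\eqref{eq:indmoncopca} then propagate preservation to arbitrary words, because $\oplus$, composition, and symmetries have all just been shown to be preserved.

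Putting these verifications together, $\stmat{F}$ is a $\Cat{PCA}$-enriched strong monoidal functor preserving monoids and co-pcas, hence a morphism of convex biproduct categories by Proposition~\ref{prop: monoidal functors}. The only subtle point is the entrywise composition identity: the right-hand side of~\eqref{eq:matrixmult} is a nested convex combination of composites, and one must ensure that the $\Cat{PCA}$-enrichment of $F$ really applies there; but this is exactly what~\eqref{eq:enr} together with preservation of $\star$ by $F$ provides. The remainder of the argument is a routine unpacking of the definitions.
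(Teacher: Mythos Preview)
Your proposal is correct and follows the same route as the paper: verify that $\stmat{F}$ is strict monoidal and preserves the monoid and co-pca structures, then conclude by Proposition~\ref{prop: monoidal functors}. The paper's proof is a terse one-liner stating exactly this; your version simply fills in the details the paper leaves implicit (well-definedness on $\equiv$-classes, functoriality, and the entrywise check for composition).
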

Then, it is easy to check that the assignment $\Cat{C} \mapsto \stmat{\Cat{C}}$ and $F \mapsto \stmat{F}$ provides a functor $\stmat{-}:\Cat{PCACat}\to \Cat{CBCat}$ which is left adjoint to the forgetful $U\colon \Cat{CBCat} \to \Cat{PCACat}$.
\begin{theorem}\label{thm:matfree}
    $\stmat{-} \colon \Cat{PCACat}\to \Cat{CBCat}$ is left adjoint to $U\colon \Cat{CBCat} \to \Cat{PCACat}$
\end{theorem}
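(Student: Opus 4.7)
The plan is to exhibit the adjunction by constructing an explicit unit and showing it has the required universal property. Define the unit $\eta_{\Cat{C}} \colon \Cat{C} \to U(\stmat{\Cat{C}})$ on objects by sending $U$ to the length-one word $U$, and on morphisms by sending $f \colon U\to V$ to the $1\times 1$ matrix whose entry is $1\cdot f$. This is manifestly functorial, and pca-enrichment is preserved since multiplication of a $1\times 1$ matrix $(p\cdot f)$ by $(q\cdot g)$ at a single coordinate matches $\Cat{C}$'s own convex combinations by construction in \eqref{eq:matrixmult}.

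For the universal property, fix a pca-enriched functor $F\colon \Cat{C} \to U(\Cat{D})$ into a convex biproduct category $\Cat{D}$. I would define the extension $\widetilde{F}\colon \stmat{\Cat{C}} \to \Cat{D}$ on objects by $\widetilde{F}(\bigoplus_{k=1}^n U_k) \defeq \bigoplus_{k=1}^n F(U_k)$, using the coproduct in $\Cat{D}$. On a morphism $M \colon \bigoplus_{k=1}^n U_k \to \bigoplus_{k=1}^m V_k$ with entries $M_{ji} = p_{ji}\cdot f_{ji}$, define $\widetilde{F}(M)$ by its action on each input injection: the component $\iota_i; \widetilde{F}(M)$ is forced, by the convex product structure on $\bigoplus_j F(V_j)$, to be $\langle F(f_{1i}),\dots,F(f_{mi})\rangle_{p_{1i},\dots,p_{mi}}$, which exists because $\sum_j p_{ji}\le 1$. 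Equivalently, $\widetilde{F}(M)$ can be assembled by first taking this convex pairing for each column and then copairing over inputs. Because the convex pairing depends only on the pointwise products $p_{ji}\cdot F(f_{ji})$ (a consequence of the universal property for convex products), the definition respects the equivalence $\equiv$ on matrices.

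The verifications then split into: (i) $\widetilde{F}$ is a functor, which follows by direct calculation against the definition of matrix multiplication \eqref{eq:matrixmult} together with equation \eqref{eq: assioma aggiuntivo} relating convex sums to pairings and copairings; (ii) $\widetilde{F}$ preserves the convex biproduct structure in the sense of Proposition \ref{prop: monoidal functors}, which reduces to checking preservation of $\oplus$ on objects, of monoids $\codiag{}, \cobang{}$ and of co-pcas $\diagp{}, \bangp{}$. This last point is essentially immediate from the inductive definitions in \eqref{eq:matmonpca}-\eqref{eq:indmoncopca}, since $F$ lands in a convex biproduct category and $\widetilde{F}$ sends the canonical injections and projections of $\stmat{\Cat{C}}$ to those in $\Cat{D}$. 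Finally, (iii) the triangle $\widetilde{F}\circ \eta_{\Cat{C}} = F$ holds by construction, and uniqueness follows: any morphism of convex biproduct categories $G\colon \stmat{\Cat{C}}\to\Cat{D}$ with $G\circ \eta_{\Cat{C}} = F$ must agree with $\widetilde{F}$ on singletons and, because it preserves the convex product structure, must agree on every matrix, since every matrix decomposes uniquely (up to $\equiv$) via injections, projections, and convex pairings of its entries.

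The main obstacle will be step (i), showing that $\widetilde{F}$ preserves composition. Matrix multiplication \eqref{eq:matrixmult} mixes convex weights from two matrices into products $p'_{uj}p_{ji}$, and one must verify that the corresponding convex pairing equals the composite of pairings, using both the pca-enrichment of $\Cat{D}$ and the interchange between copairings and pairings guaranteed by \eqref{eq:delta} and \eqref{eq: assioma aggiuntivo}. The bookkeeping is routine once the right normal form for $\widetilde{F}(M)$ is chosen (pairing-then-copairing, or equivalently a sum of rank-one terms $\iota_j; p_{ji}\cdot F(f_{ji}); \pi_i^{\dagger}$), but it is the only place where all four structural ingredients — enrichment, coproducts, convex products, and the biproduct compatibility \eqref{eq:delta} — are used simultaneously.
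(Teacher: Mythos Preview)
Your proposal is correct and follows essentially the same approach as the paper: define the unit by $f \mapsto (1\cdot f)$, extend $F$ to $\widetilde{F}$ by convex-pairing each column and then copairing over inputs, and deduce uniqueness from preservation of coproducts and convex products. The paper is terser on your step (i)—it folds functoriality into the structural checks via Proposition~\ref{prop: monoidal functors}—but your explicit identification of composition-preservation as the place where all four ingredients interact is accurate and the argument you sketch is the right one.
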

We conclude this section with a simple observation that will be useful later in Section~\ref{sec:probboolcircuits}.
\begin{proposition}\label{prop:counit fullfaithful}
    For any convex bicategory $\Cat{C}$, the counit of the adjunction above\\ $\epsilon_\Cat{C}\colon \stmat{U(\Cat{C})}\to \Cat{C}$ is a full and faithful morphism of convex bicategories.
\end{proposition}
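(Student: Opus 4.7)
The plan is to first give an explicit description of the counit $\epsilon_\Cat{C}$ and then deduce faithfulness and fullness from the universal properties of coproducts and convex products in $\Cat{C}$.

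\textbf{Characterisation of $\epsilon_\Cat{C}$.} The triangle identity $U(\epsilon_\Cat{C})\circ\eta_{U(\Cat{C})}=\id{U(\Cat{C})}$ pins down $\epsilon_\Cat{C}$ on singleton lists and $1\times 1$ matrices: $U\mapsto U$ and $[1\cdot f]\mapsto f$. Since $\epsilon_\Cat{C}$ is a morphism of convex biproduct categories, by Proposition~\ref{prop: monoidal functors} it preserves $(\piu,\zero)$, the monoids, and the co-pcas, and hence also the coproduct injections $\iota_i$ and the convex product projections $\pi_j$ (the latter being expressible from the discharger via unitors and symmetries of $\piu$). By functoriality, $\epsilon_\Cat{C}(U_1\cdots U_n) = U_1\piu\cdots\piu U_n$, and a short computation in $\stmat{U(\Cat{C})}$ shows that for any matrix $M\colon P\to Q$ with entries $M_{ji}=p_{ji}\cdot f_{ji}$ the morphism $\phi\defeq\epsilon_\Cat{C}(M)$ is characterised by the \emph{entry-recovery} identity $\iota_i;\phi;\pi_j=p_{ji}\cdot f_{ji}$ in $\Cat{C}[U_i,V_j]$ for all $i,j$.

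\textbf{Faithfulness and fullness.} For faithfulness, if $\epsilon_\Cat{C}(M)=\epsilon_\Cat{C}(M')$ then applying entry recovery on both sides immediately yields $p_{ji}\cdot f_{ji} = p'_{ji}\cdot f'_{ji}$ in $\Cat{C}$ for every $i,j$, which is exactly the equivalence $\equiv$ defining morphisms of $\stmat{U(\Cat{C})}$. For fullness, I would start from an arbitrary $\phi\colon\bigoplus_i U_i\to\bigoplus_j V_j$ in $\Cat{C}$ and, for each column index $i$, apply the convex product universal property to the morphism $\iota_i;\phi\colon U_i\to\bigoplus_j V_j$ to extract a tuple $(p_{ji},f_{ji})_j$ with $\sum_j p_{ji}\le 1$ and $\iota_i;\phi;\pi_j=p_{ji}\cdot f_{ji}$. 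Assembling these tuples into a matrix $M$ yields a morphism of $\stmat{U(\Cat{C})}[P,Q]$ whose entry-recovery data agree with those of $\phi$. Applying the uniqueness clauses of the convex product universal property column by column, and of the coproduct universal property across columns, then forces $\epsilon_\Cat{C}(M)=\phi$.

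The main obstacle is the decomposition step in the fullness argument: extracting, from the single morphism $\iota_i;\phi$, a tuple of scalars $p_{ji}$ satisfying the column constraint $\sum_j p_{ji}\le 1$ together with morphisms $f_{ji}$ realising the projections. This is precisely where the convex (rather than ordinary) nature of the product in $\Cat{C}$ matters: the binary universal property does not suffice, and the argument must invoke the $n$-ary convex product of Lemma~\ref{lemma:naryconvexproduct} together with the coherence of the co-pca structure proved in Proposition~\ref{lemma: copca objects in convbicat}.
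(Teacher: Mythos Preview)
Your faithfulness argument via entry recovery is correct and coincides with the paper's.

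For fullness there is a genuine gap exactly where you flag the ``main obstacle,'' and it cannot be closed with the tools you cite. The convex-product universal property guarantees a unique pairing \emph{given} data $(p_j,f_j)$ with $\sum_j p_j\le 1$; it does not assert that an arbitrary arrow into $\bigoplus_j V_j$ arises this way, and in general it does not. Concretely, in $\KlD$ with $U=V_1=V_2=2$, the arrow $\psi\colon U\to V_1\oplus V_2$ given by $\psi(0)=\delta_{\iota_1(0)}$ and $\psi(1)=\delta_{\iota_2(0)}$ has $\psi;\pi_1$ and $\psi;\pi_2$ each attaining total mass $1$ on some input, which forces $p_1=p_2=1$ in any decomposition $\psi;\pi_j=p_j\cdot f_j$; hence no stochastic $2\times 1$ matrix over $U(\KlD)$ maps to $\psi$ under $\epsilon$. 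Lemma~\ref{lemma:naryconvexproduct} and Proposition~\ref{lemma: copca objects in convbicat} only \emph{construct} pairings from prescribed data, so they cannot rescue this decomposition step.

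The paper's fullness argument is entirely different and much shorter: it invokes the triangle identity to exhibit, for any $f\colon A\to B$ in $\Cat{C}$, the preimage $\eta_{U(\Cat{C})}(f)=[1\cdot f]$. Note, however, that this preimage lives in the homset between the \emph{singleton} words $A$ and $B$; it does not by itself produce a preimage in $\stmat{U(\Cat{C})}[P,Q]$ for an arbitrary length-$n$ word $P$ with $\epsilon_\Cat{C}(P)=A$, and the singleton word $\bigoplus_i U_i$ is in general not isomorphic in $\stmat{U(\Cat{C})}$ to $U_1\cdots U_n$ (the same $\psi$ above obstructs it). So even the paper's one-line argument, read literally, leaves the homset-level fullness claim unaddressed.
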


\section{An equational presentation of stochastic matrices}\label{sec:syntactic}    

\begin{table}[t]
\resizebox{\textwidth}{!}{%
$
\begin{array}{c}
\toprule
\begin{array}{c}
    \begin{array}{ccccc}
        {\diagp{U}\colon U \to U\oplus U} &
        {\bang{U} \colon U \to \zero} &
        {\sigma_{U, V}^{\piu} \colon U \piu V \to V \piu U} &
        {\codiag{U}\colon U \piu U \to U} &
        {\cobang{U} \colon \zero \to U}   
    \end{array}    
\\[0.3em]
    \begin{array}{ccccc}
        {id_\zero \colon \zero \to \zero} &
        {id_U \colon U \to U} &
        \inferrule{c \colon U \to V}{\tapeFunct{c}\colon U \to V} &
        \inferrule{\t \colon P \to Q \and \s \colon Q \to R}{\t ; \s \colon P \to R} &
        \inferrule{\t \colon P_1 \to Q_1 \and \s \colon P_2 \to Q_2}{\t \piu \s \colon P_1 \piu P_2 \to Q_1 \piu Q_2}       
    \end{array}
\end{array}
\\[0.5em]
\begin{array}{cc}
\midrule
\begin{array}{c}
(f;g);h=f;(g;h) \qquad id_P;f=f=f;id_Q\\
(f_1\piu f_2) ; (g_1 \piu g_2) = (f_1;g_1) \piu (f_2;g_2)
\end{array} 
&
\begin{array}{c}
id_{\zero}\piu f = f = f \piu id_{\zero} \qquad (f \piu g)\, \piu h = f \piu \,(g \piu h) \\
\sigma_{P, Q}; \sigma_{Q, P}= id_{P \piu Q} \qquad (\gen \piu id_R) ; \sigma_{Q, R} = \sigma_{P,R} ; (id_R \piu \gen)
\end{array}
\end{array}\\
\bottomrule
\end{array}
$
}
\caption{Typing rules (top) and axioms (bottom) for freely generated strict symmetric monoidal categories.}
\label{fig:freestricmmoncatax}
\end{table}

So far, we have seen that given a category, one can first freely enrich it over $\Cat{PCA}$, and then obtain a convex biproduct category by taking its category of stochastic matrices.
Now, we give a syntactic description, in terms of generators and equations, of such category.

For a category $\Cat{C}$, we consider terms generated by the following grammar
  
\begin{equation}\label{tapesGrammar}
\setlength{\arraycolsep}{3pt}
\begin{array}{rcccccccccccccccccccc}
\t & ::= & \diagp{U} & \mid & \bangp{U} & \mid & \tapeFunct{c} & \mid & \cobang{U} & \mid & 
\codiag{U} & \mid & \id{U} & \mid & \id{\zero} & \mid & \sigma_{U,V}^{\piu} & \mid & \t ; \t & \mid & \t \piu \t
\end{array}
\end{equation}
%
%
%\begin{equation}\label{tapesGrammar}
%    \begin{tabular}{rc ccccccccccccccccccccc}\setlength{\tabcolsep}{0.0pt}
%        $\t$ & ::= & $\id{A}$ & $\!\!\! \mid \!\!\!$ & $ \id{\zero} $ & $\!\!\! \mid \!\!\!$ & $\Br{+_p}_A$  & $\!\!\! \mid \!\!\!$  & $\Br{\star}_A$  & $\!\!\! \mid \!\!\!$  & $ \tapeFunct{c} $ & $\!\!\! \mid \!\!\!$ & $ \sigma_{A,B}^{\piu} $ & $\!\!\! \mid \!\!\!$ & $   \t ; \t   $ & $\!\!\! \mid \!\!\!$ & $  \t \piu \t  $ & $\!\!\! \mid \!\!\!$  & $\cobang{A}$ & $\!\!\! \mid \!\!\!$ & $\codiag{A}$ & $\!\!\! \mid \!\!\!$ 
%    \end{tabular}
%\end{equation}  
where $p\in (0,1)$, $U,V \in \ob{\Cat{C}}$, and $c$ is an arrow in $\Cat{C}$. Terms are typed according to the rules at the top of Table~\ref{fig:freestricmmoncatax}: each type is an arrow $P \to Q$ where $P,Q\in \ob{\Cat{C}}^*$ are regarded as sums of objects of $\Cat{C}$. As expected, we consider only those terms that are typable. For arbitrary $P \in \ob{\Cat{C}}^*$, we define
$\diagp{P},\bangp{P},\cobang{P},\codiag{P}$ as in \eqref{eq:indmoncopca}. Analogous inductive definitions give us $\id{P}$ and $\symmp{P}{Q}$. %\marginpar{Think later about the types} %Terms are taken modulo the axioms in BLA BLA

The category $\CatTapeC$ has as set of objects $\ob{\Cat{C}}^*$. Arrows in $\CatTapeC[P,Q]$ are terms modulo the axioms of natural and coherent monoids, co-pcas, strict symmetric monoidal categories (respectively in Tables~\ref{fig:freestrictfccat}, \ref{fig:freecopcacat} and~\ref{fig:freestricmmoncatax} bottom) and the following two axioms.
\begin{equation}\label{eq:TapeFunctAxioms}
\tapeFunct{\id{P}}=\id{P} \qquad \tapeFunct{c;d} = \tapeFunct{c}; \tapeFunct{d} \tag{Tape}
\end{equation}
Identities, composition, symmetries and monoidal product are defined as on terms. It is thus immediate to see that $(\CatTapeC,\oplus, \zero)$ forms a symmetric monoidal category. Moreover such category is enriched over $\Cat{PCA}$:  for all objects $P,Q$ and arrows $f,g\colon P \to Q$
\begin{equation}\label{eq:enrichmentTC}
f +_p g \defeq\   \diagp{P} ; (f\oplus g) ; \codiag{Q} \qquad \star_{P,Q} \defeq \, \bangp{P} ; \cobang{Q}
\end{equation}
By naturality of $\cobang{}$ and $\bang{}$, $\zero$ is both initial and final object. By the Fox theorem \cite{fox1976coalgebras}, $\oplus$ is a coproduct with injections $\id{P_1}\oplus \cobang{P_2} \colon P_1 \to P_1\oplus P_2$ and $\cobang{P_1}\oplus \id{P_2} \colon P_2 \to P_1\oplus P_2$. %Most importantly, $\oplus$ is a convex product with projections $\id{P_1}\oplus \bang{P_2} \colon P_1 \oplus P_2 \to P_1$ and $\bang{P_1}\oplus \id{P_2} \colon P_1\oplus P_2 \to  P_2$.

Our main technical effort consists in proving that $\oplus$ is a convex product with projections $\id{P_1}\oplus \bang{P_2} \colon P_1 \oplus P_2 \to P_1$ and $\bang{P_1}\oplus \id{P_2} \colon P_1\oplus P_2 \to  P_2$. By the coproduct property any $\t\colon\bigoplus_{j=1}^mV_j \to \bigoplus_{i=1}^nU_i$ is the copairing $[\t_1, \dots, \t_m]$ for $\t_j=\iota_j; \t$. One can thus restrict to consider the case of arrow of type $ V_j \to \bigoplus_{i=1}^nU_i$.
First, we extend $\diagp{}$ to arbitrary $p\in [0,1]$: $\diagpX{0\;\;}{U}\defeq\cobang{U}\oplus \id{U}$ and $\diagpX{1\;\;}{U}\defeq \id{U} \oplus\cobang{U}$. Then, for all $n\in \mathbb{N}$ and $\vec{p}=p_1,\dots ,p_n$ with $p_i\in[0,1]$ such that $\sum_{i=1}^n p_i\leq 1$, we inductively define $\diagpn{\vec{p}\;}{U}{n} \colon U \to \bigoplus_{i=1}^nU$ as follows
\begin{equation}\label{eq: diagpn cap equational presentation}
\diagpn{\vec{p}\;}{U}{0}\defeq \bang{U} \qquad \diagpn{\vec{p}\;}{U}{n+1}\defeq\, \diagpn{\;p_{1}}{U}{}; (\id{U} \oplus \diagpn{\;\vec{q}\;}{U}{n})
\end{equation}
where $\vec{q}=q_1,\dots q_{n}$ for $q_i=0$ if $p_{1}= 1$ and $q_i=\frac{p_{i+1}}{1-p_{1}}$ otherwise. 
Given $n$ arrows $t_i\colon U \to U_i$, simple computations confirm that $(\diagpn{\;\vec{p}\;}{U}{n}; \bigoplus_{i=1}^n \t_i) ; \pi_i = p_i\cdot \t_i$. Thus, $\diagpn{\;\vec{p}\;}{U}{n}; \bigoplus_{i=1}^n \t_i \colon U \to \bigoplus_{i=1}^nU_i$ is a mediating arrow satisfying the constraints of $n$-ary convex products. The hard part consists in proving its uniqueness. We rely on two key properties of $\CatTapeC$:

\begin{lemma}\label{decomposition}% \marginpar{Aggiungere la prova di questo in appendice}
For all $\t\colon U \to \bigoplus_{i=1}^nU_i$, there exist $\vec{p}=p_1,\dots ,p_n$ and $t_i\colon U \to U_i$ such that $\t=\,\diagpn{\;\vec{p}\;}{U}{n}; \bigoplus_{i=1}^n \t_i$.
\end{lemma}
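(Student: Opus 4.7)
The plan is to prove the lemma by structural induction on the term representing $\t$, after first generalising the statement to arbitrary source objects. Specifically, I would show that every term $\t\colon \bigoplus_{j=1}^m V_j \to \bigoplus_{i=1}^n U_i$ in $\CatTapeC$ admits a \emph{matrix normal form}
\[
\t \;=\; \bigl[\diagpn{\vec{p_j}}{V_j}{n}\,;\,\bigoplus_{i=1}^n t_{j,i}\bigr]_{j=1}^m
\]
with $t_{j,i}\colon V_j \to U_i$, $\sum_i p_{j,i} \le 1$ for each $j$, and the outer copairing taken over the $m$ components of the source via the coproduct structure of $\piu$. The stated lemma is then the $m=1$ instance.

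The induction proceeds case by case on the top-level constructor. For the generators $\id{U}$, $\id{\zero}$, $\diagp{U}$, $\bangp{U}$, $\cobang{U}$, $\codiag{U}$, $\symmp{U}{V}$, and $\tapeFunct{c}$, the normal form is obtained by direct calculation using the unfolding of $\diagpn{}$ from \eqref{eq: diagpn cap equational presentation} together with the coherence and naturality axioms of Tables~\ref{fig:freestrictfccat}--\ref{fig:freecopcacat}; the non-trivial matrix entries are identities, generators $\tapeFunct{c}$, or the zero map $\star_{P,Q} = \bangp{P};\cobang{Q}$ from \eqref{eq:enrichmentTC}. For the parallel composition $\t = \t_1 \piu \t_2$, the inductive normal forms combine as a block-diagonal matrix with off-diagonal zero-map entries, using the coherence of $\diagpn{}$ and the naturality of $\bangp{}$ and $\cobang{}$ to insert the required zeros.

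The hard case is sequential composition $\t = \t_1;\t_2$. Applying the inductive hypothesis to both factors gives $\t_1 = [\diagpn{\vec{p_j}}{V_j}{\ell};\bigoplus_h s_{j,h}]_j$ and $\t_2 = [\diagpn{\vec{q_h}}{W_h}{n};\bigoplus_i u_{h,i}]_h$, where $\bigoplus_h W_h$ is the middle object. Using that $\piu$ is a coproduct, the second copairing can be slid inside the first, reducing the task to rewriting, for each input $j$, a composition of the shape $\diagpn{\vec{p_j}}{V_j}{\ell};\bigoplus_h(s_{j,h}; \diagpn{\vec{q_h}}{W_h}{n}; \bigoplus_i u_{h,i})$ as a single $\diagpn{\vec{r_j}}{V_j}{n}; \bigoplus_i t_{j,i}$. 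Pushing each inner $\diagpn{}$ through $s_{j,h}$ via naturality \eqref{eq:diagp nat}, and then collapsing the nested convex sums using the enrichment identity \eqref{eq:enrichmentTC}, yields $r_{j,i} = \sum_h p_{j,h}\,q_{h,i}$ and $t_{j,i}$ equal to the pca convex combination $\sum_h \tfrac{p_{j,h}q_{h,i}}{r_{j,i}} \cdot (s_{j,h}; u_{h,i})$, exactly mirroring the matrix-multiplication formula \eqref{eq:matrixmult}.

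The main obstacle will be making this collapsing step rigorous: one has to verify, using the associativity axiom \eqref{eq:pca} for convex combinations in the pca-enrichment of $\CatTapeC$ and the iterated unfolding of $\diagpn{}$, that nested applications of $\diagpn{}$ compose to a single $\diagpn{}$ with the expected coefficients, and that the scalar factors reassemble correctly when $p_{j,h} = 1$ or $r_{j,i} = 0$ edge cases arise. I would therefore factor the argument into two technical sublemmas --- ``$\piu$ preserves matrix form'' and ``$;$ preserves matrix form'' --- before tackling the main induction, essentially establishing a syntactic analogue of Proposition~\ref{prop:cmatrixcb} inside $\CatTapeC$.
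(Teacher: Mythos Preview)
Your approach is sound and would prove the lemma, but it takes a markedly different route from the paper. The paper does \emph{not} proceed by structural induction on terms. Instead it first proves a coarse \emph{pre-normal form} (Lemma~\ref{lemma:prenormalform}): using only naturality, every arrow can be written as $\t_1;\t_2;\t_3;\t_4$ where $\t_1$ uses only $\diagp{}$, $\t_2$ only $\bangp{}$, $\t_3$ only $\tapeFunct{c}$, and $\t_4$ only $\codiag{},\cobang{}$ (each fragment also allowing $\id{},\symmp,;,\piu$). From this form, a case analysis on the type $U\to\bigoplus_i V_i$ yields Lemma~\ref{lemma:division}: either $\t$ is the zero map, or it factors through a single summand, or it splits as $\diagp{U};(\s_1\oplus\s_2)$ for \emph{some} $p$ and arrows $\s_1,\s_2$ into the two halves of the codomain. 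The decomposition lemma then follows ``by iteratively applying Lemma~\ref{lemma:division}''---peeling off one summand at a time.

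The trade-off is this: your matrix-normal-form induction essentially proves Corollary~\ref{cor:isotapematrices} directly, and the composition case forces you to re-derive the matrix-multiplication formula~\eqref{eq:matrixmult} inside $\CatTapeC$, with all the coefficient bookkeeping that entails (you correctly flag this as the main obstacle). The paper's route sidesteps that bookkeeping entirely: the pre-normal form is purely qualitative, and Lemma~\ref{lemma:division} only asserts the \emph{existence} of some splitting scalar $p$ without computing it, so no Fubini-style rearrangement of nested convex sums is ever needed. Your approach buys a stronger conclusion (a full matrix form for arbitrary source polynomials) at the cost of a harder composition case; the paper's buys a shorter proof by deferring the precise coefficients to the universal-property arguments later.
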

\begin{lemma}\label{cancellativity}[Cancellativity]
For all $r\in (0,1)$, for all $\s,\t $, 
if $r\cdot \s = r\cdot \t$ then $\s = \t$.
\end{lemma}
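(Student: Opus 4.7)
The plan is to prove cancellativity via a faithful semantic interpretation into the category of stochastic matrices over $\Cat{C}^+$, where cancellativity is immediate from the fact that $\Dis$ yields cancellative pcas.

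First, I would construct a functor $\Phi \colon \CatTapeC \to \stmat{\Cat{C}^+}$ by specifying its action on generators: a tape $\tapeFunct{c}$ of a morphism $c \colon U \to V$ of $\Cat{C}$ goes to the $1\times 1$ matrix $(1\cdot c)$; the natural co-pca data $(\diagp{U}, \bang{U})$ and the natural monoid data $(\codiag{U}, \cobang{U})$ go to the matrices prescribed in \eqref{eq:matmonpca}; and symmetries go to the block matrices of \eqref{eq:matsmc}. Well-definedness amounts to checking that every axiom of $\CatTapeC$ (Tables \ref{fig:freestrictfccat}, \ref{fig:freecopcacat}, \ref{fig:freestricmmoncatax}, and \eqref{eq:TapeFunctAxioms}) is satisfied by the images in $\stmat{\Cat{C}^+}$. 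Since $\stmat{\Cat{C}^+}$ is a convex biproduct category (Proposition \ref{prop:cmatrixcb}), Proposition \ref{lemma: copca objects in convbicat} guarantees that the monoid and co-pca axioms hold; the symmetric monoidal axioms hold by the construction of $\oplus$ in \eqref{eq:matsmc}; and the Tape axioms are immediate from matrix composition.

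Second, I would observe that $\Phi$ preserves the pca enrichment. The enrichment \eqref{eq:enrichmentTC} on $\CatTapeC$ is \emph{defined} using $\diagp{}$, $\codiag{}$, $\bang{}$, $\cobang{}$, which $\Phi$ preserves by construction; hence $\Phi(f +_p g) = \Phi(f) +_p \Phi(g)$ and $\Phi(\star) = \star$, and consequently $\Phi(r \cdot \s) = r \cdot \Phi(\s)$. Applying $\Phi$ to the hypothesis $r \cdot \s = r \cdot \t$ thus yields $r \cdot \Phi(\s) = r \cdot \Phi(\t)$ in $\stmat{\Cat{C}^+}$. Unpacking this entrywise gives equations $r \cdot d_{ji} = r \cdot d'_{ji}$ in $\Dis(\Cat{C}[U_i,V_j])$; by definition of scalar multiplication in $\Dis$, we have $r \cdot d_{ji}(f) = r \cdot d'_{ji}(f)$ in $[0,1]$ for every $f \in \Cat{C}[U_i,V_j]$, and since $r > 0$ we conclude $d_{ji} = d'_{ji}$, hence $\Phi(\s) = \Phi(\t)$.

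The remaining step, and the main obstacle, is to deduce $\s = \t$ from $\Phi(\s) = \Phi(\t)$, i.e.\ to establish faithfulness of $\Phi$. I would do this by defining a term-level right inverse $\Psi \colon \stmat{\Cat{C}^+} \to \CatTapeC$: given a matrix $M$ with entries $M_{ji} = p_{ji} \cdot f_{ji}$, assemble $\Psi(M)$ from the tape generators $\tapeFunct{f_{ji}}$ together with the iterated convex copiers $\diagpn{\vec{p}}{U_i}{m}$ (formed from the column scalars using \eqref{eq: diagpn cap equational presentation}) and the iterated codiagonals dual to them. Using Lemma \ref{decomposition} column by column, and the coproduct universal property to copair the columns (reducing a source $\bigoplus_j U_j$ to the single-object case), one shows that $\Psi \circ \Phi$ is provably the identity on all terms modulo the axioms. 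Care must be taken to avoid circular dependence on the universal property we are in the process of verifying: the key is that Lemma \ref{decomposition} is independent of cancellativity, so the normal-form argument proceeds purely syntactically, and cancellativity is then used only at the level of $\Dis(\Cat{C}[U_i,V_j])$, where it holds unconditionally.
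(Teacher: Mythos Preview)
Your approach via a semantic functor $\Phi\colon\CatTapeC\to\stmat{\Cat{C}^+}$ is not circular and could in principle be completed, but the step you gloss as ``one shows that $\Psi\circ\Phi$ is provably the identity'' is where the real difficulty hides. Lemma~\ref{decomposition} supplies only \emph{existence} of a decomposition $\t=\diagpn{\;\vec{p}}{U}{n};\bigoplus_i\t_i$, not uniqueness. When $\Psi$ reconstructs a term from $\Phi(\t)$ it must make a fixed choice of representatives $(q_i,\s_i)$ for the matrix entries; all you then know is $p_i\cdot\t_i=q_i\cdot\s_i$ in $\CatTapeC[U,V_i]$, and you still need to prove that the two syntactic terms $\diagpn{\;\vec{p}}{U}{n};\bigoplus_i\t_i$ and $\diagpn{\;\vec{q}}{U}{n};\bigoplus_i\s_i$ agree in $\CatTapeC$. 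Establishing that without circularity requires an induction on $n$, case-splitting on whether $p_1=q_1$, $p_1>q_1$, or $p_1<q_1$, invoking Lemma~\ref{lemma:fractions} at the inductive step and cancellativity of $\Dis$ at the base via the bijection $\CatTapeC[U,V]\cong\Dis(\Cat{C}[U,V])$ of Lemma~\ref{lemma: arrows A-B subdistributions}.

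That induction is precisely what the paper carries out directly: Lemma~\ref{lemma:cancellativity11} establishes cancellativity for arrows $U\to V$ from the bijection with $\Dis$; Lemma~\ref{lemma:cancellativityhalf} extends it to $U\to\bigoplus_i V_i$ by induction on codomain length, writing both sides via Lemma~\ref{lemma:division}, projecting out components along $\id{}\oplus\,\bang{}$ and $\bang{}\oplus\id{}$, and applying Lemma~\ref{lemma:fractions} when the outermost weights differ; the general domain then follows from the coproduct. Your detour through $\Phi$ does not shorten this syntactic work --- it gives you $\Phi(\s)=\Phi(\t)$ for free, but recovering $\s=\t$ from that \emph{is} the whole lemma.
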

\begin{theorem}\label{thm:TCconvexbiproductcategory}
$\CatTapeC$ is a convex biproduct category. In particular, for all $\vec{p}=p_1,\dots ,p_n$ and $\t_i\colon U \to U_i$, $\langle \t_1, \dots \t_n\rangle_{\vec{p}} = \diagpn{\;\;\vec{p}}{U}{n}; \bigoplus_{i=1}^n \t_i$.
\end{theorem}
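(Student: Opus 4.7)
The plan is to verify the three outstanding facts about $\CatTapeC$: equation~\eqref{eq:delta}, the convex-product universal property, and the explicit formula for the mediating arrow. Section~\ref{sec:syntactic} has already established that $\CatTapeC$ is a strict symmetric monoidal category, $\Cat{PCA}$-enriched via~\eqref{eq:enrichmentTC}, with $\zero$ both initial and final (by the naturality axioms for $\bang{}$ and $\cobang{}$) and with $\oplus$ a coproduct (by Fox's theorem applied to the natural monoid structure). Equation~\eqref{eq:delta} is a direct computation on $\iota_i;\pi_j = (\cobang{}\oplus\id{}\oplus\cobang{});(\bang{}\oplus\id{}\oplus\bang{})$: when $i=j$ this reduces to $\id{U_i}$ because $\cobang{}; \bang{}$ is the unique arrow $\zero\to\zero$, i.e.\ $\id{\zero}$ by initiality; when $i\neq j$ it factors through $\zero$, and hence equals $\star$ by definition.

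For the existence part of the convex-product universal property in the single-source case, I set $h := \diagpn{\vec{p}\;}{U}{n}; \bigoplus_{i=1}^n t_i$ and verify $h;\pi_k = p_k\cdot t_k$ by induction on $n$, pushing the discharger $\bang{}$ in $\pi_k = \bang{}\oplus\id{U_k}\oplus\bang{}$ through $\bigoplus_i t_i$ by the naturality axiom~\eqref{eq:bangp nat}. The induction relies on the two scalar identities
\[
\diagp{U};(\id{U}\oplus \bang{U})\;=\; p\cdot \id{U}\,,\qquad \diagp{U};(\bang{U}\oplus \id{U})\;=\;(1-p)\cdot \id{U}\,,
\]
both of which are derivable from the PCA-enrichment formula~\eqref{eq:enrichmentTC}, the co-unit law~\eqref{eq:codiag unital} for the monoid, and the symmetry axiom~\eqref{eq:diagp symmetry} for co-pcas.

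For uniqueness, the case of an arbitrary source $A = \bigoplus_j V_j$ reduces to the single-object case by the coproduct structure: any candidate $h\colon A \to \bigoplus_i U_i$ is the copairing of its components $\iota_j;h\colon V_j\to\bigoplus_i U_i$, and the universal property transfers componentwise. So it suffices to treat $h\colon V\to\bigoplus_i U_i$ with $h;\pi_k = p_k\cdot t_k$. Lemma~\ref{decomposition} writes $h = \diagpn{\;\vec{q}\;}{V}{n}; \bigoplus_i s_i$ for some $\vec{q}$ and $s_i$, and the existence computation then forces the scalar equalities $q_k\cdot s_k = p_k\cdot t_k$ for each $k$.

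The main obstacle is extracting genuine equality $h = \diagpn{\;\vec{p}\;}{V}{n};\bigoplus_i t_i$ from these scaled equalities. My plan is to prove, by induction on $n$ and using the axioms of Tables~\ref{fig:freestrictfccat} and~\ref{fig:freecopcacat} together with~\eqref{eq:enrichmentTC}, the canonical-form identity
\[
\diagpn{\;\vec{q}\;}{V}{n};\, \bigoplus_{i=1}^n s_i \;=\; \sum_{k=1}^n q_k\cdot (s_k;\iota_k)\,.
\]
The inductive step peels off the first summand via $\diagpn{\;\vec{q}\;}{V}{n+1} = \diagpn{q_1}{V}{};(\id{V}\oplus \diagpn{\;\vec{q'}\;}{V}{n})$, rewrites the outer $\diagp{q_1}$ using~\eqref{eq:enrichmentTC} as a convex sum along the injections $\iota_1$ and $\cobang{U_1}\oplus\id{}$, and then applies the inductive hypothesis. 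Lemma~\ref{cancellativity} is invoked to discard degeneracies arising from zero scalars and to verify that the recombined expression agrees with the claimed canonical form. Once this identity is in place, both $h$ and $\diagpn{\;\vec{p}\;}{V}{n};\bigoplus_i t_i$ collapse to the same convex sum $\sum_k (p_k\cdot t_k);\iota_k$, proving uniqueness and simultaneously yielding the stated formula for $\langle t_1,\dots,t_n\rangle_{\vec{p}}$.
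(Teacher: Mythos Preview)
Your overall architecture matches the paper's: reduce to a single-object source via coproducts, use Lemma~\ref{decomposition} to write any candidate $h$ as $\diagpn{\;\vec{q}\;}{V}{n};\bigoplus_i s_i$, compute $h;\pi_k = q_k\cdot s_k = p_k\cdot t_k$, and then argue equality. But the final step has a real gap.

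Your canonical-form identity $\diagpn{\;\vec{q}\;}{V}{n};\bigoplus_i s_i = \sum_k q_k\cdot(s_k;\iota_k)$ is correct and provable by a straightforward induction using only the monoid coherence and~\eqref{eq:enrichmentTC}; it does \emph{not} need Lemma~\ref{cancellativity}. The problem is the next sentence: ``both $h$ and $\diagpn{\;\vec{p}\;}{V}{n};\bigoplus_i t_i$ collapse to the same convex sum $\sum_k (p_k\cdot t_k);\iota_k$''. There is no operation in a pca that takes the \emph{scaled} terms $p_k\cdot x_k$ and returns a sum; the operator $\sum_k p_k\cdot(-)_k$ of~\eqref{eq:def somma n pca} takes the weights $p_k$ and the summands $x_k$ \emph{separately}. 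So from $q_k\cdot s_k = p_k\cdot t_k$ you get two \emph{different} convex expressions, $\sum_k q_k\cdot(s_k;\iota_k)$ and $\sum_k p_k\cdot(t_k;\iota_k)$, with different weights and different summands. Showing these coincide is exactly the uniqueness statement you are trying to prove, so as written the argument is circular.

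The paper closes this gap in the binary case first (Lemma~\ref{lemma:comvexproductpreliminary}) by a direct axiomatic computation, Lemma~\ref{lemma:fractions}: given $p\cdot\t_1 = q\cdot\s_1$ and $(1-p)\cdot\t_2 = (1-q)\cdot\s_2$, one case-splits on $p$ versus $q$, uses cancellativity to write (say) $\t_1 = (q/p)\cdot\s_1$ and $\s_2 = ((1-p)/(1-q))\cdot\t_2$, and then manipulates $\diagp{};(\t_1\oplus\t_2)$ into $\diagq{};(\s_1\oplus\s_2)$ using the co-pca associativity and symmetry axioms~\eqref{eq:diagp assoc},~\eqref{eq:diagp symmetry}. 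This is where cancellativity is actually used, not in the canonical form identity. Your approach can be repaired by inserting this binary computation and inducting, but as stated the key step is missing.
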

The assignment $\Cat{C} \mapsto \CatTapeC$ gives rise to a functor $\CatTapeFUN\colon \Cat{Cat} \to \Cat{CBCat}$ which is left adjoint to the forgetfull functor  $U\colon \Cat{CBCat} \to \Cat{Cat} $.

\begin{theorem}\label{thm:syntacticadjunction}
$\CatTapeFUN\colon \Cat{Cat} \to \Cat{CBCat}$ is left adjoint to $U\colon \Cat{CBCat} \to \Cat{Cat} $.
\end{theorem}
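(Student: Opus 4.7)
The plan is to construct the adjunction via its unit, checking the universal property. Define the unit $\eta_{\Cat{C}}\colon \Cat{C} \to U(\CatTapeC)$ by sending each object $U$ to the one-letter word $U \in \ob{\Cat{C}}^*$ and each arrow $c\colon U \to V$ to the term $\tapeFunct{c}$. Functoriality of $\eta_{\Cat{C}}$ is immediate from the axioms \eqref{eq:TapeFunctAxioms}.

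Next, for any convex biproduct category $\Cat{D}$ and any functor $F\colon \Cat{C} \to U(\Cat{D})$, I would construct $\hat F\colon \CatTapeC \to \Cat{D}$ by induction on terms. On objects, set $\hat F(\bigoplus_{k=1}^n U_k) \defeq \bigoplus_{k=1}^n F(U_k)$, using the coproduct in $\Cat{D}$, and $\hat F(\zero) \defeq \zero$. On arrows, each generator in the grammar \eqref{tapesGrammar} is mapped to its counterpart in $\Cat{D}$: $\tapeFunct{c} \mapsto F(c)$; $\diagp{U} \mapsto \diagp{F(U)}$; $\bang{U} \mapsto \bang{F(U)}$; $\codiag{U} \mapsto \codiag{F(U)}$; $\cobang{U} \mapsto \cobang{F(U)}$; $\id{U} \mapsto \id{F(U)}$; $\sigma^\piu_{U,V} \mapsto \sigma^\piu_{F(U),F(V)}$; and $;$ and $\oplus$ are interpreted by composition and monoidal product in $\Cat{D}$. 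The monoids and co-pcas in $\Cat{D}$ exist and are canonical because $\Cat{D}$ is a convex biproduct category (Fox theorem for $(\codiag{},\cobang{})$, and Proposition~\ref{lemma: copca objects in convbicat} for $(\diagp{},\bang{})$).

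To show that $\hat F$ is well-defined on equivalence classes, I need to verify that every equation in Tables~\ref{fig:freestrictfccat}, \ref{fig:freecopcacat}, \ref{fig:freestricmmoncatax} and \eqref{eq:TapeFunctAxioms} holds in $\Cat{D}$ between the interpreted terms. This is precisely what the Fox theorem guarantees for the monoid axioms, what Proposition~\ref{lemma: copca objects in convbicat} establishes for the co-pca axioms, and what the strict symmetric monoidal structure of $(\Cat{D},\oplus,\zero)$ provides for Table~\ref{fig:freestricmmoncatax}; the tape axioms hold because $F$ is a functor. Hence $\hat F$ descends to a well-defined functor. By construction it strictly preserves $\oplus$, $\zero$, $\sigma^{\piu}$, monoids and co-pcas, so by Proposition~\ref{prop: monoidal functors} it is a morphism of convex biproduct categories.

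The triangle identity $U(\hat F)\circ \eta_{\Cat{C}} = F$ is immediate from the definition of $\hat F$ on $\tapeFunct{c}$ and on single-letter words. Uniqueness follows from the observation that any morphism $G\colon \CatTapeC \to \Cat{D}$ of convex biproduct categories satisfying $U(G)\circ\eta_{\Cat{C}} = F$ must agree with $\hat F$ on every generator: on $\tapeFunct{c}$ by the triangle identity, and on $\diagp{}, \bang{}, \codiag{}, \cobang{}, \sigma^\piu, \id{}$ because a morphism of convex biproduct categories preserves these structures (again Proposition~\ref{prop: monoidal functors} and the fact that the co-pca and monoid on each object are uniquely determined by the universal property); compatibility with $;$ and $\oplus$ then forces $G = \hat F$ everywhere. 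The only genuinely nontrivial step is verifying soundness of the co-pca axioms in $\Cat{D}$, but this is exactly the content of the already established Proposition~\ref{lemma: copca objects in convbicat}, so the proof reduces to bookkeeping.
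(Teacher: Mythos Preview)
Your proposal is correct and follows essentially the same approach as the paper: define the unit $\eta_{\Cat{C}}(c)=\tapeFunct{c}$, build $\hat F$ (the paper's $F^\sharp$) by induction on the term grammar using the canonical monoid and co-pca structures supplied by Fox's theorem and Proposition~\ref{lemma: copca objects in convbicat}, and invoke Proposition~\ref{prop: monoidal functors} both to certify that $\hat F$ is a morphism of convex biproduct categories and to obtain uniqueness. The only cosmetic difference is that the paper first spells out the functorial action $\CatT{F}$ on $\Cat{Cat}$-morphisms before treating the universal property, whereas you go straight to the unit; the content is identical.
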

\begin{proof}
For a functor $F\colon \Cat{C} \to \Cat{D}$, $\CatTapeF \colon \CatTapeC \to \CatTapeD$ is defined on objects as $\CatTapeF(\bigoplus_{i=1}^nU_i) = \bigoplus_{i=1}^nF(U_i)$.
%\[\CatTapeF(U) \defeq F(U) \qquad \CatTapeF(\zero)\defeq\zero \qquad \CatTapeF(P\oplus Q) \defeq   \CatTapeF(P)\oplus \CatTapeF(Q) \]
%where as usual $U$ is an object of $\Cat{C}$.
On arrows, it is defined inductively as follows.
\[
\setlength{\arraycolsep}{2pt} % riduce lo spazio tra colonne
\begin{array}{rcl rcl}
  \CatTapeF(\t_1 ; \t_2)      &\defeq& \CatTapeF(\t_1);\CatTapeF(\t_2) &
  \CatTapeF(\t_1 \piu \t_2)   &\defeq& \CatTapeF(\t_1)\oplus\CatTapeF(\t_2) \\
  
  \CatTapeF(\diagp{P})        &\defeq& \diagp{\CatTapeF(P)} &
  \CatTapeF(\codiag{P})       &\defeq& \codiag{\CatTapeF(P)} \\
  
  \CatTapeF(\,\bangp{P})        &\defeq& \bangp{\CatTapeF(P)} &
  \CatTapeF(\cobang{P})       &\defeq& \cobang{\CatTapeF(P)} \\
  
  \CatTapeF(\id{U})           &\defeq& \id{\CatTapeF(U)} &
  \CatTapeF(\id{\zero})       &\defeq& \id{\CatTapeF(\zero)} \\
  
  \CatTapeF(\sigma_{P,Q}^{\piu}) &\defeq& \sigma_{\CatTapeF(P),\CatTapeF(Q)}^{\piu} &
  \CatTapeF(\tapeFunct{c})    &\defeq& \tapeFunct{F(c)}
\end{array}
\]

Since by definition $\CatTapeF$ is a monoidal functor preserving co-pcas and monoids, by Proposition \ref{prop: monoidal functors}, it is a morphism of convex bicategories.
Thus, we have a functor $\CatTapeFUN\colon \Cat{Cat} \to \Cat{CBCat}$. Below, we prove the adjunction.

For all categories $\Cat{C}$, the unit of the adjunction $\eta_{\Cat{C}} \colon \Cat{C} \to \CatTapeC$ is the identity on objects functor mapping each arrow $c\colon U\to V$ in $\Cat{C}$ into $\tapeFunct{c}$. The axioms in \eqref{eq:TapeFunctAxioms} force $\eta_{\Cat{C}}$ to be a functor. Naturality of the unit is straightforward.

Now, take a convex biproduct category $\Cat{D}$ and consider a functor $F\colon \Cat{C} \to U(\Cat{D})$. By Proposition \ref{lemma: copca objects in convbicat}, $\Cat{D}$ is a symmetric monoidal category where every object $X$  carries a  co-pca $(\diagp{X},\bang{X})$ and a natural coherent monoid $(\codiag{X},\cobang{X})$. One can use these structures to define $F^\sharp\colon \CatTapeC \to \Cat{D}$ inductively in the same way as  $\CatTapeF$: e.g., $F^\sharp(\diagp{P})\defeq \,\diagp{F^\sharp(P)}$. The base case $F^\sharp(\tapeFunct{c})=F(c)$ ensures that $\eta_{\Cat{C}}; F^\sharp = F$.
Since by definition $F^\sharp$ is a strict symmetric monoidal functor preserving co-pcas and monoids, by Proposition \ref{prop: monoidal functors}, it is a morphism of convex bicategories. 

To prove uniqueness, take a morphism of convex bicategories $H \colon \CatTapeC \to \Cat{D}$ such that $\eta_{\Cat{C}}; H = F$, i.e.,  $H(\tapeFunct{c})=F(c)$. By Proposition \ref{prop: monoidal functors}, $H$ has to preserve co-pcas and monoids. Thus $H=F^\sharp$.
\end{proof}

\begin{corollary}\label{cor:isotapematrices}
For all categories $\Cat{C}$, $\CatTapeC$ is isomorphic to $\stmat{\Cat{C}^+}$ in $\Cat{CBCat}$.
\end{corollary}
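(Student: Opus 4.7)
The plan is to obtain the isomorphism purely by chasing universal properties rather than constructing an explicit functor between $\CatTapeC$ and $\stmat{\Cat{C}^+}$ and then checking it is an iso by hand.

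First, I would observe that composing the two adjunctions established earlier yields an adjunction $\stmat{(-)^+} \dashv U \circ U$ from $\Cat{Cat}$ to $\Cat{CBCat}$, where the right adjoint $U \circ U$ first forgets the convex biproduct structure (via Theorem~\ref{thm:matfree}) and then forgets the $\Cat{PCA}$-enrichment (via Theorem~\ref{thm:freeenriched}). Concretely, this composite is nothing but the forgetful functor $\Cat{CBCat} \to \Cat{Cat}$ that sends a convex biproduct category to its underlying locally small category. Next, I would observe that Theorem~\ref{thm:syntacticadjunction} provides a second left adjoint to exactly the same forgetful functor, namely $\CatT{-}\colon \Cat{Cat} \to \Cat{CBCat}$.

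Since left adjoints to a given functor are unique up to unique natural isomorphism, one concludes that $\CatT{-} \cong \stmat{(-)^+}$ as functors $\Cat{Cat} \to \Cat{CBCat}$. In particular, for every category $\Cat{C}$ this yields an isomorphism $\CatTapeC \cong \stmat{\Cat{C}^+}$ in $\Cat{CBCat}$, which is the content of the corollary.

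There is no real obstacle; the only point that needs a brief sanity check is that the right adjoints do coincide, i.e.\ that forgetting biproducts and then the enrichment agrees with the single forgetful functor appearing in Theorem~\ref{thm:syntacticadjunction}. This is immediate from the definitions of $\Cat{CBCat}$, $\Cat{PCACat}$, and $\Cat{Cat}$ and the way morphisms in each are defined. One can moreover recover the isomorphism concretely: it is the unique morphism of convex biproduct categories $\CatT{\Cat{C}} \to \stmat{\Cat{C}^+}$ extending the unit $\eta_{\Cat{C}}\colon \Cat{C} \to \CatT{\Cat{C}}$ along the composite unit $\Cat{C} \to \Cat{C}^+ \to \stmat{\Cat{C}^+}$, with inverse obtained symmetrically via the universal property of $\stmat{\Cat{C}^+}$.
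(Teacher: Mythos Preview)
Your proposal is correct and matches the paper's proof essentially verbatim: the paper simply invokes composition of adjoints, uniqueness of left adjoints, and Theorems~\ref{thm:freeenriched}, \ref{thm:matfree}, and \ref{thm:syntacticadjunction}. Your additional remarks about checking that the forgetful functors agree and recovering the isomorphism explicitly are sound elaborations of the same argument.
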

\begin{proof}
By composition of adjoints, their uniqueness and Theorems \ref{thm:freeenriched}, \ref{thm:matfree} and \ref{thm:syntacticadjunction}.
\end{proof}
The isomorphism $\CatTapeC \to \stmat{\Cat{C}^+}$ maps $\tapeFunct{c}$ into the 
$1\times 1$ matrix $\begin{pmatrix}
   1 \cdot c
\end{pmatrix}$ and  $\diagp{U}$, $\bang{U}$, $\cobang{U}$, $\codiag{U}$ into the matrices defined in \eqref{eq:matmonpca}.  Compositions and sums of arrows in $\CatTapeC$ %\marginpar{Please Cipriano Check}
are mapped into multiplications and direct sums of matrices as defined in \eqref{eq:matrixmult} and \eqref{eq:matsmc}. Same for identities and symmetries. For instance, $(\diagpX{\frac{1}{2}\;\,}{U} \oplus \id{U}) ; (\tapeFunct{a}\oplus \tapeFunct{ab} \oplus \tapeFunct{c}); (\id{U}\oplus \symmp{U}{U}); (\codiag{U}\oplus \id{U})$ is mapped into the matrix $M$ of Example \ref{ex:matrices}. 

The characterisation of $\stmat{\Cat{C}^+}$ by means of generators and equations provided by the above corollary, paves the way to study further equational theories (such as the one in Section \ref{sec:probboolcircuits}). The following result guarantees that the category obtained by quotienting $\CatTapeC$ by additional axioms is still a convex biproduct category.
%
%We conclude with a result that will be useful in Section \ref{sec:probboolcircuits}.
\begin{proposition}\label{cor: quotient category is convex biproduct }
Let $\Cat{C}$ be a category and $\sim$ be a congruence relation (w.r.t $;$ and $\oplus$) on $\CatTapeC$. Let $\CatTapeC_\sim$ be the category obtained as the quotient of $\CatTapeC$ by $\sim$ and $Q_\sim \colon \CatTapeC\to \CatTapeC_\sim$ be the functor mapping each arrow to its $\sim$-equivalence class.
Then $\CatTapeC_\sim$ is a convex biproduct category and $Q_\sim$ is a morphism of convex biproduct categories.
\end{proposition}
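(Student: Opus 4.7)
The plan is to verify that $\CatTapeC_\sim$ inherits the convex-biproduct structure from $\CatTapeC$ along $Q_\sim$. Since $\sim$ is a congruence with respect to $;$ and $\oplus$, the quotient inherits the strict symmetric monoidal structure $(\oplus,\zero)$, the $\Cat{PCA}$-enrichment given by $[f]+_p[g] \defeq [\diagp{P};(f\oplus g);\codiag{Q}]$ (well-defined because by~\eqref{eq:enrichmentTC} the right-hand side is built purely from $;$ and $\oplus$), and, on each object $P$, the natural coherent monoid $(\codiag{P},\cobang{P})$ and co-pca $(\diagp{P},\bangp{P})$ transported by $Q_\sim$. All axioms of Tables~\ref{fig:freestrictfccat} and~\ref{fig:freecopcacat} and of~\eqref{eq:enr} are equational and hence survive the quotient; consequently $Q_\sim$ is, by construction, a strict symmetric monoidal functor that preserves both the monoids and the co-pcas.

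Next I would verify the two universal properties of $\oplus$ in $\CatTapeC_\sim$. For the coproduct, with injections $[\iota_i]$, existence of copairings transfers directly through $Q_\sim$. For uniqueness, I would use that in $\CatTapeC$ any arrow $h\colon P_1\oplus P_2\to R$ equals the copairing of its precompositions $\iota_1;h$ and $\iota_2;h$ (by the coproduct property of $\CatTapeC$ from Theorem~\ref{thm:TCconvexbiproductcategory}); since copairing is definable through $;$, $\oplus$ and $\codiag{R}$, this equality descends, so whenever $[\iota_i;h_1]=[\iota_i;h_2]$ for $i=1,2$ we conclude $h_1\sim h_2$. For the convex product, with projections $[\pi_i]$, existence of the mediating arrow $\langle[f_1],\dots,[f_n]\rangle_{\vec{p}} \defeq [\diagpn{\vec{p}}{A}{n};\bigoplus_i f_i]$ transfers directly. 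For uniqueness, given $[h]$ with $[h];\pi_i = p_i\cdot[f_i]$ for all $i$, I would apply Lemma~\ref{decomposition} to a representative to write $h = \diagpn{\vec{q}}{A}{n};\bigoplus_i g_i$, so that $q_i\cdot g_i \sim p_i\cdot f_i$ in $\CatTapeC$, and then conclude that $\diagpn{\vec{q}}{A}{n};\bigoplus_i g_i \sim \diagpn{\vec{p}}{A}{n};\bigoplus_i f_i$.

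The main obstacle is precisely this last step. The uniqueness proof of Theorem~\ref{thm:TCconvexbiproductcategory} relies on cancellativity (Lemma~\ref{cancellativity}), which need not descend to an arbitrary congruence, so the original argument cannot be replayed verbatim. My intended workaround is the matrix viewpoint of Corollary~\ref{cor:isotapematrices}: under the isomorphism $\CatTapeC \cong \stmat{\Cat{C}^+}$, $\diagpn{\vec{p}}{A}{n};\bigoplus_i f_i$ is precisely the column matrix whose $i$-th entry is $p_i\cdot f_i$, and composition with $[\pi_j]$ extracts that entry. The task then reduces to showing that a column matrix is determined up to $\sim$ by the $\sim$-classes of its entries, an argument internal to the matrix calculus where each entry is both recovered from the matrix by a composition preserved by congruence and reassembles the matrix through the canonical co-pca/$\oplus$ factorisation. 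Once convex-product uniqueness is secured, the assertion that $Q_\sim$ is a morphism of convex biproduct categories follows from Proposition~\ref{prop: monoidal functors}.
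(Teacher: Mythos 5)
Your structural analysis tracks the paper's proof up to the decisive point. The quotient monoidal structure, the descended enrichment $[f]_\sim +_p [g]_\sim \defeq [f+_p g]_\sim$, the coproduct argument (decompose any $h\colon P_1\oplus P_2\to R$ as a copairing in $\CatTapeC$ and use congruence), the existence half of the convex-product property via $[\langle f_1,\dots,f_n\rangle_{\vec p}]_\sim$, and the final appeal to the characterisation of convex-biproduct morphisms all appear in the paper essentially as you describe. You are also entirely right about where the difficulty sits: cancellativity (Lemma~\ref{cancellativity}) need not descend to $\CatTapeC_\sim$, so the uniqueness argument of Theorem~\ref{thm:TCconvexbiproductcategory} cannot be replayed, and after applying Lemma~\ref{decomposition} to a representative the whole proof reduces to showing that $q_i\cdot \s_i \sim p_i\cdot \t_i$ for all $i$ implies $\diagpn{\;\vec{q}\;}{U}{n};\bigoplus_i \s_i \sim\, \diagpn{\;\vec{p}\;}{U}{n};\bigoplus_i \t_i$.

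However, your workaround for that reduction has a genuine gap: the statement you reduce to --- ``a column matrix is determined up to $\sim$ by the $\sim$-classes of its entries'' --- is not discharged by the matrix viewpoint; it \emph{is} the open obligation, restated. In $\stmat{\Cat{C}^+}$ the fact that entries determine a matrix is built into the definition of $\equiv$, and that definition is sound because the free pca $\Dis(\Cat{C}[U,V])$ is cancellative --- exactly the feature an arbitrary congruence destroys. Extracting entries by composing with projections is indeed congruence-stable, but that is the easy direction. For reassembly, note that $\diagpn{\;\vec{p}\;}{U}{n};\bigoplus_i \t_i$ depends on the chosen decomposition of each entry into a scalar $p_i$ and an arrow $\t_i$: the scalars are pooled inside $\diagpn{\;\vec{p}\;}{U}{n}$ on the left, detached from the $\t_i$, so the entry $p_i\cdot\t_i$ never occurs as a literal $(;,\oplus)$-subterm of the assembled expression, and the hypothesis $q_i\cdot\s_i\sim p_i\cdot\t_i$ cannot be applied by ``composition preserved by congruence''. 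The paper fills precisely this hole with a dedicated lemma (Proposition~\ref{prop: convex products quoziente}): if $[p_1\cdot f_1]_\sim=[q_1\cdot g_1]_\sim$ and $[p_2\cdot f_2]_\sim=[q_2\cdot g_2]_\sim$, then $[\langle f_1,f_2\rangle_{(p_1,p_2)}]_\sim=[\langle g_1,g_2\rangle_{(q_1,q_2)}]_\sim$; its proof is a syntactic computation that, after ordering the scalars (w.l.o.g.\ $p_1\le q_1$ and $q_2\le p_2$), uses the associativity and symmetry axioms (\ref{eq:diagp assoc}) and (\ref{eq:diagp symmetry}) to rewrite each pairing so that one component at a time can be exchanged, exhibiting both sides as $\sim$-equal to the mixed pairing $\langle f_1,g_2\rangle_{(p_1,q_2)}$. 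Your proposal contains no substitute for this lemma, and without it the uniqueness of convex pairings in $\CatTapeC_\sim$ --- and hence the proposition --- is not established.
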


\section{Probabilistic Tape Diagrams}\label{sec:tapediagrams}
Recall that a \emph{monoidal signature} is a tuple $(\sort, \sign, \ari, \coar)$ where $\sort$ is a set of basic sorts, hereafter denoted by $A,B,\dots$, $\sign$ is a set of generators, denoted by $s,t \dots$, and $\ari,\coar \colon \sign \to \sort^*$ assign to each symbol its arity and coarity (words over $\sort$). From a monoidal signature $\sign$, one can freely generate the strict symmetric monoidal $\DiagS$:  objects are words in $\sort^*$; arrows are \emph{string diagrams} \cite{joyal1991geometry,selinger2010survey}. These can be regarded  as the terms generated by the following grammar (where $A,B \in \sort$ snd $s \in \Sigma$)
\begin{equation}\label{stringdiagramGrammar}
\setlength{\arraycolsep}{3pt} % spaziatura orizzontale
\renewcommand{\arraystretch}{1.1} % opzionale: un po' di respiro verticale
\begin{array}{rcccccccccccccccc}
c & ::= & \id{A} & \mid & \id{\uno} & \mid & \gen & \mid & \symmt{A}{B} & \mid & c ; c & \mid & c \per c
\end{array}
\end{equation}
modulo the axioms of strict symmetric monoidal categories. A \emph{monoidal theory} $\mathbb{T}=(\sign, E)$ consists of a monoidal signature equipped with a set $E$ of pairs of arrows of $\DiagS$ with same source and target. Let $=_{\mathbb{T}}$ be equivalence relation on arrows  of $\DiagS$ obtained as the congruence closure (w.r.t. $;$ and $\otimes$) of $E$. We write $\DiagT$ for the category obtained as the quotient of $\DiagS$ by $=_{\mathbb{T}}$.

The category $\CatT{\DiagT}$, obtained by applying the construction from Section \ref{sec:syntactic} to $\DiagT$, coincides by definition with the category of \emph{tape diagrams} from \cite{bonchi2025tapediagramsmonoidalmonads}. Objects of $\CatT{\DiagT}$,  are element of $(\sort^*)^*$ which we often write as polynomials $P=\Piu[i=1][n]{\Per[j=1][m_i]{A_{i,j}}}$. 
We will call \emph{monomials} of $P$ the $n$ words $\Per[j=1][m_i]{A_{i,j}}$. For instance the monomials of $(A \per B) \piu \uno$ are $A \per B$ and $1$. We denote monomials by $U,V,\dots$
Arrows of $\CatT{\DiagT}$ enjoy an intuitive diagrammatic representation specified by the following two layered grammar.
\begin{equation*}\label{tapesDiagGrammar} 
    \setlength{\tabcolsep}{2pt}
    \begin{tabular}{rc c@{$\,\mid\,$}c@{$\,\mid\,$}c@{$\,\mid\,$}c@{$\,\mid\,$}c@{$\,\mid\,$}c@{$\,\mid\,$}c@{$\,\mid\,$}c@{$\,\mid\,$}c@{$\,\mid\,$}c}
        $c$  & $\Coloneqq$ &  $\wire{A}$ & $ 
    \InputIfFileExists{empty.tikz}{}{\input{./tikz/empty.tikz}}
 $ & $ \Cgen{\gen}{A}{B}  $ & $ \Csymm{A}{B} $ & $ 
    \InputIfFileExists{seq_compC.tikz}{}{\input{./tikz/seq_compC.tikz}}
   $ & $  
    \InputIfFileExists{par_compC.tikz}{}{\input{./tikz/par_compC.tikz}}
$ \\
        $\t$ & $\Coloneqq$ & $
    \InputIfFileExists{/tapes/cipriano/pcomonoid.tikz}{}{\input{./tikz//tapes/cipriano/pcomonoid.tikz}}
$ & $\Tcounit{U}$  & $ \Tcirc{c}{U}{V}$ & $\Tunit{U}$  & $\Tmonoid{U}$ \\ 
        && $\Twire{U}$ & $ 
    \InputIfFileExists{empty.tikz}{}{\input{./tikz/empty.tikz}}
 $    & $ \Tsymmp{U}{V} $ & $ 
    \InputIfFileExists{tapes/seq_comp.tikz}{}{\input{./tikz/tapes/seq_comp.tikz}}
  $ & $  
    \InputIfFileExists{tapes/par_comp.tikz}{}{\input{./tikz/tapes/par_comp.tikz}}
$  
    \end{tabular}
\end{equation*} 
The first layer corresponds to \eqref{stringdiagramGrammar} while the second to \eqref{tapesGrammar}; diagrams from the first layer are string diagrams, those from the second are called tapes. Note that (a) string diagram can occur inside tapes, (b) string diagrams have type $U\to V$, for $U,V\in \sort^*$ and vertical composition corresponds to $\per$, (c) tapes have type $P\to Q$ for $P,Q\in (\sort^*)^*$ and vertical composition corresponds to $\piu$. 

The identity $\id\zero$ is rendered as the empty tape $
    \InputIfFileExists{empty.tikz}{}{\input{./tikz/empty.tikz}}
$, while $\id\uno$ is $
    \InputIfFileExists{tapes/empty.tikz}{}{\input{./tikz/tapes/empty.tikz}}
$: a tape filled with the empty string diagram. %\marginpar{Aggiustare quanto segue a seconda dello spazio e di quello che il lettore deve capire per andare il resto}
For a monomial $U \!=\! A_1\dots A_n$, $\id U$ is depicted as a tape containing  $n$ wires labelled by $A_i$. For instance, $\id{AB}$ is rendered as $\TRwire{A}{B}$. When clear from the context, we will simply represent it as a single wire  $\Twire{U}$ with the appropriate label.
Similarly, for a polynomial $P = \Piu[i=1][n]{U_i}$, $\id{P}$ is obtained as a vertical composition of tapes, as illustrated below on the left of (\ref{ex:tape}). 
\begin{equation}\label{ex:tape}
\scalebox{0.82}{$ 
    \id{AB \piu \uno \piu C} = \!\!\!\begin{aligned}\begin{gathered} \TRwire{A}{B} \\[-1.8mm] \Twire{\uno} \\[-1.8mm] \Twire{C} \end{gathered}\end{aligned}
    \quad
    \codiag{A\piu B \piu C} = \!\!
    \InputIfFileExists{tapes/examples/codiagApBpC.tikz}{}{\input{./tikz/tapes/examples/codiagApBpC.tikz}}

    \quad
    \cobang{AB \piu B \piu C} = \!
    \InputIfFileExists{tapes/examples/cobangABpBpC.tikz}{}{\input{./tikz/tapes/examples/cobangABpBpC.tikz}}

    \quad
    
    \InputIfFileExists{tapes/cipriano/algt.tikz}{}{\input{./tikz/tapes/cipriano/algt.tikz}}

$} \end{equation}
\noindent The codiagonal $\codiag{U} \colon  U \piu U \!\to\! U$ is represented as a merging of tapes, $\diagp{U}\colon U\oplus U \to U$ as a splitting of tapes labeled with $p\in(0,1)$, the cobang $\cobang{U} \colon \zero \!\to\! U$ is a tape closed on its left boundary, while $\bang{U} \colon U \to \zero $ is closed on the right.
Exploiting the definitions in \eqref{eq:indmoncopca}, we can construct $\codiag{P},\diagp{P},\cobang{P},\bang{P}$ for arbitrary polynomials.
For example, $\codiag{A\piu B \piu C}$ and $\cobang{AB \piu B \piu C}$ are depicted as the second and third diagrams above. The last diagram is the tape for 
$\diagp{A};(\diagq{A}\oplus \id{A}) ; (\bang{A} \oplus \codiag{A})$. % when the algebraic theory is fixed to be $\mathbb{PCA}$ (Example~\ref{ex:algebraic-theory-convex}).
%associated to the $\sign$-term $f(f(g, x),x)$, where $\sign = \{f \colon 2 \to 1, g \colon 0 \to 1\}$.
For an arbitrary $\t \colon P \to Q$ we write $\Tbox{\t}{P}{Q}$. %For a $\sign$-term $t \colon n \to 1$ we write \scalebox{0.8}{$\TSterm{U}{t_U}$}.%For a $\sign$-term $t \colon n \to 1$ we write its associated tape diagram $t_U \colon U \to \Piu[][n]{U}$ as \scalebox{0.8}{$\TSterm{U}{t}$}.

The graphical representation embodies several axioms such as \eqref{eq:TapeFunctAxioms} and those of monoidal categories. 
The remaining axioms are shown in Figure~\ref{fig:tapesax}. %\marginpar{Mancano gli assiomi dei tape e delle simmetrie del per dentro i 
%Those axioms that are not implicit in the graphical representation are shown in Figure~\ref{fig:tapesax}. %\marginpar{Mancano gli assiomi dei tape e delle simmetrie del per dentro i tape... No?} 
%
%
%

\medskip

By Corollary \ref{cor:isotapematrices}, we know that $\CatT{\DiagT}$ is isomorphic to $\stmat{\DiagT^+}$: tapes represents exactly stochastic matrices of subprobability distributions of string diagrams in $\DiagT$. For instance the rightmost diagram in \eqref{ex:tape} corresponds to the $1\times 1$ matrix $\begin{pmatrix} (1-p)(1-q) \cdot \wire{A}\end{pmatrix}$ while the following tape diagram on the left represents the $2\times 2$ stochastic matrix on the right.
\[
\resizebox{0.95\linewidth}{!}{$

    \InputIfFileExists{tapes/cipriano/esempiomatrice.tikz}{}{\input{./tikz/tapes/cipriano/esempiomatrice.tikz}}

\quad
\text{\raisebox{1.7ex}{$
\begin{pNiceMatrix}[first-col,first-row]
	\rotatebox{90}{$\Lsh$} & AB & C \\
	DE & p\cdot 
    \InputIfFileExists{par_compCMAT.tikz}{}{\input{./tikz/par_compCMAT.tikz}}
 & q\cdot \Cgen{h}{}{} \\
	F & 0\cdot \star & (1-q)\cdot(
    \InputIfFileExists{seq_compCMAT1.tikz}{}{\input{./tikz/seq_compCMAT1.tikz}}
 +_r 
    \InputIfFileExists{seq_compCMAT2.tikz}{}{\input{./tikz/seq_compCMAT2.tikz}}
)
\end{pNiceMatrix}
$}}
$}
\]

\begin{figure}[t]
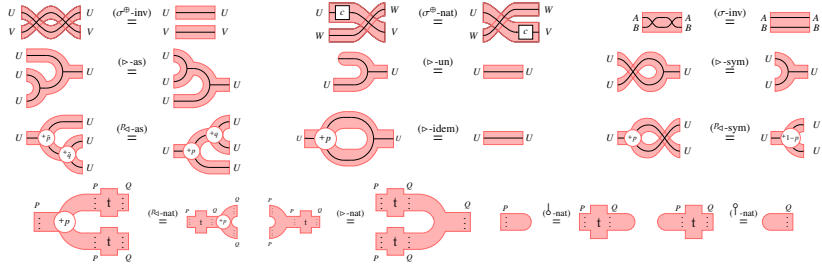

\centering
\setlength{\tabcolsep}{2pt} % riduce lo spazio orizzontale
\renewcommand{\arraystretch}{0.9} % riduce lo spazio verticale
\resizebox{0.92\linewidth}{!}{% scala globale per stare nei margini
\begin{tabular}{r@{\;}c@{\;}l @{\qquad} r@{\;}c@{\;}l @{\qquad} r@{\;}c@{\;}l}
    % 1st row
    
    \InputIfFileExists{tapes/ax/symminv_left.tikz}{}{\input{./tikz/tapes/ax/symminv_left.tikz}}
 &$\stackrel{(\symmp\text{-inv})}{=}$& 
    \InputIfFileExists{tapes/ax/symminv_right.tikz}{}{\input{./tikz/tapes/ax/symminv_right.tikz}}
  
    & 
    
    \InputIfFileExists{tapes/ax/symmnat_left.tikz}{}{\input{./tikz/tapes/ax/symmnat_left.tikz}}
  &$\stackrel{(\symmp\text{-nat})}{=}$&  
    \InputIfFileExists{tapes/ax/symmnat_right.tikz}{}{\input{./tikz/tapes/ax/symmnat_right.tikz}}

    & 
    
    \InputIfFileExists{cb/symm_inv_left.tikz}{}{\input{./tikz/cb/symm_inv_left.tikz}}
 & $\axeq{\sigma\text{-inv}}$ & 
    \InputIfFileExists{cb/symm_inv_right.tikz}{}{\input{./tikz/cb/symm_inv_right.tikz}}
 
    \\
    % 2nd row
    
    \InputIfFileExists{tapes/whiskered_ax/monoid_assoc_left.tikz}{}{\input{./tikz/tapes/whiskered_ax/monoid_assoc_left.tikz}}
 &$\stackrel{(\codiag{}\text{-as})}{=}$& 
    \InputIfFileExists{tapes/whiskered_ax/monoid_assoc_right.tikz}{}{\input{./tikz/tapes/whiskered_ax/monoid_assoc_right.tikz}}
  
    &
    
    \InputIfFileExists{tapes/whiskered_ax/monoid_unit_left.tikz}{}{\input{./tikz/tapes/whiskered_ax/monoid_unit_left.tikz}}
 &$\stackrel{(\codiag{}\text{-un})}{=}$& \Twire{U}
    &
    
    \InputIfFileExists{tapes/whiskered_ax/monoid_comm_left.tikz}{}{\input{./tikz/tapes/whiskered_ax/monoid_comm_left.tikz}}
 &$\stackrel{(\codiag{}\text{-sym})}{=}$& \Tmonoid{U}
    \\
    % 3rd row
    
    \InputIfFileExists{tapes/cipriano/monoid_assoc_left.tikz}{}{\input{./tikz/tapes/cipriano/monoid_assoc_left.tikz}}
 & $\stackrel{(\diagp{}{}\text{-as})}{=}$ & 
    \InputIfFileExists{tapes/cipriano/monoid_assoc_right.tikz}{}{\input{./tikz/tapes/cipriano/monoid_assoc_right.tikz}}
 
    &
    \scalebox{0.8}{
    \InputIfFileExists{tapes/cipriano/idempotency.tikz}{}{\input{./tikz/tapes/cipriano/idempotency.tikz}}
}&$\stackrel{(\codiag{}\text{-idem})}{=}$&   \Twire{U}
    &
    
    \InputIfFileExists{tapes/cipriano/monoid_comm_left.tikz}{}{\input{./tikz/tapes/cipriano/monoid_comm_left.tikz}}
 & $\stackrel{(\diagp{}{}\text{-sym})}{=}$ & 
    \InputIfFileExists{tapes/cipriano/pcomonoidbar.tikz}{}{\input{./tikz/tapes/cipriano/pcomonoidbar.tikz}}

    \\
    % Final row (scaled to match others)
    \multicolumn{9}{c}{%
        \scalebox{0.85}{%
        \begin{tabular}{r@{\;}c@{\;}l @{\quad} r@{\;}c@{\;}l @{\quad} r@{\;}c@{\;}l @{\quad} r@{\;}c@{\;}l}
            
    \InputIfFileExists{tapes/cipriano/pcanatleft.tikz}{}{\input{./tikz/tapes/cipriano/pcanatleft.tikz}}
 & $\stackrel{(\diagp{}\text{-nat})}{=}$ & 
    \InputIfFileExists{tapes/cipriano/pcanatright.tikz}{}{\input{./tikz/tapes/cipriano/pcanatright.tikz}}
 
            &
            
    \InputIfFileExists{tapes/cipriano/monoidnatnewright.tikz}{}{\input{./tikz/tapes/cipriano/monoidnatnewright.tikz}}
 & $\stackrel{(\codiag{}\text{-nat})}{=}$ & 
    \InputIfFileExists{tapes/cipriano/monoidnatnewleft.tikz}{}{\input{./tikz/tapes/cipriano/monoidnatnewleft.tikz}}
 
            &
            
    \InputIfFileExists{tapes/cipriano/bangnatleft.tikz}{}{\input{./tikz/tapes/cipriano/bangnatleft.tikz}}
 & $\stackrel{(\,\bangp{}\text{-nat})}{=}$ & 
    \InputIfFileExists{tapes/cipriano/bangnatright.tikz}{}{\input{./tikz/tapes/cipriano/bangnatright.tikz}}
 
            &
            
    \InputIfFileExists{tapes/cipriano/cobangnatleft.tikz}{}{\input{./tikz/tapes/cipriano/cobangnatleft.tikz}}
 & $\stackrel{(\cobang{}\text{-nat})}{=}$ & 
    \InputIfFileExists{tapes/cipriano/cobangnatright.tikz}{}{\input{./tikz/tapes/cipriano/cobangnatright.tikz}}

        \end{tabular}
        }% fine scalebox
    }
\end{tabular}
}% fine resizebox
\caption{Axioms for tape diagrams.}
\label{fig:tapesax}
\end{figure}

\medskip 
It is now crucial to recall from \cite{bonchi2025tapediagramsmonoidalmonads} that $\per$ can be defined not only on string diagrams but also on tapes: for $\t_1 \colon P \to Q$, $\t_2 \colon R \to S$,  $\t_1 \per \t_2 \defeq \LW{P}{\t_2} ; \RW{S}{\t_1} $ where $\LW{P}{\cdot}$, $\RW{S}{\cdot}$ are the left and right whiskerings, defined as in Table 4 of \cite{bonchi2025tapediagramsmonoidalmonads} (reported in Table~\ref{tab:producttape} in Appendix \ref{app:coherence axioms}). For objects
$P = \Piu[i]{U_i}$ and $Q = \Piu[j]{V_j}$, 
%\begin{equation}\label{def:productPolynomials} 
$P \per Q \defeq \Piu[i]{\Piu[j]{U_iV_j}}$.
%\end{equation}

Theorem 27 in \cite{bonchi2025tapediagramsmonoidalmonads} guarantees that $(\CatT{\DiagT} ,\otimes, \uno)$ is a symmetric monoidal category. Such category is monoidally enriched over $\Cat{PCA}$: %, i.e. for all $p\in(0,1)$ and properly typed $\t,\s_1,\s_2$:
\[ \t \otimes (\s_1+_p \s_2)= (\t \otimes \s_1)+_p(\t \otimes \s_2) \qquad  (\s_1+_p \s_2)  \otimes \t= ( \s_1 \otimes \t )+_p(\s_2 \otimes  \t) \qquad \star \otimes \t= \t = \t\otimes \star \]
 Most importantly, the two monoidal categories $(\CatT{\DiagT}, \piu, \zero)$ and $(\CatT{\DiagT}, \otimes, \uno)$ interact through the laws of (right strict) rig category (see \cite{laplaza_coherence_1972,johnson2021bimonoidal}).

\begin{theorem}[From \cite{bonchi2025tapediagramsmonoidalmonads}]
$(\,\CatT{\DiagT}, \oplus, \otimes, \zero, \uno \,)$ is a right strict rig category.
\end{theorem}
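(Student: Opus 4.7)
The plan is to leverage the isomorphism $\CatT{\DiagT} \cong \stmat{\DiagT^+}$ from Corollary~\ref{cor:isotapematrices} together with the object-level definition $P \otimes Q = \bigoplus_i \bigoplus_j U_iV_j$ for $P = \bigoplus_i U_i$ and $Q = \bigoplus_j V_j$. A direct inspection of this formula shows that one distributivity law holds strictly at the level of objects: expanding $(A \oplus B) \otimes C$ produces exactly the same sequence of monomials as $(A \otimes C) \oplus (B \otimes C)$, and similarly $\zero \otimes C = \zero$ holds on the nose. These are the right distributor and right annihilator, which witness the right strictness. The strictness then extends to arrows via the definition $\t_1 \otimes \t_2 \defeq \LW{P}{\t_2} ; \RW{S}{\t_1}$, by a routine induction on the inductive clauses of the whiskerings.

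The remaining structural morphisms, namely the left distributor $A \otimes (B \oplus C) \to (A \otimes B) \oplus (A \otimes C)$ and the left annihilator $A \otimes \zero \to \zero$, are not strict, since expanding the two sides produces different orderings of the monomials. These arrows must be constructed explicitly and shown to be natural isomorphisms. I would obtain them from the universal property of $\oplus$ in the convex biproduct category $\CatT{\DiagT}$ (Theorem~\ref{thm:TCconvexbiproductcategory}): the inverse of the left distributor is the copairing $[\id{A}\otimes\iota_1,\, \id{A}\otimes\iota_2]$, while the left annihilator is the unique map to the final object $\zero$. Naturality and invertibility are checked by computing the composites with coproduct injections and convex-product projections, using the naturality laws for $\diagp{}$ and $\bangp{}$ recorded in Figure~\ref{fig:tapesax} together with the bifunctoriality of $\otimes$ established in \cite{bonchi2025tapediagramsmonoidalmonads}.

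The main obstacle is verifying Laplaza's full coherence diagrams relating associators, unitors, symmetries, distributors and annihilators. My strategy is to reduce them systematically to universal-property arguments: since $\oplus$ is simultaneously a coproduct and a convex product in $\CatT{\DiagT}$, any two parallel arrows between iterated polynomial objects that are canonically built out of the rig structural maps must agree, by precomposing with injections $\iota_i$ or postcomposing with projections $\pi_i$ and invoking the uniqueness of mediating morphisms. The axioms that cross both monoidal structures then collapse to the already-established coherence of $(\CatT{\DiagT}, \otimes, \uno)$, while those involving annihilators reduce to finality of $\zero$. The delicate case is the hexagon controlling the interaction of the distributors with the symmetries for $\oplus$, which demands careful bookkeeping of the permutations of monomials; I expect this step to absorb the bulk of the detailed work.
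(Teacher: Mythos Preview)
The paper does not prove this theorem; it is imported wholesale from \cite{bonchi2025tapediagramsmonoidalmonads}, as the label ``From~[\dots]'' and the surrounding text make explicit (the preceding sentence already attributes the symmetric monoidal structure of $\otimes$ to Theorem~27 of that reference). There is no proof in the present paper to compare your proposal against.

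On your proposal itself: the appeal to Corollary~\ref{cor:isotapematrices} is circular. The Kronecker product on $\stmat{\DiagT^+}$ is \emph{defined} in this paper by transporting the tape-level $\otimes$ across that very isomorphism (see the paragraph following the theorem), so the matrix description cannot be used to establish the rig laws on tapes. Stripping that away, your remaining plan is: grant that $(\CatT{\DiagT},\otimes,\uno)$ is symmetric monoidal (the cited Theorem~27), use the coproduct universal property of $\oplus$ to pin down the distributors and annihilators as canonical copairings, and reduce Laplaza's diagrams to uniqueness of mediating maps. This is a viable alternative route---essentially the general fact that a symmetric monoidal category with finite coproducts preserved by $\otimes$ in each variable is automatically a rig category---but it silently assumes that $A\otimes(-)$ and $(-)\otimes A$ preserve coproducts. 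That preservation is not free: it must be extracted from the whiskering clauses in Table~\ref{tab:producttape} (in particular $\LW{U}{\codiag{V}}=\codiag{UV}$, $\LW{U}{\cobang{V}}=\cobang{UV}$ and their polynomial extensions), and checking it is where the inductive work actually lives. Note also that the convex-product half of $\oplus$ plays no role in this argument; only the genuine coproduct structure is needed, so invoking convex projections $\pi_i$ as you suggest is unnecessary.
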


By means of the isomorphism in Corollary \ref{cor:isotapematrices}, we can equip $\stmat{\DiagT^+}$ with an additional product $\otimes$. This is described as follows. On objects $P\otimes Q$ is defined as above for tapes. On arrows, first define, for all $d_1\in \DiagT^+[U_1,V_1]$ and $d_2\in \DiagT^+[U_2,V_2]$,  $d_1\otimes^+d_2 \in \DiagT^+[U_1U_2,V_1V_2]$ as
\[d_1\otimes^+ d_2 (c) \defeq \sum_{\{(c_1,c_2) \mid c_1\otimes c_2 =c\}} d_1(c_1)\cdot d_2(c_2) \qquad \text{ for }c\in \DiagT[U_1U_2,V_1V_2]\text{.}\]
(recall from Section \ref{sec:pca} that $ \DiagT^+[U,V]=\Dis(\DiagT[U,V])$). Then for matrices, it is defined as the Kronecker product using $\otimes^+$ as multiplication. 
More precisely, let $M\colon\bigoplus_{i=1}^nU_i \to \bigoplus_{i'=1}^{n'}U_{i'}$ and $N\colon\bigoplus_{j=1}^m V_j \to \bigoplus_{j'=1}^{m'} V_{j'}$, $M\otimes N$ is a matrix of size $n'm'\times nm$ whose rows are indexed by pairs $(i',j')\in \{1,\dots, n'\}\times \{1,\dots m'\}$ and columns by pairs $(i,j)\in \{1,\dots, n\}\times \{1,\dots m\}$. Assuming that $M_{i',i}=p_{i',i} \cdot d_{i',i}$
and $N_{j',j}=q_{j',j} \cdot e_{j',j}$, $M\otimes N$ is defined for all indexes $(i',j'),(i,j)$ as
$(M\otimes N)_{(i',j'),(i,j)} \defeq (p_{i',i}\cdot q_{j',j}) \cdot (d_{i',i}\otimes^+ e_{j',j}) $.
\begin{corollary}
$\CatT{\DiagT}$ and $\stmat{\DiagT^+}$ as isomorphic as rig categories.
\end{corollary}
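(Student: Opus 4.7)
The plan is to reuse the isomorphism $\Phi\colon \CatT{\DiagT} \xrightarrow{\cong} \stmat{\DiagT^+}$ of convex biproduct categories from Corollary~\ref{cor:isotapematrices}, and show that it is moreover strong monoidal with respect to the second tensor $\otimes$, and that it preserves the distributors and annihilators that make each side a rig category. Preservation on objects is immediate, because $P\otimes Q$ is defined in exactly the same way on both sides: the polynomial/word obtained by distributing $\otimes$ over $\oplus$ and then concatenating monomials. The nontrivial content is the preservation on morphisms.

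For the morphism part I would argue by induction on the construction of tapes in $\CatT{\DiagT}$. The two monoidal products $\otimes$ and $\oplus$ in $\CatT{\DiagT}$ satisfy the rig distributivity laws, and by Corollary~\ref{cor:isotapematrices} the functor $\Phi$ preserves $\oplus$ strictly, so it suffices to verify preservation of $\otimes$ on the generating tapes that have monomial (single-summand) source and target, i.e. on tapes of the form $\tapeFunct{c}$ together with identities and symmetries. For $\tapeFunct{c}\otimes \tapeFunct{d}$, the tape definition unfolds (via whiskerings) to $\tapeFunct{c\otimes d}$, which under $\Phi$ is the $1{\times}1$ matrix $(1\cdot (c\otimes d))$; on the matrix side, the Kronecker product of the $1{\times}1$ matrices $(1\cdot c)$ and $(1\cdot d)$ is $((1\cdot 1)\cdot (c\otimes^+ d)) = (1\cdot (c\otimes d))$, using the definition of $\otimes^+$ on Dirac distributions. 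The same one-line check handles identities and symmetries of monomial objects, so the base case is dispatched.

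For the inductive step, I would use that every tape $\t\colon P\to Q$ with $P=\bigoplus_i U_i$, $Q=\bigoplus_j V_j$ decomposes, via the biproduct structure, into its $\iota_i;\t;\pi_j$ pieces, each of which has monomial source and target. Since $\Phi$ is an isomorphism of convex biproduct categories, it preserves injections and projections; combined with the fact that in a rig category $\otimes$ distributes strongly over $\oplus$ on both sides (this is built into the definition of Kronecker product on $\stmat{\DiagT^+}$ and is part of the rig structure on $\CatT{\DiagT}$ by the cited Theorem~27 of \cite{bonchi2025tapediagramsmonoidalmonads}), preservation of $\otimes$ on monomial pieces forces preservation on arbitrary pairs of tapes. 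It remains to check the convex sums: if $M_{j,i}=p_{j,i}\cdot d_{j,i}$ and $N_{j',i'}=q_{j',i'}\cdot e_{j',i'}$, the Kronecker entry $(p_{j,i}q_{j',i'})\cdot(d_{j,i}\otimes^+ e_{j',i'})$ matches what the tape side yields because $\otimes$ is bilinear with respect to $+_p$ (the pca-enrichment of $\CatT{\DiagT}$ is monoidal for $\otimes$, as recalled just above the statement).

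Finally, one has to verify that $\Phi$ sends the rig coherence data (left and right distributors $\delta^l,\delta^r$, annihilators $\lambda^\bullet,\rho^\bullet$, unitors of $\otimes$) of $\CatT{\DiagT}$ to the corresponding data on $\stmat{\DiagT^+}$. These coherence morphisms are canonical isomorphisms assembled from identities, symmetries, and the biproduct structure, all of which $\Phi$ preserves; hence they are preserved automatically, and $\Phi$ is an isomorphism of rig categories. The main obstacle I expect is not a conceptual one but a bookkeeping one: reconciling the whiskering-based definition of $\otimes$ on tapes with the Kronecker-based definition on matrices for arbitrary polynomial objects, which is exactly what the distribute-then-check-monomials strategy is designed to defuse.
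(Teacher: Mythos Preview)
The paper gives no separate proof for this Corollary. The reason is that the $\otimes$ structure on $\stmat{\DiagT^+}$ is introduced \emph{by transport} along the isomorphism of Corollary~\ref{cor:isotapematrices}: the sentence just before the Kronecker description reads ``By means of the isomorphism in Corollary~\ref{cor:isotapematrices}, we can equip $\stmat{\DiagT^+}$ with an additional product $\otimes$.'' The Kronecker-style formula that follows is then offered as a \emph{description} of the transported product, not as an independent definition. Under this reading the Corollary is tautological: transporting a rig structure along an isomorphism of underlying categories yields an isomorphism of rig categories by construction.

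Your route is genuinely different: you take the Kronecker formula as the primary definition on the matrix side and verify directly that $\Phi$ intertwines the two $\otimes$ products. The outline is essentially sound---reduce to monomial-to-monomial pieces, use the pca-bilinearity of $\otimes$ for the scalars, and check the base case on $\tapeFunct{c}$'s. Two cautions. First, your ``decompose into its $\iota_i;\t;\pi_j$ pieces'' step is delicate in a \emph{convex} biproduct category: as the paper itself remarks in the Conclusion, the usual bijection between arrows $A\oplus B\to C\oplus D$ and quadruples of components breaks down here, so you must route this through the actual matrix form from Corollary~\ref{cor:isotapematrices} (entries are pairs scalar--subdistribution) rather than a generic biproduct argument. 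Second, the ``base case'' for monomial source and target is not only $\tapeFunct{c}$, identities and symmetries: by Lemma~\ref{lemma: arrows A-B subdistributions} such arrows are arbitrary subdistributions of $\DiagT$-arrows, so the convex-sum check you defer to the end is already part of the base case. What your approach buys is an independent confirmation that the explicit Kronecker description really is the transported product; what the paper's approach buys is that the Corollary is immediate.
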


\begin{table}[t]
\centering
\resizebox{0.96\linewidth}{!}{%
\begin{tabular}{@{}l@{}}
$\Orgate \defeq (\Notgate \otimes \Notgate); \Andgate; \Notgate 
\qquad 
\Flip{0} \defeq \Flip{1}; \Notgate 
\qquad 

    \InputIfFileExists{tapes/cipriano/multi.tikz}{}{\input{./tikz/tapes/cipriano/multi.tikz}}
 \defeq 
    \InputIfFileExists{tapes/cipriano/multiexpl.tikz}{}{\input{./tikz/tapes/cipriano/multiexpl.tikz}}
$ \\[4pt]
$\raisebox{-1.5em}{
    \InputIfFileExists{tapes/cipriano/mmultiplexerleftzero.tikz}{}{\input{./tikz/tapes/cipriano/mmultiplexerleftzero.tikz}}
} 
\defeq 

    \InputIfFileExists{tapes/cipriano/mmultiplexerrightzero.tikz}{}{\input{./tikz/tapes/cipriano/mmultiplexerrightzero.tikz}}

\qquad 
\raisebox{-1.5em}{
    \InputIfFileExists{tapes/cipriano/mmultiplexerleft.tikz}{}{\input{./tikz/tapes/cipriano/mmultiplexerleft.tikz}}
} 
\defeq 

    \InputIfFileExists{tapes/cipriano/mmultiplexerright.tikz}{}{\input{./tikz/tapes/cipriano/mmultiplexerright.tikz}}

\qquad
\raisebox{-0.5em}{
    \InputIfFileExists{tapes/cipriano/ncopierzero.tikz}{}{\input{./tikz/tapes/cipriano/ncopierzero.tikz}}
} 
\defeq 

    \InputIfFileExists{empty.tikz}{}{\input{./tikz/empty.tikz}}

\qquad 
\raisebox{-0.5em}{
    \InputIfFileExists{tapes/cipriano/ncopier.tikz}{}{\input{./tikz/tapes/cipriano/ncopier.tikz}}
}$
\end{tabular}
}
\caption{Definitions of $\Orgate$, $\Flip{0}$ and multiplexer (top); inductive definitions of $m$-ary multiplexer and $n$-ary copier (bottom).}
\label{table:multiplexer}
\end{table}

\section{Probabilistic Boolean Circuits}\label{sec:probboolcircuits} 

Now, we  quickly recall causal probabilistic boolean circuits from \cite{piedeleu2025boolean}.
Consider the monoidal signature with a single sort, $\sort = \{ A\}$, and  generators \[ PB \defeq \{\Andgate  , \Notgate , \Flip{1},   \CBcopier, \CBdischarger \} \cup\{ \Flip{p} \mid p\in (0,1)\}\text{.}\]  
Arities and coarities are determined by the number of ports on the left and on the right: for instance $\Andgate$ has arity $A^2=AA$ and coarity $A$, while $\CBdischarger $ has arity $A$ and coarity $A^0 =1$. 
The first three generators represent operations and constants of boolean algebras ($\wedge$, $\neg$, $1$), $\CBcopier$ receives a boolean signal on the left and emits two copies on the right, $\CBdischarger$ receives one signal on the left and discards it, while $\Flip{p}$ denotes the distributions on the set $2\defeq \{0,1\}$ mapping $1\mapsto p$ and $0\mapsto 1-p$. 

Formally, the semantics of diagrams is the monoidal functor  $\osem{-}\colon \Diag{ PB} \to \KlD$ defined on objects as $A^n \mapsto 2^n$; for arrows it is defined by first interpreting the generators as arrows in $\KlD$
  \[
    \begin{array}{rclrclrcl}
      \osem{\Andgate}\colon 2 \times 2& \to & 2  & \osem{\Notgate}\colon  2& \to & 2 & \osem{\Flip{1}} \colon 1 & \to & 2 \\
      (x,y)&\mapsto&\delta_{x\wedge y} &  x & \mapsto & \delta_{\neg x} & \bullet &\mapsto&\delta_{1}\\[4pt]
       \osem{\CBcopier} \colon 2  & \to & 2\times 2 & \osem{\CBdischarger} \colon 2 & \to & 1 & \osem{\Flip{p}} \colon 1 & \to & 2  \\
      x &\mapsto&\delta_{(x,x)} & x &\mapsto&\delta_{\bullet} & \bullet &\mapsto& \delta_1 +_p \delta_0
    \end{array}
  \]
and, then inductively by means of the structure of monoidal category $(\KlD, \otimes ,\uno)$: for instance $\osem{c_1\otimes c_2}=\osem{c_1}\otimes \osem{c_2}$ where the second $\otimes$ is the one in \eqref{ex:products}.

Figure 4 in \cite{piedeleu2025boolean} illustrates a sound and complete equational axiomatisation for probabilistic boolean circuits. In a nutshell, these axioms are those of boolean algebras  (reported in Table~\ref{tab:booleanalgebra}) together with 4 extra sophisticated axioms to deal with $\Flip{p}$. It is worth to mention that such axioms crucially rely on \emph{multiplexer}: the  $A^3 \to A$ diagram in Table \ref{table:multiplexer} denoting the boolean function mapping $(x,y,z)$ into $y$ if $x=1$ and into $z$ if $x=0$. Table \ref{table:multiplexer} also illustrates the definition of a $m$-ary multiplexer $A^{2m+1} \to A^m$. This have been used in \cite{Niels} to provide an alternative axiomatisation which is based on the following law \footnote{More precisely, in \cite{Niels}, the law \eqref{eq:natmultiplexer} appears in the equivalent form of naturality of multiplexer.}. 
\begin{equation}\label{eq:natmultiplexer}
\raisebox{-1.5em}{\scalebox{1}{
    \InputIfFileExists{tapes/cipriano/axiompluscorretto.tikz}{}{\input{./tikz/tapes/cipriano/axiompluscorretto.tikz}}
}}\text{.}
\end{equation}

\begin{table}
  \scalebox{0.8}{\begin{tabular}{ccccc}
  \scalebox{0.8}{
    \InputIfFileExists{tapes/cipriano/axA1.tikz}{}{\input{./tikz/tapes/cipriano/axA1.tikz}}
} & &	\scalebox{0.8}{
    \InputIfFileExists{tapes/cipriano/axA2.tikz}{}{\input{./tikz/tapes/cipriano/axA2.tikz}}
} && \scalebox{0.8}{
    \InputIfFileExists{tapes/cipriano/axA3.tikz}{}{\input{./tikz/tapes/cipriano/axA3.tikz}}
}\\
  \midrule
  
    \InputIfFileExists{tapes/cipriano/axB1.tikz}{}{\input{./tikz/tapes/cipriano/axB1.tikz}}
 & &	
    \InputIfFileExists{tapes/cipriano/axB2.tikz}{}{\input{./tikz/tapes/cipriano/axB2.tikz}}
 && 
    \InputIfFileExists{tapes/cipriano/axB3.tikz}{}{\input{./tikz/tapes/cipriano/axB3.tikz}}
\\
  
    \InputIfFileExists{tapes/cipriano/axB4.tikz}{}{\input{./tikz/tapes/cipriano/axB4.tikz}}
 & &	
    \InputIfFileExists{tapes/cipriano/axB5.tikz}{}{\input{./tikz/tapes/cipriano/axB5.tikz}}
 && 
    \InputIfFileExists{tapes/cipriano/axB6.tikz}{}{\input{./tikz/tapes/cipriano/axB6.tikz}}
\\
    &&
    \InputIfFileExists{tapes/cipriano/axB7.tikz}{}{\input{./tikz/tapes/cipriano/axB7.tikz}}
&& \\ \midrule% Added missing '\\' at end of row
    
    \InputIfFileExists{tapes/cipriano/axC1.tikz}{}{\input{./tikz/tapes/cipriano/axC1.tikz}}
 && 
    \InputIfFileExists{tapes/cipriano/axC2.tikz}{}{\input{./tikz/tapes/cipriano/axC2.tikz}}
 &&
    \InputIfFileExists{tapes/cipriano/axC3.tikz}{}{\input{./tikz/tapes/cipriano/axC3.tikz}}
\\\midrule
    
    \InputIfFileExists{tapes/cipriano/axD1.tikz}{}{\input{./tikz/tapes/cipriano/axD1.tikz}}
  && 
    \InputIfFileExists{tapes/cipriano/axD2.tikz}{}{\input{./tikz/tapes/cipriano/axD2.tikz}}
 &&
    \InputIfFileExists{tapes/cipriano/axD3.tikz}{}{\input{./tikz/tapes/cipriano/axD3.tikz}}
 \raisebox{-0.5em}{
    \InputIfFileExists{empty.tikz}{}{\input{./tikz/empty.tikz}}
}\\ % Added missing '&&' and '\\' at end of row
\end{tabular}}
\caption{The monoidal theory of boolean algebras.}\label{tab:booleanalgebra}
\end{table}

In \cite[Example 30]{bonchi2025tapediagramsmonoidalmonads}, an encoding of probabilistic Boolean circuits into tape diagrams is presented.
Let $ B$ be the signature obtained from $ PB$ by dropping $\Flip{p}$ and let $\mathbb{B}$ be the monoidal theory in Table \ref{tab:booleanalgebra} consisting of the usual axioms of boolean algebras. For later use, it is worth to recall (see e.g. \cite[Theorem 3.2]{piedeleu2025boolean})  that $\Diag{\mathbb{B}}$ is isomorphic to $\Sets_2$, the full subcategory of $\Sets$ having as objects the sets $2^n$ for all $n\in \mathbb{N}$. We call boolean circuits the arrows of $\Diag{ B}$ and probabilistic tapes of boolean circuits those of $\CatT{\Diag{ B}}$. The semantics of such tapes is given in \cite[Example 30]{bonchi2025tapediagramsmonoidalmonads} by the morphism of rig categories $\dsem{-}\colon \CatT{\Diag{ B}} \to \KlD$ reported in Figure \ref{eq:SEMANTICA}. The encoding of probabilistic boolean circuits into tapes is the unique monoidal functor $\encoding{-} \colon (\Diag{ PB},\otimes, \uno) \to (\CatT{\Diag{ B}},\otimes, \uno)$ mapping
\[
\resizebox{0.95\linewidth}{!}{$
\Andgate \mapsto \Andgate[t] \qquad
\Notgate \mapsto \Notgate[t] \qquad
\Flip{1} \mapsto \Flip{1}[t] \qquad
\CBcopier \mapsto  \CBcopier[t] \qquad
\CBdischarger \mapsto \CBdischarger[t] \qquad
\Flip{p} \mapsto \Flip{p}[t]
$}
\]
where
\[ \Flip{p}[t] \defeq 
    \InputIfFileExists{tapes/examples/1p0.tikz}{}{\input{./tikz/tapes/examples/1p0.tikz}}
\]
As expected, the encoding preserves the semantics, in the sense that $\dsem{\encoding{c}}=\osem{c}$ for all $c$ in $\Diag{ PB}$.

\begin{figure}[t]
    \renewcommand{\arraystretch}{1.5}
\!\!\!\begin{tabular}{l@{\;\;\;\;}l@{\;\;\;\;}l@{\;\;\;\;}l@{\;\;\;\;}l}
$\CBdsem{\id{A}} \defeq \id{2} $&$ \CBdsem{\id{1}} \defeq \id{1} $&$ \CBdsem{\symmt{A}{A}}  \defeq \symmt{2}{2}  $&$ \CBdsem{c;d}  \defeq \CBdsem{c} ; \CBdsem{d}   $ & $ \CBdsem{c\per d}  \defeq \CBdsem{c}  \per \CBdsem{d} $\\
$\CBdsem{s}  \defeq \osem{s} $& $\CBdsem{\, \tapeFunct{c} \,} \defeq \CBdsem{c}  $ & $\CBdsem{\diagp{A^n}}  \defeq \; \diagp{2^n}$& $\CBdsem{\bang{A^n}} \defeq \bang{2^n}$  \; $\CBdsem{\cobang{A^n}}  \defeq \cobang{2^n}$    & $\CBdsem{\codiag{A^n}} \defeq \codiag{2^n} $ \\
$\CBdsem{\id{A^n}} \defeq \id{2^n} $&$ \CBdsem{\id{\zero}}  \defeq \id{\zero}  $&$ \CBdsem{\symmp{A^n}{A^m}} \defeq \symmp{2^n}{2^m} $&$ \CBdsem{\s;\t}  \defeq \CBdsem{\s}  ; \CBdsem{\t}  $&$ \CBdsem{\s \piu \t}  \defeq \CBdsem{\s}  \piu \CBdsem{\t}  $ % \\
% & & $\CBdsem{\bang{A^n}} \defeq \bang{2^n}$  
% \multicolumn{5}{c}{
%     \begin{tabular}{c@{\qquad\qquad}c}
%         $\CBdsem{\copier{A}} = \copier{\alpha_\sort(A)}$ & $\CBdsem{\discharger{A}} = \discharger{\alpha_\sort(A)}$
%     \end{tabular}
% }
\end{tabular}
\caption{The semantics $\dsem{-}\colon \CatT{\Diag{ B}} \to \KlD$. Here $s$ is a generator in $ B$.}\label{eq:SEMANTICA}
\end{figure}

\medskip
In this section, we show that the axioms of Boolean algebras and \eqref{eq:natmultiplexer} are enough to characterise the equivalence induced by $\dsem{-}$. 
More precisely, we take the encoding of \eqref{eq:natmultiplexer} in tapes:  for all $\t\colon A^{n+1} \to A^m$, $\t = (\, \id{A} \otimes (\ncopier[]; \t_1 \otimes \t_0)\,); \Ifgatem[]$ where $\t_0=(\Flip{0}[t]\otimes \id{A^n});\t$ and $\t_1=(\Flip{1}[t]\otimes \id{A^n});\t$. Simple computations confirm that the axiom is sound: $\dsem{-}$ maps the left and the right hand side of the equation in the same arrow in $\KlD$.

We write $\eqsyn$ for the congruence (w.r.t. $\oplus$, $\otimes$ and $;$) on $\CatT{\Diag{\mathbb{B}}}$ generated by the above axiom and $\eqsynq{\CatT{\Diag{\mathbb{B}}}}$ for the category obtained as the quotient of $\CatT{\Diag{\mathbb{B}}}$ by $\eqsyn$. We denote by $Q_\eqsyn\colon \CatT{\Diag{\mathbb{B}}} \to \eqsynq{\CatT{\Diag{\mathbb{B}}}}$ the functor mapping each tape into its $\eqsyn$ equivalence class and $Q_\mathbb{B}\colon \Diag{ B} \to \Diag{\mathbb{B}}$ the functor mapping each diagram into its $=_{\mathbb{B}}$ equivalence class. 

Since the axioms are sound $\dsem{-}\colon \CatT{\Diag{ B}} \to \KlD$ can be factored as
\[\xymatrix{\CatT{\Diag{B}} \ar[r]^{\CatT{Q_\mathbb{B}}} & \CatT{\Diag{\mathbb{B}}} \ar[r]^{Q_\eqsyn} & \eqsynq{\CatT{\Diag{\mathbb{B}}}} \ar@{.>}[r]^I& \KlD}\]
for some functor $I$. Proving completeness amounts to prove that  $I$  is faithful. %\colon \eqsynq{\CatT{\Diag{\mathbb{B}}}} \to \KlD$

By Corollary \ref{cor:isotapematrices} we know that $\CatT{\Diag{\mathbb{B}}} \cong \stmat{\Diag{\mathbb{B}}^+} \cong  \stmat{\Sets_2^+}$. 
Now, let $\KlD_2$ be the full subcategory of $\KlD$ having as objects the sets $2^n$ for all $n\in \mathbb{N}$, and $J\colon \Sets_2 \to \KlD_2$ be the obvious embedding. Since $\KlD_2$ is $\Cat{PCA}$-enriched, by Theorem \ref{thm:freeenriched}, there exists a $\Cat{PCA}$-enriched functor $J^\sharp\colon \Sets_2^+ \to \KlD_2$. This enjoys the following key property.

\begin{lemma}\label{lemma:Jfull}
For all $d_1,d_2\in \Sets_2^+[1, 2^m]$, $J^\sharp(d_1)=J^\sharp(d_2)$ iff $d_1=d_2$.
\end{lemma}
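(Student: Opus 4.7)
The plan is to identify both hom-sets $\Sets_2^+[1,2^m]$ and $\KlD_2[1,2^m]$ with the same concrete object, $\Dis(2^m)$, and then verify that $J^\sharp$ is precisely this identification. First, since $1$ is a singleton, every function $f\colon 1 \to 2^m$ is determined by its unique value $f(\bullet) \in 2^m$, so there is a canonical bijection $\Sets_2[1,2^m] \cong 2^m$. Applying $\Dis(-)$ gives $\Sets_2^+[1,2^m] = \Dis(\Sets_2[1,2^m]) \cong \Dis(2^m)$. On the other hand, $\KlD_2[1,2^m] = \Sets[1, \Dis(2^m)] \cong \Dis(2^m)$ as well. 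Call $\phi_1$ and $\phi_2$ these two canonical bijections.

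Next I would compute $J^\sharp$ explicitly on morphisms from $1$. By Theorem~\ref{thm:freeenriched}, $J^\sharp$ is the unique $\Cat{PCA}$-enriched extension of $J$, hence determined by $J^\sharp(\delta_f) = J(f)$ together with preservation of $\star$ and of the convex combinations $+_p$. Since $\Sets_2[1,2^m]$ is finite, any $d \in \Dis(\Sets_2[1,2^m])$ can be written as a finite formal convex combination of Diracs $\delta_f$ (using the $n$-ary sums of~\eqref{eq:def somma n pca}), and $\Cat{PCA}$-enrichment yields
\[
J^\sharp(d)(\bullet) \;=\; \sum_{f\in \Sets_2[1,2^m]} d(f)\cdot J(f)(\bullet) \;=\; \sum_{f} d(f)\cdot \delta_{f(\bullet)} \;=\; \sum_{v \in 2^m} d(f_v)\cdot \delta_v,
\]
where $f_v$ denotes the unique function $1 \to 2^m$ with $f_v(\bullet)=v$. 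The final expression shows that the composite $\phi_2 \circ J^\sharp \circ \phi_1^{-1}\colon \Dis(2^m)\to \Dis(2^m)$ is the identity. In particular $J^\sharp$ restricted to $\Sets_2^+[1,2^m] \to \KlD_2[1,2^m]$ is a bijection, and therefore injective.

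From injectivity, $J^\sharp(d_1) = J^\sharp(d_2)$ implies $d_1 = d_2$, and the converse direction is immediate since $J^\sharp$ is a functor. I do not foresee any real obstacle; the only delicate point is justifying the passage from the universal property "PCA-enriched extension" to the concrete pointwise formula above, which follows because finitely supported subdistributions on a finite set are exactly finite convex combinations of Diracs in the sense of~\eqref{eq:def somma n pca}, and $J^\sharp$ must commute with all such combinations.
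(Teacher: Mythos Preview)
Your proposal is correct and follows essentially the same approach as the paper: both identify $\Sets_2^+[1,2^m]$ and $\KlD_2[1,2^m]$ with $\Dis(2^m)$ and check that $J^\sharp$ realises this identification. The paper simply states the explicit formula $J^\sharp(d)(b'\mid b)=\sum_{\{f\mid f(b)=b'\}}d(f)$ (which is the $F^\sharp$ from the proof of Theorem~\ref{thm:freeenriched}) and observes the bijection directly, whereas you derive the same formula from the universal property before drawing the identical conclusion.
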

\begin{proof}
It is convenient to give the explicit definition of $J^\sharp$. Recall that an arrow $d\colon 2^n \to 2^m$ in $\Sets_2^+$ is a subprobability distribution on $\Sets_2[2^n,2^m]$, i.e., an element of $\Dis(\Sets_2[2^n,2^m])$. $J^\sharp$ is the identity on objects functor mapping $d\colon 2^n \to 2^m$, into the Kleisli arrow defined for all $b\in 2^n, b'\in 2^m$ as $J^\sharp(d)(b'|b)=\sum_{\{f\mid f(b)=b'\}}d(f)$. It is trivial to see that for $n=0$, $J^\sharp$ provides the obvious bijection $$\Dis(\Sets_2[1,2^m])\cong \Dis(2^m) \cong   \KlD_2[1,2^m]\text{.}$$
%
%$b=\bullet$ (recall that $\bullet$ is the unique element of $1$), $J^\sharp(d)(b'|\bullet)=\sum_{\{f\mid f(\bullet)=b'\}}d(f)$.
%Define $U\colon \KlD_2[1,2^m] \to \Sets_2^+[1,2^m]$ as the function mapping any $g\colon 1 \to \Dis(2^m)$ into $\sum_{b'\in 2^m} g(b'|\bullet)\cdot \bar{b'}$ where $\bar{b'}\colon 1 \to 2^m$ is the function mapping $\bullet$ into $b'$. It is immediate to see that $U$ and $J^\sharp$ provides the obvious bijection $\Dis(\Sets_2[1,2^m])\cong \KlD_2[1,2^m]$.
\end{proof}

We call $F\colon \CatT{\Diag{\mathbb{B}}} \to \stmat{\KlD_2}$, the morphism of convex bicategories obtained by composing the isomorphism $\CatT{\Diag{\mathbb{B}}} \to  \stmat{\Sets_2^+}$ with $\stmat{J^\sharp}$. 

We call $G\colon \stmat{\KlD_2} \to \KlD$ the morphism of convex bicategories obtained by composing the embedding $\stmat{\KlD_2} \to \stmat{\KlD}$ with the counit of the adjunction in Theorem \ref{thm:matfree}, $\epsilon_{\KlD} \colon \stmat{\KlD}\to \KlD$. The latter is, by Proposition \ref{prop:counit fullfaithful}, full and faithful and, therefore, so is $G$. 
\[\xymatrix{
\CatT{\Diag{\mathbb{B}}}  \ar[r]^{\eqsynq{Q}} \ar[d]_{F} & \eqsynq{\CatT{\Diag{\mathbb{B}}}} \ar[d]^{I} \ar@{.>}[dl]|H \\
\stmat{\KlD_2} \ar[r]_G & \KlD
}\]
Since $\eqsynq{Q}$ is identity on objects and $G$ is full and faithful, then by the property of orthogonal factorisation systems (see e.g. \cite{nlab:bofactorizationsystem}), there exists a unique $H\colon  \eqsynq{\CatT{\Diag{\mathbb{B}}}} \to \stmat{\KlD_2}$ making the above diagram commute. Below we prove that $H$ is faithful to conclude that also $I$ is faithful.

Hereafter, we write $\Cat{B}[1,A^n]$ for the set $n$-ary boolean vectors, i.e., all those tapes in $\CatT{\Diag{\mathbb{B}}}[1,A^n]$ of the form $\bigotimes_{i=1}^n b_i$ for $b_i \in \{\Flip{0}[t], \Flip{1}[t]\}$. The following is proved by a simple inductive argument.
%A simple inductive argument confirms that

\begin{lemma}\label{lemma:booleans}
Let $\s,\t\in \CatT{\Diag{\mathbb{B}}}[A^n,A^m]$. If, for all $\bool\in \Cat{B}[1,A^n]$, $\bool;\s \eqsyn\bool; \t$, then $\s \eqsyn \t$.
\end{lemma}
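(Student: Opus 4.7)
The plan is to proceed by induction on $n$.

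\textbf{Base case ($n=0$).} The set $\Cat{B}[1,A^0]$ contains only the empty boolean vector, which is (up to the usual coherence) the identity $\id{\uno}$. The hypothesis applied to $\bool = \id{\uno}$ immediately gives $\s \eqsyn \t$.

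\textbf{Inductive step.} Suppose the statement holds for $n$ and consider $\s,\t \colon A^{n+1} \to A^m$. Set
\[
  \s_i \defeq (\Flip{i}[t] \otimes \id{A^n}); \s, \qquad
  \t_i \defeq (\Flip{i}[t] \otimes \id{A^n}); \t, \qquad i\in\{0,1\}.
\]
By the encoded naturality axiom (the tape-encoding of \eqref{eq:natmultiplexer}), both $\s$ and $\t$ $\eqsyn$-reduce to the multiplexer form
\[
  (\id{A} \otimes (\ncopier; \t_1 \otimes \t_0)); \Ifgatem[],
\]
and analogously with $\s_i$ in place of $\t_i$. Hence to prove $\s \eqsyn \t$ it suffices to show $\s_0 \eqsyn \t_0$ and $\s_1 \eqsyn \t_1$, since $;$ and $\otimes$ are congruences for $\eqsyn$.

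\textbf{Applying the inductive hypothesis.} Both $\s_i$ and $\t_i$ are arrows $A^n \to A^m$, so by the inductive hypothesis it is enough to verify that $\bool;\s_i \eqsyn \bool;\t_i$ for every $\bool \in \Cat{B}[1,A^n]$. By definition of $\s_i$ and functoriality of $\otimes$,
\[
  \bool;\s_i \;=\; \bool;(\Flip{i}[t]\otimes \id{A^n});\s \;=\; (\Flip{i}[t]\otimes \bool);\s,
\]
and similarly for $\t_i$. Since $\Flip{i}[t]\otimes \bool$ lies in $\Cat{B}[1,A^{n+1}]$, the overall hypothesis of the lemma applied to this boolean vector yields $(\Flip{i}[t]\otimes \bool);\s \eqsyn (\Flip{i}[t]\otimes \bool);\t$, i.e.\ $\bool;\s_i \eqsyn \bool;\t_i$. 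This completes the induction.

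The only non-routine step is the inductive step: one must invoke the multiplexer decomposition from \eqref{eq:natmultiplexer} (in its tape encoding) to reduce equality of $\s$ and $\t$ on $A^{n+1}$ to equality of their $0$- and $1$-branches on $A^n$. Everything else is a straightforward manipulation of whiskerings in the rig-monoidal structure of $\CatT{\Diag{\mathbb{B}}}$.
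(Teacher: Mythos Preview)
Your proof is correct and follows essentially the same approach as the paper's: induction on $n$, with the base case handled by the unique boolean vector $\id{\uno}$, and the inductive step using the tape-encoded multiplexer axiom \eqref{eq:natmultiplexer} to reduce to the branches $\s_i,\t_i\colon A^n\to A^m$, which are then shown $\eqsyn$-equal via the inductive hypothesis. The only difference is the order of exposition—the paper first establishes $\s_i\eqsyn\t_i$ and then invokes the multiplexer axiom, whereas you state the multiplexer reduction first—but the argument is the same.
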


\begin{theorem}
$H \colon \eqsynq{\CatT{\Diag{\mathbb{B}}}} \to \stmat{\KlD_2}$ is faithful.
\end{theorem}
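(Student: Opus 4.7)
The plan is to combine Lemma~\ref{lemma:booleans}---the repository of the new axiom \eqref{eq:natmultiplexer}---with the point-wise faithfulness Lemma~\ref{lemma:Jfull}. Since $H \circ \eqsynq{Q} = F$ and $\eqsynq{Q}$ is identity on objects, the faithfulness of $H$ amounts to showing that for every parallel pair $\s, \t\colon A^n\to A^m$ in $\CatT{\Diag{\mathbb{B}}}$, $F(\s) = F(\t)$ implies $\s \eqsyn \t$. By Lemma~\ref{lemma:booleans} it suffices to prove that $\bool;\s \eqsyn \bool;\t$ for every boolean vector $\bool \in \Cat{B}[1, A^n]$. Since $F$ is a functor, $F(\s) = F(\t)$ gives $F(\bool;\s) = F(\bool;\t)$ in $\stmat{\KlD_2}$ at once, so the problem reduces to establishing that $F$ is injective on the homsets $\CatT{\Diag{\mathbb{B}}}[1, A^m]$: once this is shown, $\bool;\s = \bool;\t$ already holds in $\CatT{\Diag{\mathbb{B}}}$ (hence \emph{a fortiori} $\bool;\s \eqsyn \bool;\t$) and Lemma~\ref{lemma:booleans} yields $\s \eqsyn \t$.

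For the injectivity of $F$ on $\CatT{\Diag{\mathbb{B}}}[1, A^m]$, I would use the factorisation $F = \stmat{J^\sharp} \circ \iota$, where $\iota \colon \CatT{\Diag{\mathbb{B}}} \to \stmat{\Sets_2^+}$ is the isomorphism of Corollary~\ref{cor:isotapematrices} (hence faithful) and $\stmat{J^\sharp}$ comes from Proposition~\ref{prop:stmatfun}. Under $\iota$, the homset $\CatT{\Diag{\mathbb{B}}}[1, A^m]$ corresponds to $\stmat{\Sets_2^+}[1, 2^m]$, whose elements are $1\!\times\! 1$ matrices with a single entry of the form $p\cdot d \in \Sets_2^+[1, 2^m]$; by definition of $\stmat{J^\sharp}$ the functor simply applies $J^\sharp$ entry-wise. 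Hence, if $\stmat{J^\sharp}$ identifies two such matrices with entries $p\cdot d$ and $p'\cdot d'$, then $J^\sharp(p\cdot d) = J^\sharp(p'\cdot d')$ in $\KlD_2[1,2^m]$, and Lemma~\ref{lemma:Jfull} gives $p\cdot d = p'\cdot d'$ in $\Sets_2^+[1,2^m]$, which is precisely the equivalence $\equiv$ between the matrices. Thus $\stmat{J^\sharp}$ is injective on $\stmat{\Sets_2^+}[1, 2^m]$, and so is $F$ on $\CatT{\Diag{\mathbb{B}}}[1, A^m]$.

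The main obstacle is entirely conceptual rather than computational: one must make sure that the reduction via boolean precompositions actually lands in the distinguished homsets $\CatT{\Diag{\mathbb{B}}}[1, A^m]$ where Lemma~\ref{lemma:Jfull} bites, i.e.\ where the bijection $\Sets_2[1,2^m] \cong 2^m$ collapses distributions-over-functions to ordinary distributions, so that $J^\sharp$ becomes injective. All the heavy lifting---the new axiom \eqref{eq:natmultiplexer}, the presentation of tapes as stochastic matrices (Corollary~\ref{cor:isotapematrices}), and the universal property of the free PCA-enrichment---is already packaged in the cited results, and the proof is simply their assembly.
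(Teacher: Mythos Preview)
Your treatment of the monomial case $A^n \to A^m$ is correct and matches the paper's argument: precompose with boolean vectors, invoke Lemma~\ref{lemma:Jfull} on the homset $\CatT{\Diag{\mathbb{B}}}[1,A^m]$ (via the isomorphism with $\stmat{\Sets_2^+}$), and then close with Lemma~\ref{lemma:booleans}.

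The gap is that you stop there. You assert that ``the faithfulness of $H$ amounts to showing that for every parallel pair $\s,\t\colon A^n\to A^m$ \ldots'', but the objects of $\eqsynq{\CatT{\Diag{\mathbb{B}}}}$ are \emph{polynomials} $\bigoplus_i A^{m_i}$, not just monomials $A^m$, and Lemma~\ref{lemma:booleans} is stated only for monomial types. The paper handles this in two further steps. For arrows $A^k\to\bigoplus_i A^{m_i}$ one writes $\s=\langle\s_1,\dots,\s_n\rangle_{\vec q}$ and $\t=\langle\t_1,\dots,\t_n\rangle_{\vec p}$ via Lemma~\ref{decomposition} and Theorem~\ref{thm:TCconvexbiproductcategory}, uses the monomial case on each $\s;\pi_i$ versus $\t;\pi_i$, and then invokes the universal property of convex products \emph{in the quotient} $\eqsynq{\CatT{\Diag{\mathbb{B}}}}$---which requires Proposition~\ref{cor: quotient category is convex biproduct }---to conclude $[\s]_\sim=[\t]_\sim$. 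This last point is not automatic: the decomposition $\langle-\rangle_{\vec p}$ is not canonical (different $\vec p,\vec q$ can yield equivalent arrows), so one genuinely needs uniqueness of the mediating morphism in the quotient category. Finally, arbitrary domains $\bigoplus_j A^{k_j}$ are handled by the coproduct universal property. Your proposal contains neither step and offers no argument for why the monomial case would suffice.
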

\begin{proof}
    Let $\s,\t \colon A^n \to A^m$ be arrows in $\CatT{\Diag{\mathbb{B}}}$. Denote by $[\s]_\eqsyn$ and $[\t]_\eqsyn$ their $\eqsyn$ equivalence classes.
\begin{align*}
H([\s]_\eqsyn) = H([\t]_\eqsyn) & \Rightarrow F(\s)=F(\t) \tag{$F=\eqsynq{Q};H$}\\
& \Rightarrow \text{for all }\bool\in \Cat{B}[1,A^n]\text{, } F(\bool);F(\s)=F(\bool);F(\t) \\
& \Rightarrow \text{for all }\bool\in \Cat{B}[1,A^n]\text{, } F(\bool; \s)=F(\bool;\t) \tag{Functoriality} \\
& \Rightarrow  \text{for all }\bool\in \Cat{B}[1,A^n]\text{, }  \bool;\s=\bool; \t \tag{Lemma \ref{lemma:Jfull}}\\
& \Rightarrow \s \sim \t \tag{Lemma~\ref{lemma:booleans}}
\end{align*}
The above derivation proves that $H$ is faithful for arrows of type $A^n \to A^m$. This fact easily entails that $H$ is faithful for arrows $\s,\t \colon A^k \to \bigoplus_{i=1}^nA^{m_i}$. 
Indeed, by Lemma~\ref{decomposition} and Theorem~\ref{thm:TCconvexbiproductcategory}, 
\begin{equation}\label{eq:boh}\s=\langle \s_1, \dots, \s_n \rangle_{\vec{q}} \quad \t=\langle \t_1, \dots, \t_n \rangle_{\vec{p}}\end{equation} 
for some $\vec{q}=q_1,\dots, q_n$, $\vec{p}=p_1,\dots, p_n$, $\s_i \colon A^k \to A^{m_i}$ and $\t_i \colon A^k \to A^{m_i}$. Thus
\begin{align*}
H([\s]_\eqsyn) = H([\t]_\eqsyn) & \Rightarrow F(\s)=F(\t) \tag{$F=\eqsynq{Q};H$}\\
& \Rightarrow \text{for all }i \text{, } F(\s);F(\pi_i)=F(\t);F(\pi_i) \\
& \Rightarrow \text{for all }i\text{, } F(\s;\pi_i)=F(\t;\pi_i) \tag{Functoriality} \\
& \Rightarrow  \text{for all }i\text{, }  \s;\pi_i\sim\t;\pi_i \tag{Previous implication}\\
& \Rightarrow  \text{for all }i\text{, }  q_i \cdot \s_i\sim p_i \cdot t_i \tag{\ref{eq:boh}}\\
& \Rightarrow  \text{for all }i\text{, }  [q_i \cdot \s_i]_\sim = [p_i \cdot t_i]_i 
\end{align*}
%To conclude that $[\s]_\sim=[\t]_\sim$, observe that 
By Proposition \ref{cor: quotient category is convex biproduct }, $[\s]_\sim=\langle [\s_1]_\sim, \dots, [\s_n]_\sim \rangle_{\vec{q}}$ and $[\t]_\sim=\langle [\t_1]_\sim, \dots, [\t_n]_\sim \rangle_{\vec{p}}$. Since for all $i$, $[\t];[\pi_i]_\sim$ and  $[\s];[\pi_i]_\sim$ are both equal to  $p_i\cdot[\t_i]_\sim=[p_i \cdot t_i]_\sim=[q_i \cdot \s_i]_\sim$, the universal  property of  n-ary convex products in $\CatT{\Diag{\mathbb{B}}}_\sim$ implies that $[\s]_\sim=[\t]_\sim $.

For arrows of arbitrary type $\bigoplus_{j=1}^o A^{k_j} \to \bigoplus_{i=1}^nA^{m_i}$, one can easily rely on the universal property of coproducts and the case that we just proved.
        \end{proof}

\begin{corollary}\label{cor:final}
$I \colon \eqsynq{\CatT{\Diag{\mathbb{B}}}} \to \KlD$ is faithful.
\end{corollary}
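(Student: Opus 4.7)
The plan is to derive the corollary directly from the preceding theorem by using the factorisation $I = H; G$ that is already set up in the diagram appearing just before Lemma~\ref{lemma:booleans}. Since the theorem establishes that $H\colon \eqsynq{\CatT{\Diag{\mathbb{B}}}} \to \stmat{\KlD_2}$ is faithful, the only thing left to observe is that $G\colon \stmat{\KlD_2} \to \KlD$ is also faithful; then $I$ is a composition of two faithful functors and hence faithful.

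To handle $G$, I would first recall its definition: it is the composition of the full and faithful embedding $\stmat{\KlD_2} \hookrightarrow \stmat{\KlD}$ with the counit $\epsilon_{\KlD}\colon \stmat{\KlD} \to \KlD$ of the adjunction $\stmat{-} \dashv U$ from Theorem~\ref{thm:matfree}. By Proposition~\ref{prop:counit fullfaithful}, the counit $\epsilon_{\KlD}$ is full and faithful (since $\KlD$ is a convex biproduct category). The embedding $\stmat{\KlD_2}\hookrightarrow\stmat{\KlD}$ is faithful because it arises from applying the functor $\stmat{-}$ to the full and faithful inclusion $\KlD_2 \hookrightarrow \KlD$, and $\stmat{-}$ preserves faithfulness (it acts identically on entries of matrices modulo the equivalence $\equiv$). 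Composing a faithful functor with a full and faithful functor yields a faithful functor, so $G$ is faithful.

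Putting everything together, $I = H;G$ is faithful as a composition of faithful functors, which concludes the proof. The only potential subtlety lies in confirming that $\stmat{-}$ applied to a faithful functor is faithful; this is a straightforward check on the entrywise description of $\stmat{F}$ from Proposition~\ref{prop:stmatfun}, so I do not expect any real obstacle here. The substantive content of the completeness result lives entirely in the preceding theorem; this corollary is a clean post-processing step.
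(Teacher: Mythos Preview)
Your proposal is correct and matches the paper's approach exactly: the paper has already recorded, just before introducing $H$, that $G$ is full and faithful (as a composite of the embedding $\stmat{\KlD_2}\hookrightarrow\stmat{\KlD}$ with the full and faithful counit $\epsilon_{\KlD}$), so once the theorem gives faithfulness of $H$, the corollary follows immediately from $I = H;G$. The paper leaves the corollary without an explicit proof precisely because this factorisation was set up in advance; your write-up simply spells out what the paper already stated.
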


%\begin{remark}
% Our approach to prove completeness of probabilistic Boolean circuits via tape diagrams is quite different from the one in \cite{piedeleu2025boolean}, which relies on a normal form argument. Instead, our proof relies on the isomorphism $\CatT{\Diag{\mathbb{B}}} \cong \stmat{\Diag{\mathbb{B}}^+}$ provided by Corollary \ref{cor:isotapematrices}. 
% Axioms introduced for tape diagrams only involve the structures of monoid and co-pca objects, the laws of boolean algebra and the additional axiom for the multiplexer \eqref{eq:natmultiplexer}. While the axioms in \cite{piedeleu2025boolean} for $\Flip{p}$ are more involved, since they directly manipulate the probabilistic structure. In particular, axioms E1,E3 and E4 in \cite{piedeleu2025boolean} are easily derivable from monoid, co-pca and boolean axioms, while axiom E2 (see Figure~\ref{fig:E2}) requires~\eqref{eq:natmultiplexer} from \cite{Niels}. %, which can be easily expressed in tape diagrams and yields an equivalent axiomatisation of probabilistic Boolean circuits (see \textit{loc.\ cit.}).
%\end{remark}

\begin{remark}
Our proof of completeness for probabilistic Boolean circuits via tape diagrams differs substantially from the approach in \cite{piedeleu2025boolean}, which is based on a normal-form argument. Instead, we rely on the isomorphism $\CatT{\Diag{\mathbb{B}}} \cong \stmat{\Diag{\mathbb{B}}^+}$ established in Corollary~\ref{cor:isotapematrices}.
The axioms for tape diagrams involve only monoid and co-pca structures, the laws of Boolean algebra, and the additional multiplexer axiom~\eqref{eq:natmultiplexer}. By contrast, the axioms for $\Flip{p}$ in \cite{piedeleu2025boolean} are more elaborate, as they act directly on the probabilistic structure. In particular, axioms E1, E3, and E4 of \cite{piedeleu2025boolean} follow readily from the monoid, co-pca, and Boolean axioms, while axiom E2 (see Figure~\ref{fig:E2}) requires the multiplexer axiom~\eqref{eq:natmultiplexer} from \cite{Niels}.
\end{remark}

\begin{figure}[t]
    \centering
    \scalebox{0.6}{\includegraphics{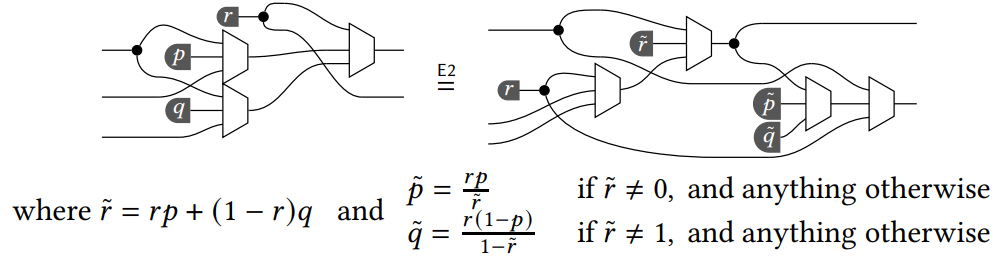} }
    \caption{Axiom E2 from \cite{piedeleu2025boolean}.}
    \label{fig:E2}
\end{figure}

\section{Conclusion}\label{sec:conclusion}  

We have introduced the notion of convex biproduct categories (Def.~\ref{def:convbicat}) and the construction \( \stmat{\Cat{C}} \) of stochastic matrices over a \( \Cat{PCA} \)-enriched category \( \Cat{C} \), showing that \( \stmat{\Cat{C}} \) is the free convex biproduct category generated by \( \Cat{C} \) (Thm.~\ref{thm:matfree}).

We  then presented a syntactic construction: for any category \( \Cat{C} \), the category \( \CatT{\Cat{C}} \) is obtained by freely adding a monoidal structure with natural monoids and co-pointed convex algebras. We showed that \( \CatT{\Cat{C}} \) is the free convex biproduct category over \( \Cat{C} \) (Thm.~\ref{thm:syntacticadjunction}). Combining this with a known result (Thm.~\ref{thm:freeenriched}), we derived the isomorphism \( \CatT{\Cat{C}} \cong \stmat{\Cat{C}^+} \) (Cor.~\ref{cor:isotapematrices}), where arrows in \( \Cat{C}^+ \) are subdistributions of arrows in \( \Cat{C} \).

%We then presented a syntactic construction: for any category \( \Cat{C} \), the category \( \CatT{\Cat{C}} \) is obtained by freely adding a monoidal structure with natural monoids and co-pointed convex algebras. We proved that \( \CatT{\Cat{C}} \) is also the free convex biproduct category over \( \Cat{C} \) (Thm.~\ref{thm:syntacticadjunction}). Combining this with a standard result (Thm.~\ref{thm:freeenriched}), we established the isomorphism \( \CatT{\Cat{C}} \cong \stmat{\Cat{C}^+} \) (Cor.~\ref{cor:isotapematrices}), where arrows in \( \Cat{C}^+ \) are subdistributions over arrows in \( \Cat{C} \).

This construction is particularly relevant when \( \Cat{C} = \DiagS \), a category of string diagrams: in this case, \( \CatT{\DiagS} \) yields the category of probabilistic tape diagrams introduced in~\cite{bonchi2025tapediagramsmonoidalmonads}. The isomorphism \( \CatT{\DiagS} \cong \stmat{\DiagS^+} \) thus provides a concrete interpretation of tape diagrams as stochastic matrices whose entries are probability subdistributions over string diagrams.

Relying on~\cite[Ex. 30]{bonchi2025tapediagramsmonoidalmonads}, we applied this framework to the complete axiomatisation of probabilistic Boolean circuits in~\cite{piedeleu2025boolean}. We proposed an alternative axiomatisation, combining the axioms for Boolean circuits, those for probabilistic tapes, and one additional law from~\cite{Niels}. Thanks to the isomorphism \( \CatT{\DiagS} \cong \stmat{\DiagS^+} \), the resulting completeness proof becomes conceptually simpler and structurally more transparent.

\emph{Related and future work.} The construction of stochastic matrices over pca-enriched categories is conceptually akin to the classical construction of matrices over categories enriched in commutative monoids~\cite{mac_lane_categories_1978,coecke2017two}. However, while the latter is straightforward enough to be left as an exercise in~\cite{mac_lane_categories_1978}, the former involves significantly more work. In particular, finite biproduct categories enjoy a bijective correspondence between arrows \( f \colon A \oplus B \to C \oplus D \) and quadruples of morphisms \( (f_{A,C}, f_{B,C}, f_{A,D}, f_{B,D}) \), where each \( f_{X,Y} \colon X \to Y \). This correspondence breaks down in convex biproduct categories. %, highlighting a key structural difference. 
Our probabilistic tape diagrams appear to be closely related to effectuses \cite{introductioneffectus,chophd}. 
Effectuses are based on finitely \emph{partially additive categories} \cite{manes}, that is, categories enriched over partial commutative monoids, rather than pointed convex algebras.  A precise understanding of the relationship between these two frameworks constitutes an interesting direction for future research.

In~\cite{villoria2024enriching}, a general framework is introduced to enrich string diagrams over arbitrary algebraic theories. This approach fundamentally differs from the tape-based construction in~\cite{bonchi2025tapediagramsmonoidalmonads}, in that it does not account for the structure induced by the coproduct \( \oplus \). As a consequence, the correspondence with matrix-based semantics %--central to the tape diagrams and formalised via convex biproducts-- 
does not arise in~\cite{villoria2024enriching}.

We plan to extend probabilistic tape diagrams with traces for \( \oplus \), following the approach developed in~\cite{bonchi2024diagrammatic}. That work showed that a specific property of monoidal traces--called \emph{uniformity}--corresponds to reasoning by invariants in Hoare-style program logics~\cite{hoare1969axiomatic}. Interestingly, as demonstrated in~\cite{jacobs2010coalgebraic}, uniformity holds for traces over \( \oplus \) in \( \KlD \). Preliminary results suggest that, in the probabilistic setting, invariants correspond to sub-martingales, thus offering a promising link between diagrammatic semantics and probabilistic program verification.

Moreover, probabilistic tape diagrams extended with traces are expressive enough to encode probabilistic regular expressions from~\cite{DBLP:conf/lics/RozowskiS24}, which provide a complete axiomatisation for the regular behaviours of generative probabilistic systems--that is, coalgebras in \( \KlD \)~\cite{hasuo2007generic}. We expect that tapes may again offer a more principled and modular axiomatisation, akin to Kleene algebra in~\cite{Kozen94acompleteness}. Interestingly, the completeness proof in~\cite{Kozen94acompleteness} relies heavily on matrix representations over languages; as we hint in Example~\ref{ex:matrices}, our tape formalism naturally supports stochastic matrices over probabilistic languages.

To conclude, we mention that extending our axiomatisation to probabilistic circuits with $\CBcocopier$ (used in \cite{piedeleu2025boolean} for explicit conditioning) is a challenging future work. Indeed, our proof crucially relies on the well-known isomorphism $\Diag{\mathbb{B}} \cong \Sets_2$, while we are not aware of a similar result for partial functions, necessary to accommodate $\CBcocopier$.

\bibliographystyle{splncs04}
\bibliography{references}

\appendix

\section{Additional Figures and Tables}\label{app:coherence axioms}

\begin{figure}[H]
    \begin{equation}\label{ax:Mon1}\tag{Mon1}
        \input{tikz-cd/monoid_assoc.tikz}
    \end{equation}
    \begin{minipage}[t]{0.50\textwidth}
        \begin{equation}\label{ax:Mon2}\tag{Mon2}
            \input{tikz-cd/monoid_unit.tikz}
        \end{equation}
    \end{minipage}
    \hfill
    \begin{minipage}[t]{0.46\textwidth}
        \begin{equation}\tag{Mon3}
            \input{tikz-cd/monoid_comm.tikz}
        \end{equation}
    \end{minipage}
    \caption{Commutative monoid axioms}
    \label{fig:monoidax}
\end{figure}

 \begin{figure}[H]		
    \begin{equation}\label{eq:coherence codiag}\tag{FC1}
        \begin{tikzcd}[column sep=4.5em,baseline=(current  bounding  box.center)]
        (X \piu Y) \piu (X \piu Y) \ar[dd,"\assoc X Y {X \piu Y}"'] \ar[r,"\codiag{X \piu Y}"] & X \piu Y \\
        & (X \piu X) \piu (Y \piu Y) \ar[u,"\codiag X \piu \codiag Y"']\\
            X \piu (Y \piu (X \piu Y)) \ar[d,"\id X \piu \Iassoc Y X Y"'] & X \piu (X \piu (Y \piu Y)) \ar[u,"\Iassoc X X {Y \piu Y}"']  \\
        X \piu ((Y \piu X) \piu Y) \ar[r,"\id X \piu ( \symm{Y}{X}^{\piu} \piu \id Y)"] & X \piu (( X \piu Y) \piu Y) \ar[u,"\id X \piu \assoc X Y Y"']
        \end{tikzcd}
    \end{equation}
    \\
\begin{minipage}[b]{0.33\textwidth}
	\begin{equation}\label{eq:coherence cobang}\tag{FC2}
        \begin{tikzcd}[baseline=(current  bounding  box.center)]
        \zero \ar[r,"\cobang{X \piu Y}"] \ar[d,"\Ilunit \zero"']  & X \piu Y \\
        \zero \piu \zero \ar[ur,"\cobang X \piu \cobang Y"']  
        \end{tikzcd}
	\end{equation}
\end{minipage}
\hfill
\begin{minipage}[b]{0.26\textwidth}
	\begin{equation}\tag{FC3}
        \begin{tikzcd}
        \zero \piu \zero \ar[r,shift left=2,"\codiag \zero"] \ar[r,shift right=2,"\lunit \zero"'] &  \zero
        \end{tikzcd}
	\end{equation}
\end{minipage}
\hfill
\begin{minipage}[b]{0.26\textwidth}
	\begin{equation}\label{eq:cobang I = id I}\tag{FC4}
        \begin{tikzcd}
        \zero \ar[r,shift left=2,"\cobang \zero"] \ar[r,shift right=2,"\id \zero"'] & \zero
        \end{tikzcd}
	\end{equation}
\end{minipage}
\begin{minipage}[b]{0.45\textwidth}
		\begin{equation}\label{eq:nat monoid1}\tag{Nat. mon1}
		% https://q.uiver.app/#q=WzAsNCxbMSwwLCJYIl0sWzAsMCwiWFxccGl1IFgiXSxbMCwxLCJZXFxwaXUgWSJdLFsxLDEsIlkiXSxbMSwwLCJcXGNvZGlhZ3tYfSJdLFsxLDIsImZcXHBpdSBmIiwyXSxbMCwzLCJmIl0sWzIsMywiXFxjb2RpYWd7WX0iLDJdXQ==
\begin{tikzcd}
	{X\piu X} & X \\
	{Y\piu Y} & Y
	\arrow["{\codiag{X}}", from=1-1, to=1-2]
	\arrow["{f\piu f}"', from=1-1, to=2-1]
	\arrow["f", from=1-2, to=2-2]
	\arrow["{\codiag{Y}}"', from=2-1, to=2-2]
\end{tikzcd}
		\end{equation}
	\end{minipage}
	\hfill
	\begin{minipage}[b]{0.45\textwidth}
		\begin{equation}\label{eq:nat monoid2}\tag{Nat. mon2}
			% https://q.uiver.app/#q=WzAsMyxbMiwwLCJYIl0sWzEsMCwiWSJdLFswLDAsIlxcemVybyJdLFsxLDAsImYiLDJdLFsyLDEsIlxcY29iYW5ne1l9IiwyXSxbMiwwLCJcXGNvYmFuZ3tYfSIsMCx7ImN1cnZlIjotMn1dXQ==
\begin{tikzcd}
	\zero & Y & X
	\arrow["{\cobang{Y}}"', from=1-1, to=1-2]
	\arrow["{\cobang{X}}", bend left =30pt, from=1-1, to=1-3]
	\arrow["f"', from=1-2, to=1-3]
\end{tikzcd}
		\end{equation}
	\end{minipage}
\caption{Coherence and naturality axioms for commutative monoids}
\label{fig:fccoherence}
\end{figure}

\begin{figure}[H]
    \begin{equation}\label{ax:PCA1}\tag{PCA1}
        % https://q.uiver.app/#q=WzAsNSxbMCwwLCJYIl0sWzIsMCwiWFxccGl1IFgiXSxbMiwxLCIoWFxccGl1IFgpXFxwaXUgWCJdLFswLDEsIlhcXHBpdSBYIl0sWzEsMSwiWFxccGl1IChYXFxwaXUgWCkiXSxbMCwxLCIrX3AiXSxbMSwyLCIrX3FcXHBpdSBcXGlke1h9Il0sWzAsMywiK197cHF9IiwyXSxbMyw0LCJcXGlke1h9XFxwaXUgK197XFxmcmFje3AoMS1xKX17MS1wcX19IiwyXSxbNCwyLCJcXElhc3NvY3tYfXtYfXtYfSIsMl1d
\begin{tikzcd}[column sep=large]
	X && {X\piu X} \\
	{X\piu X} & {X\piu (X\piu X)} & {(X\piu X)\piu X}
	\arrow["{\diagp{}}", from=1-1, to=1-3]
	\arrow["{\diagptilde{}}"', from=1-1, to=2-1]
	\arrow["{\diagq{}\piu \id{X}}", from=1-3, to=2-3]
	\arrow["{\id{X}\piu \diagqtilde{}}"', from=2-1, to=2-2]
	\arrow["{\Iassoc{X}{X}{X}}"', from=2-2, to=2-3]
\end{tikzcd} \qquad \tilde{p} = pq, \quad \tilde{q} = \frac{p(1-q)}{1-pq}
    \end{equation}
    \begin{minipage}[t]{0.50\textwidth}
        \begin{equation}\label{ax:PCA2}\tag{PCA2}
           % https://q.uiver.app/#q=WzAsMyxbMCwwLCJYIl0sWzEsMCwiWFxccGl1IFgiXSxbMSwxLCJYIl0sWzAsMSwiK19wIl0sWzEsMiwiXFxjb2RpYWd7WH0iXSxbMCwyLCJcXGlke1h9IiwyXV0=
\begin{tikzcd}
	X & {X\piu X} \\
	& X
	\arrow["{\diagp{}}", from=1-1, to=1-2]
	\arrow["{\id{X}}"', from=1-1, to=2-2]
	\arrow["{\codiag{X}}", from=1-2, to=2-2]
\end{tikzcd}
        \end{equation}
    \end{minipage}
    \hfill
    \begin{minipage}[t]{0.46\textwidth}
        \begin{equation}\label{ax:PCA3}\tag{PCA3}
            % https://q.uiver.app/#q=WzAsMyxbMSwwLCJYIl0sWzAsMSwiWFxccGl1IFgiXSxbMiwxLCJYXFxwaXUgWCJdLFswLDEsIitfcCIsMl0sWzEsMiwiXFxzeW1te1h9e1h9XntcXHBpdX0iLDJdLFswLDIsIitfezEtcH0iXV0=
\begin{tikzcd}
	& X \\
	{X\piu X} && {X\piu X}
	\arrow["{\diagp{}}"', from=1-2, to=2-1]
	\arrow["{\diagpbar{}}", from=1-2, to=2-3]
	\arrow["{\symm{X}{X}^{\piu}}"', from=2-1, to=2-3]
\end{tikzcd}
        \end{equation}
    \end{minipage}
    \caption{co-pca axioms}
    \label{fig:co-pca axioms}
\end{figure}

\begin{figure}[H]		
	\begin{equation}\label{eq:coherence conv sum}\tag{Coh1}
		\begin{tikzcd}[column sep=4.5em,baseline=(current  bounding  box.center)]
			% https://q.uiver.app/#q=WzAsNyxbMCwwLCJYXFxwaXUgWSJdLFsyLDAsIihYXFxwaXUgWSlcXHBpdSAoWFxccGl1IFkpIl0sWzAsMSwiKFhcXHBpdSBYKVxccGl1IChZXFxwaXUgWSkiXSxbMCwyLCJYIFxccGl1IChYIFxccGl1IChZXFxwaXUgWSkpIl0sWzAsMywiWCBcXHBpdSAoKFggXFxwaXUgWSlcXHBpdSBZKSkiXSxbMiwzLCJYIFxccGl1ICgoWSBcXHBpdSBYKVxccGl1IFkpKSJdLFsyLDEsIihYXFxwaXUgKFlcXHBpdSAoWFxccGl1IFkpKSJdLFswLDEsIitfcCJdLFswLDIsIitfcFxccGl1ICtfcCIsMl0sWzIsMywiXFxhc3N7WH17WH17WVxccGl1IFl9IiwyXSxbMyw0LCJcXGlke1h9XFxwaXUgXFxJYXNze1h9e1l9e1l9IiwyXSxbNCw1LCJcXGlkIFggXFxwaXUgKCBcXHN5bW17WH17WX1ee1xccGl1fSBcXHBpdSBcXGlkIFkiLDJdLFs1LDYsIlxcaWR7WH1cXHBpdSBcXGFzc3tZfXtYfXtZfSIsMl0sWzYsMSwiXFxJYXNze1gsWSxYXFxwaXUgWX0iLDJdXQ==
				{X\piu Y} && {(X\piu Y)\piu (X\piu Y)} \\
				{(X\piu X)\piu (Y\piu Y)} && {X\piu (Y\piu (X\piu Y))} \\
				{X \piu (X \piu (Y\piu Y))} \\
				{X \piu ((X \piu Y)\piu Y)} && {X \piu ((Y \piu X)\piu Y)}
				\arrow["{\diagp{}}", from=1-1, to=1-3]
				\arrow["{\diagp{}\piu \diagp{}}"', from=1-1, to=2-1]
				\arrow["{\assoc{X}{X}{Y\piu Y}}"', from=2-1, to=3-1]
				\arrow["{\Iassoc{X}{Y}{X\piu Y}}"', from=2-3, to=1-3]
				\arrow["{\id{X}\piu \Iassoc{X}{Y}{Y}}"', from=3-1, to=4-1]
				\arrow["{\id X \piu ( \symm{X}{Y}^{\piu} \piu \id Y)}"', from=4-1, to=4-3]
				\arrow["{\id{X}\piu \assoc{Y}{X}{Y}}"', from=4-3, to=2-3]
		\end{tikzcd}
	\end{equation}
	\\
	\begin{minipage}[b]{0.33\textwidth}
		\begin{equation}\label{eq:coherence cobang}\tag{Coh2}
			\begin{tikzcd}[baseline=(current  bounding  box.center)]
				% https://q.uiver.app/#q=WzAsMyxbMCwwLCJYXFxwaXUgWSAiXSxbMSwwLCJcXHplcm8iXSxbMSwxLCJcXHplcm8gXFxwaXUgXFx6ZXJvIl0sWzAsMSwiXFxzdGFyX3tYXFxwaXUgWX0iXSxbMCwyLCJcXHN0YXJfWFxccGl1IFxcc3Rhcl9ZIiwyXSxbMiwxLCJcXElsdW5pdHswfSIsMl1d
					{X\piu Y } & \zero \\
					& {\zero \piu \zero}
					\arrow["{\bangp{X\piu Y}}", from=1-1, to=1-2]
					\arrow["{\bangp{X}\piu \bangp{Y}}"', from=1-1, to=2-2]
					\arrow["{\lunit{0}}"', from=2-2, to=1-2]
			\end{tikzcd}
		\end{equation}
	\end{minipage}
	\hfill
	\begin{minipage}[b]{0.26\textwidth}
		\begin{equation}\label{ax:coh3}\tag{Coh3}
		% https://q.uiver.app/#q=WzAsMixbMCwwLCJcXHplcm8iXSxbMSwwLCJcXHplcm8iXSxbMCwxLCJcXHN0YXJfXFx6ZXJvIiwwLHsib2Zmc2V0IjotMX1dLFswLDEsIlxcaWR7XFx6ZXJvfSIsMix7Im9mZnNldCI6MX1dXQ==
		\begin{tikzcd}
			\zero & \zero
			\arrow["{\bangp{\zero}}", shift left, from=1-1, to=1-2]
			\arrow["{\id{\zero}}"', shift right, from=1-1, to=1-2]
		\end{tikzcd}
		\end{equation}
	\end{minipage}
	\hfill
	\begin{minipage}[b]{0.26\textwidth}
		\begin{equation}\label{eq:star I = id I}\tag{Coh4}
			% https://q.uiver.app/#q=WzAsMixbMCwwLCJcXHplcm8iXSxbMSwwLCJcXHplcm9cXHBpdSBcXHplcm8iXSxbMCwxLCIrX3AiLDAseyJvZmZzZXQiOi0xfV0sWzAsMSwiXFxJbHVuaXR7XFx6ZXJvfSIsMix7Im9mZnNldCI6MX1dXQ==
			\begin{tikzcd}
				\zero & {\zero\piu \zero}
				\arrow["{\diagp{}}", shift left, from=1-1, to=1-2]
				\arrow["{\Ilunit{\zero}}"', shift right, from=1-1, to=1-2]
			\end{tikzcd}
		\end{equation}
	\end{minipage}
	\begin{minipage}[b]{0.45\textwidth}
		\begin{equation}\label{eq:nat copca1}\tag{Nat. co-pca1}
		% https://q.uiver.app/#q=WzAsNCxbMCwwLCJYIl0sWzEsMCwiWFxccGl1IFgiXSxbMSwxLCJZXFxwaXUgWSJdLFswLDEsIlkiXSxbMCwxLCJcXGRpYWdwe1h9Il0sWzEsMiwiZlxccGl1IGYiXSxbMCwzLCJmIiwyXSxbMywyLCJcXGRpYWdwe1l9IiwyXV0=
\begin{tikzcd}
	X & {X\piu X} \\
	Y & {Y\piu Y}
	\arrow["{\diagp{X}}", from=1-1, to=1-2]
	\arrow["f"', from=1-1, to=2-1]
	\arrow["{f\piu f}", from=1-2, to=2-2]
	\arrow["{\diagp{Y}}"', from=2-1, to=2-2]
\end{tikzcd}
		\end{equation}
	\end{minipage}
	\hfill
	\begin{minipage}[b]{0.45\textwidth}
		\begin{equation}\label{eq:nat copca2}\tag{Nat. co-pca2}
			% https://q.uiver.app/#q=WzAsMyxbMCwwLCJYIl0sWzEsMCwiWSJdLFsyLDAsIlxcemVybyJdLFswLDEsImYiXSxbMSwyLCJcXGJhbmdwe1l9Il0sWzAsMiwiXFxiYW5ncHtYfSIsMix7ImN1cnZlIjoyfV1d
\begin{tikzcd}
	X & Y & \zero
	\arrow["f", from=1-1, to=1-2]
	\arrow["{\bangp{X}}"', curve={height=30pt}, from=1-1, to=1-3]
	\arrow["{\bangp{Y}}", from=1-2, to=1-3]
\end{tikzcd}
		\end{equation}
	\end{minipage}
	\caption{Coherence and naturality axioms for co-pca objects}
	\label{fig:copcacoherence}
\end{figure}

\begin{table}[t]
    \begin{center}
    {
        \hfill {\tiny
  \[\begin{array}{c}
  \toprule
        \def\arraystretch{1.2}
        \begin{array}{cc}
            \begin{array}{@{}l}
                % \toprule
                \dl{P}{Q}{R} \colon P \per (Q\piu R)  \to (P \per Q) \piu (P\per R) \vphantom{\symmt{P}{Q}} \\
                \midrule
                \dl{\zero}{Q}{R} \defeq \id{\zero} \vphantom{\symmt{P}{\zero} \defeq \id{\zero}} \qquad
                \dl{U \piu P'}{Q}{R} \defeq (\id{U\per (Q \piu R)} \piu \dl{P'}{Q}{R});(\id{U\per Q} \piu \symmp{U\per R}{P'\per Q} \piu \id{P'\per R}) \vphantom{\symmt{P}{V \piu Q'} \defeq \dl{P}{V}{Q'} ; (\Piu[i]{\tapesymm{U_i}{V}} \piu \symmt{P}{Q'})} \\
                % \bottomrule
            \end{array}
            &
            \begin{array}{@{}l}
                % \toprule
                \symmt{P}{Q} \colon P\per Q \to Q \per P, \text{ with } P = \Piu[i]{U_i} \\
                \midrule
                \symmt{P}{\zero} \defeq \id{\zero} \qquad
                \symmt{P}{V \piu Q'} \defeq \dl{P}{V}{Q'} ; (\Piu[i]{\tapesymm{U_i}{V}} \piu \symmt{P}{Q'})
                \phantom{\quad}
                % \bottomrule
            \end{array}
        \end{array}\\
        \midrule
        \begin{array}{rclrcl|rclrcl}
            \midrule
            \LW U {\id\zero} &\defeq& \id\zero& \LW U {\t_1 \piu \t_2} &\defeq& \LW U {\t_1} \piu \LW U {\t_2} &  \RW U {\id\zero} &\defeq& \id\zero & \RW U {\t_1 \piu \t_2} &\defeq& \RW U {\t_1} \piu \RW U {\t_2} \\
            \LW U {\tape{c}} &\defeq& \tape{\id U \per c} &    \LW U {\t_1 ; \t_2} &\defeq& \LW U {\t_1} ; \LW U {\t_2}&  \RW U {\tape{c}} &\defeq& \tape{c \per \id U} & \RW U {\t_1 ; \t_2} &\defeq& \RW U {\t_1} ; \RW U {\t_2} \\
            \LW U {\symmp{V}{W}} &\defeq& \symmp{UV}{UW}       &       && & \RW U {\symmp{V}{W}} &\defeq& \symmp{VU}{WU}&  &&  \\
                        \LW U {\diagp V} &\defeq& \diagp{UV} &  \LW U {\bang V} &\defeq& \bang{UV}& \RW U {\diagp V} &\defeq& \diagp{VU} & \RW U {\bang V} &\defeq& \bang{VU} \\
            \LW U {\codiag V} &\defeq& \codiag{UV} &  \LW U {\cobang V} &\defeq& \cobang{UV}& \RW U {\codiag V} &\defeq& \codiag{VU} & \RW U {\cobang V} &\defeq& \cobang{VU} \\
            \hline \hline
            \LW{\zero}{\t} &\defeq& \id{\zero} &  \LW{W\piu S'}{\t} &\defeq& \LW{W}{\t} \piu \LW{S'}{\t} & \RW{\zero}{\t} &\defeq& \id{\zero} &
            \RW{W \piu S'}{\t} &\defeq& \dl{P}{W}{S'} ; (\RW{W}{\t} \piu \RW{S'}{\t}) ; \Idl{Q}{W}{S'} \\
            \hline \hline
            \multicolumn{12}{c}{
                \t_1 \per \t_2 \defeq \LW{P}{\t_2} ; \RW{S}{\t_1}   \quad \text{ ( for }\t_1 \colon P \to Q, \t_2 \colon R \to S   \text{ )}
            }
            \\
            \bottomrule
        \end{array}
    \end{array}        
      \]
      }
      \hfill
      \caption{Inductive definitions of  left distributor $\dl{P}{Q}{R}$ and $\otimes$-symmetry $\symmt{P}{Q}$ (top); monomial and polynomial whiskerings (center); definition of $\per$ (bottom).}\label{tab:producttape}
       }
    \end{center}
\end{table}

\section{Some examples of convex biproduct categories}\label{app:examples}

In this appendix we illustrare several examples of convex biproduct categories:  the Kleisli category for the subdistribution monad,  its continuous version and, finally, a result about finitely partially additive categories.

\subsection{The Kleisli category of the subdistribution monad}

\begin{proposition}\label{prop:Klconvexbiproduct} %\marginpar{Aggiungi dimostrazione}
$\KlD$ is a convex biproduct category
\end{proposition}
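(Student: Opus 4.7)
The plan is to verify each of the four conditions in Definition~\ref{def:convbicat} for the Kleisli category $\KlD$, reusing the monoidal structures already recalled in Section~\ref{sec:pca} and Section~\ref{sec:cbproducts}.

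First, I would show that $\KlD$ is $\Cat{PCA}$-enriched. For every pair of sets $X,Y$, the homset $\KlD[X,Y]=\Sets[X,\Dis(Y)]$ inherits a pca structure pointwise from $\Dis(Y)$: the designated element $\star_{X,Y}$ is the constant function $x\mapsto \star$, and convex sums are defined pointwise as $(f+_pg)(y\mid x)\defeq p\cdot f(y\mid x)+(1-p)\cdot g(y\mid x)$. The axioms \eqref{eq:pca} are checked pointwise. Composition in $\KlD$ is bilinear because the formula $f;g(z\mid x)=\sum_y f(y\mid x)\cdot g(z\mid y)$ is linear in both arguments and collapses to $\star$ whenever either factor does; this gives the three equations in \eqref{eq:enr}.

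Next, I would take $\zero\defeq \emptyset$ and observe that it is both initial and final: $\KlD[\emptyset,X]$ has a unique element (the empty function) and $\KlD[X,\emptyset]=\Sets[X,\Dis(\emptyset)]=\Sets[X,\{\star\}]$ is also a singleton. For the biproduct data, $\oplus$ is the disjoint union of sets as defined in~\eqref{ex:products}; this was already recalled to be the coproduct in $\KlD$, with injections $\iota_1\colon X_1\to X_1\oplus X_2$ defined by $\iota_1(x)\defeq \delta_{(x,0)}$ and symmetrically for $\iota_2$. I would then define the projections as
\[
\pi_1(z\mid u)\defeq\begin{cases}1&\text{if }u=(x,0)\text{ and }z=x\\ 0&\text{otherwise}\end{cases}
\]
and symmetrically for $\pi_2$. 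A direct computation shows $\iota_i;\pi_j=\delta_{i,j}$, since $(\iota_1;\pi_1)(x'\mid x)=\pi_1(x'\mid (x,0))$ is the Dirac at $x$, while $(\iota_1;\pi_2)$ sends every $x$ to $\star$.

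The core task is to verify the convex product universal property. Given $f_i\colon A\to X_i$ in $\KlD$ and $p_1,p_2\in[0,1]$ with $p_1+p_2\le 1$, I would define $h\colon A\to X_1\oplus X_2$ by
\[
h(u\mid a)\defeq\begin{cases}p_1\cdot f_1(x\mid a)&\text{if }u=(x,0)\\ p_2\cdot f_2(y\mid a)&\text{if }u=(y,1)\end{cases}
\]
The constraint $p_1+p_2\le 1$ together with $\sum_x f_1(x\mid a)\le 1$ and $\sum_y f_2(y\mid a)\le 1$ guarantees that each $h(\,\cdot\mid a)$ is a subdistribution on $X_1\oplus X_2$. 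Composing with $\pi_i$ and using that $\pi_i$ is Dirac-valued on the $i$-th summand and zero on the other, one obtains $h;\pi_i=p_i\cdot f_i$ directly. For uniqueness, any candidate $h'$ with $h';\pi_i=p_i\cdot f_i$ must satisfy $h'((x,0)\mid a)=(h';\pi_1)(x\mid a)=p_1\cdot f_1(x\mid a)$ and $h'((y,1)\mid a)=p_2\cdot f_2(y\mid a)$, since $\pi_1$ assigns mass only to elements of the form $(x,0)$; this forces $h'=h$. The same argument extends to the $n$-ary case, matching Lemma~\ref{lemma:naryconvexproduct}.

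I do not anticipate a real obstacle: the only delicate point is the subdistribution bound in the definition of $h$, which is exactly where the hypothesis $p_1+p_2\le 1$ is used; everything else is pointwise arithmetic in $[0,1]$.
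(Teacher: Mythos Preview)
Your proposal is correct and follows essentially the same approach as the paper's proof: both verify the $\Cat{PCA}$-enrichment pointwise, check that $\emptyset$ is a zero object, recall that $\oplus$ is the coproduct inherited from $\Sets$, define the projections $\pi_i$ explicitly, and establish the convex product universal property by constructing the mediating arrow $h$ as the obvious weighted splitting and deriving uniqueness from the fact that post-composition with $\pi_i$ reads off the $i$-th component. The only cosmetic difference is that the paper also remarks that the compatibility condition \eqref{eq:delta} holds, which you also cover, and that your closing remark about the $n$-ary case is superfluous since Lemma~\ref{lemma:naryconvexproduct} derives it from the binary case in any convex biproduct category.
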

\begin{proof}%[Proof of Proposition \ref{prop:Klconvexbiproduct}]
 Coproducts are induced by the coproducts in $\Sets$, and the empty set $\zero$ is both initial and final. Indeed, for every object $X$, the unique arrow $X\to \zero$ sends an element $x\in X$ into the null subdistribution (which is the unique element of $\Dis(\zero)$). 
    
$\KlD$ is $\Cat{PCA}$-enriched since, for every $X,Y\in \KlD$, the hom-set $\KlD[X,Y]$ is a pointed convex algebra with the operations defined as follows: $+_p(f,g)(y|x)\defeq p\cdot f(y|x) + (1-p)\cdot g(y|x)$ for all $f,g\colon X \to Y$ and $p\in (0,1)$, and $\star_{X,Y}(y|x) \defeq 0$ (i.e., it is the arrow which sends every $x$ into the null subdistribution over $Y$). An easy calculation shows that arrow composition is a morphism of pointed convex algebras, i.e., it preserves the operations $+_p$ and $\star_{X,Y}$. For instance $(f+_pg) ;h= (f;h +_p g;h)$ for all $f,g\colon X \to Y$ and $h\colon Y \to Z$ is obtained as follows:
\begin{align}
    ((f+_pg) ;h)(z|x) &= \sum_{y\in Y} (f+_pg)(y|x) \cdot h(z|y) \tag*{} \\
    &= \sum_{y\in Y} (p\cdot f(y|x) + (1-p)\cdot g(y|x)) \cdot h(z|y) \tag*{}\\
    &= p\cdot \sum_{y\in Y} f(y|x) \cdot h(z|y) + (1-p)\cdot \sum_{y\in Y} g(y|x) \cdot h(z|y) \tag*{}\\
    &= p\cdot (f;h)(z|x) + (1-p)\cdot (g;h)(z|x) \tag*{}\\
    &= (f;h +_p g;h)(z|x)\text{.}\tag*{}
    \end{align}
Now observe that $X_1\oplus X_2$ with the projections $\pi_i\colon X_1 \oplus X_2 \to X_i$ for $i=1,2$ given by 
\[ \pi_1(x_1|z)\defeq \begin{cases}
    \delta_{x_1} & z=\iota_1(x_1)\\
    0 & otherwise
\end{cases} \qquad \pi_2(x_2|z)\defeq \begin{cases}
     \delta_{x_2} & z=\iota_2(x_2)\\
        0 & otherwise
\end{cases}\]
 is a convex product of $X_1$ and $X_2$. Indeed, given arrows $f_1\colon A\to X_1$ and $f_2\colon A\to X_2$ and $p_1,p_2 \in (0,1)$ such that $p_1+p_2\le 1$, there exists an arrow $\langle f_1,f_2\rangle_{p_1,p_2}\colon A \to X_1\oplus X_2$ defined as follows:
\[ \langle f_1,f_2\rangle_{p_1,p_2}(z|a) \defeq \begin{cases}
    p_1\cdot f_1(x_1|a) & z=\iota_1(x_1)\\
    p_2\cdot f_2(x_2|a) & z=\iota_2(x_2) \text{.}
\end{cases}\]
Observe that, for $i\in \{1,2\}$,
\begin{align*}
\langle f_1,f_2\rangle_{p_1,p_2} ; \pi_i (x_i | a) = & \sum_{z\in X_1+X_2} \langle f_1,f_2\rangle_{p_1,p_2}(z|a) \cdot \pi_i (x_i | z) \\
= & \langle f_1,f_2\rangle_{p_1,p_2} (\iota_i(x_i)|a)\\
= & p_i\cdot f_i (x_i|a)
\end{align*}
namely, it holds that $(\langle f_1,f_2\rangle_{p_1,p_2} );\pi_i = p_i\cdot f_i$. This is the unique arrow with such property since, for any other arrow $h'\colon A \to X_1\oplus X_2$ such that $h';\pi_i = p_i\cdot f_i$ for all $i=1,2$ it follows that 
\begin{align*}
  h'(\iota_1(x_1)|a) &=    h'(\iota_1(x_1)|a) \cdot 1\\
    & =\sum_{z\in X_1\oplus X_2} h'(z|a) \cdot \pi_1(x_1|z) \tag*{}\\
   &= h' ; \pi_1(x_1|a) \\
    &= p_1\cdot f_1(\iota_1(x_1)|a) \tag*{}
\end{align*}
and similarly for $\pi_2$ one obtains that $h'(\iota_2(x_2)|a) = p_2\cdot f_2(\iota_2(x_2)|a)$, i.e., $h' = \langle f_1,f_2\rangle_{p_1,p_2}$. 
Simple computations confirms that \eqref{eq:delta} and \eqref{eq: assioma aggiuntivo} hold.
\end{proof}

 %   \marginpar{Nuova prova per il Thm 23.}

%\marginpar{mi sono accorto che manca la proposizione che dice che i prodotti n-ary sono preservati. Potremmo aggiungere un corollario dopo la Proposizione 30, mi sembra si dimostri allo stesso identico modo. Ho aggiunto il riferimento a questa prop anche nella prova del teorema 12.}

\subsection{Substochastic Markov kernels on standard Borel spaces}

Consider the category of \textit{standard Borel spaces} and measurable functions denoted with $\mathbf{BorelMeas}$.
Given a {standard Borel space} $(X,\Sigma_X)$, denote with $\mathcal{G}_{\le}(X)$ the set of subprobability measures on $X$, i.e. measurable functions $\mu:(X,\Sigma_X)\to ([0,1],\mathcal{B}([0,1]))$ such that $\mu(X)\le 1$, where $\mathcal{B}([0,1])$ is the Borel $\sigma$-algebra on $[0,1]$.  This assignment extends to a functor $\mathcal{G}_{\le}:\mathbf{BorelMeas}\to \mathbf{BorelMeas}$ which corresponds to the subdistribution version of the \textit{Giry monad}~\cite{giry2006categorical}. The functor $\mathcal{G}_{\le}$ is a symmetric monoidal monad and its Kleisli category, $\Kl({\mathcal{G}_{\le}})$, is the category of standard Borel spaces and substochastic Markov kernels. A substochastic Markov kernel from $(X,\Sigma_X)$ to $(Y,\Sigma_Y)$ is a function $f\colon \Sigma_Y\times X\to [0,1]$ such that for all $x\in X$, $f(-|x)\colon \Sigma_Y\to [0,1]$ is a subprobability measure on $Y$ and for all $U\in \Sigma_Y$, $f(U|-)\colon X\to [0,1]$ is a measurable function.
The identity arrow on $(X,\Sigma_X)$ is the Markov kernel $\delta_X$ such that for all $x\in X$ and $U\in \Sigma_X$, $\delta_X(U|x)=1$ if $x\in U$ and $0$ otherwise. The composition of arrows in $\Kl({\mathcal{G}_{\le}})$ is given by the Chapman-Kolmogorov equation: for all  substochastic Markov kernels $f\colon (X,\Sigma_X)\to (Y,\Sigma_Y)$ and $g\colon (Y,\Sigma_Y)\to (Z,\Sigma_Z)$, $f;g\colon (X,\Sigma_X)\to (Z,\Sigma_Z)$ is defined as the Markov kernel such that for all $x\in X$ and $U\in \Sigma_Z$,
\begin{equation*}
(f;g)(U|x)\coloneqq \int_Y g(U|y)\, f(dy|x)
\end{equation*}
   
\begin{proposition}
    \label{prop: SubStoch is convex biproduct}
    $\Kl({\mathcal{G}_{\le}})$ is a convex biproduct category.
\end{proposition}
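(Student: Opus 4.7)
The plan is to mirror the proof of Proposition~\ref{prop:Klconvexbiproduct} for $\KlD$, replacing sums by integrals and being careful about measurability. All the ingredients are standard facts about the subprobabilistic Giry monad on standard Borel spaces.

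First I would establish the $\Cat{PCA}$-enrichment. For every pair of standard Borel spaces $(X,\Sigma_X), (Y,\Sigma_Y)$, the hom-set $\Kl({\mathcal{G}_{\le}})[X,Y]$ carries a pca with operations defined pointwise: $(f+_pg)(U\mid x) \defeq p\cdot f(U\mid x) + (1-p)\cdot g(U\mid x)$ and $\star_{X,Y}(U\mid x) \defeq 0$. These yield substochastic Markov kernels (measurability and the $\le 1$ bound are preserved), and the axioms \eqref{eq:pca} hold pointwise. Compatibility of Kleisli composition with $+_p$ and $\star$ follows by linearity of the integral, exactly as in the proof for $\KlD$ with sums replaced by integrals; for instance
\begin{equation*}
((f+_pg);h)(U\mid x) = \int_Y h(U\mid y)\,(f+_pg)(dy\mid x) = p\cdot (f;h)(U\mid x) + (1-p)\cdot (g;h)(U\mid x)
\end{equation*}
by linearity and monotone convergence.

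Next, coproducts are inherited from $\Cat{BorelMeas}$: the disjoint union $X_1\oplus X_2$ of standard Borel spaces is again standard Borel, and the empty space $\zero$ is both initial (unique measurable function into any space) and final in $\Kl(\mathcal{G}_{\le})$, since $\mathcal{G}_{\le}(\zero)$ has a unique element, the null measure. The injections $\iota_i$ lift to deterministic kernels, and the copairing of kernels $f_i\colon X_i \to Y$ is given by the obvious case analysis on the provenance of the input.

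Then I would verify the convex-product structure on $X_1 \oplus X_2$. The projections are the substochastic kernels
\begin{equation*}
\pi_i(V\mid z) \defeq \begin{cases} \delta_{x_i}(V) & z = \iota_i(x_i) \\ 0 & \text{otherwise,} \end{cases}
\end{equation*}
which are measurable in $z$ since they decompose as a disjoint-union case analysis on the two measurable summands. Given $f_i\colon A \to X_i$ and $p_1, p_2 \in [0,1]$ with $p_1+p_2 \le 1$, define the mediating kernel $\langle f_1, f_2\rangle_{p_1,p_2}\colon A \to X_1 \oplus X_2$ by
\begin{equation*}
\langle f_1, f_2\rangle_{p_1,p_2}(W\mid a) \defeq p_1\cdot f_1(\iota_1^{-1}(W)\mid a) + p_2\cdot f_2(\iota_2^{-1}(W)\mid a).
\end{equation*}
A routine check shows this is a substochastic Markov kernel (the two summands are measurable in $a$ and the total mass is at most $p_1+p_2\le 1$), and computing $\langle f_1, f_2\rangle_{p_1,p_2}; \pi_i$ by the Chapman--Kolmogorov integral collapses to $p_i \cdot f_i$, exactly as in the discrete case.

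Finally, the uniqueness of the mediator and the verifications of \eqref{eq:delta} and \eqref{eq: assioma aggiuntivo} proceed as in the proof of Proposition~\ref{prop:Klconvexbiproduct}: any kernel $h'\colon A\to X_1 \oplus X_2$ is completely determined by its restrictions to the two measurable summands $\iota_i(X_i)$, and these restrictions are forced to equal $p_i\cdot f_i$ by the equations $h';\pi_i = p_i\cdot f_i$. The main (and essentially only) obstacle compared to the discrete case is ensuring measurability of each kernel we construct; once that is handled --- which is straightforward since all our constructions respect the disjoint-union decomposition of $X_1\oplus X_2$ --- the algebraic content of the argument is identical to that of Proposition~\ref{prop:Klconvexbiproduct}.
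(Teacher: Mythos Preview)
Your proposal is correct and follows essentially the same approach as the paper: both mirror the discrete proof of Proposition~\ref{prop:Klconvexbiproduct}, replacing sums by integrals and checking that the relevant kernels are well-defined substochastic Markov kernels. The paper's uniqueness argument is phrased slightly more explicitly (computing $h'(\iota_i(U)\mid a)$ via an indicator-function integral and noting that $\Sigma_{X_1\oplus X_2}$ is generated by sets of the form $\iota_1(U)$ and $\iota_2(V)$), but this is exactly your ``restrictions to the two measurable summands'' observation; also, you need not verify \eqref{eq: assioma aggiuntivo} separately, as it is derived from the other axioms in the lemma immediately following Definition~\ref{def:convbicat}.
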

\begin{proof}
Coproducts in $\Kl({\mathcal{G}_{\le}})$ are given by the coproducts in $\mathbf{BorelMeas}$, i.e. the disjoint union of standard Borel spaces. The initial and terminal object is the empty standard Borel space. The $\mathbf{PCA}$-enrichment is given by the pointwise convex combination of Markov kernels: for all Markov kernels $f,g\colon (X,\Sigma_X)\to (Y,\Sigma_Y)$ and $p\in [0,1]$, $f +_p g\colon (X,\Sigma_X)\to (Y,\Sigma_Y)$ is defined  for all $x\in X$ and $U\in \Sigma_Y$ as  $f+_p g(U|x)\coloneqq p\cdot f(U|x)+(1-p)\cdot g(U|x)$. The zero arrow $\star_{(X,\Sigma_X),(Y,\Sigma_Y)}\colon (X,\Sigma_X)\to (Y,\Sigma_Y)$ is the Markov kernel such that for all $x\in X$ and $U\in \Sigma_Y$, $\star_{(X,\Sigma_X),(Y,\Sigma_Y)}(U|x)=0$. The axioms of $\mathbf{PCA}$ are satisfied since they are so in $[0,1]$. A simple computation shows that $+_p$ and $\star$ are compatible with composition. For instance, as done in Proposition \ref{prop:Klconvexbiproduct}, we obtain that $(f+_p g);h=f;h +_p g;h$ for all Markov kernels $f,g\colon (X,\Sigma_X)\to (Y,\Sigma_Y)$, $h\colon (Y,\Sigma_Y)\to (Z,\Sigma_Z)$ and $p\in [0,1]$ in the following way
\begin{align}
\big((f+_{p}g);h\big)(C\mid x)
&= \int_{Y} h(C\mid y)\,(f+_{p}g)(dy\mid x)
\tag*{} \\
&= \int_{Y} h(C\mid y)\,\big(p\,f(dy\mid x)+(1-p)\,g(dy\mid x)\big)
\tag*{} \\
&= p\int_{Y} h(C\mid y)\,f(dy\mid x) \;+\; (1-p)\int_{Y} h(C\mid y)\,g(dy\mid x)
\tag{linearity of the integral}\\
&= p\,(f;h)(C\mid x) \;+\; (1-p)\,(g;h)(C\mid x)
\tag*{} \\
&= \big((f;h)+_{p}(g;h)\big)(C\mid x).
\tag*{}
\end{align}

Now we prove that $(X_1,\Sigma_{X_1})\overset{\pi_1}{\leftarrow}(X_1,\Sigma_{X_1})\oplus (X_2,\Sigma_{X_2})\overset{\pi_2}{\rightarrow} (X_2,\Sigma_{X_2})$ is a binary convex product in $\Kl({\mathcal{G}_{\le}})$. The Markov kernels $\pi_1$ and $\pi_2$ are defined as follows: for all $z\in X_1\oplus X_2$ and $U\in \Sigma_{X_1}$, $\pi_1(U|z)=1$ if $z\in \iota_1(U)$ and $0$ otherwise; for all $z\in X\oplus Y$ and $V\in \Sigma_{X_2}$, $\pi_2(V|z)=1$ if $z\in \iota_2(V)$ and $0$ otherwise. Given $f\colon (A,\Sigma_A)\to (X_1,\Sigma_{X_1})$, $g\colon (A,\Sigma_A)\to (X_2,\Sigma_{X_2})$ and $p_1,p_2\in [0,1]$ such that $p_1+p_2\leq 1$, we prove that there is a unique Markov kernel $h\colon (A,\Sigma_A)\to (X_1,\Sigma_{X_1})\oplus (X_2,\Sigma_{X_2})$ such that $h;\pi_1 = p_1\cdot f$ and $h;\pi_2 = p_2\cdot g$. Existence is provided by the Markov kernel $h$ defined as follows: for all $a\in A$, $U\in \Sigma_{X_1}$ and $V\in \Sigma_{X_2}$,
\[ h(\iota_1(U)|a)\coloneqq p_1\cdot f(U|a) \quad h(\iota_2(V)|a)\coloneqq p_2\cdot g(V|a) \]
where for a generic element of $\Sigma_{X_1\oplus X_2}$ $h$ is extended by countable additivity. Uniqueness is proved as follows: suppose that there is also $h'$ with the same property. Then, for all $a\in A$, $U\in \Sigma_{X_1}$ and $V\in \Sigma_{X_2}$, then
\begin{align}
    h'(\iota_1(U)\mid a)
    &= \int_{X_1 \oplus X_2} \mathbf{1}_{\iota_1(U)}(z)\, h'(dz\mid a)
    \tag{def. of integration w.r.t.\ $h'(\cdot\mid a)$}\\
    &= \int_{X_1 \oplus X_2} \pi_1(U\mid z)\, h'(dz\mid a)
    \tag{def. of indicator function}\\
    &= (h';\pi_1)(U\mid a)
    \tag{kernel composition}\\
    &= (p_1\cdot k_1)(U\mid a)
    \tag{hp. $h';\pi_1 = p_1k_1$}.
\end{align}

similarly $h'(\iota_2(V)|a)=(p_2\cdot g)(V|a)$. Since every element of $\Sigma_{X_1\oplus X_2}$ is obtained by countable unions of elements of the form $\iota_1(U)$ and $\iota_2(V)$, $h'$ is completely determined by the above equalities, hence $h'=h$.
\end{proof}

\section{Appendix to Section \ref{sec:pca}}\label{app:sec:pca}

\begin{proof}[Proof of Theorem~\ref{thm:freeenriched}]%\marginpar{To be moved in the appendix}\marginpar{Questo risultato di sicuro esiste in letteratura}
The result follows from more general results about enriching on arbitrary algebraic theories: see for instance \cite[Prop. 6.4.7]{borceux2} or \cite[Cor. 1]{villoria2024enriching}. It is anyway convenient to show the involved structures.

We first illustrate the definition of $F^+\colon \Cat{C}^+ \to \Cat{D}^+$ for a functor $F\colon \Cat{C} \to \Cat{D}$. For all objects $X$, $F^+$ is defined as $F^+(X)\defeq F(X)$; For all $d\in \Cat{C}^+[X,Y]$, $F^+(d)\in \Cat{D}^+[FX, FY]$ is defined for all $h\colon FX \to FY$ as $F^+(d)(g)\defeq \sum_{\{f\in \Cat{C}[X,Y]| F(f)=g\}}d(f)$. 
One can easily check that the functor is PCA-enriched.

For all categories $\Cat{C}$, the unit $\eta_{\Cat{C}}\colon \Cat{C} \to \Cat{C}^+$ is the functor acting as identity on objects and mapping any arrow $f\in \Cat{C}[X,Y]$ into $\delta_f \in \Cat{C^+}[X,Y]$.

Now, given a PCA-enriched category $\Cat{D}$ and a functor $F\colon \Cat{C} \to U(\Cat{D})$, one can define a functor $F^\sharp \colon \Cat{C}^+ \to \Cat{D}$ as follows: for all objects $X$, $F^\sharp(X)\defeq F(X)$, for all arrow $d\in \Cat{C}^+[X,Y]$, $F^\sharp(d)\defeq \sum_{f\in \Cat{C}[X,Y]}d(f)\cdot \delta_{F(f)}$.   
\end{proof}

\section{Appendix to Section \ref{sec:cbproducts}}\label{app:sec:cbproducts}

\begin{proof}[Proof of Lemma~\ref{lemma:naryconvexproduct}]
By induction on $n$, we show that $\bigoplus_{i=1}^n X_i$ is a convex product of $X_1, \dots, X_n$. For $n=0$, observe that the object $\zero$ is final by Definition~\ref{def:convbicat}.
For $n+1$ assume by induction hypothesis that $\bigoplus_{i=1}^n X_i$ is a convex product of $X_1, \dots, X_n$ with projection $\pi_i\colon \bigoplus_{i=1}^n X_i \to X_i$. Since $\Cat{C}$ has binary convex product we can take the convex product $(\bigoplus_{i=1}^n X_i) \oplus X_{n+1}$ with projections $\pi'_1\colon (\bigoplus_{i=1}^n X_i) \oplus X_{n+1} \to \bigoplus_{i=1}^n X_i$ and $\pi'_2\colon (\bigoplus_{i=1}^n X_i) \oplus X_{n+1} \to X_{n+1}$. We claim that $(\bigoplus_{i=1}^n X_i) \oplus X_{n+1}$ with projections $\pi''_i\colon (\bigoplus_{i=1}^n X_i) \oplus X_{n+1} \to X_i$ defined for all $i\in 1\dots n+1$ as 
\[\pi''_i = \begin{cases} \pi_1'; \pi_i & i\in 1,\dots n \\ \pi_2' & i=n+1 \end{cases}\]
is a convex product of $X_1, \dots, X_{n+1}$. 
To check the universal property, let $\vec{p}=p_1,\dots, p_{n+1}$ such that $\sum_{i=1}^{n+1}p_i\leq 1$ and $f_i\colon A \to X_i$. Take $\vec{q}=q_1,\dots, q_n$ with $q_i=\frac{p_i}{1-p_{n+1}}$, since 
$\bigoplus_{i=1}^n X_i$ is a $n$-ary convex product by induction hypothesis, there exists an arrow $\langle f_1, \dots ,f_n\rangle_{\vec{q}} \colon A \to \bigoplus_{i=1}^n X_i$ such that $p_i\cdot f_i = \langle f_1, \dots ,f_n\rangle_{\vec{q}} ; \pi_i$. For the $n+1$-ary, we construct the mediating morphism $h\colon A \to (\bigoplus_{i=1}^n X_i) \oplus X_{n+1}$ as $\langle \, \langle f_1, \dots ,f_n\rangle_{\vec{q}} \,,\, f_{n+1} \, \rangle_{1-p_{n+1},p_{n+1}}$.
 \end{proof}

\begin{lemma}\label{lemma:initialproperties}
In a convex biproduct category the following hold:
\begin{enumerate}
\item $\star_{X,Y} = \bang{X};\cobang{Y}$\label{lemma:starxy}
%Ho tolto iota1 e iota2 che vale in ogni categoria con coprodotti e non si richiamano mai.
%\item  $\iota_1=\Irunit{X_1}; (\id{X_1}\oplus \cobang{X_2})$ for $\iota_1\colon X_1 \to  X_1\oplus X_2 $;\label{lemma:iotai1}
%\item $\iota_2 = \Ilunit{X_2} ;(\cobang{X_1}\oplus \id{X_2})$  for $\iota_2\colon  X_2 \to X_1\oplus X_2 $;\label{lemma:iota2}
\item  $\pi_1=(\id{X_1}\oplus \bang{X_2});\runit{X_1}$ for $\pi_1\colon X_1\oplus X_2 \to X_1$;\label{lemma:pi1}
\item $\pi_2 =(\bang{X_1}\oplus \id{X_2});\lunit{X_2}$  for $\pi_2\colon X_1\oplus X_2 \to X_2$;\label{lemma:pi2}
\item $p \cdot f =\, \diagp{X}; (f \oplus \bang{X} ) ; \runit{Y} $ for all $f\colon X \to Y$;\label{lemma:pf}
\end{enumerate}
Moreover, in a PCA-enriched category the following hold:
\begin{enumerate}[start=5]
\item $(1-p)\cdot  f= \diagp{X} ;  \, (\bang{X}  \oplus f \,)  ; \lunit{y}$\  for all $f\colon X \to Y$; \label{lemma:1-pf}
\item $(p\cdot f) ; g =p\cdot (f;g)$ and $f;(p\cdot g) = p\cdot ( f;g)$ ; \label{lemma:p;}
\item $p\cdot (q \cdot f) =pq \cdot f$; \label{lemma:pq}
%\item $p \cdot \diagq{X} = \langle \id{X},\id{X}\rangle_{pq, p(1-q)} $.\label{lemma:pdiag}
%\item $\diagp{X}; (\id{X} \oplus q\cdot \id{X})=\langle \id{X},\id{X}\rangle_{p,(1-p)q}$ \label{lemma:diagp}
%tolto i due commentati sotto perché non si usano...
%\item $(\sum_{i=1}^{n}p_i\cdot f_i);h=\sum_{i=1}^{n}p_i\cdot (f_i;h) $ and $k;\sum_{i=1}^{n}p_i\cdot f_i= \sum_{i=1}^{n}p_i\cdot (h;f_i)$; \label{lemma sum ; h}
\item $q\cdot\sum_{i=1}^{n}p_i\cdot f_i= \sum_{i=1}^{n} (qp_i)\cdot f_i$;\label{lemma q per somma}
%\item $\sum_{i=1}^{n}{p_i\cdot(f_i+_p g_i)}= (\sum_{i=1}^{n}p_i\cdot f_i)+_p (\sum_{i=1}^{m}p_i\cdot g_i)$ \label{lemma:somma di +_p}
\end{enumerate}
\end{lemma}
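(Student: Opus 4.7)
The plan is to tackle the eight items in three groups: the ``zero morphism'' identity (1), the projection formulas (2)--(3), the ``scalar = diagonal-bang'' formulas (4)--(5), and the purely PCA-enriched arithmetic (6)--(8). Throughout I will freely use the fact that $\zero$ is a zero object, the compatibility \eqref{eq: assioma aggiuntivo} between the enrichment and the (co-)projections, and the definition $p\cdot f \defeq f +_p \star_{X,Y}$.

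For (1) I would simply observe that any morphism $X\to Y$ factoring through $\zero$ must equal $\bang{X};\cobang{Y}$ by the uniqueness of arrows into and out of the zero object; the zero morphism $\star_{X,Y}=\bang{X};\star_{\zero,\zero};\cobang{Y}$ is one such factorization, and $\star_{\zero,\zero}=\id{\zero}$ since $\Cat{C}[\zero,\zero]$ is a singleton. For (2) I would verify the defining equations $\iota_i;\pi_j=\delta_{i,j}$ for $\pi_1 \defeq (\id{X_1}\oplus\bang{X_2});\runit{X_1}$: computing $\iota_1;\pi_1$ with $\iota_1=\runit{X_1}^{-1};(\id{X_1}\oplus\cobang{X_2})$ reduces via functoriality of $\oplus$ to $\runit{X_1}^{-1};(\id{X_1}\oplus(\cobang{X_2};\bang{X_2}));\runit{X_1}=\id{X_1}$, because $\cobang{X_2};\bang{X_2}=\id{\zero}$. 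Computing $\iota_2;\pi_1$ and factoring $\cobang{X_1}\oplus\bang{X_2}$ as $(\id{\zero}\oplus\bang{X_2});(\cobang{X_1}\oplus\id{\zero})$, then applying naturality of $\lunit{}$ and $\runit{}$, collapses the whole composite to $\bang{X_2};\cobang{X_1}=\star_{X_2,X_1}$ by~(1). Case (3) is symmetric.

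For (4) I would first factor $(f\oplus\bang{X})=(\id{X}\oplus\bang{X});(f\oplus\id{\zero})$, then use the naturality of $\runit{}$ to rewrite $(f\oplus\id{\zero});\runit{Y}=\runit{X};f$, which together with~(2) yields
\begin{equation*}
\diagp{X};(f\oplus\bang{X});\runit{Y}=\diagp{X};\pi_1;f\text{.}
\end{equation*}
Since $\diagp{X}=\langle \id{X},\id{X}\rangle_{p,1-p}$, its defining property gives $\diagp{X};\pi_1=p\cdot\id{X}=\id{X}+_p\star_{X,X}$. Composing with $f$ on the right and using the right distributivity~\eqref{eq:enr} of $;$ over $+_p$ together with $\star;f=\star$ (again~\eqref{eq:enr}), the result is $f+_p\star_{X,Y}=p\cdot f$. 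Item~(5) is dealt with identically using $\pi_2$.

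Items (6)--(8) are pure PCA-enrichment calculations. For (6), $(p\cdot f);g=(f+_p\star);g\axeq{\eqref{eq:enr}}(f;g)+_p(\star;g)=(f;g)+_p\star=p\cdot(f;g)$, and symmetrically for left-composition. For (7), writing $p\cdot(q\cdot f)=(f+_q\star)+_p\star$ and applying PCA-associativity \eqref{eq:pca} with $\tilde p=pq$ and $\tilde q=\frac{p(1-q)}{1-pq}$ gives $f+_{pq}(\star+_{\tilde q}\star)=f+_{pq}\star=pq\cdot f$, using $\star+_r\star=\star$ from the idempotency axiom. For (8) I would induct on $n$; the base cases $n=0,1$ follow from the definition and from~(7). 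For the step, unwinding the definition \eqref{eq:def somma n pca} and applying associativity \eqref{eq:pca} to $(f_1+_{p_1}b)+_q\star$ (with $b=\sum_{j=1}^n q_j\cdot f_{j+1}$) produces $f_1+_{qp_1}(b+_{\tilde{p_1}}\star)$ with $\tilde{p_1}=\frac{q(1-p_1)}{1-qp_1}$; the induction hypothesis rewrites $\tilde{p_1}\cdot b$ as $\sum_{j=1}^n(\tilde{p_1}q_j)\cdot f_{j+1}$, and the identity $\tilde{p_1}q_j=\frac{qp_{j+1}}{1-qp_1}$ matches the scalars prescribed by the recursive definition of $\sum_{i=1}^{n+1}(qp_i)\cdot f_i$.

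The only real subtleties I anticipate are the careful tracking of unitor naturality in~(2) (so that the ``off-diagonal'' composite collapses to $\star$ only through the initial--final property of $\zero$) and the edge cases in the arithmetic of PCA-associativity in~(7)--(8), where $pq=1$ or $p_1=1$ must be treated separately before invoking the division formulas; these are routine but demand care to avoid dividing by zero.
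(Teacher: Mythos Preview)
Your proposal is correct and, for items (1)--(3) and (6)--(8), essentially matches the paper's proof. One small expository point on (2): when you ``verify the defining equations $\iota_i;\pi_j=\delta_{i,j}$'', you should make explicit (as the paper does) that it is the \emph{coproduct} universal property that forces the candidate map to equal the given $\pi_1$---the equations \eqref{eq:delta} alone do not characterise $\pi_1$, but they do determine it once you know that any arrow out of $X_1\oplus X_2$ is fixed by its precompositions with $\iota_1,\iota_2$.

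The one genuine difference is in (4)--(5). The paper works ``forward'': it starts from $p\cdot f=f+_p\star$, invokes the compatibility lemma \eqref{eq: assioma aggiuntivo} to write this as $\diagp{X};(f\oplus\star);\codiag{Y}$, substitutes $\star=\bang{};\cobang{}$ from (1), and then uses the monoid unit axiom (\ref{ax:Mon2}) to collapse $(\id{}\oplus\cobang{});\codiag{}$ to $\runit{}$. Your route instead works ``backward'': you factor $(f\oplus\bang{X});\runit{Y}$ through $\pi_1$ via (2), then use the defining property of the convex pairing $\diagp{X}=\langle\id{X},\id{X}\rangle_{p,1-p}$ to get $\diagp{X};\pi_1=p\cdot\id{X}$, and finish with the enrichment axioms. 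Both are valid; your version is arguably more direct in that it bypasses \eqref{eq: assioma aggiuntivo} (which is itself proved from the convex-product property), whereas the paper's version has the virtue of not re-using item (2). For (8), your explicit tracking of $\tilde p_1=\frac{q(1-p_1)}{1-qp_1}$ and the identity $\tilde p_1 q_j=\frac{qp_{j+1}}{1-qp_1}$ makes transparent a step that the paper leaves as ``PCA laws + Ind.\ Hp.''.
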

\begin{proof}
We prove below item by item.
\begin{enumerate}
\item Follow easily by enrichment and finality of $\zero$: 
\begin{align*}
\star_{X,Y} & = \star_{X,\zero} ; \cobang{Y} \tag{\ref{eq:enr}}\\
&= \bang{X};\cobang{Y} \tag{$\zero$ is final}
\end{align*}
%\item We prove that $ \id{X_1}\oplus \cobang{X_2}=\runit{X_1}; \iota_1$ by universal property of $X_1\oplus \zero$ using the injections $\iota_1^{X_1\oplus\zero} \colon X_1 \to X_1\oplus \zero$ and 
%$\iota_2^{X_1\oplus\zero} \colon \zero \to X_1\oplus \zero$. First observe that, by initiality of $\zero$,
% \[\iota_2^{X_1\oplus \zero}; (\id{X_1}\oplus \cobang{X_2}) = \iota_2^{X_1\oplus \zero}; \runit{X_1}; \iota_1 \text{.} \]
%Moreover
%\begin{align*}
%\iota_1^{X_1\oplus \zero} ; (\id{X_1}\oplus \cobang{X_2}) &= \iota_1^{X_1\oplus \zero} ; [\id{X_1}; \iota_{1},\cobang{X_2};\iota_2 ] \tag{def. of $\oplus$}\\
%&=\id{X_1};\iota_1 \tag{coproduct}\\
%&=\iota_1^{X_1\piu\zero}; \runit{X_1} ; \iota_1 \tag{def. of $\runit{}$} 
%\end{align*}
%Thus $ \id{X_1}\oplus \cobang{X_2}=\runit{X_1}; \iota_1$.
%\item Analogous to the point above.
\item Observe that, for $\iota_i\colon X_i \to X_1\oplus X_2$, 
\[\iota_1 ; (\id{X_1}\oplus \bang{X_2});\runit{X_1}=\id{X_1} \qquad \iota_2 ;  (\id{X_1}\oplus \bang{X_2});\runit{X_1} = \bang{X_2}; \cobang{X_1}= \star_{X_2,X_1}\]
namely $(\id{X_1}\oplus \bang{X_2});\runit{X_1} = [id_{X_1}, \star_{X_2,X_1}]$. Now, by \eqref{eq:delta}, one has that $[id_{X_1}, \star_{X_2,X_1}]=\pi_1$.
\item Analogous to the point above.
\item By the following derivation.
\begin{align*}
p \cdot f &= f+_p\star_{X,Y} \tag{def. of $p\cdot(-)$}\\
&=\, \diagp{X} ; (f \oplus \star_{X,Y}) ; \codiag{Y} \tag{\ref{eq: assioma aggiuntivo}} \\ 
&=\, \diagp{X} ; (f \oplus (\bang{X};\cobang{Y})) ; \codiag{Y} \tag{Lemma \ref{lemma:initialproperties}.1} \\ 
&=\, \diagp{X} ; (f \oplus \bang{X}) ;(\id{Y} \oplus \cobang{Y}) ; \codiag{Y} \tag{Symmetric Monoidal Category} \\ 
&=\, \diagp{X} ; (f \oplus \bang{X}) ;\runit{Y} \tag{\ref{ax:Mon2}} \\ 
 \end{align*}
 \item By the following derivation
\begin{align*}
(1-p) \cdot f &= f+_{1-p}\star_{X,Y} \tag{def. of $p\cdot(-)$}\\
&= \star_{X,y} +_{p} f \tag{\ref{eq:pca}}\\
&=\, \diagp{X} ; (\star_{X,Y} \oplus f) ; \codiag{Y} \tag{\ref{eq: assioma aggiuntivo}} \\ 
&=\, \diagp{X} ; ((\bang{X};\cobang{Y}) \oplus f) ; \codiag{Y} \tag{Lemma \ref{lemma:initialproperties}.1} \\ 
&=\, \diagp{X} ; (\bang{X} \oplus f) ;(\id{Y} \oplus \cobang{Y}) ; \codiag{Y} \tag{Symmetric Monoidal Category} \\ 
&=\, \diagp{X} ; ( \bang{X} \oplus f) ;\lunit{Y} \tag{\ref{ax:Mon2}} \\ 
 \end{align*} 
 \item  Follow easily by the enrichment. For instance, the first equality is proved as follows.
\begin{align*} 
(p\cdot f) ; g & = (f+_p\star_{X,Y});g \tag{def of $p\cdot(-)$} \\
&= (f;g) +_p(\star_{X,Y};g) \tag{\ref{eq:enr}}\\
&= (f;g) +_p(\star_{X,Z}) \tag{\ref{eq:enr}}\\
&= p\cdot (f ; g) \tag{def of $p\cdot(-)$} \\
\end{align*}
\item Follow easily by the laws of PCAs.
\begin{align*}
p\cdot (q \cdot f) &= (f+_q \star )+_p \star \tag{def. of $p\cdot(-)$}\\
&= f+_{pq} (\star +_{\frac{p\cdot(1-q)}{1-pq}} \star) \tag{\ref{eq:pca}} \\
&= f+_{pq} \star \tag{\ref{eq:pca}} \\
&= pq \cdot f \tag{def. of $p\cdot(-)$}
\end{align*}
%\item Follows from Lemma~\ref{lemma:initialproperties}.\ref{lemma:p;} and induction. The case $n=0$ follows by enrichment. For $n+1$ consider
%\begin{align}
%   ( \sum_{i=1}^{n+1}p_i\cdot f_i);h&= ( f_{n+1}+_{p_{n+1}}\sum_{i=1}^{n}p_i\cdot f_i);h \tag{Def. of $\sum_{i=1}^{n+1}{p_i}\cdot f_i$ in (\ref{eq:def somma n pca})}\\
%   &= f_{n+1};h +_{p_{n+1}} (\sum_{i=1}^{n}\frac{p_i}{1-p_{n+1}}\cdot f_i);h \tag{PCA-enrichment}\\
%   &= f_{n+1};h +_{p_{n+1}} \sum_{i=1}^{n}\frac{p_i}{1-p_{n+1}}\cdot (f_i;h) \tag{Lemma~\ref{lemma:initialproperties}.\ref{lemma:p;} + Ind. Hp.}\\
%   &= \sum_{i=1}^{n+1}p_i\cdot (f_i;h). \tag{Def. of $\sum_{i=1}^{n+1}p_i\cdot f_i$ in (\ref{eq:def somma n pca})}
%\end{align}
%The other case is symmetric.
\item Follows by induction. The case $n=0$ follows by laws of PCA: $q\cdot \star= \star +_q \star= \star$. For $n+1$ consider 
\begin{align}
    q\cdot\sum_{i=1}^{n+1}p_i\cdot f_i &= q\cdot ( f_{1} +_{p_{1}}\sum_{j=1}^{n}q_j\cdot f_j) \tag{Def. of $\sum_{i=1}^{n+1}{p_i}\cdot f_i$ in (\ref{eq:def somma n pca})}\\
    &= q\cdot f_{1} +_{p_{1}} \sum_{j=1}^{n}qq_j\cdot f_j \tag{PCA laws + Ind. Hp.}\\
    &= \sum_{i=1}^{n+1}qp_i\cdot f_i \tag{Def. of $\sum_{i=1}^{n+1}{p_i}\cdot f_i$ in (\ref{eq:def somma n pca})}
\end{align}
%\item Follows by induction. The case $n=0$ follows by enrichment. For $n+1$ consider 
%\begin{align}
%    \sum_{i=1}^{n+1}p_i\cdot (f_i+_p g_i)&= (f_{n+1}+_p g_{n+1}) +_{p_{n+1}}\sum_{i=1}^{n}p_i\cdot (f_i+_p g_i) \tag{Def. of $\sum_{i=1}^{n+1}{p_i}\cdot f_i$ in (\ref{eq:def somma n pca})}\\
 %   &=(f_{n+1}+_p g_{n+1}) +_{p_{n+1}} ((\sum_{i=1}^{n}p_i\cdot f_i)+_p ( \sum_{i=1}^{n}p_i\cdot g_i)) \tag{Ind. Hp.}\\
 %   &= (f_{n+1}+_{p_{n+1}} (\sum_{i=1}^{n}p_i\cdot f_i)) +_{p} (g_{n+1}+_{p_{n+1}} ( \sum_{i=1}^{n}p_i\cdot g_i)) \tag{axioms (\ref{eq:pca})}\\
%    &= (\sum_{i=1}^{n+1}p_i\cdot f_i)+_p ( \sum_{i=1}^{n+1}p_i\cdot g_i)\tag{Def. of $\sum_{i=1}^{n+1}{p_i}\cdot f_i$ in (\ref{eq:def somma n pca})}
%\end{align}
\end{enumerate}
\end{proof}

\begin{lemma}\label{lemma:naturality}
In a convex biproduct category, $\bang{}$ and $\diagp{}$ form natural transformations, namely 
$f;\bang{Y}=\bang{X}$  and $f;\diagp{Y} =\, \diagp{X};(f\oplus f)$ for all $f\colon X\to Y$.
\end{lemma}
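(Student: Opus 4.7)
The plan is to prove each naturality square separately, using only the universal properties of $\zero$ (as a final object) and of binary convex products, together with the computational facts collected in Lemma~\ref{lemma:initialproperties}.

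For $\bang{}$, the argument is immediate: both $f;\bang{Y}$ and $\bang{X}$ are arrows $X \to \zero$, and since $\zero$ is final by Definition~\ref{def:convbicat}, they must coincide.

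For $\diagp{}$, I would rely on the fact that, by definition, $\diagp{Y} = \langle \id{Y}, \id{Y}\rangle_{p,1-p}$ is the unique arrow $Y \to Y\oplus Y$ whose compositions with $\pi_1$ and $\pi_2$ are $p\cdot \id{Y}$ and $(1-p)\cdot \id{Y}$ respectively. I would show that both $f;\diagp{Y}$ and $\diagp{X};(f\oplus f)$ satisfy the analogous equations with respect to $\pi_1,\pi_2\colon Y\oplus Y\to Y$, namely composing with $\pi_1$ yields $p\cdot f$ and composing with $\pi_2$ yields $(1-p)\cdot f$. For $f;\diagp{Y}$ this is a one-line computation combining the defining property of $\diagp{Y}$ with Lemma~\ref{lemma:initialproperties}.\ref{lemma:p;}, which allows us to pull $f$ through a scalar. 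The universal property of the binary convex product then forces the two arrows to be equal (the constraint $p + (1-p) = 1 \leq 1$ being satisfied).

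The main obstacle is the verification that $(f\oplus f);\pi_i = \pi_i;f$ for $i=1,2$, which is needed to rewrite $(\diagp{X};(f\oplus f));\pi_i$ as $\diagp{X};\pi_i;f$. I would prove this as a small auxiliary observation: unfolding $f\oplus f = [f;\iota_1, f;\iota_2]$ from the coproduct structure, post-composing with $\pi_1$ gives, after applying axiom~\eqref{eq:delta}, the copairing $[f, f;\star_{X,Y}]$; similarly $\pi_1;f = [\iota_1;\pi_1;f, \iota_2;\pi_1;f] = [f, \star_{X,Y};f]$. The $\Cat{PCA}$-enrichment axioms \eqref{eq:enr} (namely $f;\star = \star = \star;f$) make both copairings equal, so by the universal property of $X\oplus X$ as a coproduct the two arrows coincide. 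The case $i=2$ is symmetric. Combining this with the computation in the previous paragraph closes the proof.
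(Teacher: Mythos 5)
Your proof is correct, and at the top level it follows the same skeleton as the paper: finality of $\zero$ disposes of $\bang{}$, and for $\diagp{}$ you invoke the uniqueness clause of the binary convex product applied to the composites with $\pi_1,\pi_2$, with the $f;\diagp{Y}$ side computed exactly as in the paper (defining property of $\langle\id{Y},\id{Y}\rangle_{p,1-p}$ plus Lemma~\ref{lemma:initialproperties}.\ref{lemma:p;}). Where you genuinely diverge is in evaluating $\diagp{X};(f\oplus f);\pi_i$: the paper never proves $(f\oplus f);\pi_i=\pi_i;f$; instead it unfolds $\pi_1$ as $(\id{}\oplus\bang{});\runit{}$ via Lemma~\ref{lemma:initialproperties}.\ref{lemma:pi1}, slides $f$ through the monoidal interchange, invokes the just-established naturality of $\bang{}$, and recognizes the closed form of Lemma~\ref{lemma:initialproperties}.\ref{lemma:pf} (dually \ref{lemma:1-pf} for $\pi_2$). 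Your route --- deriving naturality of the projections from the coproduct decomposition $f\oplus f=[f;\iota_1,\,f;\iota_2]$, axiom~\eqref{eq:delta}, and the enrichment laws $f;\star=\star=\star;f$ from~\eqref{eq:enr} --- is equally valid and arguably cleaner: it isolates a reusable fact and sidesteps the unitor bookkeeping, at the price of invoking the coproduct universal property twice. Two cosmetic slips worth fixing: the copairings should read $[f,\ f;\star_{Y,Y}]=[f,\ \star_{X,Y}]$ and $[\,\id{X};f,\ \star_{X,X};f\,]=[f,\ \star_{X,Y}]$; your subscripts $f;\star_{X,Y}$ and $\star_{X,Y};f$ do not typecheck, though the enrichment axioms you cite are the right ones and the argument is unaffected.
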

\begin{proof}
The equation $f;\bang{Y}=\bang{X}$ for $f\colon X \to Y$ follows from the fact that $\zero$ is final. The equation
\[f;\diagp{Y} =\, \diagp{X};(f\oplus f)\]
follows from the universal property of convex products. Indeed, observe that
\begin{align*}
    \diagp{X};(f\oplus f);\pi_1 &=\, \diagp{X};(f\oplus f);(\id{Y}\piu \bang{Y});\runit{X} \tag{Lemma \ref{lemma:initialproperties}.\ref{lemma:pi1}}\\
    &=\, \diagp{X};(f\oplus (f;\bang{Y}));\runit{X}  \tag{Symmetric Monoidal Category}\\
    &=\, \diagp{X};(f\oplus \bang{X});\runit{X} \tag{Naturality of $\bang{X}$}\\
    &= p\cdot f \tag{Lemma \ref{lemma:initialproperties}.\ref{lemma:pf}}
\end{align*}
\begin{align*}
    \diagp{X};(f\oplus f);\pi_2 &=\, \diagp{X};(f\oplus f);(\bang{Y} \piu \id{Y} );\lunit{X} \tag{Lemma \ref{lemma:initialproperties}.\ref{lemma:pi2}}\\
    &=\, \diagp{X};( (f;\bang{Y})\oplus f);\lunit{X}  \tag{Symmetric Monoidal Category}\\
    &=\, \diagp{X};( \bang{X} \oplus f);\lunit{X} \tag{Naturality of $\bang{X}$}\\
    &= (1-p)\cdot f \tag{Lemma \ref{lemma:initialproperties}.\ref{lemma:1-pf}}
\end{align*}
and that
\begin{align*}
f;\diagp{Y};\pi_1 &= f ; \langle \id{X}, \id{X} \rangle_{p,(1-p)};\pi_1 \tag{def. of $\diagp{}$}\\
&= f; (p\cdot \id{X}) \tag{convex product} \\
&= p\cdot(f;\id{X}) \tag{Lemma \ref{lemma:initialproperties}.\ref{lemma:p;}}\\
&=p\cdot f \tag{Category}
\end{align*}
\begin{align*}
f;\diagp{Y};\pi_2 &= f ; \langle \id{X}, \id{X} \rangle_{p,(1-p)};\pi_2 \tag{def. of $\diagp{}$}\\
&= f; (\,(1-p)\cdot \id{X}\,) \tag{convex product} \\
&= (1-p)\cdot(f;\id{X}) \tag{Lemma \ref{lemma:initialproperties}.\ref{lemma:p;}}\\
&=(1-p)\cdot f \tag{Category}
\end{align*}
\end{proof}

\begin{lemma}\label{lemma:coherence}
In a convex biproduct category, the coherence axioms in Figure~\ref{fig:copcacoherence} hold.
\end{lemma}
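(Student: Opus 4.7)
}

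The plan is to verify the four coherence axioms in Figure~\ref{fig:copcacoherence} (the two naturality axioms are already dispatched by Lemma~\ref{lemma:naturality}). Three of the four are essentially immediate from universality. For (Coh3), both $\bangp{\zero}$ and $\id{\zero}$ are arrows $\zero\to\zero$, and since $\zero$ is final by Definition~\ref{def:convbicat}, any two such arrows coincide. The same finality argument handles (Coh2): both $\bangp{X\piu Y}$ and $(\bangp{X}\piu\bangp{Y});\lunit{\zero}$ are arrows $X\piu Y\to\zero$. Dually, for (Coh4), both $\diagp{\zero}$ and $\Ilunit{\zero}$ are arrows $\zero\to\zero\piu\zero$, and since $\zero$ is initial, they agree.

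The real work is (Coh1). The goal is to show that two parallel maps $X\piu Y\to (X\piu Y)\piu(X\piu Y)$ agree. I will exploit the fact that, by definition, $\diagp{X\piu Y}=\langle \id{X\piu Y},\id{X\piu Y}\rangle_{p,1-p}$, so by the universal property of binary convex products it suffices to check that the composite on the right hand side of the hexagon, call it $\Phi$, satisfies $\Phi;\pi_1 = p\cdot\id{X\piu Y}$ and $\Phi;\pi_2 = (1-p)\cdot\id{X\piu Y}$. Using the explicit formulae for the projections given in Lemma~\ref{lemma:initialproperties}.\ref{lemma:pi1}--\ref{lemma:pi2}, composing $\Phi$ with $\pi_1=(\id{X\piu Y}\piu\bangp{X\piu Y});\runit{}$ lets me push $\bangp{X\piu Y}$ through the associators and symmetries via $\bangp{P\piu Q}=\bangp{P}\piu\bangp{Q}$ (which itself reduces, by finality of $\zero$, to the same type of argument as (Coh2)); the resulting tensor of components then contracts, by naturality of $\bangp{}$ (Lemma~\ref{lemma:naturality}) and symmetric monoidal coherence, to $(\diagp{X}\piu\diagp{Y});\bigl((\id{X}\piu\bangp{X})\piu(\id{Y}\piu\bangp{Y})\bigr);\text{(unitors)}$, which in turn equals $p\cdot\id{X\piu Y}$ by Lemma~\ref{lemma:initialproperties}.\ref{lemma:pf} applied in parallel on each summand. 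The symmetric computation with $\pi_2$ uses Lemma~\ref{lemma:initialproperties}.\ref{lemma:1-pf} to give $(1-p)\cdot\id{X\piu Y}$, and uniqueness of the mediating arrow closes the case.

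The main obstacle is the bookkeeping inside the hexagon in (Coh1): it requires chasing several associators and a symmetry $\symm{X}{Y}^{\piu}$ through the tensor $\diagp{X}\piu\diagp{Y}$ and then through projections which are themselves defined via $\bangp{}$ and unitors. Everything commutes for structural reasons (symmetric monoidal coherence plus naturality of $\bangp{}$), but the diagram chase is where the content of the proof sits. Once this is cleanly set up, the final reduction to $p\cdot\id{X\piu Y}$ and $(1-p)\cdot\id{X\piu Y}$ follows routinely from Lemma~\ref{lemma:initialproperties}, and uniqueness of arrows into the binary convex product forces the two composites in (Coh1) to coincide.
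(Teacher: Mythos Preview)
Your proposal is correct but takes a dual route to the paper's own argument for (Coh1). The paper verifies the hexagon by the universal property of the \emph{coproduct}: it precomposes both sides with the injections $\iota_1,\iota_2\colon X,Y\to X\piu Y$ and reduces, via naturality of $\diagp{}$ (Lemma~\ref{lemma:naturality}), to $\iota_i;\diagp{X\piu Y}=\diagp{};(\iota_i\piu\iota_i)$, from which coproduct uniqueness finishes. You instead postcompose with the projections $\pi_1,\pi_2$ and invoke uniqueness for the \emph{convex product}, computing $\Phi;\pi_i$ down to $p\cdot\id{X\piu Y}$ and $(1-p)\cdot\id{X\piu Y}$ using Lemma~\ref{lemma:initialproperties}. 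Both strategies are sound.

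The trade-off: the paper's injection route is slightly shorter because naturality of $\diagp{}$ does the collapse in one step. Your projection route needs the extra identity $(p\cdot\id{X})\piu(p\cdot\id{Y})=p\cdot\id{X\piu Y}$, which you gloss as ``Lemma~\ref{lemma:initialproperties}.\ref{lemma:pf} applied in parallel on each summand''; this is true but deserves one more line---it follows from $[p\cdot h_1,p\cdot h_2]=p\cdot[h_1,h_2]$, itself a consequence of Lemma~\ref{lemma:initialproperties}.\ref{lemma:p;} and coproduct uniqueness. Once that is made explicit, your argument goes through cleanly. Your treatment of (Coh2)--(Coh4) matches the paper's.
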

\begin{proof}
(Coh2), (Coh3) and (Coh4) follow from the fact that $\zero$ is both initial and final. (Coh1) follows from the naturality and the universal property of coproducts. 

Let $s\defeq \assoc{X}{X}{Y\piu Y}; ( {\id{X}\piu \Iassoc{X}{Y}{Y}} ); ({\id X \piu ( \symm{X}{Y}^{\piu} \piu \id Y)} ); (\id{X}\piu \assoc{Y}{X}{Y}); \Iassoc{X}{Y}{X\oplus Y}$. Let
\[\iota_1^{(X\oplus X)\oplus(Y\oplus Y)}\colon X\piu X \to (X\piu X)\piu (Y\piu Y) \text{ and }\iota_2^{(X\oplus X)\oplus(Y\oplus Y)}\colon Y\piu Y \to (X\piu X)\piu (Y\piu Y)\] be the injections. By using the definitions of the structural isomorphisms induced by the coproducts, one can easily check that
\begin{equation}\label{eq:local}
\iota_1^{(X\oplus X)\oplus(Y\oplus Y)}; s= \iota_1 \oplus \iota_1 \qquad \text{and} \qquad \iota_2^{(X\oplus X)\oplus(Y\oplus Y)}; s= \iota_2 \oplus \iota_2\text{.}
\end{equation}
Then
\begin{align*}
\iota_1; (\diagp{X}\piu\, \diagp{Y}) ;s &= \iota_1; [\diagp{X}; \iota_1^{(X\oplus X)\oplus(Y\oplus Y)} \, , \, \diagp{Y}; \iota_2^{(X\oplus X)\oplus(Y\oplus Y)}] ;s \tag{def. of $\oplus$}\\
& =\, \diagp{X};\iota_1^{(X\oplus X)\oplus(Y\oplus Y)};s  \tag{coproduct}\\
&=\, \diagp{X}; (\iota_1 \oplus \iota_1) \tag{\ref{eq:local}} \\
&= \iota_1;\,\diagp{X\piu Y} \tag{Lemma \ref{lemma:naturality}}
\end{align*}
and 
\begin{align*}
\iota_2; (\diagp{X}\piu\, \diagp{Y}) ;s &= \iota_2; [\diagp{X}; \iota_1^{(X\oplus X)\oplus(Y\oplus Y)} \, , \, \diagp{Y}; \iota_2^{(X\oplus X)\oplus(Y\oplus Y)}] ;s \tag{def. of $\oplus$}\\
& =\, \diagp{Y};\iota_2^{(X\oplus X)\oplus(Y\oplus Y)};s  \tag{coproduct}\\
&=\, \diagp{X}; (\iota_2 \oplus \iota_2) \tag{\ref{eq:local}} \\
&= \iota_2;\,\diagp{X\piu Y} \tag{Lemma \ref{lemma:naturality}}
\end{align*}
Hence, the universal property of coproducts implies that $\diagp{X\piu Y} = \diagp{X}\piu \diagp{Y};s$.
\end{proof}

\begin{lemma}\label{lemma:pdiag} \label{lemma:diagp}
In a convex biproduct category the following hold:
\begin{enumerate}
\item $p \cdot \diagq{X} = \langle \id{X},\id{X}\rangle_{pq, p(1-q)} $.%\label{lemma:pdiag}
\item $\diagp{X}; (\id{X} \oplus q\cdot \id{X})=\langle \id{X},\id{X}\rangle_{p,(1-p)q}$ %\label{lemma:diagp}
\end{enumerate}
\end{lemma}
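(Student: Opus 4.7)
The plan is to prove both parts by invoking the universal property of the binary convex product on $X \oplus X$, i.e., by showing that the left-hand sides satisfy exactly the characterising equations of the mediating arrows on the right-hand sides. All the needed scalar manipulations are already packaged in Lemma~\ref{lemma:initialproperties}.

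For part (1), I would compose $p \cdot \diagq{X}$ with the projections $\pi_1, \pi_2 \colon X \oplus X \to X$. Since $\diagq{X}$ is by definition $\langle \id{X}, \id{X}\rangle_{q, 1-q}$, the convex-product equations give $\diagq{X}; \pi_1 = q \cdot \id{X}$ and $\diagq{X}; \pi_2 = (1-q) \cdot \id{X}$. Applying Lemma~\ref{lemma:initialproperties}.\ref{lemma:p;} (scalars commute with composition on either side) and then Lemma~\ref{lemma:initialproperties}.\ref{lemma:pq} (scalars multiply), I obtain $(p \cdot \diagq{X}); \pi_1 = (pq) \cdot \id{X}$ and $(p \cdot \diagq{X}); \pi_2 = (p(1-q)) \cdot \id{X}$. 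Since $pq + p(1-q) = p \leq 1$, the pair $(pq, p(1-q))$ satisfies the constraint of a convex product, so by uniqueness of the mediating morphism, $p \cdot \diagq{X} = \langle \id{X}, \id{X}\rangle_{pq, p(1-q)}$.

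For part (2), I would first record the standard observation (routine from Lemma~\ref{lemma:initialproperties}.\ref{lemma:pi1}, Lemma~\ref{lemma:initialproperties}.\ref{lemma:pi2}, naturality of $\bang{}$ and of $\runit{}, \lunit{}$) that for any $f \colon X \to Y$ and $g \colon X \to Y$, one has $(f \oplus g); \pi_1 = \pi_1; f$ and $(f \oplus g); \pi_2 = \pi_2; g$. With this, composing the left-hand side of (2) with $\pi_1$ yields $\diagp{X}; \pi_1; \id{X} = p \cdot \id{X}$ by definition of $\diagp{X}$, and composing with $\pi_2$ yields $\diagp{X}; \pi_2; (q \cdot \id{X}) = ((1-p) \cdot \id{X}); (q \cdot \id{X})$; using Lemma~\ref{lemma:initialproperties}.\ref{lemma:p;} twice and Lemma~\ref{lemma:initialproperties}.\ref{lemma:pq} this simplifies to $((1-p)q) \cdot \id{X}$. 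Since $p + (1-p)q = 1 - (1-p)(1-q) \leq 1$, the scalar-vector pair $(p, (1-p)q)$ is admissible, and uniqueness of the mediating arrow into $X \oplus X$ gives the desired equality.

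Neither part presents a real obstacle: both are essentially a two-line diagram chase followed by an appeal to uniqueness in the universal property of the convex product. The only minor subtlety is justifying the elementary identity $(f \oplus g); \pi_i = \pi_i; f$ (resp. $\pi_i; g$) needed in part (2); I would either derive it once inline from the explicit form of $\pi_i$ in Lemma~\ref{lemma:initialproperties}.\ref{lemma:pi1}--\ref{lemma:pi2} together with naturality of $\bang{}$ and the unitors, or isolate it as a brief preliminary observation before beginning the computation.
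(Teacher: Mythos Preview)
Your proposal is correct and follows essentially the same approach as the paper: both parts are proved by composing with the two projections, reducing to scalar identities via Lemma~\ref{lemma:initialproperties}, and invoking uniqueness of the mediating arrow for the convex product. The only cosmetic difference is that for part~(2) the paper expands $\pi_1,\pi_2$ directly using Lemma~\ref{lemma:initialproperties}.\ref{lemma:pi1}--\ref{lemma:pi2} and simplifies inline, whereas you first isolate the naturality $(f\oplus g);\pi_i = \pi_i;f$ (resp.\ $g$); either way the computation is the same.
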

\begin{proof}
We prove below item by item.
\begin{enumerate}
\item By the uniqueness of $\langle \id{X},\id{X}\rangle_{pq, p(1-q)}$ and the following two derivations.
\begin{align*}
(p \cdot \diagq{X})  ; \pi_1 &= p \cdot (\diagq{X};\pi_1) \tag{Lemma \ref{lemma:initialproperties}.4}\\
&=p\cdot (\langle \id{X},\id{X}\rangle_{q,1-q} ; \pi_1) \tag{def. of $\diagq{}$}\\
&= p\cdot (q \cdot \id{X}) \tag{convex product}\\
&= (p q) \cdot \id{X} \tag{\ref{lemma:pq}}
\end{align*}
\begin{align*}
(p \cdot \diagq{X})  ; \pi_2 &= p \cdot (\diagq{X};\pi_2) \tag{Lemma \ref{lemma:initialproperties}.4}\\
&=p\cdot (\langle \id{X},\id{X}\rangle_{q,1-q} ; \pi_2) \tag{def. of $\diagq{}$}\\
&= p\cdot ((1-q) \cdot \id{X}) \tag{convex product}\\
&= (p (1-q)) \cdot \id{X} \tag{\ref{lemma:pq}}
\end{align*}
\item By uniqueness of $\langle \id{X},\id{X}\rangle_{p,(1-p)q}$ and the following two derivations. 
\begin{align*}
\diagp{X}; (\id{X} \oplus q\cdot \id{X}) ; \pi_1 &=\, \diagp{X}; (\id{X} \oplus q\cdot \id{X}) ; (\id{X}\oplus \bang{X});\runit{X} \tag{Lemma \ref{lemma:initialproperties}.\ref{lemma:pi1}}\\
&=\, \diagp{X}; (\id{X} \oplus (q\cdot \id{X} ; \bang{X})); \runit{X} \tag{Symmetric Monoidal Category}\\
&=\, \diagp{X}; (\id{X} \oplus  \bang{X}); \runit{X} \tag{Naturality of $\bang{}$} \\
&= p\cdot \id{X} \tag{Lemma \ref{lemma:initialproperties}.\ref{lemma:pf}}
\end{align*}
\begin{align*}
\diagp{X}; (\id{X} \oplus q\cdot \id{X}) ; \pi_2 &=\, \diagp{X}; (\id{X} \oplus q\cdot \id{X}) ; (\bang{X}\oplus \id{X});\lunit{X} \tag{Lemma \ref{lemma:initialproperties}.\ref{lemma:pi2}}\\
&=\, \diagp{X}; ( \bang{X}\oplus( q\cdot \id{X}) ) ; \lunit{X} \tag{Symmetric Monoidal Category}\\
&= (1-p)\cdot (q \cdot \id{X}) \tag{Lemma \ref{lemma:initialproperties}.\ref{lemma:1-pf}}\\
&= ((1-p)q) \cdot \id{X} \tag{Lemma \ref{lemma:initialproperties}.\ref{lemma:pq}}
\end{align*}
\end{enumerate}
\end{proof}

\begin{proof}[Proof of Proposition \ref{lemma: copca objects in convbicat}]
We proved naturality and coherence in Lemmas~\ref{lemma:naturality} and \ref{lemma:coherence}. Below we prove the axioms in Figure~\ref{fig:co-pca axioms}.

    (PCA1) in Figure~\ref{fig:co-pca axioms} follows from the universal property of convex products. Indeed, observe that 
    \begin{align*}
        \diagp{X} ; (\diagq{X} \oplus \id{X}) ; \pi_1 &=\, \diagp{X} ; (\diagq{X} \oplus \id{X}) ;( \id{X \oplus X} \oplus \bang{X}) ; \runit{X\oplus X} \tag{Lemma \ref{lemma:initialproperties}.\ref{lemma:pi1}}\\
        &=\, \diagp{X} ; (\diagq{X} \oplus \bang{X}) ; \runit{X\oplus X} \tag{Symmetric Monoidal Category}\\
        &= p \cdot \diagq{} \tag{Lemma \ref{lemma:initialproperties}.\ref{lemma:pf}}\\
        &= \langle \id{X}, \id{X} \rangle_{pq,\,p(1-q)}\tag{Lemma \ref{lemma:pdiag}}
    \end{align*}
    
        \begin{align*}
        \diagp{X} ; (\diagq{X} \oplus \id{X}) ; \pi_2 &=\, \diagp{X} ; (\diagq{X} \oplus \id{X}) ;( \bang{X \oplus X} \oplus \id{X}) ; \lunit{X} \tag{Lemma \ref{lemma:initialproperties}.\ref{lemma:pi2}}\\
        & =\, \diagp{X} ; ( \, (\diagq{X} ; (\bang{X} \oplus \bang{X}) ) \oplus \id{X} \,)  ; \lunit{X} \tag{Symmetric Monoidal Category}\\
        &=\, \diagp{X} ;  (\bang{X}  \oplus \id{X} )  ; \lunit{X} \tag{Lemma \ref{lemma:naturality}} \\
        &=  (1-p)\cdot \id{X} \tag{Lemma \ref{lemma:initialproperties}.\ref{lemma:1-pf}}
    \end{align*}
    
Similarly, by fixing  $\tilde{p}\defeq  pq$ and $\tilde{q} \defeq \frac{p(1-q)}{1-pq}$, it holds  that
\begin{align}
    \diagptilde{}; (\id{X}\oplus\, \diagqtilde{}) ; \Iassoc{X}{X}{X}; \pi_1 &=\, \diagptilde{}; (\id{X}\oplus\,  \diagqtilde{}) ; \Iassoc{X}{X}{X}; ( \id{X \oplus X} \oplus \bang{X}) ; \runit{X\oplus X} \tag{Lemma \ref{lemma:initialproperties}.\ref{lemma:pi1}}\\
     &=\, \diagptilde{}; (\id{X}\oplus\, \diagqtilde{}) ; ( \id{X} \oplus (\id{X} \oplus \bang{X})) ; \runit{X} \tag{Symmetric Monoidal Category}\\
 &=\, \diagptilde{}; (\,\id{X}\oplus ( \diagqtilde{} ;(\id{X} \oplus \bang{X})) \,); \runit{X} \tag{Symmetric Monoidal Category}\\
&=\, \diagptilde{}; (\,\id{X}\oplus \tilde{q}\cdot \id{X} \,) \tag{Lemma \ref{lemma:initialproperties}.\ref{lemma:pf}}\\
    &=\langle \id{X}, \id{X} \rangle_{\tilde{p},(1-\tilde{p})\tilde{q}} \tag{Lemma \ref{lemma:diagp}}\\
    &= \langle \id{X}, \id{X} \rangle_{pq,p(1-q)} \tag{def of $\tilde{p}$ and $\tilde{q}$}
\end{align} 
    \begin{align}%\marginpar{Da finire qua, mi sembra che si debba usare la coerenza}
    \diagptilde{}; (\id{X}\oplus\, \diagqtilde{}); \Iassoc{X}{X}{X};\pi_2&=\, \diagptilde{};( \id{X}\oplus\, \diagqtilde{}); \Iassoc{X}{X}{X}; (\bang{X\piu X}\piu \id{X});\lunit{X}  \tag{Lemma \ref{lemma:initialproperties}.\ref{lemma:pi2}}\\
    &=\, \diagptilde{};( \id{X}\oplus\, \diagqtilde{}); \Iassoc{X}{X}{X}; ((\bang{X}\piu \bang{X})\piu \id{X});\lunit{X}  \tag{Lemma \ref{lemma:coherence}}\\
    &=\, \diagptilde{}; (\bang{X} \oplus \id{X}); \lunit{X} ; \diagqtilde{X}; (\bang{X}\piu \id{X}) ;\lunit{X}  \tag{Symmetric Monoidal Category}\\
    &=\, \diagptilde{}; (\bang{X} \oplus \id{X}); \lunit{X} ; (1-\tilde{q})\cdot \id{X}  \tag{Lemma \ref{lemma:initialproperties}.\ref{lemma:1-pf}}\\
    &=\, \diagptilde{}; (\bang{X} \oplus ( (1-\tilde{q})\cdot \id{X} )) ;  \lunit{X} \tag{Symmetric Monoidal Category}\\
    &= (1-\tilde{p}) \cdot ((1-\tilde{q})\cdot \id{X} )  \tag{Lemma \ref{lemma:initialproperties}.\ref{lemma:1-pf}}\\
    &=  (1-\tilde{p}) (1-\tilde{q}) \cdot \id{X} \tag{Lemma \ref{lemma:initialproperties}.\ref{lemma:pq}}\\
    &= (1-p) \cdot \id{X} \tag{def. $\tilde{p}$ and $\tilde{q}$}
\end{align}
    Hence, $\diagp{} ; (\diagq{} \oplus \id{X}) =\, \diagptilde{} ; \id{X}\oplus\, \diagqtilde{}; \Iassoc{X}{X}{X}$.\\

    (PCA2) in Figure~\ref{fig:co-pca axioms} follows from the coherence of the PCA-enrichment (i.e., property \eqref{eq: assioma aggiuntivo}):
  \begin{align*}
    \diagp{X}; \codiag{X} &=\,   \diagp{X}; (\id{X} \oplus \id{X}) \codiag{X}\\
    &= \id{X}+_p \id{X} \tag{\ref{eq: assioma aggiuntivo}}\\
    &=\id{X} \tag{\ref{eq:pca}}
    \end{align*}
    \\

    (PCA3) in Figure~\ref{fig:co-pca axioms} follows from the universal property of convex products. Indeed, observe that
    \begin{align*}
        \diagp{X};\symm{X}{X};\pi_1 &=\, \diagp{X};\symm{X}{X};(\id{X}\piu \bang{X});\runit{X} \tag{Lemma \ref{lemma:initialproperties}.\ref{lemma:pi1}}\\
        &=\, \diagp{X};(\bang{X} \piu \id{X});\lunit{X} \tag{Symmetric Monoidal Categories}\\
        &= (1-p)\cdot \id{X} \tag{Lemma \ref{lemma:initialproperties}.\ref{lemma:1-pf}}
    \end{align*}
    And similarly for $\pi_2$ we have that $\diagp{X};\symm{X}{X};\pi_2 = p\cdot \id{X}$. Hence, the universal property of convex products implies that $\diagp{X};\symm{X}{X} =\, \diagpbar{X}$.
 \end{proof}

\begin{proposition}\label{prop: functor1}
    Let $F\colon \Cat{C}\to \Cat{D}$ a functor between convex biproduct categories. If $F$ preserves finite coproducts then
     $F$ preserves convex binary products.
   \end{proposition}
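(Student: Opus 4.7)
The plan is to reduce the preservation of convex binary products to the preservation of coproducts together with the characterisation of the projections $\pi_i$ in terms of coproduct data provided by Lemma~\ref{lemma:initialproperties}. The strategy avoids any direct use of a putative PCA-enrichment of $F$: instead, it shows that in the image $\Cat{D}$, the pair $(F(\pi_1), F(\pi_2))$ coincides, up to the canonical isomorphism $[F(\iota_1), F(\iota_2)] \colon F(X_1) \oplus F(X_2) \to F(X_1 \oplus X_2)$, with the convex product projections already available in $\Cat{D}$.

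First, I would argue that $F(\zero) \cong \zero$ in $\Cat{D}$. Since the initial object is the empty coproduct, preservation of finite coproducts gives that $F(\zero)$ is initial in $\Cat{D}$. In the convex biproduct category $\Cat{D}$ the zero object is simultaneously initial and final, so we may canonically identify $F(\zero)$ with the zero object of $\Cat{D}$; in particular $F(\zero)$ is final too. From this, the universal properties of initial and final objects force $F(\bang{X}) = \bang{F(X)}$, $F(\cobang{X}) = \cobang{F(X)}$, and, using $\star_{X,Y} = \bang{X};\cobang{Y}$ from Lemma~\ref{lemma:initialproperties}.\ref{lemma:starxy}, also $F(\star_{X,Y}) = \star_{F(X),F(Y)}$.

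Next, I would invoke Lemma~\ref{lemma:initialproperties}.\ref{lemma:pi1}--\ref{lemma:pi2}, which expresses the convex product projections purely through coproduct data:
\[
\pi_1 = (\id{X_1} \piu \bang{X_2}); \runit{X_1}, \qquad \pi_2 = (\bang{X_1} \piu \id{X_2}); \lunit{X_2}.
\]
Applying $F$ and using that it preserves coproducts (hence the associated unitors, identities, and $\piu$ of morphisms, up to the canonical isomorphism), together with the identifications from the previous step, gives
\[
F(\pi_i) \circ [F(\iota_1), F(\iota_2)] = \pi_i^{\Cat{D}}
\]
in $\Cat{D}$, where $\pi_i^{\Cat{D}} \colon F(X_1) \piu F(X_2) \to F(X_i)$ are the convex product projections supplied by the convex biproduct structure of $\Cat{D}$.

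Finally, since $(F(X_1) \piu F(X_2), \pi_1^{\Cat{D}}, \pi_2^{\Cat{D}})$ satisfies the universal property of a convex binary product in $\Cat{D}$, and the canonical comparison $[F(\iota_1), F(\iota_2)]$ is an isomorphism intertwining $F(\pi_i)$ with $\pi_i^{\Cat{D}}$, the triple $(F(X_1 \piu X_2), F(\pi_1), F(\pi_2))$ satisfies the same universal property. The main conceptual point -- which I expect to be the only real obstacle -- is the transition from initial-object preservation to final-object preservation for $F(\zero)$; once one exploits the coincidence of the two in a convex biproduct category, the rest of the argument is a formal consequence of coproduct preservation and the coproduct-theoretic characterisation of the $\pi_i$.
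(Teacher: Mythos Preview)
Your proposal is correct and follows essentially the same route as the paper's proof: both arguments use Lemma~\ref{lemma:initialproperties}.\ref{lemma:pi1}--\ref{lemma:pi2} to express the $\pi_i$ through coproduct data, deduce that $F(\pi_i)$ agrees with the convex-product projections of $\Cat{D}$ up to the canonical comparison isomorphism $[F(\iota_1),F(\iota_2)]$, and then transport the universal property along that isomorphism. The paper spells out the chain of equalities yielding $F(\pi_i)=k;\pi_i^{F(X)\piu F(Y)}$ (with $k$ the inverse of $[F(\iota_1),F(\iota_2)]$) and verifies existence and uniqueness of the mediating arrow explicitly, but the content is the same as what you outline.
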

   \begin{proof}
    Preservation of finite coproducts is equivalent to preservation of binary coproducts and initial object. Hence, assume that $F(X)\overset{F(\iota_1)}{\to}F(X\piu Y)\overset{F(\iota_2)}{\leftarrow}$ is a coproduct of $F(X)$ and $F(Y)$, which is equivalent to requiring that the arrow $[F(\iota_1),F(\iota_2)]\colon F(X)\piu F(Y)\to F(X\piu Y)$ is an isomorphism (call $k$ the inverse). And assume also that $F(\zero_\Cat{C})$ is initial in $\Cat{D}$ (hence $\cobang{F(\zero_{\Cat{C}})}$ is an isomorphism). Now observe that
    \begin{align}
        F(\pi_1^{X\piu Y})&= F(\id{X}\piu\, \bangp{Y};\rho_X) \tag{Lemma \ref{lemma:initialproperties}.\ref{lemma:pi1}}\\
        &=F(\id{X}\piu\, \bangp{Y});F([\id{X}, \cobang{X}]) \tag{def. $\rho$}\\
        &= F([\id{X}, \bangp{Y};\cobang{X}]) \tag{Coproducts}\\
        &=k;[\id{F(X)}, F(\,\bangp{Y};\cobang{X})] \tag{coproducts and $k$ iso}\\
        &= k;(\id{F(X)}\oplus\, \bangp{F(Y)});[\id{F(X)},\cobang{F(X)}] \tag{$F(\zero_\Cat{C})\cong\zero_\Cat{D}$}\\
        &= k;\pi_1^{F(X)\piu F(Y)} \tag{def $\rho$}
    \end{align}
    and similarly $F(\pi_2^{X\piu Y})=k;\pi_2^{F(X)\piu F(Y)}$. We can now prove that $F(X\piu Y)$ with projections $F(\pi_1^{X\piu Y})$ and $F(\pi_2^{X\piu Y})$ is a convex binary product of $F(X)$ and $F(Y)$. Let $f\colon A\to F(X)$ and $g\colon A\to F(Y)$ and $p\in (0,1)$ and take the arrow $\langle f,g \rangle_{p,1-p};[F(\iota_1), F(\iota_2)]\colon A \to F(X \piu Y)$, where $\langle f,g \rangle_{p,1-p}\colon A \to F(X)\piu F(Y)$ is the unique arrow induced bu convex binary products in $\Cat{D}$. Now, since $F(\pi_1^{X\piu Y})= k; \pi_1^{F(X\piu F(Y))}$ it follows that 
    \begin{align}
        \langle f,g \rangle_{p,1-p};[F(\iota_1), F(\iota_2)];F(\pi_1^{X\piu Y})&= \langle f,g \rangle_{p,1-p};[F(\iota_1), F(\iota_2)];k;\pi_1^{F(X\piu F(Y))}  \tag*{}\\
        &= \langle f,g \rangle_{p,1-p};\pi_1^{F(X)\piu F(Y)} \tag{$k$ inverse}\\
        &=p\cdot f \tag{convex products}
    \end{align}
    and similarly $\langle f,g \rangle_{p,1-p};[F(\iota_1), F(\iota_2)];F(\pi_2^{X\piu Y})=(1-p)\cdot g$. For the uniqueness, if $h:A \to F(X\piu Y)$ is such that $h;F(\pi_1^{X\piu Y})=p\cdot f$ and $h;F(\pi_2^{X\piu Y})=(1-p)\cdot g$ then $h;k= \langle f,g \rangle_{p,1-p}$. Post-composition with $[F(\iota_1),F(\iota_2)]$ gives $h= \langle f,g \rangle_{p,1-p};[F(\iota_1), F(\iota_2)]$.
   \end{proof}
   
   \begin{proof}[Proof of Proposition \ref{prop: monoidal functors}]\label{proof: prop monoidal functors}
By Fox's theorem we know that every finite coproduct category $\Cat{C}$ (hence any convex biproduct category) correspond to  monoidal category $(\Cat{C},\oplus_\Cat{C},\zero_\Cat{C})$ in which every object $X$ is equipped with a coherent and natural monoid structure $(X,\codiag{X},\cobang{X})$. Moreover, a functor $F\colon \Cat{C}\to \Cat{D}$ between finite coproducts  categories preserves finite coproducts if and only if seen as a monoidal funtor $F\colon(\Cat{C},\oplus_\Cat{C},\zero_\Cat{C})\to (\Cat{D},\oplus_\Cat{D},\zero_\Cat{D})$ it is strong monoidal  and preserves monoids
    \begin{equation*}
        \psi_{X,X};F(\codiag{X})= \codiag{F(X)} \qquad \psi_\zero;F(\cobang{X})=\cobang{F(X)}
    \end{equation*}
    where  $\psi\colon  \piu_\Cat{C}\circ (F\times F) \overset{\cong}{\Rightarrow} F\circ \piu_{\Cat{D}}$ and $\psi_\zero\colon \zero_\Cat{D}\overset{\cong}{\to} F(\zero_\Cat{C})$ are given by the strong monoidal structure. For any convex biproduct category,  Proposition~\ref{lemma: copca objects in convbicat} implies that every object $X$ is equipped with a natural and coherent co-pca structure $(X,\diagp{X}, \bangp{X})$. In this case, a strong monoidal functor that preserves monoids between monoidal categories, in which every object has the structure of a coherent and natural monoid and co-pca, also preserves $\,\bangp{X}$. Indeed,
     \begin{align}
        \psi_0^{-1}&= \psi_0^{-1};\,\bangp{0} \tag{\ref{ax:coh3} in Figure~\ref{fig:copcacoherence} }\\
        &=\bangp{F(0)} \tag{\ref{eq:nat copca1}}
    \end{align}
    Hence, $\,\bangp{X}$ is preserved: $F(\,\bangp{X});\psi_\zero^{-1}=F(\,\bangp{X});\bangp{F(0)}=\bangp{F(X)}$. 
    
    Moreover, observe that a convex biproduct category $\Cat{C}$ corresponds to a monoidal category 
    $(\Cat{C},\oplus_\Cat{C},\zero_\Cat{C})$ in which every object is equipped with coherent and natural monoid and co-pca structures and moreover it holds that given $f:X\to A$ and $g:X\to B$ then for every $h\colon X\to A\piu B$
    \begin{equation}\label{eq:ax aggiuntivo per fox convex biprod}
         \begin{cases}
    h;(\id{A}\piu\, \bangp{B})=\, \diagp{X};(f\piu \,\bangp{B})\\
    h;(\,\bangp{A}\piu \id{B})=\,\diagp{X};(\,\bangp{A}\piu g)
\end{cases}
\qquad
\Rightarrow 
\qquad 
h=\,\diagp{X};(f\piu g)
    \end{equation}
    since $\Cat{C}$ has convex binary products.
    Now if $F$ is PCA-enriched, then 
    \begin{align}
        F(\diagp{X});\psi_{X,X}^{-1};(\id{F(X)}\piu\, \bangp{F(X)})&=F(\diagp{X});\psi_{X,X}^{-1};(\id{F(X)}\piu\, \bangp{F(X)}); \rho_{F(X)};\rho^{-1}_{F(X)} \tag*{}\\
        &=F(\diagp{X});\psi_{X,X}^{-1};(\id{F(X)}\piu\, \bangp{F(X)}); (\id{F(X)}\piu \cobang{F(X)});\codiag{F(X)};\rho_{F(X)}^{-1} \tag{Def. $\rho$}\\
        &=F(\diagp{X});F(\id{X}\piu (\,\bangp{X};\cobang{X})); \psi_{X,X}^{-1};\codiag{F(X)};\rho^{-1}_{F(X)}\tag{Nat. $\psi$}\\
        &=F(\diagp{X});F(\id{X}\piu (\,\bangp{X};\cobang{X}));F(\codiag{X});\rho^{-1}_{F(X)} \tag{$\psi_{X,X};F(\codiag{X})= \codiag{F(X)} $}\\
        &= F(p\cdot \id{X});\rho^{-1}_{F(X)}  \tag{Eq.\ \ref{eq: assioma aggiuntivo}}\\
        &= p\cdot \id{F(X)};\rho^{-1}_{F(X)}\tag{PCA-enrichment}\\
        &=\, \diagp{F(X)};(\id{F(X)}\piu\,\bangp{F(X)});\rho_{F(X)};\rho^{-1}_{F(X)}\tag{Eq.\ \ref{eq: assioma aggiuntivo} + Def. $\rho$ + Lemma \ref{lemma:initialproperties}.1}\\
        &=\, \diagp{F(X)};(\id{F(X)}\piu\,\bangp{F(X)}) \tag*{}
    \end{align}
and similarly one can prove that $F(\diagp{X});\psi_{X,X}^{-1};(\,\bangp{F(X)}\piu\id{F(X)})= \, \diagp{F(X)};(\,\bangp{F(X)}\piu \id{F(X)})$. Hence, by property (\ref{eq:ax aggiuntivo per fox convex biprod}) it follows that $F(\diagp{X});\psi_{X,X}^{-1}=\, \diagp{F(X)}$.
Vice versa, if $F\colon(\Cat{C},\oplus_\Cat{C},\zero_\Cat{C})\to (\Cat{D},\oplus_\Cat{D},\zero_\Cat{D})$ preserves also diagonals $\diagp{}$ then it is PCA-enriched since
\begin{align}
    F(f+_pg)&= F(\diagp{X};(f\piu g);\codiag{X}) \tag{Eq.\ \ref{eq: assioma aggiuntivo}}\\
    &=\, \diagp{F(X)};\psi_{X,X};F(f\piu g; \codiag{X}) \tag*{}\\
&=\, \diagp{F(X)};F(f)\piu F(g);\codiag{F(X)}\tag*{}\\
&= F(f)+_p F(g) \tag{Eq.\ \ref{eq: assioma aggiuntivo}}
\end{align}
 Similarly, the preservation of $\,\bangp{X}$ and $\cobang{Y}$ implies that $ F(\star_{X,Y})= \star_{F(X),F(Y)}$.

\end{proof}

 \begin{corollary}\label{prop:functor 1 e mezzo}
     Let $F\colon \Cat{C}\to \Cat{D}$ a functor between convex biproduct categories. If $F$ preserves finite coproducts then
     $F$ preserves n-ary convex products.
\end{corollary}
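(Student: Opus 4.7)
The plan is to proceed by induction on $n$, using Proposition~\ref{prop: functor1} for the binary case and exploiting the explicit inductive construction of $n$-ary convex products from Lemma~\ref{lemma:naryconvexproduct}.

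For the base case $n=0$, the $0$-ary convex product is a final object, which coincides with the initial object $\zero$ in any convex biproduct category. Since $F$ preserves finite coproducts, it preserves the initial object, and hence it sends $\zero_{\Cat{C}}$ to a zero object of $\Cat{D}$, which is also final there. The case $n=1$ is trivial, and $n=2$ is exactly Proposition~\ref{prop: functor1}.

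For the inductive step, assume that $F$ preserves $n$-ary convex products. Given objects $X_1,\dots,X_{n+1}$ in $\Cat{C}$, recall from the proof of Lemma~\ref{lemma:naryconvexproduct} that the $(n+1)$-ary convex product is obtained as the binary convex product of $\bigoplus_{i=1}^n X_i$ with $X_{n+1}$, with projections $\pi''_i = \pi_1';\pi_i$ for $i\le n$ and $\pi''_{n+1}=\pi_2'$. By the inductive hypothesis, $F$ sends $(\bigoplus_{i=1}^n X_i,\pi_i)$ to an $n$-ary convex product in $\Cat{D}$, and by Proposition~\ref{prop: functor1} it sends the outer binary convex product $(\,(\bigoplus_{i=1}^n X_i)\oplus X_{n+1},\pi_1',\pi_2')$ to a binary convex product in $\Cat{D}$. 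Composing these two universal properties in the same inductive pattern as in Lemma~\ref{lemma:naryconvexproduct} yields that $F$ sends $\bigoplus_{i=1}^{n+1}X_i$ together with the projections $F(\pi''_i)$ to an $(n+1)$-ary convex product in $\Cat{D}$.

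The only subtle point will be to verify, at the inductive step, that the scalars coming from the two binary universal properties combine correctly into the given tuple $\vec{p}=p_1,\dots,p_{n+1}$ with $\sum_i p_i\le 1$; this is precisely handled by the reparametrisation $q_i = p_i/(1-p_{n+1})$ used in Lemma~\ref{lemma:naryconvexproduct}, together with Lemma~\ref{lemma:initialproperties}.\ref{lemma:pq}. I do not expect any real obstacle, as the proof is essentially a bookkeeping of universal properties inherited through the inductive decomposition.
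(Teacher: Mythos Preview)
Your proposal is correct and follows essentially the same approach as the paper, which simply invokes Proposition~\ref{prop: functor1} together with Lemma~\ref{lemma:naryconvexproduct}. Your version merely spells out the induction that the paper leaves implicit.
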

\begin{proof}
    It follows from Proposition~\ref{prop: functor1} and Lemma~\ref{lemma:naryconvexproduct} which shows how to produce n-ary convex products from binary ones.
\end{proof}

   \begin{proposition}\label{prop: functor2}
     Let $F\colon \Cat{C}\to \Cat{D}$ a functor between convex biproduct categories that  preserves finite coproducts, then $F$ is PCA-enriched if and only if $F$ preserves $\diagp{X}$ for every $X\in\Cat{C}$.
   \end{proposition}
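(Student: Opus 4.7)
The plan is to view this proposition as a streamlined repackaging of Proposition~\ref{prop: monoidal functors} under the simplifying assumption that $F$ already preserves finite coproducts. The first observation I would record is that this hypothesis, combined with Fox's theorem, forces $F$ to be strong monoidal with respect to $(\piu,\zero)$ and to preserve the natural monoid $(\codiag{X},\cobang{X})$ on every object. Moreover, because in a convex biproduct category $\zero$ is simultaneously initial \emph{and} final, the image $F(\zero)\cong \zero_{\Cat{D}}$ is automatically final in $\Cat{D}$, and hence the unique map $\bangp{X}\colon X\to\zero$ is preserved up to the canonical iso as well. So the only remaining ingredient of ``being a morphism of convex biproduct categories'' (in the sense of Proposition~\ref{prop: monoidal functors}) is preservation of $\diagp{X}$.

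Forward direction. Assuming $F$ is PCA-enriched, I would use Proposition~\ref{prop: functor1} to get that $F$ preserves binary convex products, and then exploit the universal characterisation $\diagp{X}=\langle \id{X},\id{X}\rangle_{p,1-p}$. The concrete route, which is exactly the one executed inside the proof of Proposition~\ref{prop: monoidal functors}, is to verify the two projection equations $F(\diagp{X});\psi_{X,X}^{-1};\pi_i = p_i\cdot \id{F(X)}$ by rewriting each side through \eqref{eq: assioma aggiuntivo} and Lemma~\ref{lemma:initialproperties}.\ref{lemma:pf}, and then pushing everything across $F$ using PCA-enrichment plus preservation of $\codiag{}$ and $\bangp{}$. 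The property \eqref{eq:ax aggiuntivo per fox convex biprod} (uniqueness of mediating arrows into a convex product) then collapses this to the desired equality $F(\diagp{X});\psi_{X,X}^{-1}=\diagp{F(X)}$.

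Reverse direction. Assuming $F$ preserves $\diagp{X}$, I need only check $F(f+_p g)=F(f)+_p F(g)$ and $F(\star_{X,Y})=\star_{F(X),F(Y)}$. For the first, apply \eqref{eq: assioma aggiuntivo} and compute
\[
F(f+_p g)\,=\,F(\diagp{X};(f\piu g);\codiag{Y})\,=\,\diagp{F(X)};\psi_{X,X};F(f\piu g);F(\codiag{Y})\,=\,\diagp{F(X)};(F(f)\piu F(g));\codiag{F(Y)}\,=\,F(f)+_p F(g),
\]
using preservation of $\diagp{}$, the strong-monoidal structure (via naturality of $\psi$), and preservation of $\codiag{}$. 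For the zero arrow, use Lemma~\ref{lemma:initialproperties}.\ref{lemma:starxy}, namely $\star_{X,Y}=\bangp{X};\cobang{Y}$, together with the preservation of $\bangp{}$ and $\cobang{}$ noted in the first paragraph.

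The main obstacle is purely bookkeeping: the coherence isomorphisms $\psi_{X,Y}$ and $\psi_\zero$ that witness strong monoidality have to be threaded carefully through every equation, and one repeatedly needs the naturality of $\psi$ together with the preservation of monoids to cancel them at the right moment. No new conceptual ingredient is required beyond those already marshalled in the proof of Proposition~\ref{prop: monoidal functors}, so the present statement is really a corollary of that result, sharpened by the observation that co-pca preservation reduces, in the presence of finite-coproduct preservation, to preservation of $\diagp{}$ alone.
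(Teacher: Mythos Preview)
Your proposal is correct and follows essentially the same route as the paper: the forward direction applies Proposition~\ref{prop: functor1} to identify $F(\pi_i)$ with $k;\pi_i$, then checks the two projection equations using PCA-enrichment and invokes the universal property of convex products in $\Cat{D}$; the reverse direction unfolds $f+_p g$ via \eqref{eq: assioma aggiuntivo} and pushes everything across $F$ using preservation of $\diagp{}$, $\codiag{}$, and the strong-monoidal structure, while $\star$ is handled via Lemma~\ref{lemma:initialproperties}.\ref{lemma:starxy}. The paper's proof is slightly leaner in the forward direction (it goes straight through $F(\pi_1)$ rather than the longer chain you sketch from the proof of Proposition~\ref{prop: monoidal functors}), but the content is the same.
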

   \begin{proof}
    Using the notation of proposition~\ref{prop: functor1}, if $F$ is PCA-enriched, then 
    \begin{align}
        F(\langle \id{X},\id{X}\rangle_{p,1-p});k;\pi_1^{F(X)\piu F(Y)}&= F(\langle \id{X},\id{X}\rangle_{p,1-p});F(\pi_1^{X\piu Y}) \tag{Prop. \ref{prop: functor1}}\\
        &= F(p\cdot \id{X})= p\cdot \id{F(X)}\tag{PCA-enrichment}
    \end{align}
and similarly $F(\langle id,id\rangle_{p,1-p});k;\pi_1^{F(X)\piu F(Y)}= (1-p)\cdot \id{F(X)}$. Hence, the universal property of convex products in $\Cat{D}$ implies that $F(\langle \id{X},\id{X}\rangle_{p,1-p});k= \langle \id{F(X)}, \id{F(X)} \rangle_{p,1-p}$.

Vice versa, if $F(\langle \id{X},\id{X}\rangle_{p,1-p});k= \langle \id{F(X)}, \id{F(X)} \rangle_{p,1-p}$ then for every pair of arrows $f,g\in \Cat{C}([X,Y])$
\begin{align}
    F( f+_p g)=& F(\langle \id{X},\id{X} \rangle_{p,1-p};(f\piu g);\codiag{Y}) \tag{Eq.\ \ref{eq: assioma aggiuntivo}}\\
    &= F(\langle \id{X},\id{X} \rangle_{p,1-p});k; (F(f)\piu F(g)); \codiag{F(Y)} \tag{coproducts preserv.}\\
    &=\langle \id{F(X)}, \id{F(X)} \rangle_{p,1-p};(F(f)\piu F(g)); \codiag{F(Y)} \tag*{}\\
    &= F(f)+_pF(g). \tag{Eq.\ \ref{eq: assioma aggiuntivo}}
\end{align}
Moreover,
\begin{align}
    F(\star_{X,Y})&= F(\,\bangp{X};\cobang{Y}) \tag{Lemma \ref{lemma:initialproperties}.\ref{lemma:starxy}}\\
    &= F(\,\bangp{X});F(\,\cobang{Y}) \tag*{}\\
    &= \bangp{F(X)};\cobang{F(Y)} \tag{$\zero_\Cat{D}$ terminal + $F(\zero_\Cat{C})\cong \zero_\Cat{D}$   }\\
    &= \star_{F(X),F(Y)}. \tag{Lemma \ref{lemma:initialproperties}.\ref{lemma:starxy}}
\end{align}
   \end{proof}

\section{Appendix to Section \ref{sec:cmatrix}}\label{app:sec:cmatrix}

We break the proof of Proposition~\ref{prop:cmatrixcb} in several intermediate results.

\begin{lemma}\label{lemma:stmat coproduct}
Let $\Cat{C}$ be a PCA-enriched category. $\stmat{\Cat{C}}$ is a finite coproduct category.
\end{lemma}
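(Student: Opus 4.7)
The plan is to exhibit the empty word $\zero$ as an initial object and, for any two words $P = \bigoplus_{k=1}^n U_k$ and $Q = \bigoplus_{k=1}^m V_k$, their concatenation $P \oplus Q = U_1 \dots U_n V_1 \dots V_m$ as the coproduct, with injections given by the obvious block matrices. Then we verify the universal property by direct matrix computations.

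First, the empty word $\zero \in \ob{\Cat{C}}^*$ is initial: for every object $R$ of $\stmat{\Cat{C}}$ with, say, $l$ letters, the homset $\stmat{\Cat{C}}[\zero, R]$ contains exactly one matrix, namely the unique $l \times 0$ matrix (with no columns). Uniqueness here is automatic since there are no columns whose entries could differ. This handles the $0$-ary case.

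For binary coproducts, I would take the injections to be the block matrices
\[
\iota_1 \defeq \begin{pmatrix} \id{P} \\ \emptyset \end{pmatrix} \colon P \to P \oplus Q,
\qquad
\iota_2 \defeq \begin{pmatrix} \emptyset \\ \id{Q} \end{pmatrix} \colon Q \to P \oplus Q,
\]
where $\emptyset$ denotes the appropriately-sized matrix whose entries are all $0 \cdot \star$. Given $M \colon P \to R$ and $N \colon Q \to R$, define the candidate mediating morphism as the horizontal concatenation $[M, N] \defeq \begin{pmatrix} M & N \end{pmatrix} \colon P \oplus Q \to R$. Using the formula (\ref{eq:matrixmult}) for matrix multiplication and the fact that $0 \cdot \star$ acts as an annihilator in each entry of the sum (by the pca-enrichment axioms in \eqref{eq:enr}), the two triangle identities $\iota_i ; [M, N] = $ (for $i = 1, 2$) follow by a routine block-matrix calculation: the off-diagonal blocks contribute only terms of the form $0 \cdot (f;g) = \star$, which are absorbed by the pca sum.

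For uniqueness, suppose $H \colon P \oplus Q \to R$ satisfies $\iota_1 ; H = M$ and $\iota_2 ; H = N$. Computing $\iota_1 ; H$ entry by entry via (\ref{eq:matrixmult}) shows that the first $n$ columns of $H$ must be (equivalent, via $\equiv$, to) the columns of $M$; analogously, the last $m$ columns of $H$ must be those of $N$. Hence $H = [M, N]$ in $\stmat{\Cat{C}}$. The main (mild) obstacle is carefully bookkeeping the $\equiv$-equivalence on matrix entries so that the uniqueness clause is stated at the level of equivalence classes rather than raw matrices; the pca laws \eqref{eq:pca} and Lemma~\ref{lemma:initialproperties}.\ref{lemma:p;}--\ref{lemma q per somma} make these calculations go through.
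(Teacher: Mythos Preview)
Your direct verification of the universal property is correct, but the paper takes a different route: it invokes Fox's theorem. Concretely, the paper first observes that $(\stmat{\Cat{C}},\oplus,\zero)$ is a strict symmetric monoidal category (via the block-sum of matrices and the permutation symmetry in \eqref{eq:matsmc}), then checks that the matrices $\codiag{P}$ and $\cobang{P}$ from \eqref{eq:matmonpca} and \eqref{eq:indmoncopca} equip every object with a \emph{natural, coherent commutative monoid} (the axioms of Table~\ref{fig:freestrictfccat} are verified by straightforward matrix multiplications), and concludes by Fox that $\oplus$ is a coproduct.

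Your approach is more elementary and self-contained---it does not need Fox as a black box, and it avoids checking the full list of monoid/coherence/naturality axioms. The paper's approach, by contrast, is better aligned with its overall methodology: the symmetric monoidal structure and the explicit monoid (and later co-pca) morphisms on each object are needed anyway for the subsequent lemmas (the $\Cat{PCA}$-enrichment of $\stmat{\Cat{C}}$, Proposition~\ref{prop: monoidal functors}, and the comparison with the syntactic category $\CatT{\Cat{C}}$), so establishing them here and then invoking Fox is economical in context.

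One small typo: in your triangle-identity sentence the right-hand sides are missing---presumably you meant $\iota_1;[M,N]=M$ and $\iota_2;[M,N]=N$.
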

\begin{proof}
We rely on Fox's theorem~\cite{fox1976coalgebras} to prove this result. Recall the definitions of $\oplus$ and $\symm$ in \eqref{eq:matsmc}. Checking that $(\stmat{\Cat{C}}, \oplus, \zero)$ is a strict symmetric monoidal category amounts to several tedious matrix multiplications. 
We proceed by illustrating that, for every object $P$, $(\codiag{P},\cobang{P})$ defined as in \eqref{eq:matmonpca}, is a natural and coherent comonoid.

% For $P=\bigoplus_{k=1}^{n} U_k$, if $n=0$ take the empty $(0\times (0+0))$ matrix. For $n+1$ take $U=\bigoplus_{k=1}^{n+1} A_k$ and write $U= U'\piu A_{n+1}$ where $U'=\bigoplus_{k=1}^{n} A_k$.  Then define
%\[\codiag{U}\defeq (\id{U'} \piu \symm{A_{n+1}}{U} \piu \id{A_{n+1}});(\codiag{U'}\piu \codiag{A_{n+1}})\]
%whose matrix representation is the $(n\times 2n)$ matrix

Observe that, by the definitions in  \eqref{eq:matmonpca}, $\codiag{P}$ where $P=\bigoplus_{k=1}^{n} U_k$ is the matrix
\[\begin{pNiceMatrix}
1\cdot{\id{U_1}}  	& \emptyset & \Cdots & \emptyset & 1\cdot{\id{U_1}}  	& \emptyset & \Cdots & \emptyset \\
\emptyset  &   & \Ddots & \Vdots & \emptyset  &   & \Ddots & \Vdots \\	
\Vdots & \Ddots &   & \emptyset & \Vdots & \Ddots &   & \emptyset \\
\emptyset  & \Cdots & \emptyset  & 1\cdot{\id{U_{n}}} & \emptyset  & \Cdots & \emptyset  & 1\cdot{\id{U_{n}}} 
\CodeAfter
\line{1-1}{4-4}
\line{1-5}{4-8}
\end{pNiceMatrix}\]
while $\cobang{U}$ is the unique $(n+1)\times 0$ matrix.

Axioms in Table~\ref{fig:freestrictfccat} follow by matrix multiplication. For instance, the associativity axiom \ref{eq:codiag assoc} is obtained by the following computation:
\[ (\id{ P}\piu \codiag{ P}) ; \codiag{ P}=  \begin{pmatrix}
 	 \id{P} &  \id{P} &   \id{P}
 \end{pmatrix}= (\codiag{ P}\piu \id{P});\codiag{P}
 \]
and the naturality  \ref{eq:codiag nat} by:
\[ (M\piu M);\codiag{Q} 
=\begin{pmatrix}
    M & M\end{pmatrix}=
\codiag{P};M\]
where $M\colon P\to Q$. Finally, 
\ref{eq:cobang nat} is obvious since the multiplication of an $m\times n$ matrix with the $n\times 0$ empty matrix is the empty $m\times 0$ matrix. 

We can collect the above observations and conclude that $\stmat{\Cat{C}}$ is a strict symmetric monoidal category in which every object $U$ is equipped with a coherent and natural monoid structure $(U,\codiag{U},\cobang{U})$, hence, by Fox's theorem $\stmat{\Cat{C}}$ is a finite coproduct category.
\end{proof}

%%%%%%%%%%%%%MODIFICHE ALLA DIMINOTRAZIONE STMAT CBC% 1/3 %%%%%%%%%%%

%\marginpar{Sostituirei questo lemma con quello che dice che ogni oggetto è copca, accorciando la dimostrazione che segue. Ho aggiunto il nuovo enunciato e la dim tagliata sotto di questa.}
\begin{lemma}\label{lemma:stmat pca enriched}
    Let $\Cat{C}$ be a PCA-enriched category. $\stmat{\Cat{C}}$ is PCA-enriched.
\end{lemma}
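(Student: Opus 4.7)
The plan is to transfer the PCA structure from the hom-sets of $\Cat{C}$ to those of $\stmat{\Cat{C}}$ entry-wise, and then verify that matrix multiplication is compatible with the resulting operations.

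First, for matrices $M, N \colon P \to Q$ with $P = \bigoplus_{k=1}^n U_k$ and $Q = \bigoplus_{k=1}^m V_k$, I would define $(M +_r N)_{ji} \defeq M_{ji} +_r N_{ji}$, where each entry is regarded (via the equivalence $\equiv$) as an element of the PCA $\Cat{C}[U_i, V_j]$, and the $+_r$ on the right is the PCA operation in $\Cat{C}[U_i, V_j]$. The zero element $\star_{P,Q}$ is the matrix whose $(j,i)$-entry is $\star$ in $\Cat{C}[U_i, V_j]$. Writing the entries as $M_{ji} = p_{ji} \cdot f_{ji}$ and $N_{ji} = q_{ji} \cdot g_{ji}$, a representative of $(M +_r N)_{ji}$ can be taken with scalar $s_{ji} = r p_{ji} + (1-r) q_{ji}$, so that $\sum_{j} s_{ji} = r \sum_j p_{ji} + (1-r) \sum_j q_{ji} \leq 1$. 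Hence $M +_r N$ is a valid stochastic matrix.

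Second, since the PCA axioms \eqref{eq:pca} hold in each hom-PCA $\Cat{C}[U_i, V_j]$, they lift pointwise to $\stmat{\Cat{C}}[P,Q]$. So every hom-set of $\stmat{\Cat{C}}$ is a pointed convex algebra.

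Third, I would verify that matrix composition satisfies the enrichment laws \eqref{eq:enr}. For $M, N \colon P \to Q$ and $M' \colon Q \to R$ the $(u,i)$-entry of $(M +_r N); M'$ is, by the matrix multiplication formula \eqref{eq:matrixmult},
$\sum_{j=1}^m M'_{uj} \cdot (M_{ji} +_r N_{ji})$,
where $\cdot$ is composition in $\Cat{C}$ and $\sum_j$ is the convex sum defined in \eqref{eq:def somma n pca}. Using the bilinearity of composition in $\Cat{C}$ (first equation in \eqref{eq:enr}), each summand rewrites as $(M'_{uj}\cdot M_{ji}) +_r (M'_{uj}\cdot N_{ji})$. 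Pulling $+_r$ outside the convex sum $\sum_j$ then yields $\sum_j M'_{uj}\cdot M_{ji} \;+_r\; \sum_j M'_{uj}\cdot N_{ji}$, which is precisely $\bigl((M;M') +_r (N;M')\bigr)_{ui}$. Right distributivity is analogous, and absorption by $\star$ follows because $\star$ in $\Cat{C}$ absorbs composition and each summand of the form $p \cdot (\star; f)$ collapses to $\star$ in the relevant hom-PCA.

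The main obstacle is the last step: commuting $+_r$ with the iterated convex sum $\sum_j$ used in matrix multiplication. Concretely, one must show $\sum_{k} \alpha_k \cdot (x_k +_r y_k) = \bigl(\sum_k \alpha_k \cdot x_k\bigr) +_r \bigl(\sum_k \alpha_k \cdot y_k\bigr)$ from the PCA axioms \eqref{eq:pca} and the inductive definition \eqref{eq:def somma n pca}. This reduces, via induction on $n$, to the binary interchange law $(x +_r y) +_s (u +_r v) = (x +_s u) +_r (y +_s v)$, which follows from associativity and commutativity of the $+_p$ operations together with the rescaling identities already isolated in Lemma~\ref{lemma:initialproperties}.\ref{lemma q per somma}.
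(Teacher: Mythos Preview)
Your entrywise definition of $+_r$ and the verification that each hom-set is a PCA are correct. The gap is in the final step. After applying bilinearity in $\Cat{C}$, the $j$-th summand of $((M+_rN);M')_{ui}$ equals $(\alpha_j\cdot a_j)+_r(\beta_j\cdot b_j)$ with $\alpha_j=p'_{uj}p_{ji}$ and $\beta_j=p'_{uj}q_{ji}$ generally distinct, while the outer sum in \eqref{eq:matrixmult} runs with weights $w_j=r\alpha_j+(1{-}r)\beta_j$. What you must prove is therefore $\sum_jw_j\cdot c_j=(\sum_j\alpha_j\cdot a_j)+_r(\sum_j\beta_j\cdot b_j)$ subject to $w_j\cdot c_j=(\alpha_j\cdot a_j)+_r(\beta_j\cdot b_j)$, not the uniform-weight identity $\sum_k\alpha_k\cdot(x_k+_ry_k)=(\sum_k\alpha_k x_k)+_r(\sum_k\alpha_k y_k)$ you state; there is no valid substitution from one to the other, since writing $c_j=x_j+_ry_j$ would force scalars outside $[0,1]$. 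The asymmetric identity does hold in every PCA, but the obvious induction on $m$ fails (the rescaled weights $\alpha_j/(1-w_1)$ need not lie in $[0,1]$), so deriving it from binary interchange takes more work than your sketch indicates.

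The paper avoids this entirely by a different route. It first checks, by a one-line matrix computation (each product has a single nonzero term), that the concrete matrices $\diagp{P}$ and $\codiag{P}$ of \eqref{eq:matmonpca} satisfy the \emph{naturality} axioms $M;\diagp{Q}=\diagp{P};(M\oplus M)$ and $(M\oplus M);\codiag{Q}=\codiag{P};M$. It then \emph{defines} $M+_pM'\defeq\diagp{P};(M\oplus M');\codiag{Q}$ (which agrees with your entrywise formula) and obtains the enrichment laws directly from naturality: $N;(M+_pM')=\diagp{};(N\oplus N);(M\oplus M');\codiag{}=(N;M)+_p(N;M')$. The delicate PCA calculation is traded for an easy matrix check, exploiting that naturality of a diagonal and distributivity of the induced operation are two sides of the same coin.
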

\begin{proof}
Before proving the enrichment it is convenient to recall $(\diagp{P},\bang{P})$  from \eqref{eq:matmonpca} structure. 
For $P=\bigoplus_{k=1}^{n} U_k$, $\diagp{P}$ is the $(2n\times n)$ matrix illustrated below
\[\begin{pNiceMatrix}
p\cdot{\id{U_1}}  	& \emptyset & \Cdots & \emptyset   \\
\emptyset  &   & \Ddots & \Vdots   \\	
\Vdots & \Ddots &   & \emptyset   \\
\emptyset  & \Cdots & \emptyset  & p\cdot{\id{U_{n}}}   \\
(1-p)\cdot{\id{U_1}}  	& \emptyset & \Cdots & \emptyset \\
\emptyset  &   & \Ddots & \Vdots \\
\Vdots & \Ddots &   & \emptyset\\
\emptyset  & \Cdots & \emptyset  & (1-p)\cdot{\id{U_{n}}}
\CodeAfter
\line{1-1}{4-4}
\line{5-1}{8-4}
\end{pNiceMatrix}\]
while $\,\bangp{U}$ is the empty $0\times (n+1)$ matrix.
Axioms in Table~\ref{fig:freecopcacat} follow by matrix multiplication. For instance, in the case $U\in\Cat{C}$, the associativity axiom \ref{eq:diagp assoc} which states that  $\diagp{U};(\diagq{U}\piu \id{U})=\, \diagptilde{U};(\id{U}\piu \diagqtilde{U})$ where $\tilde{p}= pq$ and $ \tilde{q}= \frac{p(1-q)}{1-pq}$
follows, for $U\in\Cat{C}$, by the following computation

\[ \diagp{U};(\diagq{U}\piu \id{U}) = \begin{pmatrix}
    q\cdot \id{U} &  \emptyset \\
    (1-q)\cdot \id{U} &  \emptyset\\
    \emptyset & 1\cdot \id{U}
\end{pmatrix}
\begin{pmatrix}
    p\cdot \id{U} \\
    (1-p)\cdot \id{U}
\end{pmatrix}= \begin{pmatrix}
    pq\cdot \id{U} \\
    p(1-q)\cdot \id{U} \\
    (1-p)\cdot \id{U}
\end{pmatrix} = \begin{pmatrix}
    \tilde{p}\cdot \id{U} \\
    \tilde{q}(1-\tilde{p})\cdot \id{U} \\
    (1-\tilde{p})(1-\tilde{p})\cdot \id{U}
    \end{pmatrix} \]
which is equal to
\[ \diagptilde{U};(\id{U}\piu\, \diagqtilde{U}) = \begin{pmatrix}
    1\cdot \id{U} &  \emptyset\\
    \emptyset & \tilde{q}\cdot \id{U} \\
    \emptyset & (1-\tilde{q})\cdot \id{U}
\end{pmatrix}
\begin{pmatrix}
    \tilde{p}\cdot \id{U} \\
    (1-\tilde{p})\cdot \id{U}
\end{pmatrix}= \begin{pmatrix}
    \tilde{p}\cdot \id{U} \\
    \tilde{q}(1-\tilde{p})\cdot \id{U} \\
    (1-\tilde{p})(1-\tilde{p})\cdot \id{U}
\end{pmatrix} \]
The general case $P=\bigoplus_{k=1}^{n} U_k$ is obtained similarly. Axiom~\ref{eq:diagp idempotency} is obvious since $p\cdot \id{U_i}+ (1-p)\cdot \id{U_i}= \id{U_i}$ by PCA axiom~\ref{eq:pca}. 
Axiom~\ref{eq:diagp nat} is obtained as follows: for $M\colon U\to V$

\[M;\diagp{V} = \begin{pmatrix}
    p\cdot M\\
    (1-p)\cdot M
\end{pmatrix} 
M=\, \diagp{U};M\piu M\]
where $p\cdot M$ denotes the matrix whose $(j,i)$-entries are given by $p\cdot M_{ji}$. 
Finally, \ref{eq:bangp nat} is obvious since the multiplication of an $m\times n$ matrix with the $n\times 0$ empty matrix is the empty $m\times 0$ matrix. Hence, we can collect the above observations and conclude that every object $U\in\stmat{\Cat{C}}$ is equipped with a co-pca structure $(U,\diagp{U},\bangp{U})$ which satisfies the axioms in Table~\ref{fig:freecopcacat}.

Enrichment: the monoid and co-pca structure on every object of $\stmat{\Cat{C}}$ induces a PCA-enrichment as follows. For every pair of objects $P$ and $Q$ and arrows $M,M'\colon P\to Q$ define 
\[M+_p M' =\, \diagp{P};(M\piu M');\codiag{Q}\] 
matrix multiplication implies that
\[M+_p M'_{ji} = (p\cdot M_{ji} + (1-p)\cdot M'_{ji})= M_{ji}+_p M'_{ji}\]
While $\star_{P,Q}$ is given by $\,\bangp{P};\cobang{Q}$ which is the $(m\times n)$ matrix whose $(j,i)$-entry is given by $\star_{A_i,B_j}$.

PCA-enrichment equations in (\ref{eq:enr}) are obtained through axioms of coherent and natural monoid and co-pca structures, for instance, the first equation in (\ref{eq:enr}) $N;(M+_pM')= (N;M +_p N;M')$, for $N\colon P'\to P$, is obtained as follows:
\begin{align}
    N;(M+_pM') &= N;(\diagp{P};(M\piu M');\codiag{Q}) \tag*{}\\
    &=\, \diagp{P'};(N\piu N);(M\piu M');\codiag{Q} \tag{axiom \ref{eq:diagp nat}}\\
    &=\, \diagp{P'};(N;M\piu N;M');\codiag{Q} \tag{monoidal structure}\\
    &= (N;M) +_p (N;M') \tag*{}
\end{align}
\end{proof}

\begin{lemma}\label{lemma:stmat convex prod}
    Let $\Cat{C}$ be a PCA-enriched category. $\stmat{\Cat{C}}$ has convex binary products.
\end{lemma}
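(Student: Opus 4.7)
The plan is to exhibit, for any two objects $P_1, P_2$ of $\stmat{\Cat{C}}$, the coproduct $P_1 \oplus P_2$ (constructed in Lemma~\ref{lemma:stmat coproduct}) together with the obvious block-matrix projections as the convex binary product, and then verify the universal property by direct matrix manipulation, making essential use of the PCA-enrichment from Lemma~\ref{lemma:stmat pca enriched}.

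First, I would describe the projections concretely. By Lemma~\ref{lemma:initialproperties}.2--3 (applicable since $\stmat{\Cat{C}}$ has finite coproducts and is PCA-enriched), the projection $\pi_1 \colon P_1 \oplus P_2 \to P_1$ is the block matrix $(\,\id{P_1} \mid \emptyset\,)$, which has identity blocks on the columns indexed by $P_1$ and zero blocks on the columns indexed by $P_2$; similarly $\pi_2 = (\,\emptyset \mid \id{P_2}\,)$. Verifying Equation~\eqref{eq:delta}, namely $\iota_i; \pi_j = \delta_{i,j}$, is then immediate from block-matrix multiplication using the obvious description of the coproduct injections $\iota_i$.

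Next, given scalars $p_1, p_2 \in [0,1]$ with $p_1 + p_2 \leq 1$ and matrices $F \colon A \to P_1$, $G \colon A \to P_2$, I would define the mediating morphism as the block matrix
$$h \,\defeq\, \begin{pmatrix} p_1 \cdot F \\ p_2 \cdot G \end{pmatrix} \colon A \to P_1 \oplus P_2,$$
where the scalar action $p \cdot M$ rescales an entry $M_{ji} = q_{ji} \cdot f_{ji}$ to $(pq_{ji}) \cdot f_{ji}$; this coincides with $M +_p \star$ under the enrichment of Lemma~\ref{lemma:stmat pca enriched}. Stochasticity of $h$ reduces, column by column, to the bound
$$p_1 \sum_j q_{ji} + p_2 \sum_l r_{li} \,\leq\, p_1 + p_2 \,\leq\, 1,$$
which follows from the stochasticity of $F$ and $G$ (where $F_{ji} = q_{ji}\cdot f_{ji}$ and $G_{li} = r_{li}\cdot g_{li}$).

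The mediating equations $h; \pi_1 = p_1 \cdot F$ and $h; \pi_2 = p_2 \cdot G$ then fall out of direct block-matrix multiplication, since post-composing with $\pi_i$ simply selects the $i$-th block. For uniqueness, any competing $h' \colon A \to P_1 \oplus P_2$ can be written in block form $\begin{pmatrix} H'_1 \\ H'_2 \end{pmatrix}$; the same extraction gives $h'; \pi_1 = H'_1$ and $h'; \pi_2 = H'_2$, forcing $H'_1 \equiv p_1 \cdot F$ and $H'_2 \equiv p_2 \cdot G$, hence $h' \equiv h$. I do not expect a genuine obstacle: the argument is a straightforward block calculation, and the only care required is to track the entrywise equivalence relation $\equiv$ and ensure that scalar multiplication $p \cdot M$ interacts correctly with the block decomposition and with matrix composition as defined in~\eqref{eq:matrixmult}.
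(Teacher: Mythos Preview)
Your proposal is correct and follows essentially the same approach as the paper: both verify the universal property by direct block-matrix manipulation, with the projections extracting one block and uniqueness following from entrywise comparison. The only cosmetic difference is that the paper expresses the mediating arrow as the composite $\diagvecp{P};(M_1\oplus M_2)$ rather than writing the stacked block matrix $\begin{pmatrix} p_1\cdot F\\ p_2\cdot G\end{pmatrix}$ directly.
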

\begin{proof}
    Let $M_1\colon P\to Q_1$ and $M_2\colon P\to Q_2$ be two morphisms in $\stmat{\Cat{C}}$ and $p,q\in[0,1]$ such that $p+q\le 1$, where $P=\bigoplus_{k=1}^n U_k$, $Q_1=\bigoplus_{k=1}^{m_1} V_k$ and $Q_2=\bigoplus_{k=1}^{m_2} V'_k$. Now consider $Q_1\piu Q_2$ with 
    $\pi_1\colon Q_1\piu Q_2 \to Q_1$ and $\pi_2\colon Q_1\piu Q_2 \to Q_2$ given by the $(m_1\times (m_1+m_2))$ and $(m_2\times (m_1+m_2))$ matrices
    \[\pi_1=\begin{pNiceMatrix}
1\cdot{\id{V_1}}  	& \emptyset & \Cdots & \emptyset & \emptyset  	& \emptyset & \Cdots & \emptyset \\
\emptyset  &   & \Ddots & \Vdots & \emptyset  &   &  & \Vdots \\	
\Vdots & \Ddots &   & \emptyset & \Vdots &  &   & \emptyset \\
\emptyset  & \Cdots & \emptyset  & 1\cdot{\id{V_{m_1}}}  & \emptyset  & \Cdots & \emptyset  & \emptyset
\CodeAfter
\line{1-1}{4-4}
\line{1-5}{4-8}
\tikz \draw[thin,dotted] (2-5.east) -- (4-7.north west);
\tikz \draw[thin,dotted] (1-6.east) -- (3-8.north west);
\end{pNiceMatrix}\]
\[ \pi_2=\begin{pNiceMatrix}
\emptyset  	& \emptyset & \Cdots & \emptyset & 1\cdot{\id{V'_1}}  	& \emptyset & \Cdots & \emptyset \\
\emptyset  &   & \Ddots & \Vdots & \emptyset  &   &   & \Vdots \\	
\Vdots & \Ddots &   & \emptyset & \Vdots &  &   & \emptyset \\
\emptyset  & \Cdots & \emptyset  & \emptyset  & \emptyset  & \Cdots & \emptyset  & 1\cdot{\id{V'_{m_2}}}
\CodeAfter
\tikz \draw[thin,dotted] (2-5.south east) -- (4-7.north west);
\tikz \draw[thin,dotted] (1-6.south east) -- (3-8.north west);
\line{1-1}{4-4}
\line{1-5}{4-8}
\end{pNiceMatrix}\]
which correspond to $(\id{Q_1}\piu\, \bangp{Q_2})$ and $(\,\bangp{Q_1}\piu \id{Q_2})$ respectively. Now, the arrow $\diagvecp{P}$ with $\vec{p}=p,q$ given by the $(2n\times n)$ matrix
\[\begin{pNiceMatrix}
p\cdot{\id{U_1}}  	& \emptyset & \Cdots & \emptyset   \\
\emptyset  &   & \Ddots & \Vdots   \\	
\Vdots & \Ddots &   & \emptyset   \\
\emptyset  & \Cdots & \emptyset  & p\cdot{\id{U_{n}}}   \\
q\cdot{\id{U_1}}  	& \emptyset & \Cdots & \emptyset \\
\emptyset  &   & \Ddots & \Vdots \\
\Vdots & \Ddots &   & \emptyset\\
\emptyset  & \Cdots & \emptyset  & q\cdot{\id{U_{n}}}
\CodeAfter
\line{1-1}{4-4}
\line{5-1}{8-4}
\end{pNiceMatrix}\]
matrix multiplication shows that 
\[\diagvecp{P};(M_1\piu M_2);\pi_1= p\cdot M_1 \qquad\diagvecp{P};(M_1\piu M_2);\pi_2 = q\cdot M_2\]
 Now, if $H\colon P\to Q_1\piu Q_2$ is a morphism in $\stmat{\Cat{C}}$ such that $H;\pi_1 = p\cdot M_1$ and $H;\pi_2 = (1-p)\cdot M_2$, then
\[H_{ji}= p\cdot M_{ji}\]
for every $j=1,\dots,m_1$ and $i=1,\dots,n$ and
\[H_{ji}= q\cdot M'_{(j-m_1)i}\]
for every $j=m_1+1,\dots,m_1+m_2$ and $i=1,\dots,n$. Hence, $H\equiv\, \diagvecp{P};(M_1\piu M_2)$.
Observe that if $p=1$ (and hence $q=0$) then $\diagvecp{P}$ is $\iota_1\defeq (\id{Q_1}\piu \,\cobang{Q_2})$ and if $p=0$ (and hence $q=1$) then $\diagvecp{P}$ is $\iota_2\defeq (\,\cobang{Q_1}\piu \id{Q_2})$. 
\end{proof}

\begin{proof}[Proof of Proposition \ref{prop:stmatfun}]
    The functor $\stmat{F}$ is strict monoidal and preserves monoids and co-pca objects, hence by Proposition~\ref{prop: monoidal functors} it is a morphism of convex biproduct categories.
\end{proof}

\begin{proof}[Proof of Theorem \ref{thm:matfree}]
    Let $\Cat{C}$ be a PCA-enriched category and $\Cat{D}$ a convex biproduct category, and consider a PCA-enriched functor $F: \Cat{C}\to U(\Cat{D})$. The unit $\eta:\Cat{C}\to \stmat{\Cat{C}}$ is the identity on objects functor mapping an arrow $f\in\Cat{C}[U,V]$ into the $(1\times 1)$ matrix with entry $1\cdot f$. 
    
   The functor $F^\sharp:\stmat{\Cat{C}}\to \Cat{D}$ sends an object $\bigoplus_{k=1}^n U_k$ to $\bigoplus_{k=1}^n F(U_k)$ and an arrow $M\colon\bigoplus_{k=1}^n U_k\to \bigoplus_{k=1}^m V_k$, which corresponds to a matrix with entries $M_{ji}= p_{ji}\cdot f_{ji}$, into the arrow of $\Cat{D}$ obtained as follows:
     we first define the image of a column of the matrix, then we observe that $M$ is the copairing of its columns, denoted with $\column{M}{i}$ for $i=1,\dots,n$, and define $F^\sharp(M)$ as the copairing of the images of the columns. The interpretation of the $i$-column $\column{M}{i}$ is defined as the arrow induced by the $m$-ary convex product through the vector $\vec{p}=(p_{1i},\dots,p_{mi})$ and the arrows $F(f_{ji})$ in $\Cat{D}$, which exists by Lemma~\ref{lemma:naryconvexproduct}. $F^\sharp$ is well defined since, for $M\equiv M'$, it holds that $M_{ji}=p_{ji}\cdot f_{ji}= p'_{ji}\cdot f'_{ji}= M'_{ji}$. Now, for $k=1,\dots m$
    \begin{align} 
        F^\sharp(\column{M'}{i});\pi_k&=p'_{ki}\cdot F(f'_{ki})\tag{Def. $F^\sharp(\column{M'}{i})$ }\\
        &=F(p'_{ki}\cdot f'_{ki}) \tag{$F$ PCA-enriched}\\
        &= F(p_{ki}\cdot f_{ki}) \tag{$M\cong M'$}\\
        &=p_{ki}\cdot F(f_{ki}) \tag{$F$ PCA-enriched}
        \end{align}   
        and the universal property of convex products in $\Cat{D}$ implies that $F^\sharp(\column{M'}{i})= F^\sharp(\column{M}{i})$ and, hence, $F^\sharp(M')=F^\sharp(M)$ by the universal property of coproducts in $\Cat{D}$. Moreover, a simple computation shows that for every object $U\in\stmat{\Cat{C}}$, $F^\sharp$ preserves monoids and co-pca structures, i.e.\ it holds: 
    
    \begin{itemize}
    \item $F^\sharp(\,\diagp{U})=\, \diagp{F(U)}$ and $F^\sharp(\,\bangp{U})=\, \bangp{F(U)}$;
    \item $F^\sharp(\,\codiag{U})= \codiag{F(U)}$ and $F^\sharp(\,\cobang{U})= \cobang{F(U)}$;
    \item $F^\sharp(1\cdot f)= F(f)$, for every $f\colon U\to V$ in $\Cat{C}$;
    \end{itemize}
     Hence, by Proposition~\ref{prop: monoidal functors} it follows that $F^\sharp$ is a morphism of convex biproduct categories. It is also the unique morphism such that $\eta;F^\sharp = F$ since any other morphism $G:\stmat{\Cat{C}}\to \Cat{D}$ such that $\eta;G=F$ must preserve coproducts and convex products by Proposition~\ref{prop: functor1} and \ref{prop:functor 1 e mezzo}. Hence, $F^\sharp= G$ since every column of $M$ is the arrow induced by the m-ary convex product through $\vec{p}=(p_{1i},\dots,p_{mi})$ and the arrows $f_{ji}$ for a fixed $i$. 
    \end{proof}

\begin{proof}[Proof of Proposition \ref{prop:counit fullfaithful}]\label{proof:counit fullfaithful}
    Recall that the counit $\epsilon$ is obtained as $\id{\Cat{C}}^\sharp$, following the notation in the proof of Theorem \ref{thm:matfree}. Fullness follows from the fact that every arrow $f:A\to B$ in $\Cat{C}$ if equal to $\epsilon(\eta(f))$ by construction. Faithfulness is obtained by observing that if $\epsilon(M)=\epsilon(M')$ for two matrices in $\stmat{U(\Cat{C})}$, $M,M'\colon\bigoplus_{k=1}^n U_k\to \bigoplus_{k=1}^m V_k$ with entries $M_{ji}= p_{ji}\cdot f_{ji}$ and $M'_{ji}= p'_{ji}\cdot f'_{ji}$ , then the copairing of the image through $\epsilon$ of the columns of $M$ is equal to the copairing of the image through $\epsilon$ of the columns of $M'$. The universal property of the coproducts in $\Cat{C}$ then ensures that for every $i$-column of $M$ and $M'$, denoted with $\column{M}{i}$ and $\column{M'}{i}$, for $i=1,\dots,n$, it holds $\epsilon(\column{M}{i})=\epsilon(\column{M'}{i})$. Now, since $\epsilon(\column{M}{i})$ is given by the arrow induced by the m-ary convex product through $f_{1i},\dots,f_{mi}$, with $\vec{p}=(p_{1i},\dots,p_{mi})$, and $\epsilon(\column{M'}{i})$ is given by the arrow induced by the m-ary convex product through $f'_{1i},\dots,f'_{mi}$, with $\vec{p}'=(p'_{1i},\dots,p'_{mi})$, it follows that $\epsilon(\column{M}{i});\pi_j=p_{ji}\cdot f_{ji}=p'_{ji}\cdot f'_{ji} =\epsilon(\column{M'}{i})\pi_{j}$, for $j=1,\dots,m$. Hence, $M_{ji}\equiv M'_{ji}$ and therefore they are equal in $\stmat{U(\Cat{C})}$.  
    \end{proof} 
\section{Appendix to Section \ref{sec:syntactic}}\label{app:sec:syntactic}

\begin{proof}[Proof of Theorem \ref{thm:TCconvexbiproductcategory}]
The fact that the definitions of $+_p$ and $\star$ in \eqref{eq:enrichmentTC} provides a  $\Cat{PCA}$ follows immediately by the axioms \ref{eq:diagp assoc}, \ref{eq:diagp idempotency} and \ref{eq:diagp symmetry}. The fact that this provides an enrichment of $\CatTapeC$ over $\Cat{PCA}$ is easily proved by using naturality of $\diagp{}$, $\codiag{}$, $\bangp{}$ and $\cobang{}$ see, e.g., \cite{bonchi2025tapediagramsmonoidalmonads}.

The fact that $\zero$ is both initial and final follows immediately by naturality of  $\cobang{}$ and $\bang{}$: axioms \eqref{eq:bangp nat} and \eqref{eq:cobang nat}.

Since, by the axioms in Figure \ref{fig:freestrictfccat}, every object $P$ of $\CatTapeC$ is equipped with a natural and coherent monoid, by \cite{fox1976coalgebras}, it holds that $\oplus$ is a coproducts with injections
\begin{equation*}
\iota_1 \defeq \id{P_1}\oplus \cobang{P_2} \colon P_1 \to P_1\oplus P_2\quad\text{and}\quad\iota_2 \defeq \cobang{P_1}\oplus \id{P_2} \colon P_2 \to P_1\oplus P_2\text{.}
\end{equation*}
Recall that (see for instance \cite[Ch. 6.4]{mellies2009categorical} ) the copairing of $\t\colon P_1 \to Q$ and $\s\colon P_2 \to Q$ is defined as
\begin{equation*}
[\t,\s] \defeq (f\oplus g); \codiag{Q} 
\end{equation*}
and thus $[\id{Q},\id{Q}] = \codiag{Q}$.

Our main technical effort, illustrated below in Proposition \ref{prop:TChasconvexcoprduct}, consists in proving that $\oplus$ is also a convex product with projections
\begin{equation*}
\pi_1 \defeq \id{P_1}\oplus \bang{P_2} \colon P_1 \oplus P_2 \to P_1\quad\text{and}\quad\pi_2 \defeq \bang{P_1}\oplus \id{P_2} \colon P_1\oplus P_2 \to  P_2\text{.}
\end{equation*}
and the convex pairing of $\t\colon P \to Q_1$ and $\s\colon P \to Q_2$ is given as
\begin{equation*}
<\t,\s>_{1,0}\defeq  \t\oplus \cobang{Q_2} \qquad <\t,\s>_{0,1} \defeq \cobang{Q_1} \oplus \s \qquad <\t,\s>_{p,1-p} \defeq \; \diagp{P}; (\t\oplus \s) \text{ for }p\in(0,1)
\end{equation*}
As we will illustrate later, the case of  $<\t,\s>_{p,q}$ with $p+q<1$ simply follows from the above one.

Before delving in proof of Proposition~\ref{prop:TChasconvexcoprduct} observe that \eqref{eq:delta} in Definition~\ref{def:convbicat} follows easily by the above definitions of $\iota_i$ and $\pi_1$, axioms~\eqref{eq:bangp nat} and \eqref{eq:cobang nat}, for the case $i=j$, and the definition of $\star$ in \eqref{eq:enrichmentTC} for the case $i\neq j$.
Instead the laws \eqref{eq: assioma aggiuntivo} follows from the characterizations of $<f,g>_{p,1-p}$ and $\codiag{}$ and the definition of $+_p$ in \eqref{eq:enrichmentTC}.

Finally, to prove that $\langle \t_1, \dots \t_n\rangle_{\vec{p}} = \diagpn{\;\;\vec{p}}{U}{n}; \bigoplus_{i=1}^n \t_i$, one can easily check that, for all $i$, $(\diagpn{\;\;\vec{p}}{U}{n}; \bigoplus_{i=1}^n \t_i ); \pi_i =p_i\cdot \t_i$, where the $i$-th projection $\pi_i\colon \bigoplus_{j=1}^nU_j \to U_i$ is defined as $(\bang{\bigoplus_{j=1}^{i-1}U_j}) \oplus \id{U_i} \oplus (\bang{\bigoplus_{j=i+1}^{n}U_j})$.
\end{proof}

The proof of Proposition~\ref{prop:TChasconvexcoprduct} substantially relies on a normal form result (Lemma~\ref{lemma:division}) and on the fact that the enrichment on $\CatTapeC$ is cancellative, namely if $p \cdot \t = p \cdot \s$, then $\t=\s$ (Lemma~\ref{cancellativity}). The remainder of this appendix is structured as follows: we first illustrate some preliminary results, then some results about normal forms, then the cancellativity property and finally convex products.

\subsection{Preliminary Results}
We collect below several simple results that we will frequently use.

\begin{lemma}\label{lemma:pcdot}
Let $\t\colon P \to Q$ be an arrow of $\CatTapeC$ . Then, for all $p\in(0,1)$, $p\cdot \t =\, \diagp{P};(\t \oplus \bang{P})$.
\end{lemma}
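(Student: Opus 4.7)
The plan is a direct calculation that unfolds both sides to a common form using the pca-enrichment structure of $\CatTapeC$ given in \eqref{eq:enrichmentTC} and the monoid unit law from Table~\ref{fig:freestrictfccat}. There is no deep obstacle; the only thing to be careful about is matching the two \emph{syntactic} definitions of scalar multiplication (the one from \eqref{eq:def somma n pca} valid in any pca, and the target expression $\diagp{P};(\t\oplus\bang{P})$ which lives specifically in $\CatTapeC$).

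First, I would apply the definition of scalar multiplication by $p$ in a pca, namely $p\cdot \t = \t +_p \star_{P,Q}$ from \eqref{eq:def somma n pca}. Then, by the definition of the enrichment on $\CatTapeC$ given in \eqref{eq:enrichmentTC}, I rewrite this as
\[
p\cdot \t \;=\; \diagp{P};(\t \oplus \star_{P,Q});\codiag{Q}.
\]

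Next, I expand $\star_{P,Q} = \bangp{P};\cobang{Q}$ using \eqref{eq:enrichmentTC} and apply bifunctoriality of $\oplus$ to split the middle factor:
\[
\t \oplus \star_{P,Q} \;=\; \t \oplus (\bangp{P};\cobang{Q}) \;=\; (\t \oplus \bangp{P});(\id{Q} \oplus \cobang{Q}).
\]
Substituting, we obtain
\[
p\cdot \t \;=\; \diagp{P};(\t \oplus \bangp{P});(\id{Q} \oplus \cobang{Q});\codiag{Q}.
\]

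Finally, I invoke the (right) unit law of the natural commutative monoid $(\codiag{Q},\cobang{Q})$, which gives $(\id{Q}\oplus \cobang{Q});\codiag{Q} = \id{Q}$. This law is obtained from the left unit axiom \ref{eq:codiag unital} combined with the symmetry axiom \ref{eq: codiag symmetry} of Table~\ref{fig:freestrictfccat} (and the strictness of the symmetric monoidal structure, which makes $\runit{Q}$ the identity). Substituting yields $p\cdot \t = \diagp{P};(\t \oplus \bangp{P})$ as required.
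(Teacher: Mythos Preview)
Your proof is correct and follows essentially the same route as the paper: unfold $p\cdot\t$ via the pca definition and the enrichment \eqref{eq:enrichmentTC}, factor the bottom element through $\bangp{P};\cobang{Q}$, use bifunctoriality of $\oplus$, and cancel $(\id{Q}\oplus\cobang{Q});\codiag{Q}$ using the monoid unit law. Your added remark that the right unit law is derived from \eqref{eq:codiag unital} plus \eqref{eq: codiag symmetry} is a welcome clarification, since the stated axiom in Table~\ref{fig:freestrictfccat} is the left unit.
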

\begin{proof}
\begin{align*}
p\cdot \t &= p+_p \star_{P,Q} \tag{def. of $p\cdot -$}\\
&=\, \diagp{P} ; (\t \oplus \star_{P,Q}) ; \codiag{Q} \tag{\ref{eq:enrichmentTC}}\\
&=\, \diagp{P} ; (\t \oplus (\bang{P}; \cobang{Q})); \codiag{Q} \tag{\ref{eq:enrichmentTC}}\\
&=\, \diagp{P} ; (\t \oplus  \bang{P}) ;(\id{Q} \oplus \cobang{Q}) ; \codiag{Q} \tag{Symmetric Monoidal Category}\\
&=\, \diagp{P} ; (\t \oplus  \bang{P}) ;\id{Q} \tag{\ref{eq:codiag unital}} \\
&=\, \diagp{P} ; (\t \oplus  \bang{P})  \tag{Category} 
\end{align*}
\end{proof}

\begin{lemma}\label{lemma:pcopairing}
Let $\s = (\s_1 \oplus \s_2) ; \codiag{}$ be an arrow of $\CatTapeC$ . 
Then $r \cdot \s = (r\cdot \s_1 \oplus r\cdot s_2); \codiag{}$. 
\end{lemma}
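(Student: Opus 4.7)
The plan is to appeal to the universal property of the coproduct $P_1 \oplus P_2$ in $\CatTapeC$. Writing $\s_i\colon P_i \to Q$ so that $\s = (\s_1 \oplus \s_2);\codiag{Q}$ has type $P_1 \oplus P_2 \to Q$, both sides of the desired identity are arrows $P_1 \oplus P_2 \to Q$; it therefore suffices to verify that their precompositions with the two coproduct injections $\iota_1 = \id_{P_1}\oplus \cobang{P_2}$ and $\iota_2 = \cobang{P_1}\oplus \id{P_2}$ coincide. This reduces the whole problem to two small calculations, one per injection.

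First I would record the routine auxiliary fact that $\iota_i;((\t_1 \oplus \t_2);\codiag{Q}) = \t_i$ for arbitrary $\t_1,\t_2$. For $i=1$, unfolding $\iota_1$ and using functoriality of $\oplus$ together with naturality of $\cobang{}$ (\ref{eq:cobang nat}) and the counit law (\ref{eq:codiag unital}) gives $\iota_1;(\t_1 \oplus \t_2);\codiag{Q} = (\t_1 \oplus \cobang{Q});\codiag{Q} = \t_1$; the case $i=2$ is symmetric. Applied to the hypothesised decomposition of $\s$ this yields $\iota_i;\s = \s_i$, and applied to the right-hand side of the lemma it yields $\iota_i;((r\cdot \s_1 \oplus r\cdot \s_2);\codiag{Q}) = r\cdot \s_i$.

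Second I would compute $\iota_i;(r\cdot \s)$ from the definition $r\cdot \s = \s +_r \star_{P_1 \oplus P_2, Q}$ using left linearity of the PCA enrichment (first equation of \eqref{eq:enr}): $\iota_i;(r\cdot \s) = (\iota_i;\s) +_r (\iota_i;\star_{P_1 \oplus P_2, Q})$. Writing $\star = \bang{};\cobang{}$ and invoking naturality of $\bang{}$ (\ref{eq:bangp nat}) reduces the second summand to $\star_{P_i,Q}$, while the first summand is $\s_i$ by the previous step, so the whole expression equals $\s_i +_r \star_{P_i,Q} = r\cdot \s_i$. This matches what step one produced for the right-hand side, so by uniqueness of the coproduct mediator the two arrows are equal.

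The only remaining wrinkle is the boundary cases $r \in \{0,1\}$, for which the enrichment equation in \eqref{eq:enr} cannot be applied with a scalar outside $(0,1)$. For $r=1$ both sides collapse directly to $\s$; for $r=0$ both sides collapse to $\star_{P_1 \oplus P_2, Q}$, the right-hand side reducing via $(\cobang{Q} \oplus \cobang{Q});\codiag{Q} = \cobang{Q}$, which holds by initiality of $\zero$. I do not anticipate a substantive obstacle in any of this; the only real care lies in threading naturality of $\bang{}$ and $\cobang{}$ cleanly through the $\oplus$ bookkeeping, and in separating the main $r \in (0,1)$ case from the two trivial endpoints.
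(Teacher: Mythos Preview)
Your proof is correct and takes a genuinely different route from the paper's. The paper argues by direct equational manipulation: it unfolds $r\cdot \s$ via Lemma~\ref{lemma:pcdot} as $\diagpX{r}{P_1\oplus P_2};(\s \oplus \bang{})$, applies the co-pca coherence axiom \eqref{eq:diagp coherence} to decompose $\diagpX{r}{P_1\oplus P_2}$ into $(\diagpX{r}{P_1}\oplus\,\diagpX{r}{P_2});(\id{}\oplus\symm{}{}\oplus\id{})$, and then rearranges with symmetric-monoidal laws until two applications of Lemma~\ref{lemma:pcdot} reassemble the right-hand side. You instead exploit the coproduct universal property of $P_1\oplus P_2$ (already available via Fox) to reduce to checking agreement on each injection, where the enrichment law $e;(f+_r g)=(e;f)+_r(e;g)$ finishes the job immediately. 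The paper's argument stays close to the generating axioms and in particular makes essential use of \eqref{eq:diagp coherence}, which your route bypasses entirely; yours is more conceptual and would port to any setting where the coproduct and enrichment are known, without touching the diagrammatic coherence laws. Your separate treatment of $r\in\{0,1\}$ is also more careful than the paper, which tacitly restricts to $r\in(0,1)$ (the only case needed downstream).
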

\begin{proof}
Assume that $\s$ has type $P_1\oplus P_2 \to Q$ and observe that
\begin{align*}
r \cdot \s &=\, \diagpX{r}{P_1\oplus P_2} ; (\s \oplus \bang{}) \tag{Lemma \ref{lemma:pcdot}}\\
&= (\diagpX{r}{P_1} \oplus\, \diagpX{r}{P_2} ) ; (\id{P_1}\oplus \symm{P_1}{P_2} \oplus \id{P_2}) ; (\s \oplus \bang{P_1}\oplus \bang{P_2}) \tag{\ref{eq:diagp coherence}}\\
&= (\diagpX{r}{P_1} \oplus\, \diagpX{r}{ P_2} ) ; (\id{P_1}\oplus \symm{P_1}{P_2} \oplus \id{P_2}) ; (\, ((\s_1 \oplus \s_2) ; \codiag{Q}) \oplus \bang{P_1}\oplus \bang{P_2}\,) \tag{Hypothesis}\\
&= (\diagpX{r}{P_1} \oplus\, \diagpX{r}{P_2} ) ; (\id{P_1}\oplus \symm{P_1}{P_2} \oplus \id{P_2}) ; ( \s_1 \oplus \s_2  \oplus \bang{P_1}\oplus \bang{P_2}); \codiag{Q} \tag{Sym. Mon. Cat.}\\
&= (\diagpX{r}{P_1} \oplus\, \diagpX{r}{ P_2} ) ;( \s_1 \oplus \bang{P_1} \oplus \s_2   \oplus \bang{P_2}); \codiag{Q} \tag{Symmetric Monoidal Category}\\
&= (\diagpX{r}{P_1} ; ( \s_1 \oplus \bang{P_1})) \oplus (\diagpX{r}{ P_2}  ;( \s_2   \oplus \bang{P_2})); \codiag{Q} \tag{Symmetric Monoidal Category}\\
&= (r\cdot \s_1 ) \oplus (r\cdot  \s_2 ); \codiag{Q} \tag{Lemma \ref{lemma:pcdot}}
\end{align*}
\end{proof}

\begin{lemma}\label{lemma:basicTC}
Let $\t=\,\diagp{} ; (\t_1 \oplus \t_2)$ be an arrow of $\CatTapeC$. Then:
\begin{enumerate}
\item $\t; (\id{} \oplus \bang{}) = p\cdot \t_1$;
\item $\t ; (\bang{} \oplus \id{}) = (1-p) \cdot \t_2$;
\item $r \cdot \t =\, \diagp{} ; (r\cdot \t_1 \oplus r\cdot \t_2)$ for all $r\in (0,1)$.
\end{enumerate}
\end{lemma}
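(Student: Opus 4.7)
The plan is to reduce each of the three items to a short chain of rewrites built on Lemma~\ref{lemma:pcdot}, on the $\Cat{PCA}$-enrichment of $\CatTapeC$ given by \eqref{eq:enrichmentTC}, and on the coherence and naturality axioms of $\diagp{}$, $\bang{}$, and $\symm{}{}$ listed in Tables~\ref{fig:freestrictfccat} and~\ref{fig:freecopcacat}. All three items are essentially symbolic manipulations; the only nontrivial bookkeeping appears in item~3.

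For item~1, I would substitute $\t =\,\diagp{P};(\t_1\oplus\t_2)$ and push the factor $\id{}\oplus\bang{}$ through the tensor to obtain $\diagp{P};(\t_1\oplus(\t_2;\bang{P_2}))$. Naturality of $\bang{}$ (\ref{eq:bangp nat}) rewrites $\t_2;\bang{P_2}$ as $\bang{P_2}$, after which Lemma~\ref{lemma:pcdot} identifies the resulting expression with $p\cdot\t_1$. Item~2 follows the same pattern, ending at $\diagp{P};(\bang{P_1}\oplus\t_2)$; I would then apply the symmetry axiom $\diagp{}\text{-sym}$ (\ref{eq:diagp symmetry}) to rewrite $\diagp{P}$ as $\diagpbar{P};\symm{P}{P}$, use naturality of the symmetry to move $\symm{P}{P}$ past $\bang{P_1}\oplus\t_2$ (in the strict setting the residual symmetry on the codomain side collapses because one factor is $\zero$), and close with Lemma~\ref{lemma:pcdot} applied with parameter $1-p$.

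For item~3, the $\Cat{PCA}$-enrichment law $f;(r\cdot g) = r\cdot(f;g)$, which is a direct consequence of \eqref{eq:enr}, pushes the scalar past $\diagp{P}$ and reduces the statement to the distributivity $r\cdot(\t_1\oplus\t_2)=r\cdot\t_1\oplus r\cdot\t_2$. I would then expand both sides via Lemma~\ref{lemma:pcdot}, use $\diagp{}\text{-coh}$ (\ref{eq:diagp coherence}) and $\bang{}\text{-coh}$ (\ref{eq:bangp coherence}) on the left to split $\diagpX{r}{P_1\oplus P_2}$ and $\bang{P_1\oplus P_2}$ into the corresponding binary tensors, and apply naturality of $\symm{P_1}{P_2}$ to bring the mixed tensor $\t_1\oplus\t_2\oplus\bang{P_1}\oplus\bang{P_2}$ into the interleaved order $\t_1\oplus\bang{P_1}\oplus\t_2\oplus\bang{P_2}$ that matches the right-hand side.

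The main obstacle will be this last wire rearrangement in item~3: the coherence axiom forces the appearance of $\symm{P_1}{P_2}$, and one has to verify that it is absorbed correctly by naturality against a tensor whose factors are of mixed types $P_i\to Q_i$ and $P_i\to\zero$. In the strict monoidal setting adopted in the paper, the associated symmetries with $\zero$ reduce to identities, so once the naturality step is carried out the two expressions coincide on the nose. Items~1 and~2 are, by contrast, essentially one-line calculations once Lemma~\ref{lemma:pcdot} is invoked.
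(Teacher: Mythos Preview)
Your proposal is correct. Items 1 and 2 match the paper's treatment (which simply says they follow ``immediately from the definition of the enrichment''); your explicit use of (\ref{eq:bangp nat}) and (\ref{eq:diagp symmetry}) to reach Lemma~\ref{lemma:pcdot} is exactly what that remark unpacks to.

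For item 3 you take a slightly different route than the paper. You first invoke the enrichment law $f;(r\cdot g)=r\cdot(f;g)$ to factor $\diagp{P}$ out, reducing the claim to the distributivity $r\cdot(\t_1\oplus\t_2)=r\cdot\t_1\oplus r\cdot\t_2$, which you then prove via (\ref{eq:diagp coherence}), (\ref{eq:bangp coherence}), and naturality of $\symm{}{}$. The paper instead keeps everything together: it starts from $r\cdot\t=\diagpX{r}{};(\t\oplus\bang{})$, uses (\ref{eq:diagp idempotency}) to insert a matching $\diagp{}$ on the $\bang{}$-branch, and then applies (\ref{eq:diagp nat}) to commute $\diagpX{r}{}$ past $\diagp{}$. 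Your organisation is a bit more modular and avoids the idempotency trick; on the other hand, the enrichment law you cite as \eqref{eq:enr} is not primitive in $\CatTapeC$ but is itself derived from (\ref{eq:diagp nat}) and (\ref{eq:bangp nat}) via \eqref{eq:enrichmentTC}, so the axiomatic content of the two proofs is the same.
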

\begin{proof}
Points 1 and 2 follows immediately from the definition of the enrichment.

We prove below point 3 for $\t_1\colon:P\to Q_1$ and $\t_2\colon P\to Q_2$.
\begin{align*}
r \cdot \t &=\, \diagpX{r}{} ; (\t \oplus \bang{Q_1\piu Q_2}) \tag{Lemma \ref{lemma:pcdot}}\\
&=\, \diagpX{r}{} ; (\diagp{} ; (\t_1 \oplus \t_2) \oplus \bang{{Q_1\piu Q_2}}) \tag{Hypthesis}\\
&=\, \diagpX{r}{} ; ( \, (\diagp{} ; (\t_1 \oplus \t_2)) \oplus (\diagp{}; (\bang{Q_1} \oplus \bang{Q_2}))\, ) \tag{\ref{eq:bangp coherence} + \ref{eq:diagp idempotency}}\\
&=\, \diagpX{r}{} ; (\diagp{} \oplus \diagp{}) ; (\,  (\t_1 \oplus \t_2) \oplus (\bang{Q_1} \oplus \bang{Q_2}) \, ) \tag{Symmetric Monoidal Categories}\\
&=\, \diagp{} ; (\diagpX{r}{} \oplus \diagpX{r}{}) ; (\id{} \oplus \symm \oplus \id{}) ; (\,  (\t_1 \oplus \t_2) \oplus (\bang{Q_1} \oplus \bang{Q_2}) \, ) \tag{\ref{eq:diagp nat} + Sym. Mon. Cat.}\\
&=\, \diagp{} ; (\diagpX{r}{} \oplus \diagpX{r}{}) ;  (\,  (\t_1 \oplus \bang{Q_1}) \oplus (\t_2 \oplus \bang{Q_2}) \, ) \tag{Symmetric Monoidal Categories}\\
&=\, \diagp{} ; ( \,(\diagpX{r}{} ; (\t_1 \oplus \bang{Q_1})) \oplus (\diagpX{r}{} ;(\t_2 \oplus \bang{Q_2})) \, ) \tag{Symmetric Monoidal Categories}\\
&=\, \diagp{} ; (r\cdot \t_1 \oplus r\cdot \t_2)  \tag{Lemma \ref{lemma:pcdot}}
\end{align*}
 
\end{proof}

\begin{lemma}\label{lemma:fractions}
Let $\t=\,\diagp{} ; (\t_1 \oplus \t_2)$ and $\s=\, \diagq{} ; (\s_1 \oplus \s_2)$. If $\t_1=\frac{q}{p}\cdot \s_1$ and $\frac{1-p}{1-q} \cdot \t_2 = \s_2$, then $\t = \s$.
\end{lemma}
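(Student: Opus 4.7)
The plan is to reduce $\t = \s$ to a single application of the co-pca associativity axiom \ref{eq:diagp assoc}, after rewriting the scalar multiplication $\t_1 = \tfrac{q}{p}\cdot \s_1$ diagrammatically via Lemma~\ref{lemma:pcdot}. Note that the hypotheses only make sense when $\tfrac{q}{p}$ and $\tfrac{1-p}{1-q}$ both lie in $[0,1]$, i.e.\ when $q\le p$, which I would record at the start (the boundary cases $q=0$ and $q=p$ being handled by the conventions $\diagpX{0\;\;}{}\defeq\cobang{}\oplus\id{}$ and $\diagpX{1\;\;}{}\defeq\id{}\oplus\cobang{}$).

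First I would use Lemma~\ref{lemma:pcdot} to write $\t_1 = \diagpX{q/p\,}{P};(\s_1\oplus \bang{P})$, substitute into $\t$, and pull $\diagpX{q/p\,}{P}$ outside by the bifunctoriality of $\oplus$, obtaining
\[
\t \;=\; \diagp{P};\bigl(\diagpX{q/p\,}{P}\oplus \id{P}\bigr);\bigl((\s_1\oplus \bang{P})\oplus \t_2\bigr).
\]
Axiom~\ref{eq:diagp assoc}, instantiated with outer probability $p$ and inner probability $q/p$, then gives $\tilde{p} = p\cdot(q/p) = q$ and $\tilde{q} = \tfrac{p(1-q/p)}{1-q} = \tfrac{p-q}{1-q}$, so its left-hand side rewrites as $\diagq{P};(\id{P}\oplus\,\diagpX{(p-q)/(1-q)\,}{P})$. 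Substituting and re-applying bifunctoriality yields
\[
\t \;=\; \diagq{P};\Bigl(\s_1 \,\oplus\, \diagpX{(p-q)/(1-q)\,}{P};(\bang{P}\oplus \t_2)\Bigr).
\]

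Finally, the right summand simplifies using the dual formula Lemma~\ref{lemma:initialproperties}.\ref{lemma:1-pf}: it equals $(1-\tfrac{p-q}{1-q})\cdot \t_2 = \tfrac{1-p}{1-q}\cdot \t_2$, which is precisely $\s_2$ by the second hypothesis. Hence $\t = \diagq{P};(\s_1\oplus \s_2) = \s$.

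The main obstacle is purely arithmetic bookkeeping: one has to verify that the pair $(\tilde{p},\tilde{q})$ produced by \ref{eq:diagp assoc} on the probabilities $(p, q/p)$ matches $\diagq{P}$ on the outside and, via its complement $1-\tilde{q} = \tfrac{1-p}{1-q}$, matches exactly the coefficient demanded by the hypothesis on $\t_2$. Once this rebalancing of convex weights is checked, the remaining manipulations are the routine symmetric-monoidal naturality of $\bang{}$ and interchange of $\oplus$ already used repeatedly in this appendix.
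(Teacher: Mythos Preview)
Your argument is correct and follows the paper's proof almost verbatim: substitute $\t_1=\tfrac{q}{p}\cdot\s_1$ via Lemma~\ref{lemma:pcdot}, apply the co-pca associativity axiom \eqref{eq:diagp assoc} with parameters $(p,\,q/p)$ to obtain $\diagq{};(\id{}\oplus\,\diagpX{(p-q)/(1-q)\,}{})$, and then collapse the right summand to $\tfrac{1-p}{1-q}\cdot\t_2=\s_2$. The only cosmetic difference is in that last step: the paper uses the symmetry axiom \eqref{eq:diagp symmetry} to flip $\diagpX{(p-q)/(1-q)\,}{}$ into $\diagpX{(1-p)/(1-q)\,}{};\symm{}{}$ and then reapplies Lemma~\ref{lemma:pcdot}, whereas you invoke the dual identity $\diagpX{r\,}{};(\bang{}\oplus f)=(1-r)\cdot f$ directly---which in $\CatTapeC$ is just the same two steps unpacked from the definition \eqref{eq:enrichmentTC}, so no circularity is introduced.
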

\begin{proof}
\begin{align}
\t &=\,  \diagp{};( \frac{q}{p} \cdot \s_1 \oplus \t_2  )\tag{Lemma \ref{lemma:basicTC}.1}\\
&=\, \diagp{}; (\diagpX{\frac{q}{p}\,\,}{} \oplus \id{}) ; (\s_1 \oplus \bang{} \oplus \t_2) \tag{Lemma \ref{lemma:pcdot}}\\
&=\, \diagq; (\id{} \oplus\, \diagpX{\frac{p-q}{1-q}\,\,}{}) ; (\s_1 \oplus \bang{} \oplus \t_2) \tag{\ref{eq:diagp assoc}} \\
&=\, \diagq; (\id{} \oplus ( \diagpX{\frac{1-p}{1-q}\,\,}{}; \symm)) ; (\s_1 \oplus (\bang{} \oplus \t_2)) \tag{\ref{eq:diagp symmetry}} \\
&=\, \diagq; (\id{} \oplus ( \diagpX{\frac{1-p}{1-q}\,\,}{})) ; (\s_1 \oplus  \t_2 \oplus \bang{}) \tag{Symmetric Monoidal Category}\\
&=\,  \diagq{};( \s_1 \oplus \frac{1-p}{1-q} \cdot \t_2  ) \tag{Lemma \ref{lemma:basicTC}.2}\\
&=s \tag{Hypothesis}
\end{align}
\end{proof}

\subsection{Normal forms}
We are now going to prove several normal form results, that will be crucial in the proof of Proposition~\ref{prop:TChasconvexcoprduct}. 
For these results we need to fix some notation.

%For all natural numbers $n \in \mathbb{N}$ and $p_1, \dots , p_n \in (0,1)$ such that $\sum_{i=1}^n p_i \leq 1$ one can define $\Br{+_{\vec{p}}}\colon A \to \bigoplus_{i=1}^n  A$ as follows:\\
%for $n=0$ take $\Br{{+_{\vec{p}}}}\coloneqq\Br{\star}_A$ ;\\
%for $n+1$ take $\Br{{+_{\vec{p}}}}\coloneqq \Br{+_{p_{n+1}}}_A; (\id{A} \oplus \Br{{+_{\frac{\vec{p'}}{1-p_{n+1}}}}})$ where $\vec{p'} = p_1, \dots, p_n$.

%\marginpar{per ottenere che la probabilità del primo ingresso sia $p_1$ e così via ed evitare di avere la $\star$ nella definizione si deve riscrivere in questo modo}

For all natural numbers $n \in \mathbb{N}$ and $p_1, \dots , p_n \in (0,1)$ such that $\sum_{i=1}^n p_i \leq 1$ one can define $\diaggen{\vec{p}}{P}\colon P \to \bigoplus_{i=1}^{n+1}  P$ as follows:\\
\begin{equation}\label{eq:diag vettore in appendice}
    (n=0):\  \diaggen{\vec{p}}{P}\defeq\id{U}\qquad
(n=k+1):\ \diaggen{\vec{p}}{P}\defeq\, \diaggen{p_1}{P}; (\id{P} \oplus\, \diaggen{\frac{\vec{p'}}{1-p_1}}{P})
\end{equation}
where $\vec{p'} = p_2, \dots, p_n$.
\newline
Observe that $\diaggen{\vec{p}}{P}$ differs from the definition of $\diagpn{\vec{p}\;}{P}{n}$ given in \eqref{eq: diagpn cap equational presentation}, but we need this formulation for the rest of the section.

Similarly, for all natural numbers $n \in \mathbb{N}$ one can define $\codiag{U}^n\coloneqq \bigoplus_{i=1}^n P\to P$ as follows:
\begin{equation}\label{eq:codiagn in appendice}
    (n=0):\  \codiag{P}^0\defeq\cobang{P}\qquad (n+1):\ \codiag{P}^{n+1}\defeq(\id{P} \oplus \codiag{P}^n  );\codiag{P}.
\end{equation}

\begin{lemma}[Pre normal form]\label{lemma:prenormalform}%\marginpar{Questa poi avra' un analogo nelle matrici}
Every arrow $\t\colon \oplus_{i=1}^n U_i\to \oplus_{i=1}^m V_i$ in $\CatTapeC$ can be written as $\t_1;\t_2;\t_3;\t_4$ where:
\begin{itemize}
    \item $\t_1$ is obtained through the fragment \setlength{\tabcolsep}{4pt}\begin{tabular}{rc ccccccccccccccccccccc}
   $\!\!\! \mid \!\!\!$ & $\diagp{U}$  & $\!\!\! \mid \!\!\!$  & $\id{\zero}$ & $\!\!\! \mid \!\!\!$ &  $\id{U}$ & $\!\!\! \mid \!\!\!$   & $   \t ; \t   $ & $\!\!\! \mid \!\!\!$  & $  \t \piu \t  $ & $\!\!\! \mid \!\!\!$  &  $\sigma_{U,V}^{\piu}$
\end{tabular}
\item $\t_2$ is obtained through the fragment \setlength{\tabcolsep}{4pt}\begin{tabular}{rc ccccccccccccccccccccc}
           $\!\!\! \mid \!\!\!$  &  $\bangp{U}$& $\!\!\! \mid \!\!\!$&  $\id{\zero}$ & $\!\!\! \mid \!\!\!$ & $\id{U}$    & $\!\!\! \mid \!\!\!$   & $   \t ; \t   $ & $\!\!\! \mid \!\!\!$  & $  \t \piu \t  $ & $\!\!\! \mid \!\!\!$  &  $\sigma_{U,V}^{\piu}$
    \end{tabular}
    \item $\t_3$ is obtained through the fragment \begin{tabular}{rc ccccccccccccccccccccc}\setlength{\tabcolsep}{0.0pt}
        $\!\!\! \mid \!\!\!$ & $\tapeFunct{c}$  & $\!\!\! \mid \!\!\!$  &  $\id{\zero}$ & $\!\!\! \mid \!\!\!$ & $\id{U}$&  $\!\!\! \mid \!\!\!$   & $   \t ; \t   $ & $\!\!\! \mid \!\!\!$  & $  \t \piu \t  $ & $\!\!\! \mid \!\!\!$  &  $\sigma_{U,V}^{\piu}$
    \end{tabular}
    \item $\t_4$ is obtained through the fragment \begin{tabular}{rc ccccccccccccccccccccc}%\setlength{\tabcolsep}{4pt}
        $\!\!\! \mid \!\!\!$ & $\codiag{U}$ &$\!\!\! \mid \!\!\!$ & $\cobang{U}$ & $\!\!\! \mid \!\!\!$  &  $\id{\zero}$ & $\!\!\! \mid \!\!\!$ & $\id{U}$ & $\!\!\! \mid \!\!\!$   & $   \t ; \t   $ & $\!\!\! \mid \!\!\!$  & $  \t \piu \t  $ & $\!\!\! \mid \!\!\!$  &  $\sigma_{U,V}^{\piu}$
    \end{tabular}
\end{itemize}
\end{lemma}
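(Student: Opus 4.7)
The plan is to proceed by structural induction on the term $\t$ generated by the grammar \eqref{tapesGrammar}. For the base cases, each generator belongs canonically to one of the four fragments: $\diagp{U}$ is taken as $\t_1$ and the remaining three factors are identities; similarly $\bangp{U}$ is $\t_2$, $\tapeFunct{c}$ is $\t_3$, and $\codiag{U}$, $\cobang{U}$ are $\t_4$; identities $\id{U}, \id{\zero}$ and symmetries $\symmp{U}{V}$ sit in every fragment. The parallel composition case is immediate: if $\s = s_1;s_2;s_3;s_4$ and $\r=r_1;r_2;r_3;r_4$, the bifunctoriality of $\piu$ (Table~\ref{fig:freestricmmoncatax}) yields
\[
\s \piu \r \;=\; (s_1\piu r_1);(s_2\piu r_2);(s_3\piu r_3);(s_4\piu r_4),
\]
and each $s_k\piu r_k$ lives in the same fragment as its components.

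The substantive case is sequential composition. Given $\s=s_1;s_2;s_3;s_4$ and $\r=r_1;r_2;r_3;r_4$ by induction, the composite $\s;\r$ has the eight-fold form $s_1;s_2;s_3;s_4;r_1;r_2;r_3;r_4$ which is not yet in the requested order. The remaining task is to push $r_1$ leftward past $s_4,s_3,s_2$, then $r_2$ leftward past $s_4,s_3$, then $r_3$ past $s_4$. I would reduce this to six pairwise \emph{swap lemmas} of the shape: for any $A$ in fragment $\mathcal{F}_i$ and $B$ in fragment $\mathcal{F}_j$ with $j<i$, the composite $A;B$ can be rewritten as $B';X$ with $B'\in\mathcal{F}_j$ and $X\in\mathcal{F}_i$ (possibly with additional intermediate structure, see below). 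Each swap lemma is proved by induction on the structure of (say) $A$ inside its fragment, using bifunctoriality, symmetric-monoidal rewiring, and the following atomic commutations, each of which is an instance of a naturality or coherence axiom from Tables~\ref{fig:freestrictfccat} and~\ref{fig:freecopcacat}:
\begin{align*}
&\codiag{P};\diagq{P} \;=\; \diagq{P\piu P};(\codiag{P}\piu\codiag{P}) \tag*{by \ref{eq:diagp nat}}\\
&\cobang{P};\diagq{P} \;=\; \id{\zero} \tag*{by \ref{eq:diagp zero coherence}}\\
&\tapeFunct{c};\diagq{Q} \;=\; \diagq{P};(\tapeFunct{c}\piu\tapeFunct{c}) \tag*{by \ref{eq:diagp nat}}\\
&\codiag{P};\bangp{P} \;=\; \bangp{P}\piu\bangp{P},\qquad \cobang{P};\bangp{P}\;=\;\id{\zero} \tag*{by \ref{eq:bangp nat}, \ref{eq:bangp coherence}, \ref{eq:bangp0 coherence}}\\
&\tapeFunct{c};\bangp{Q} \;=\; \bangp{P} \tag*{by \ref{eq:bangp nat}}\\
&\codiag{P};\tapeFunct{c} \;=\; (\tapeFunct{c}\piu\tapeFunct{c});\codiag{Q},\qquad \cobang{P};\tapeFunct{c}\;=\;\cobang{Q} \tag*{by \ref{eq:codiag nat}, \ref{eq:cobang nat}}
\end{align*}
Extending these by whiskering with identities and by bifunctoriality of $\piu$, and handling symmetries and identities trivially (they pass through every fragment), one obtains the six swap lemmas.

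The principal obstacle is termination: when the commutations are applied, the fragments on the right-hand sides typically duplicate (e.g.\ $\tapeFunct{c}$ becomes $\tapeFunct{c}\piu\tapeFunct{c}$). One must verify that the rewriting nevertheless converges and that each duplicated term remains inside the fragment it belongs to. I would overcome this by a fragment-by-fragment sorting strategy: first push every $\diagp$-atom all the way to the left of the composite, then every $\bangp$-atom, then every $\tapeFunct{c}$-atom. A lexicographic well-founded measure counting, for each pair $(i,j)$ with $i>j$, the number of occurrences of an $\mathcal{F}_i$-generator preceding an $\mathcal{F}_j$-generator, is strictly decreased by each commutation listed above (because new generators created on the right-hand sides belong either to the same fragment as the atom being pushed, or to a fragment $\mathcal{F}_j$ with $j\le$ that of the original later-fragment term). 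This establishes termination and completes the induction.
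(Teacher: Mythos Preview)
Your approach is essentially the paper's: both use the naturality axioms of $\diagp{}$, $\bangp{}$, $\codiag{}$, $\cobang{}$ to commute generators into four ordered zones, and the paper's proof is in fact much terser---it just says ``using naturality, move all $\diagp{}$ to the left, then all $\bangp{}$ to the left (stopping at $\diagp{}$, since $\bangp{}$ is not its unit), then all $\codiag{}$ and $\cobang{}$ to the right''. Your structural induction with explicit swap lemmas and a termination measure spells this out in more detail than the paper does. One slip: $\cobang{P};\diagq{P}$ cannot equal $\id{\zero}$ (the types do not match); the correct commutation is $\cobang{P};\diagq{P} = \diagq{\zero};(\cobang{P}\piu\cobang{P}) = \cobang{P}\piu\cobang{P}$ via \eqref{eq:diagp nat} and \eqref{eq:diagp zero coherence}, which still achieves what you need.
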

\begin{proof}
Using the naturality, one can first move all the occurrences of $\diagp{}$ to the left. The all the occurrences of $\cobang{}$ to the left until an occurence of $\diagp{}$ is met (note that since $\cobang{}$ is not the unit of $\diagp{}$, $\cobang{}$ remains at the right of $\diagp{}$). Similarly, one can move all occurrences of $\codiag{}$ and $\cobang{}$ on the right.
\end{proof}

\begin{lemma}\label{lemmaABform}
Let $\t\colon U \to V$ be an arrow of $\CatTapeC$. Then one of the following holds: 
\begin{itemize}
\item $\t=\star_{A,B}$;
\item there exist  $c \in \Cat{C}[U,V]$ such that $\t=\tapeFunct{c}$;
\item there exist distinct $c_1, \dots, c_n \in \Cat{C}[U,V]$ and $\vec{p}=p_1, \dots, p_{n+1}$ such that
\[\t=\, \diagvecp{A} ; ( (\bigoplus_{i=1}^n  \tapeFunct{c_i}) \oplus\, \bangp{U}  ) ; \codiag{V}^n\text{.}\]
\end{itemize}
\end{lemma}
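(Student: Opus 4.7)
The plan is to combine Lemma \ref{lemma:prenormalform} with a layer-by-layer canonicalization followed by a merging step that removes duplicate $\tapeFunct{c_i}$'s. First I would apply the pre-normal form to write $\t=\t_1;\t_2;\t_3;\t_4$ in the four disjoint fragments. Because none of the generators available to $\t_1$ (respectively $\t_4$) can introduce a sort other than $U$ (respectively $V$), an easy induction on the grammar gives $\t_1\colon U\to \bigoplus_{i=1}^N U$ and $\t_4\colon \bigoplus_{j=1}^M V\to V$ for some $N,M\geq 1$, while $\t_2\colon \bigoplus_{i=1}^N U\to\bigoplus_{j=1}^M U$ and $\t_3\colon \bigoplus_{j=1}^M U\to\bigoplus_{j=1}^M V$ (after absorbing $\zero$ components via the strict unit laws of $\oplus$).

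Next, using associativity, symmetry, naturality and coherence of $(\diagp{U},\bangp{U})$ (axioms \ref{eq:diagp assoc}, \ref{eq:diagp symmetry}, \ref{eq:diagp coherence}, \ref{eq:diagp nat}, \ref{eq:bangp nat}) together with the laws of strict symmetric monoidal categories, I would rewrite $\t_1;\t_2$ as $\diaggen{\vec{p}}{U};(\id{\bigoplus_{i=1}^m U}\oplus \bangp{U})$ for a probability vector $\vec{p}=(p_1,\ldots,p_{m+1})$ with $\sum_{i}p_i=1$, where $p_{m+1}$ absorbs the total weight of the discarded branches. Dually, axioms \ref{eq:codiag assoc}, \ref{eq:codiag unital}, \ref{eq: codiag symmetry}, \ref{eq:codiag nat}, \ref{eq:cobang nat} rewrite $\t_4$ as a permutation followed by $\codiag{V}^m$. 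The middle layer $\t_3$, built only from $\tapeFunct{-}$'s, identities and symmetries, reduces via \eqref{eq:TapeFunctAxioms} and monoidal coherence to a form $\rho;\bigoplus_{i=1}^m\tapeFunct{c_i}$ for arrows $c_i\in\Cat{C}[U,V]$ and a permutation $\rho$. Absorbing the two surrounding permutations into $\vec{p}$ (a harmless reindexing) yields
\[
\t=\diaggen{\vec{p}}{U};\bigl((\bigoplus_{i=1}^m\tapeFunct{c_i})\oplus\bangp{U}\bigr);\codiag{V}^m.
\]

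Finally, whenever $c_i=c_j$ for distinct indices $i\neq j$, I would merge the two branches: by naturality \ref{eq:codiag nat} and functoriality \eqref{eq:TapeFunctAxioms}, the composite $(\tapeFunct{c_i}\oplus\tapeFunct{c_j});\codiag{V}$ equals $\codiag{U};\tapeFunct{c_i}$, and then \ref{eq:diagp idempotency} together with the associativity/renormalization identities of $\diagp{}$ collapses the adjacent pair $(p_i,p_j)$ into a single branch of weight $p_i+p_j$ feeding a single $\tapeFunct{c_i}$. Iterating until all surviving $c_i$ are pairwise distinct produces the form in the statement. The two degenerate subcases correspond to $m=0$, where everything is discarded and $\t=\bangp{U};\cobang{V}=\star_{U,V}$ by \eqref{eq:enrichmentTC}, and to $m=1$ with $p_1=1$, where the chain collapses to $\t=\tapeFunct{c}$.

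The principal technical obstacle is the merging step: the tree-like shape of $\diaggen{\vec{p}}{U}$ given by its inductive definition \eqref{eq:diag vettore in appendice} is not flat, so making two prescribed branches adjacent requires chained applications of \ref{eq:diagp assoc} and \ref{eq:diagp symmetry}, and one must verify that the probability renormalizations match the defining identities $\tilde p=pq$ and $\tilde q=p(1-q)/(1-pq)$. I would isolate this bookkeeping in an auxiliary lemma asserting that $\diaggen{\vec{p}}{U}$ is invariant under permutations of $\vec{p}$ and that adjacent equal-image branches can be collapsed additively; both claims are proved by induction on the length of $\vec{p}$.
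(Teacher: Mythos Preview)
Your proposal is correct and follows essentially the same approach as the paper: start from the pre-normal form $\t_1;\t_2;\t_3;\t_4$, use the co-pca and monoid axioms to reorganize the branching/merging structure into the canonical shapes $\diagvecp{U}$ and $\codiag{V}^n$, and then collapse duplicate $\tapeFunct{c_i}$'s via idempotency \eqref{eq:diagp idempotency} together with naturality of $\codiag{}$. The paper's proof is terser---it moves all symmetries to the right and then collapses duplicates in one pass---whereas you normalize each layer separately before combining; but the underlying mechanism is the same, and your explicit isolation of the tree-reshuffling bookkeeping (the auxiliary lemma on permutation-invariance of $\diaggen{\vec{p}}{U}$ and additive collapse of adjacent equal branches) is a useful clarification of a step the paper leaves implicit.
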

\begin{proof}
By Lemma~\ref{lemma:prenormalform}, $\t= \t_1;\t_2;\t_3;\t_4$.
Using the naturality of $\sigma_{U,V}^{\piu}$ (see Table~\ref{fig:freestricmmoncatax}), one can move all the occurrences of $\sigma_{U,V}^{\piu}$ to the right until an occurrence of $\codiag{V}$ is met. 
Using the associativity and the symmetry of $\diagp{U}$  all multiple occurrences of every $c_i$ and $\star_{U,V}$ can be collected in the form $\diagp{U};(c_i\otimes c_i)$ and then collapsed to one occurrence of $c_i$ by using the idempotency axiom and axioms for $\codiag{U}$. Then again using the associativity and the symmetry of $\diagp{U}$ one can move all the occurrences of in the form of $\diagvecp{U}$ for some vector $\vec{p}=(p_1,\dots,p_n)$, and similarly all occurrences of $\codiag{V}$ in the form of $\codiag{V}^n$.
\end{proof}

\begin{lemma}\label{lemma: arrows A-B subdistributions}
    Arrows in $\CatTapeC$ of the form $\t:U\to V$ are in bijection with $\mathcal{D}_{\le}(\Cat{C}[U,V])$.
\end{lemma}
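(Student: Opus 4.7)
The plan is to establish the bijection by constructing an explicit map $\Psi\colon \Dis(\Cat{C}[U,V]) \to \CatTapeC[U,V]$ whose inverse arises from a canonical functor into $\stmat{\Cat{C}^+}$. For $d \in \Dis(\Cat{C}[U,V])$, define $\Psi(d)$ by cases on the support $\{c_1,\ldots,c_n\}$: if the support is empty, set $\Psi(d) \defeq \star_{U,V} = \bangp{U};\cobang{V}$; if $n\geq 1$, writing $p_i \defeq d(c_i)$ and $p_{n+1} \defeq 1 - \sum_{i=1}^n p_i$, set
\[
\Psi(d) \defeq \diagvecp{U} ; \bigl((\bigoplus_{i=1}^n \tapeFunct{c_i}) \oplus \bangp{U}\bigr) ; \codiag{V}^n
\]
with $\vec{p} = (p_1,\ldots,p_{n+1})$. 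Well-definedness (independence of the enumeration of the support) follows from the symmetry and coherence axioms of $\diagp{}$ and $\codiag{}$ in Tables~\ref{fig:freestrictfccat} and~\ref{fig:freecopcacat}, which make reordering the indices yield equal tapes. Note that Lemma~\ref{lemmaABform} exactly states that every $\t\colon U \to V$ lies in the image of $\Psi$, so $\Psi$ is surjective.

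For the inverse, I would exploit the generator-and-axiom presentation of $\CatTapeC$ to construct a morphism $F\colon \CatTapeC \to \stmat{\Cat{C}^+}$ as follows. By Proposition~\ref{prop:cmatrixcb}, $\stmat{\Cat{C}^+}$ is a convex biproduct category, and by Proposition~\ref{lemma: copca objects in convbicat} every one of its objects carries a natural coherent monoid and co-pca satisfying all axioms of Tables~\ref{fig:freestrictfccat} and~\ref{fig:freecopcacat}. Hence the assignment sending each generator in \eqref{tapesGrammar} to its canonical counterpart in $\stmat{\Cat{C}^+}$ (with $F(\tapeFunct{c})$ the $1{\times}1$ matrix $(1\cdot \delta_c)$, $F(\diagp{U}),F(\bangp{U}),F(\codiag{U}),F(\cobang{U})$ the matrices of \eqref{eq:matmonpca}, and $F(\symmp{P}{Q})$ the matrix in \eqref{eq:matsmc}) extends uniquely to a functor, since every axiom imposed on $\CatTapeC$--SMC laws, monoid axioms, co-pca axioms, and the Tape axioms \eqref{eq:TapeFunctAxioms}--is automatically satisfied in the target. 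Restricting $F$ to homsets $\CatTapeC[U,V]$ yields a map $\Phi\colon \CatTapeC[U,V] \to \stmat{\Cat{C}^+}[U,V]$, and since objects of $\Cat{C}$ are single monomials in $\stmat{\Cat{C}^+}$, the codomain consists of $1{\times}1$ matrices, i.e.~elements of $\Dis(\Cat{C}[U,V])$.

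Finally I would verify $\Phi \circ \Psi = \mathrm{id}$ by direct matrix computation: each building block of $\Psi(d)$ maps under $F$ into the corresponding matrix, and the matrix product $F(\diagvecp{U}) \cdot F((\bigoplus \tapeFunct{c_i}) \oplus \bangp{U}) \cdot F(\codiag{V}^n)$ returns the $1{\times}1$ matrix $(1 \cdot d)$, so $\Phi(\Psi(d)) = d$. Combined with surjectivity of $\Psi$ from Lemma~\ref{lemmaABform}, this forces $\Psi \circ \Phi = \mathrm{id}$ as well, and we conclude that $\Psi$ is a bijection. The main obstacle is the bookkeeping needed to verify that the interpretation $F$ is well-defined on all the compound generators such as $\diagp{P}$ and $\codiag{P}$ for polynomials $P$ (using the inductive definitions in \eqref{eq:indmoncopca} on both sides), and to carry out the matrix multiplication in step three cleanly; everything else reduces to invoking the convex-biproduct structure of $\stmat{\Cat{C}^+}$ established in Proposition~\ref{prop:cmatrixcb}.
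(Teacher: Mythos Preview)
Your approach is correct and genuinely different from the paper's. The paper argues directly from Lemma~\ref{lemmaABform}: each of the three syntactic shapes is matched with a subdistribution, and the proof closes by observing that equal subdistributions force equal normal-form data. What is not made explicit is why the map \emph{arrow $\mapsto$ subdistribution} is well-defined on $\CatTapeC[U,V]$, i.e.\ invariant under all the axioms in Tables~\ref{fig:freestrictfccat}, \ref{fig:freecopcacat}, and \ref{fig:freestricmmoncatax}; the paper effectively treats uniqueness of the normal form as evident.

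Your construction supplies exactly this invariance by realising $\Phi$ as the restriction of a functor $F\colon \CatTapeC \to \stmat{\Cat{C}^+}$ built from the generators-and-relations description. This is legitimate and non-circular: Proposition~\ref{prop:cmatrixcb} and Proposition~\ref{lemma: copca objects in convbicat} are proved without reference to the present lemma or to the later adjunction results, so you may invoke them here even though $F$ coincides, in hindsight, with the isomorphism of Corollary~\ref{cor:isotapematrices}. The cost is that you front-load the check that every tape axiom holds in $\stmat{\Cat{C}^+}$; the benefit is that well-definedness of $\Phi$ is automatic. One small patch: your $\Psi$ applied to a Dirac distribution $\delta_c$ would need $\vec{p}$ with $p_1=1$, but \eqref{eq:diag vettore in appendice} requires $p_i\in(0,1)$; simply set $\Psi(\delta_c)\defeq\tapeFunct{c}$ as a separate clause (matching the second case of Lemma~\ref{lemmaABform}), and $\Phi(\tapeFunct{c})=\delta_c$ then follows directly.
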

\begin{proof}
    By Lemma~\ref{lemmaABform} every arrow $f:U\to V$ is in one of the following forms:
\begin{itemize}
\item $\t=\star_{U,V}$;
\item there exists $c \in \Cat{C}[U,V]$ such that $\t=\tapeFunct{c}$;
\item there exist distinct $c_1, \dots, c_n \in \Cat{C}[U,V]$ and $\vec{p}=p_1, \dots, p_{n}$ such that 
\[\t=\,\diagvecp{U} ; ( (\bigoplus_{i=1}^n  \tapeFunct{c_i}) \oplus\, \bangp{U}  ) ; \codiag{V}^n\text{.}\]
\end{itemize}
The first two cases correspond respectively to the zero subdistribution and the Dirac distribution on $c$. The third case corresponds to the subdistribution $\vec{p}$ on the set of arrows $\{c_1, \dots, c_n\}$, where $p_i$ is the probability of $c_i$. %and $p_{n+1}=1-\sum_{i=1}^{n}p_i$ is the probability of $\Br{\star}_A$.
 If $\t$ and $\t'=\, \diaggen{\vec{q}}{U} ; ( (\bigoplus_{j=1}^{n'}  \tapeFunct{c'_i}) \oplus\, \bangp{U}  ) ; \codiag{V}^{n'}$ are sent to the same subdistribution, then one has that $n=n'$, $p_i=q_i$ and $c_i=c'_i$ for all $i=1,\dots,n$. 
\end{proof}

\begin{lemma}\label{lemma:division}
Let $\t \colon U \to \bigoplus_{i=1}^n V_i$ be an arrow of $\CatTapeC$. Then, 
\begin{itemize}
\item either $\t= \bang{U}; \cobang{\bigoplus_{i=1}^n V_i}$,
\item or there exists $j\in 1\dots n$ and $\t_j \colon U \to V_j$ such that $\t = \cobang{\bigoplus_{i=1}^{j-1} V_i} \oplus \t_j \oplus \cobang{\bigoplus_{i=j+1}^{n} V_i}$,
\item or for all $j\in 1\dots n$, there exists $\s_1\colon U \to \bigoplus_{i=1}^{j} V_i$ and $\s_2 \colon U \to \bigoplus_{i=j+1}^{n} V_i$ such that $\t=\,\diagp; (\s_1\oplus \s_2)$.
\end{itemize}
\end{lemma}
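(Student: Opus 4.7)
The plan is to first reduce $\t$ to a canonical shape by applying Lemma~\ref{lemma:prenormalform} to obtain $\t = \t_1;\t_2;\t_3;\t_4$. Since the source $U$ is a single monomial, the ``fan-out'' stage $\t_1$ -- built only from $\diagp$'s, identities, symmetries, $\oplus$ and $;$ -- can be shown by induction on its size to collapse, modulo axioms \ref{eq:diagp assoc}, \ref{eq:diagp symmetry} and \ref{eq:diagp coherence}, into a single arrow of the form $\diaggen{\vec{p}}{U}: U \to \bigoplus_{i=1}^{k} U$ followed by a permutation of the $k$ copies. Absorbing that permutation into the later stages, I rewrite $\t$ as
\[\t = \diaggen{\vec{p}}{U};\bigoplus_{i=1}^{k}\t^{(i)},\]
where each $\t^{(i)}:U\to \bigoplus_{j=1}^{n}V_j$ is $\diagp$-free. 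A second induction on the structure of such $\diagp$-free arrows, using naturality and coherence of $\bang{},\cobang{}$, shows that every $\t^{(i)}$ is either the zero morphism $\bang{U};\cobang{\bigoplus V_j}$ or factors through a single injection as $\cobang{}\oplus \t^{(i)}_{j(i)}\oplus \cobang{}$ for some index $j(i)$ and some $\t^{(i)}_{j(i)}:U\to V_{j(i)}$.

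The three cases of the lemma then follow by case analysis on the indices $\{j(i)\}$ reached by the nonzero $\t^{(i)}$. If $k=0$ or all $\t^{(i)}$ are zero, then $\t=\bang{U};\cobang{\bigoplus V_i}$ by naturality of $\bang{}$ (axiom~\ref{eq:bangp nat}) and finality of $\zero$, giving the first bullet. If all nonzero $\t^{(i)}$ target the same summand $V_{j_0}$, their contributions can be merged using the natural coherent monoid structure $(V_{j_0},\codiag{V_{j_0}},\cobang{V_{j_0}})$ and equation \eqref{eq: assioma aggiuntivo} into a single arrow $\t_{j_0}:U\to V_{j_0}$, yielding the second bullet. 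Otherwise at least two distinct summands are hit; for any chosen $j\in\{1,\dots,n\}$ I partition $\{1,\dots,k\}$ into $I_1=\{i\mid j(i)\leq j\}$ and $I_2=\{i\mid j(i)>j\}$, both non-empty, use iterated \ref{eq:diagp assoc} and \ref{eq:diagp symmetry} to split $\diaggen{\vec{p}}{U}$ as $\diaggen{q}{U};(\diaggen{\vec{p}_1}{U}\oplus \diaggen{\vec{p}_2}{U})$ with $q=\sum_{i\in I_1}p_i$, and collect the $\t^{(i)}$ on each side into $\s_1$ and $\s_2$; this gives the third bullet.

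The main obstacle I anticipate is the canonicalization of $\t_1$ into the single arrow $\diaggen{\vec{p}}{U}$: the $\diagp$-tree may be of arbitrary shape and interleaved with symmetries and unit strands, so the induction has to simultaneously rotate binary $\diagp$'s into a right-leaning spine by \ref{eq:diagp assoc}, re-order the copies via \ref{eq:diagp symmetry}, and merge interleaved identities using the monoidal laws. Once this canonical shape is achieved, the subsequent case analysis is essentially bookkeeping: partitioning a vector and splitting $\diaggen{\vec{p}}{U}$ at a prescribed cut point using associativity is the same computation that is carried out in the definition of $\diagpn{\vec{p}\;}{U}{n}$ in~\eqref{eq: diagpn cap equational presentation}.
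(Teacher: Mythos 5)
Your overall strategy is sound and genuinely different in execution from the paper's proof: where the paper, for each chosen cut $j$, peels off a \emph{single} top-level $\diagp{}$ by rewriting the monoid stage $\t_4$ of the pre-normal form as $\codiag{P}^x\oplus\codiag{Q}^y$ and propagating the split backwards through $\t_3,\t_2,\t_1$, you canonicalise the entire fan-out stage at once into $\diaggen{\vec{p}}{U}$ followed by single-injection components---in effect proving Lemma~\ref{decomposition} directly and reading off Lemma~\ref{lemma:division} as a corollary. That is a legitimate route, and the spine-rotation you flag as the main obstacle is exactly the kind of rewriting the paper itself performs in Lemma~\ref{lemmaABform}. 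One typing slip: $\diaggen{\vec{p}}{U};\bigoplus_{i=1}^{k}\t^{(i)}$ has codomain $\bigoplus_{i=1}^{k}\big(\bigoplus_{j=1}^{n}V_j\big)$, not $\bigoplus_{j=1}^{n}V_j$, so your decomposition must end with a merge $\codiag{\bigoplus_j V_j}^{\,k}$; since your subsequent merging arguments implicitly use this codiagonal, this is repairable notation rather than a mathematical error.

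The genuine gap is in your third case: the assertion that, for \emph{any} chosen cut $j$, the sets $I_1=\{i\mid j(i)\le j\}$ and $I_2=\{i\mid j(i)>j\}$ are ``both non-empty'' is false. Two distinct summands being hit does not mean they straddle every cut. Take $n=3$ and $\t$ whose normal form is $\diagpX{\frac{1}{2}\;\,}{U};\big((\tapeFunct{c};\iota_2)\oplus(\tapeFunct{d};\iota_3)\big);\codiag{}$, i.e.\ the column vector $\big(0\cdot\star,\ \tfrac{1}{2}\cdot c,\ \tfrac{1}{2}\cdot d\big)$ under the isomorphism of Corollary~\ref{cor:isotapematrices}: at the cut $j=1$ all nonzero components lie in $I_2$, so your split with weight $q=\sum_{i\in I_1}p_i$ degenerates to $q=0$. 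Indeed no factorisation $\t=\diagq{U};(\s_1\oplus\s_2)$ with $q\in(0,1)$ exists for this $\t$ at this cut, since it would force $(1-q)\cdot M(\s_2)$ to have total mass $1$, i.e.\ $M(\s_2)$ to have mass $\tfrac{1}{1-q}>1$, contradicting Lemma~\ref{lemma: arrows A-B subdistributions}. The third bullet (and hence any correct proof of it) must therefore be read with the extended diagonals $\diagpX{0\;\,}{U}=\cobang{U}\oplus\id{U}$ and $\diagpX{1\;\,}{U}=\id{U}\oplus\cobang{U}$ introduced just before Lemma~\ref{decomposition}: when one of $I_1,I_2$ carries no nonzero component, the split is $\t=\cobang{\bigoplus_{i\le j}V_i}\oplus\s_2$, respectively $\s_1\oplus\cobang{\bigoplus_{i>j}V_i}$. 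This is an easy patch---and, to be fair, the paper's own case $\t_1\neq\id{U}$ glides over the same degenerate situation when $x=0$ or $y=0$---but as literally written your ``both non-empty'' step fails, and with it the claimed split of $\diaggen{\vec{p}}{U}$ at an arbitrary cut.
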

\begin{proof}
We have that $\t= \t_1;\t_2;\t_3;\t_4$ where the $\t_i$ are prescribed as by Lemma \ref{lemma:prenormalform}.
We consider two different cases: either $\t_1=\id{U}$ or $\t_1 \neq \id{U}$.

Assume that $\t_1=\id{U}\colon U \to U$. Then $\t_2$ is an arrow with domain $U$ and thus either $\t_2 = \bang{U}$ or $\t_2= \id{U}$. 

\begin{itemize}
\item If $\t_2 = \bang{U} \colon U \to \zero$, then $\t_3$ is an arrow with domain $\zero$. This entails that $\t_3$ should be $\id{\zero}\colon \zero \to \zero$ and $\t_4$ and arrow of type $\zero \to \bigoplus_{i=1}^n V_i$. The latter fact entails that $\t_4 = \cobang{\bigoplus_{i=1}^n V_i}$. Thus $\t= \bang{U}; \cobang{\bigoplus_{i=1}^n V_i}$.
\item If $\t_2 = \id{U} \colon U \to U$, then $\t_3$ should have type $U\to V_j$ and thus it must be either $\id{U}$ or $\tape{c}$. We proceed with the case for $\tape{c}$, the one for $\id{U}$ is identical. Since $\tape{c}\colon U\to V_j$ then $\t_4$ should have type $V_j \to \bigoplus_{i=1}^n V_i$. This entails that $\t_4= \cobang{\bigoplus_{i=1}^{j-1} V_i} \oplus \id{V_j} \oplus \cobang{\bigoplus_{i=j+1}^{n} V_i}$. Thus $\t= \tape{c};\cobang{\bigoplus_{i=1}^{j-1} V_i} \oplus \id{V_j} \oplus \cobang{\bigoplus_{i=j+1}^{n} V_i}$ which, by the laws of symmetric monoidal categories, is equal to $\cobang{\bigoplus_{i=1}^{j-1} V_i} \oplus \tape{c} \oplus \cobang{\bigoplus_{i=j+1}^{n} V_i}$. Thus $\t$ is in the second shape.
\end{itemize}

Let us consider now the case $\t_1 \neq \id{U}$. For any $j\in 1\dots n$, $\t_4$ can be written as $\codiag{P}^x \oplus \codiag{Q}^{y}$ where 
$x,y\in \mathbb{N}$, $P=\bigoplus_{i=1}^{j} V_i$ and $Q=\bigoplus_{i=j+1}^{n} V_i$
and  $\codiag{P}^x\colon \bigoplus_{k=1}^x P \to P$ and $\codiag{Q}^y \colon  \bigoplus_{k=1}^yQ\to Q$ are the arrows defined in (\ref{eq:codiagn in appendice}).

Now $\t_3$ should be have as codomain $ \bigoplus_{k=1}^x P \oplus \bigoplus_{k=1}^yQ$. Since $\t$ has domain $U$, $\t_1; \t_2$ has type $U \to \bigoplus_{i=1}^z U$ for some $z\in \mathbb{N}$. Since the arrows in $\t_3$ preserve the size of domain and codomains we have that, the domain of $\t_3$ can be written as $ \bigoplus_{k=1}^x(\bigoplus_{i=1}^{j} U) \oplus \bigoplus_{k=1}^y(\bigoplus_{i=j+1}^{n} U)$. This fact entails that $\t_3= \t_3^1 \oplus \t_3^2$ for some $\t_3^1 \colon  \bigoplus_{k=1}^x(\bigoplus_{i=1}^{j} U) \to  \bigoplus_{k=1}^x(\bigoplus_{i=1}^{j} V_j)$ and  $\t_3^2 \colon  \bigoplus_{k=1}^y(\bigoplus_{i=1}^{j} U) \to  \bigoplus_{k=1}^y(\bigoplus_{i=j+1}^{n} V_n)$. 

Now $\t_2$ should be have as codomain $ \bigoplus_{k=1}^x(\bigoplus_{i=1}^{j} U) \oplus \bigoplus_{k=1}^y(\bigoplus_{i=j+1}^{n} U)$. Since the arrows in the shape of $\t_2$ can only decrease the size of codomain, $\t_2$ has domain $ \bigoplus_{k=1}^x(\bigoplus_{i=1}^{x'} U) \oplus \bigoplus_{k=1}^y(\bigoplus_{i=1}^{y'} U)$ for some $x'\geq j$ and $y'\geq n-j+1$.
This fact entails that $\t_2=\t_2^1 \oplus \t_2^2$ for some $\t_2^1 \colon \bigoplus_{k=1}^x(\bigoplus_{i=1}^{x'} U) \to  \bigoplus_{k=1}^x(\bigoplus_{i=1}^{j} U)$ and $\t_2^2 \colon \bigoplus_{k=1}^y(\bigoplus_{i=1}^{y'} U) \to  \bigoplus_{k=1}^y(\bigoplus_{i=j+1}^{n} U)$.

Now $\t_1$ has type $U \to \bigoplus_{k=1}^x(\bigoplus_{i=1}^{x'} U) \oplus \bigoplus_{k=1}^y(\bigoplus_{i=1}^{y'} U)$. Note that the latter object is different from $U$ as $\t_1\neq \id{U}$. By using \eqref{eq:diagp assoc} and \eqref{eq:diagp symmetry}, $\t_1$ can be written in the form $\diagp{}; (\t_1^1\oplus \t_1^2)$ for some $\t_1^1\colon U \to \bigoplus_{k=1}^x(\bigoplus_{i=1}^{x'} U)$ and $\t_1^2\colon U \to \bigoplus_{k=1}^y(\bigoplus_{i=1}^{y'} U)$.

In summary, $\t=\diagp{};(\, (\t_1^1;\t_2^1; \t_3^1; \codiag{P}^x) \oplus (\t_1^2;\t_2^2; \t_3^2; \codiag{Q}^y) \,)$ is in the third form.

\end{proof}
\begin{proof}[Proof of Lemma~\ref{decomposition}]
By iteratively applying Lemma~\ref{lemma:division}. 

%If $\t= \bang{U}; \cobang{\bigoplus_{i=1}^n U_i}$, then take all $p_i=0$ and $\t_i = \cobang{U_i}$. Simple computations proves that $\t=\diagpn{\;\vec{p}}{U}{n} ; \bigoplus_{i=1}^n \t_i$. 
%
%If $\t = \cobang{\bigoplus_{i=1}^{j-1} U_i} \oplus \t_j \oplus \cobang{\bigoplus_{i=j+1}^{n} U_i}$, then take $p_j=1$ and, for $i\neq j$, $p_i = 0$ and $\t_i =\cobang{U_i}$. Simple computations proves that $\t=\diagpn{\;\vec{p}}{U}{n} ; \bigoplus_{i=1}^n \t_i$.
%
%In the third case, take $j=n$.
\end{proof}

\subsection{Cancellativity}
In order to prove cancellativity, we first consider the case of arrows of type $U \to V$ where $U,V$ are objects of $\Cat{C}$.
\begin{lemma}\label{lemma:cancellativity11}
For all $r\in (0,1)$, for all $\s,\t \colon U \to V$, 
if $r\cdot \s = r\cdot \t$ then $\s = \t$.
\end{lemma}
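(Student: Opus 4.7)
My plan is to reduce the claim to a cancellation property of subdistributions, exploiting the normal-form machinery already developed in the section.

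The key observation is Lemma~\ref{lemma: arrows A-B subdistributions}: for basic sorts $U,V$ of $\Cat{C}$, arrows $U \to V$ in $\CatTapeC$ are in bijection with subdistributions in $\mathcal{D}_{\le}(\Cat{C}[U,V])$. Concretely, $\star_{U,V}$ corresponds to the null subdistribution, $\tapeFunct{c}$ corresponds to the Dirac $\delta_c$, and an arrow in the third shape of Lemma~\ref{lemmaABform} corresponds to the subdistribution sending each $c_i$ to $p_i$. Write $d_{\s}$ for the subdistribution associated to $\s \colon U \to V$.

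The plan is to show that the operation $r \cdot (-)$ on arrows corresponds, under this bijection, to pointwise scaling of subdistributions, i.e.\ $d_{r \cdot \s}(c) = r \cdot d_{\s}(c)$ for every $c \in \Cat{C}[U,V]$. Once this correspondence is established, the conclusion is immediate: if $r \cdot \s = r \cdot \t$, then $r \cdot d_\s = r \cdot d_\t$ as subdistributions, and since $r \in (0,1)$ is nonzero we may divide both sides to obtain $d_\s = d_\t$, and therefore $\s = \t$ by the bijection.

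To justify the correspondence between $r \cdot \s$ and pointwise scaling, I would split on the three cases of Lemma~\ref{lemmaABform}. If $\s = \star_{U,V}$, both sides are the null subdistribution; if $\s = \tapeFunct{c}$, one unfolds $r \cdot \tapeFunct{c} = \,\diagpX{r}{U} ; (\tapeFunct{c} \oplus \bang{U})$ from Lemma~\ref{lemma:pcdot} and recognises the result as the subdistribution $r \cdot \delta_c$; the general case is obtained by iterating with the axioms \ref{eq:diagp assoc} and \ref{eq:diagp symmetry} that combine nested copies of $\diagp{}$, together with Lemma~\ref{lemma:basicTC}.3 which shows that scaling distributes over the branching $\diagvecp{U}; (\bigoplus_i \tapeFunct{c_i} \oplus \bang{U})$ componentwise.

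The main obstacle I anticipate is the bookkeeping in the third case: computing that $r \cdot (\,\diagvecp{U} ; ((\bigoplus_{i=1}^n \tapeFunct{c_i}) \oplus \bang{U}); \codiag{V}^n)$ reduces, via the pca axioms in Table~\ref{fig:freecopcacat} and the definition of $\diagvecp{U}$ in~\eqref{eq:diag vettore in appendice}, to an arrow in the same normal form but with vector $(r p_1, \dots, r p_n)$. Lemma~\ref{lemma:fractions} and Lemma~\ref{lemma:pcopairing} should handle this cleanly, after which uniqueness of the normal form (the injective direction of Lemma~\ref{lemma: arrows A-B subdistributions}) gives the pointwise equality $d_{r \cdot \s}(c) = r \cdot d_\s(c)$ and hence the cancellation conclusion.
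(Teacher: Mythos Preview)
Your proposal is correct and follows essentially the same approach as the paper: both reduce to cancellativity of the pca $\mathcal{D}_{\le}(\Cat{C}[U,V])$ via the bijection of Lemma~\ref{lemma: arrows A-B subdistributions}. The paper's proof is a two-line appeal to the known fact that $\mathcal{D}_{\le}(X)$ is a cancellative pca, whereas you spell out explicitly that the bijection carries $r\cdot(-)$ to pointwise scaling---a step the paper leaves implicit.
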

\begin{proof}
By Lemma \ref{lemma: arrows A-B subdistributions}, we known $\CatTapeC[U,V]= \subdistr(\Cat{C}[U,V])$. It is well known (see e.g. \cite{sokolova2018termination}) that, for all sets $X$, $\subdistr(X)$ is a cancellative pca.
\end{proof}

Then, we consider the case of arrows of type $U \to \bigoplus_{i=1}^n V_i$ where $U,V_i$ are objects of $\Cat{C}$.
\begin{lemma}\label{lemma:cancellativityhalf}
For all $n \in \mathbb{N}$, for all $r\in (0,1)$, for all $\s,\t \colon U \to \bigoplus_{i=1}^n V_i$, 
if $r\cdot \s = r\cdot \t$ then $\s = \t$.
\end{lemma}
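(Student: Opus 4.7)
The plan is to proceed by induction on $n$. The base cases $n = 0$ (where $\bigoplus_{i=1}^{0} V_i = \zero$ is final, forcing $\s = \t = \bang{U}$) and $n = 1$ (direct application of Lemma~\ref{lemma:cancellativity11}) are immediate.

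For the inductive step ($n \ge 2$), I would perform a binary split $\bigoplus_{i=1}^n V_i = V_1 \oplus W$ with $W = \bigoplus_{i=2}^n V_i$ and invoke Lemma~\ref{decomposition} to write $\s =\, \diagp{U} ; (\s_1 \oplus \s_2)$ and $\t =\, \diagq{U} ; (\t_1 \oplus \t_2)$ for some $p, q \in [0,1]$, with $\s_1, \t_1 \colon U \to V_1$ and $\s_2, \t_2 \colon U \to W$ (boundary values of $p, q$ are handled via the alternative normal forms of Lemma~\ref{lemma:division}). Post-composing $r \cdot \s = r \cdot \t$ with the binary ``projections'' $\id{V_1} \oplus\, \bang{W}$ and $\,\bang{V_1} \oplus \id{W}$, and combining Lemma~\ref{lemma:basicTC}(1--2) with the PCA-enrichment axioms in \eqref{eq:enr}, I arrive at $r \cdot (p \cdot \s_1) = r \cdot (q \cdot \t_1)$ in $\CatTapeC[U, V_1]$ and $r \cdot ((1-p) \cdot \s_2) = r \cdot ((1-q) \cdot \t_2)$ in $\CatTapeC[U, W]$. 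Cancelling $r$ via Lemma~\ref{lemma:cancellativity11} on the first equation (since $V_1 \in \Cat{C}$) and via the inductive hypothesis on the second yields $p \cdot \s_1 = q \cdot \t_1$ and $(1-p) \cdot \s_2 = (1-q) \cdot \t_2$.

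To conclude $\s = \t$, I split into cases. When $p = q$, a further round of cancellation (Lemma~\ref{lemma:cancellativity11} and IH) gives $\s_1 = \t_1$ and $\s_2 = \t_2$, and the two decompositions coincide. When $p \ne q$, say $p < q$, I would use Lemma~\ref{lemma: arrows A-B subdistributions} to reinterpret $p \cdot \s_1 = q \cdot \t_1$ as an equality in $\subdistr(\Cat{C}[U, V_1])$ and extract the valid scalar relation $\t_1 = (p/q) \cdot \s_1$ (valid since $p/q < 1$). Together with the companion identity $\s_2 = ((1-q)/(1-p)) \cdot \t_2$ on the $W$-side, Lemma~\ref{lemma:fractions} (with the naming of $\s$ and $\t$ swapped to align with the direction $p < q$) then closes the case.

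The main obstacle is producing the fractional identity $\s_2 = ((1-q)/(1-p)) \cdot \t_2$ on the compound codomain $W$, since no direct subdistribution description of $\CatTapeC[U, W]$ is yet available. I would resolve this by strengthening the inductive statement to a fractional cancellativity property --- for $0 < \alpha \le \beta \le 1$, $\alpha \cdot \s = \beta \cdot \t$ implies $\t = (\alpha/\beta) \cdot \s$ --- which specialises to ordinary cancellativity at $\alpha = \beta = r$, holds at $n = 1$ because $\subdistr(\Cat{C}[U, V])$ enjoys it, and propagates through exactly the same binary-split argument combined with Lemma~\ref{lemma:fractions}, at the cost of a more careful case analysis on the relative magnitudes of $\alpha p$ versus $\beta q$ (and $\alpha(1-p)$ versus $\beta(1-q)$) to choose the correct direction of the fractional identity on each summand.
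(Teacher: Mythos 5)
Your proposal is correct in outline and, through the decomposition, projection, componentwise-cancellation and case-split steps, it essentially coincides with the paper's own argument (the paper splits the codomain as $\bigl(\bigoplus_{i=1}^{n}V_i\bigr)\oplus V_{n+1}$ rather than $V_1\oplus W$, which is immaterial). The genuine divergence is in how you obtain the fractional identity on the compound summand. The paper needs no strengthened induction: it absorbs the fraction into the scalar being cancelled, rewriting for instance $(1-p)\cdot r\cdot \s_2 = (1-q)\cdot r \cdot \bigl(\tfrac{1-p}{1-q}\cdot \t_2\bigr)$ via Lemma~\ref{lemma:initialproperties}.\ref{lemma:pq} (the law $p\cdot(q\cdot f)=pq\cdot f$), which is a pure consequence of the $\Cat{PCA}$-enrichment and therefore holds for arrows of \emph{every} type --- no subdistribution description of $\CatTapeC[U,W]$ is needed. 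Cancelling the resulting composite scalar (which lies in $(0,1)$) by the \emph{ordinary} induction hypothesis then yields the fractional relation directly; this is precisely why the lemma is stated uniformly ``for all $r\in(0,1)$'': that quantification already makes the IH strong enough, and the ``main obstacle'' you identify dissolves. Your alternative --- strengthening the IH to fractional cancellativity --- is sound in principle (it does hold at $n=1$ by Lemma~\ref{lemma: arrows A-B subdistributions}, as you say), but it buys nothing and costs more: beyond the extra case analysis you acknowledge, the reassembly step is underspecified, since Lemma~\ref{lemma:fractions} as stated concludes only the equality $\s=\t$, whereas the strengthened statement requires concluding $\t=(\alpha/\beta)\cdot\s$; you would thus also need a fractional analogue of Lemma~\ref{lemma:fractions} (provable via Lemma~\ref{lemma:basicTC}.3, but additional work). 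The paper's absorption trick renders all of this unnecessary while keeping the rest of your argument, including the handling of the degenerate normal forms of Lemma~\ref{lemma:division}, intact.
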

\begin{proof}
We proceed by induction on $n$. 

The base case $n=0$, it is trivial since $\bigoplus_{i=1}^n V_i$ is the final object $\zero$.

For the induction case $n+1$, we exploit Lemma~\ref{lemma:division} and we consider the case where
\begin{equation}\label{eq:st}
\s=\, \diagq{} ; (\s_1 \oplus \s_2)   \qquad \text{ and } \qquad \t =\, \diagp{}; (\t_1 \oplus \t_2)
\end{equation}
for $\s_1, \t_1 \colon U \to \bigoplus_{i=1}^n V_i$ and $\s_2,\t_2 \colon U \to V_{n+1}$. The other cases are trivial.

By Lemma \ref{lemma:basicTC}.3, we have that
\begin{equation}\label{eq:simple}
r\cdot \s=\, \diagq{} ; (r\cdot \s_1 \oplus r\cdot \s_2)   \qquad \text{ and } \qquad r\cdot  \t =\, \diagp{}; (r\cdot \t_1 \oplus r\cdot \t_2)
\end{equation}
Since by hypothesis $r\cdot \s = r \cdot \t$, it holds both  
\[r\cdot \s ; (\id{} \oplus \bang{}) = r \cdot \t ; (\id{} \oplus \bang{}) \qquad \text{ and } \qquad r\cdot \s ; ( \bang{} \oplus \id{}) = r \cdot \t ;( \bang{} \oplus \id{})\text{.}\]
Thus, by \eqref{eq:simple} and Lemma~\ref{lemma:basicTC}.1 and 2, one obtains that
\begin{equation}\label{eq:twoeq}q\cdot r \cdot \s_1 = p \cdot r \cdot \t_1 \qquad \text{ and } \qquad (1-q) \cdot r \cdot \s_2 = (1-p) \cdot r \cdot \t_2\text{.}\end{equation}

There are now three cases: $p=q$, $p>q$ and $p<q$.

If $p=q$, then  $\s_1=\t_1$ by induction hypothesis and  $\s_2=\t_2$ by Lemma~\ref{lemma:cancellativity11}. Then, by \eqref{eq:st}, $\s=\t$.

If $p>q$, we can rewrite \eqref{eq:twoeq} as
\begin{equation*}p\cdot r \cdot \frac{q}{p} \cdot \s_1 = p \cdot r \cdot \t_1 \qquad \text{ and } \qquad (1-q) \cdot r \cdot \s_2 = (1-q) \cdot r \cdot \frac{1-p}{1-q} \cdot \t_2\text{.}\end{equation*}
Thus, by induction hypothesis $ \frac{q}{p} \cdot \s_1 = \t_1$ and, by Lemma~\ref{lemma:cancellativity11}, $\s_2 =\frac{1-p}{1-q} \cdot \t_2$. By \eqref{eq:st} and Lemma~\ref{lemma:fractions}, we have that $\s=\t$.
The case $p<q$ is symmetrical
\end{proof}

We can finally consider the general case.

\begin{proof}[Proof of Lemma~\ref{cancellativity}]
Let $P=\bigoplus_{i=1}^n U_i$. We proceed by induction on $n$. The base case $n=0$, it is trivial since $\bigoplus_{i=1}^n U_i$ is the initial object $\zero$.

For the induction case $n+1$, since $\CatTapeC$ has finite coproducts, one can assume without loss of generality that
\begin{equation}\label{eq:s1s2coproduct}
\s = (\s_1 \oplus \s_2) ; \codiag{Q} \qquad \text{ and } \t = (\t_1 \oplus \t_2) ; \codiag{Q}
\end{equation}
for some $\s_2,\t_2 \colon U_{n+1} \to Q$ and  $\s_1,\t_1\colon P' \to Q$ where $P'=\bigoplus_{i=1}^n U_i$. Note moreover that the coproduct universal property entails that
\begin{equation}\label{eq:ifflocal}
\s=\t \text{ iff } \s_1=\t_1 \text{ and } \s_2=\t_2\text{.}
\end{equation}

By Lemma~\ref{lemma:pcopairing}, we have that
$r \cdot \s = (r\cdot \s_1 ) \oplus (r\cdot  \s_2 ); \codiag{Q}$ and $r \cdot \t = (r\cdot \t_1 ) \oplus (r\cdot  \t_2 ); \codiag{Q} $. Thus by \eqref{eq:ifflocal} we have that
 \[r\cdot \s_1= r\cdot \t_1 \qquad \text{ and }\qquad r\cdot \s_2= r\cdot \t_2 \]
 From the rightmost above and Lemma~\ref{lemma:cancellativityhalf}, we derive that $\s_2=\t_2$. From the leftmost above and induction hypothesis, we derive that $\s_1=\t_1$. By \eqref{eq:ifflocal}, it holds that $\s=\t$.
 \end{proof}

%\section{to be moved in the convex biproduct section}

\subsection{Convex Products}
Now, proving $<\t,\s>_{p,1-p}$ is the unique mediating morphism is pretty easy in the case $p\in \{0,1\}$. It follows immediately by the following results.

\begin{lemma}\label{keyLemma001}
Let $\t\colon U \to Q \oplus R$ be an arrow in $\CatTapeC$. 
\begin{enumerate} 
\item If $\t; (\id{Q} \oplus\, \bangp{R}) =\s$
and $\t; (\,\bang{Q} \oplus \id{R}) = \bang{U}; \cobang{R}$ then $\t =\s \oplus \cobang{R}$;
\item If $\t; (\id{Q} \oplus\, \bangp{R}) =\bang{U}; \cobang{Q}$  
and $\t; (\,\bang{Q} \oplus \id{R}) = \s$ then $\t = \cobang{Q} \oplus \s$;
\end{enumerate}
\end{lemma}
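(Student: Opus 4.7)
Both parts are symmetric, so I would prove part~1 in detail and obtain part~2 by swapping the roles of $Q$ and $R$. The plan is to apply Lemma~\ref{lemma:division} to $\t \colon U \to Q \oplus R$, viewing the codomain as a binary split (with $n=2$, $V_1=Q$, $V_2=R$, $j=1$). This yields four canonical forms for $\t$: (i)~$\t = \bang{U};\cobang{Q\oplus R}$, (ii)~$\t = \t_Q \oplus \cobang{R}$ for some $\t_Q\colon U\to Q$, (iii)~$\t = \cobang{Q}\oplus\t_R$ for some $\t_R\colon U\to R$, or (iv)~$\t = \diagp{U};(\t_1\oplus\t_2)$ for some $p\in(0,1)$. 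In each case I would compute $\t;\pi_1$, identify it with $\s$, and verify that $\s\oplus\cobang{R}$ agrees with the given form of $\t$.

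Cases (i) and (ii) are routine: expanding $\t;\pi_1$ using naturality of $\bang{}$ together with the coherence identities $\cobang{Q\oplus R} = \cobang{Q}\oplus\cobang{R}$ and $\cobang{R};\bang{R} = \id{\zero}$ yields the explicit form of $\s$, after which strict-monoidal rewrites such as $(\bang{U};\cobang{Q})\oplus\cobang{R} = \bang{U};\cobang{Q\oplus R}$ deliver the desired equality. Case (iii) actually collapses to case (i): from $\t;\pi_2 = \t_R$ and the second hypothesis, $\t_R = \star_{U,R}$, after which $\t = \cobang{Q}\oplus(\bang{U};\cobang{R}) = \bang{U};\cobang{Q\oplus R}$.

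The substantive step is case (iv), which I expect to be the main obstacle. By Lemma~\ref{lemma:basicTC}.2, $\t;\pi_2 = (1-p)\cdot\t_2$, and the hypothesis forces $(1-p)\cdot\t_2 = \star_{U,R} = (1-p)\cdot\star_{U,R}$ (the last equality by PCA idempotence). Invoking cancellativity (Lemma~\ref{cancellativity}) with the scalar $1-p\in(0,1)$ yields $\t_2 = \star_{U,R} = \bang{U};\cobang{R}$. Substituting back, I would refactor $\t = \diagp{U};(\t_1\oplus(\bang{U};\cobang{R}))$ as $\diagp{U};(\t_1\oplus\bang{U});(\id{Q}\oplus\cobang{R})$ using the functoriality of $\oplus$, and apply Lemma~\ref{lemma:pcdot} to rewrite the first piece as $p\cdot\t_1$. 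A strict-monoidal identification then gives $(p\cdot\t_1);(\id{Q}\oplus\cobang{R}) = (p\cdot\t_1)\oplus\cobang{R}$, while Lemma~\ref{lemma:basicTC}.1 identifies $\s = \t;\pi_1$ with $p\cdot\t_1$, so $\t = \s\oplus\cobang{R}$.

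Part~2 follows by the symmetric argument, using Lemma~\ref{lemma:basicTC}.1 in place of~.2 at the cancellativity step and swapping left/right roles in the coherence rewrites. The only real subtlety throughout is the bookkeeping of strict-monoidal rewrites needed to move between $\diagp{U}$-factored expressions, purely $\oplus$-tensorial expressions such as $\s\oplus\cobang{R}$, and the ``zero'' expression $\bang{U};\cobang{Q\oplus R}$; the genuinely substantive content is concentrated in the single appeal to cancellativity at case~(iv).
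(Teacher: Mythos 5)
Your proposal is correct and follows essentially the same route as the paper's proof: decompose $\t$ via Lemma~\ref{lemma:division}, note that only the form $\t=\,\diagp{U};(\t_1\oplus\t_2)$ is substantive, and there use Lemma~\ref{lemma:basicTC}.2 together with $(1-p)\cdot(\bang{U};\cobang{R})=\bang{U};\cobang{R}$ and cancellativity (Lemma~\ref{cancellativity}) to force $\t_2=\bang{U};\cobang{R}$. The only cosmetic difference is that you spell out the degenerate cases explicitly and run the final rewriting forward from $\t$ to $\s\oplus\cobang{R}$, whereas the paper dismisses those cases as trivial and instead expands $\s\oplus\cobang{R}$ back to $\t$.
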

\begin{proof}
We prove the first point: the second is symmetrical.

First, observe that $\t$ should be in one of the form of Lemma \ref{lemma:division}. The only challenging case is the third form (iii) $\t=\diagp{U}; (\t_1 \oplus \t_2)$ for some $\t_1\colon U \to Q$ and $\t_2:U \to R$. Since
\begin{align*}
\bang{U}; \cobang{R} &= \t; (\,\bang{Q} \oplus \id{R}) \tag{Hypothesis}\\
&=\, \diagp{U}; (\t_1 \oplus \t_2) ; (\,\bang{Q} \oplus \id{R}) \tag{iii}\\
&=\, \diagp{U}; ((\t_1 ; \bang{Q}) \oplus \t_2) \tag{Symmetric Monoidal Category}\\
&=\, \diagp{U} ; (\bang{U} \oplus \t_2) \tag{\ref{eq:bangp nat}}\\
&= (1-p)\cdot \t_2 \tag{Lemma \ref{lemma:basicTC}.2}
\end{align*}
%\marginpar{qui al posto di dire by lemma 41, ho usato cancellatività}then %by Lemma \ref{lemma: arrows A-B subdistributions}, $\t_2 = \bang{U}; \cobang{R}$.
since also $(1-p)\cdot (\bang{U}; \cobang{R})=\bang{U}; \cobang{R}$, by cancellativity (Lemma~\ref{cancellativity}), we have that $\t_2 = \bang{U}; \cobang{R}$.
Thus
\begin{align*}
\s \oplus \cobang{R} & = (\t; (\id{Q} \oplus\, \bangp{R})) \oplus \cobang{R} \tag{hypothesis}\\
&= (\diagp{U}; (\t_1 \oplus \t_2) ; (\id{Q} \oplus\, \bangp{R})) \oplus \cobang{R} \tag{iii}\\
&= (\diagp{U}; (\t_1 \oplus (\t_2 ; \bang{R}))) ; \oplus \cobang{R} \tag{Symmetric Monoidal Category}\\
&= (\diagp{U}; (\t_1 \oplus ( \bang{U}; \cobang{R} ; \bang{R}))) ; \oplus \cobang{R} \tag{$\t_2 = \bang{U}; \cobang{R}$}\\
&= (\diagp{U}; (\t_1 \oplus  \bang{U}) ) \oplus \cobang{R} \tag{\ref{eq:bangp nat}}\\
&=\, \diagp{U}; (\t_1 \oplus  (\bang{U} ; \cobang{R}))\tag{Symmetric Monoidal Category}\\
&=\, \diagp{U}; (\t_1 \oplus  \t_2)\tag{$\t_2 = \bang{U}; \cobang{R}$}\\
&= \t \tag{iii}\\
\end{align*}
\end{proof}

The lemma above is generalised to arbitrary arrows as follows. The proof substantially relies on the fact that $\CatTapeC$ has coproducts.

\begin{lemma}\label{keyLemma01}
Let  $\t\colon P \to Q \oplus R$ be an arrow in $\CatTapeC$ where $P=\bigoplus_{i=1}^nU_i$. 
\begin{enumerate} 
\item If $\t; (\id{Q} \oplus\, \bangp{R}) =\s$
and $\t; (\,\bang{Q} \oplus \id{R}) = \bang{P}; \cobang{R}$ then $\t =\s \oplus \cobang{R}$;
\item If $\t; (\id{Q} \oplus\, \bangp{R}) =\bang{P}; \cobang{Q}$  
and $\t; (\,\bang{Q} \oplus \id{R}) = \s$ then $\t = \cobang{Q} \oplus \s$;
\end{enumerate}
\end{lemma}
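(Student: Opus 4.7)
\vspace{1ex}

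\noindent\textbf{Proof plan for Lemma~\ref{keyLemma01}.}
The idea is to reduce Lemma~\ref{keyLemma01} to its single-object counterpart, Lemma~\ref{keyLemma001}, by exploiting the coproduct structure of $\CatTapeC$. Since $(\CatTapeC,\oplus,\zero)$ has finite coproducts with injections $\iota_i\colon U_i\to P$, an arrow $\t\colon P=\bigoplus_{i=1}^n U_i\to Q\oplus R$ is uniquely determined by its restrictions $\iota_i;\t\colon U_i\to Q\oplus R$ for $i=1,\dots,n$.

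The plan for case (1) is as follows. First I would show that the hypotheses lift along each injection: by precomposition with $\iota_i$, the first hypothesis $\t;(\id{Q}\oplus\bangp{R})=\s$ becomes $(\iota_i;\t);(\id{Q}\oplus\bangp{R})=\iota_i;\s$, while the second hypothesis $\t;(\bangp{Q}\oplus\id{R})=\bang{P};\cobang{R}$ becomes $(\iota_i;\t);(\bangp{Q}\oplus\id{R})=\iota_i;\bang{P};\cobang{R}$, which simplifies to $\bang{U_i};\cobang{R}$ by naturality of $\bangp{}$ (axiom~\ref{eq:bangp nat}). Thus each $\iota_i;\t$ satisfies the hypotheses of Lemma~\ref{keyLemma001} with $\s$ replaced by $\iota_i;\s$, yielding $\iota_i;\t=(\iota_i;\s)\oplus\cobang{R}$.

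Next I would observe that, in a strict symmetric monoidal category, the interchange law gives $(\iota_i;\s)\oplus\cobang{R}=(\iota_i;\s);(\id{Q}\oplus\cobang{R})=\iota_i;\s;(\id{Q}\oplus\cobang{R})=\iota_i;(\s\oplus\cobang{R})$, where the last step again uses interchange. Consequently $\iota_i;\t=\iota_i;(\s\oplus\cobang{R})$ for every $i$, and the universal property of the coproduct $P=\bigoplus_{i=1}^n U_i$ forces $\t=\s\oplus\cobang{R}$. Case (2) is entirely symmetric, swapping the roles of $Q$ and $R$ and of the two projections.

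There is no serious obstacle here: once Lemma~\ref{keyLemma001} is in place, the argument is a routine application of the universal property of coproducts together with naturality of $\bangp{}$ and the interchange law. The only point requiring some care is the bookkeeping between the two displayed forms $(\iota_i;\s)\oplus\cobang{R}$ and $\iota_i;(\s\oplus\cobang{R})$, which is why we single out the interchange step explicitly.
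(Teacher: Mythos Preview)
Your proposal is correct and follows essentially the same approach as the paper: both reduce to Lemma~\ref{keyLemma001} via the coproduct structure of the domain $P=\bigoplus_{i=1}^n U_i$. The only cosmetic difference is that the paper phrases the argument as an induction on $n$, splitting off one summand at a time via the binary copairing $(\t_1\oplus\t_2);\codiag{Q\oplus R}$ and invoking the inductive hypothesis on $\t_1$ together with Lemma~\ref{keyLemma001} on $\t_2$, whereas you invoke the full $n$-ary universal property at once and apply Lemma~\ref{keyLemma001} to each $\iota_i;\t$ directly.
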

\begin{proof}
    The proof proceeds by induction on $n$. The base case $n=0$ is trivial since $\bigoplus_{i=1}^0 U_i$ is the initial object $\zero$ .
  %  For the induction case $n+1$, since $\CatTapeC$ has finite coproducts, then \begin{equation} \label{eq:local11}
%\t_1 = (\s_1^1 \oplus \s_1^2) ; \codiag{Q} \qquad \text{ and } \qquad \t_2 = (\s_2^1 \oplus \s_2^2) ; \codiag{R}
%\end{equation}
%for $\s_1^1 \colon  \bigoplus_{i=1}^nU_i \to Q$, $\s_1^2\colon U_{n+1} \to Q$, $\s_2^1 \colon  \bigoplus_{i=1}^nU_i \to R$, $\s_2^2\colon U_{n+1} \to R$.

For the induction case $n+1$, since $\CatTapeC$ has finite coproducts, we can assume that $\t= (\t_1 \oplus \t_2) ; \codiag{Q\piu R}$ where $\t_1\colon \bigoplus_{i=1}^nU_i\to Q\piu R$ 
and $\t_2\colon U_{n+1} \to Q\piu R$ and $\s= (\s_1 \oplus \s_2) ; \codiag{Q}$ where $\s_1\colon \bigoplus_{i=1}^nU_i\to Q$ and $\s_2\colon U_{n+1} \to Q$.

Now, if $\t; (\id{Q} \oplus\, \bangp{R}) =\s$ it follows that 
\begin{align} \t_1\piu \t_2 ; \codiag{Q\piu R} ; (\id{Q} \oplus\, \bangp{R}) &= \t_1\piu \t_2 ; (\id{Q} \oplus\, \bangp{R}) \piu (\id{Q} \oplus\, \bangp{R}) ; \codiag{Q} \tag{axiom \ref{eq:codiag nat}}
\end{align}
and hence $\t_1;(\id{Q} \oplus\, \bangp{R}) = \s_1$ and $\t_2;(\id{Q} \oplus\, \bangp{R}) = \s_2$. Similarly, from $\t; (\,\bang{Q} \oplus \id{R}) = \bang{P}; \cobang{R}$, using the axiom \ref{eq:codiag nat} and \ref{eq:bangp coherence}, we have that $\t_1;(\bang{Q} \oplus \id{R}) = \bang{\bigoplus_{i=1}^nU_i}; \cobang{Q}$ and $\t_2;(\bang{Q} \oplus \id{R}) = \bang{U_{n+1}}; \cobang{R}$. Hence, by inductive hypothesis, we have that $\t_1 = \s_1 \oplus \cobang{R}$ and, from Lemma \ref{keyLemma001}, $\t_2 = \s_2 \oplus \cobang{R}$. Now, the statement is obtained as follows
\begin{align}
    \t_1\piu \t_2 ; \codiag{Q\piu R} &= (\s_1 \oplus \cobang{R}) \piu (\s_2 \oplus \cobang{R}) ; \codiag{Q\piu R} \tag*{}\\
    &= (\s_1 \oplus \cobang{R}) \piu (\s_2 \oplus \cobang{R})  ; (\id{Q} \piu \sigma_{R,Q} \piu \id{R}) ; \codiag{Q}\piu \codiag{R} \tag{axiom \ref{eq:codiag coherence}}\\
    &= (\s_1\oplus \s_2) ; \codiag{Q}\piu \cobang{R} \tag{axiom \ref{eq:codiag nat}}\\
    &=\s\piu \cobang{R} \notag
\end{align}

\end{proof}

Proving that $<\t_1,\t_2>_{p,1-p}$ is the convex pairing in the case of $p\in(0,1)$ relies on cancellativity. 
We first prove the case for $\t_1\colon U\to P$ and $\t_2\colon U \to Q$ where $U$ is an object of $\Cat{C}$

\begin{lemma}\label{lemma:comvexproductpreliminary}
Let $\t_1\colon U\to P$ and $\t_2\colon U \to Q$ be arrows of $\CatTapeC$ and $p\in (0,1)$. Then $\diagp{U}; (\t_1 \oplus \t_2) \colon U \to P\oplus Q$ is the unique arrow $\t$ such that
$p\cdot \t_1 = \t; (\id{P} \oplus \bang{Q})$ and $(1-p)\cdot \t_2 = \t; (\bang{P} \oplus \id{Q})$.
\end{lemma}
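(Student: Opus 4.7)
The plan is to split the claim into existence (easy) and uniqueness (the real content), and to attack uniqueness via the decomposition theorem of Lemma~\ref{lemma:division} combined with the cancellativity of the enrichment (Lemma~\ref{cancellativity}).

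For existence, I would simply verify the two defining equations for the candidate $\t=\,\diagp{U};(\t_1\oplus \t_2)$ by a direct computation: applying $(\id_{P}\oplus\,\bangp{Q})$ on the right and invoking Lemma~\ref{lemma:basicTC}.1 gives $p\cdot \t_1$, and analogously Lemma~\ref{lemma:basicTC}.2 gives $(1-p)\cdot \t_2$. This is immediate and requires no further machinery.

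For uniqueness, take any $\t\colon U\to P\oplus Q$ satisfying the two equations. By Lemma~\ref{lemma:division}, $\t$ falls into one of three forms. I would first dispose of the two degenerate cases. If $\t=\,\bang_U;\cobang_{P\oplus Q}$, then composing with $(\id_P\oplus\,\bangp{Q})$ and $(\bangp{P}\oplus\id_Q)$ collapses $\t$ to $\star_{U,P}$ and $\star_{U,Q}$ respectively (using coherence $\cobang{P\oplus Q}=\cobang{P}\oplus\cobang{Q}$ and the fact that $\zero$ is a zero object), so the two equations force $p\cdot \t_1=\star_{U,P}$ and $(1-p)\cdot \t_2=\star_{U,Q}$; since $p,1-p\in(0,1)$, cancellativity (Lemma~\ref{cancellativity}) gives $\t_1=\star_{U,P}$, $\t_2=\star_{U,Q}$, and a short computation using naturality of $\bang{}$ and the coherence $\diagp{\zero}=\id_\zero$ shows $\diagp{U};(\star_{U,P}\oplus\star_{U,Q})=\,\bang_U;\cobang_{P\oplus Q}=\t$. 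The two cases $\t=\t'_1\oplus\cobang_Q$ and $\t=\cobang_P\oplus\t'_2$ are handled symmetrically: one projection forces $\t_1=\t'_1$ (modulo the factor $p$, i.e.\ $p\cdot\t_1=\t'_1$), the other forces $(1-p)\cdot\t_2=\star$ hence $\t_2=\star$ by cancellativity, and a computation using Lemma~\ref{lemma:pcdot} and bifunctoriality of $\oplus$ collapses $\diagp{U};(\t_1\oplus\star_{U,Q})$ to $(p\cdot\t_1)\oplus\cobang_Q=\t'_1\oplus\cobang_Q=\t$.

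The main case, and the genuine technical one, is when $\t=\,\diagq{U};(\s_1\oplus\s_2)$ for some $q\in(0,1)$ and $\s_1\colon U\to P$, $\s_2\colon U\to Q$. Here Lemma~\ref{lemma:basicTC}.1 and~\ref{lemma:basicTC}.2 turn the two hypotheses into the scalar equations $q\cdot\s_1=p\cdot\t_1$ and $(1-q)\cdot\s_2=(1-p)\cdot\t_2$. If $p=q$ then cancellativity immediately gives $\s_1=\t_1$ and $\s_2=\t_2$, hence $\t=\,\diagp{U};(\t_1\oplus\t_2)$. Otherwise, without loss of generality assume $q>p$; then writing $p\cdot\t_1=q\cdot(p/q)\cdot\t_1$ and $(1-p)\cdot\t_2=(1-p)\cdot((1-q)/(1-p))\cdot\ldots$ (using Lemma~\ref{lemma:initialproperties}.\ref{lemma:pq}) and applying cancellativity I obtain $\s_1=(p/q)\cdot\t_1$ and $((1-q)/(1-p))\cdot\s_2=\t_2$, which is exactly the hypothesis of Lemma~\ref{lemma:fractions} (after matching the roles of $(\t_1,\t_2)$ and $(\s_1,\s_2)$ as in that lemma). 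Lemma~\ref{lemma:fractions} then yields $\diagp{U};(\t_1\oplus\t_2)=\,\diagq{U};(\s_1\oplus\s_2)=\t$, and the symmetric case $q<p$ is handled identically.

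The main obstacle I anticipate is the bookkeeping in the sub-case $p\neq q$: choosing the right rewriting of the scalar equations so that cancellativity applies and the resulting fractions line up with the precise hypotheses of Lemma~\ref{lemma:fractions}. Everything else is either a routine computation in the monoidal/co-pca fragment or a direct invocation of the preliminary lemmas.
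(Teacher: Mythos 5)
Your proposal is correct and follows essentially the same route as the paper's proof: existence via Lemma~\ref{lemma:basicTC}, and uniqueness by decomposing $\t$ with Lemma~\ref{lemma:division}, extracting the scalar equations, splitting into the cases $p=q$ and $p\neq q$, and closing with cancellativity (Lemma~\ref{cancellativity}) and Lemma~\ref{lemma:fractions}. The only difference is that you spell out the two degenerate forms of Lemma~\ref{lemma:division} explicitly (and correctly), where the paper simply dismisses them as irrelevant.
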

\begin{proof}
First observe that by Lemma \ref{lemma:basicTC}.1 and 2 we have that $\diagp{U}; (\t_1 \oplus \t_2) ; (\id{P} \oplus \bang{Q}) = p\cdot \t_1$ and that $\diagp{U}; (\t_1 \oplus \t_2) ; (\bang{P} \oplus \id{Q}) = (1-p) \cdot \t_2$.

We now need to prove that $\diagp{U}; (\t_1 \oplus \t_2)$ is the unique arrow with such property.  We prove that if (a) $p\cdot \t_1 = \t; (\id{P} \oplus \bang{Q})$ and (b) $(1-p)\cdot \t_2 = \t; (\bang{P} \oplus \id{Q})$ then $\t = \diagp{U}; (\t_1 \oplus \t_2)$.

Note that $\t$ can be in one of the forms of Lemma \ref{lemma:division}. The only relevant case is the last one: $\t =\, \diagpX{P'}{U} ; (\t_1' \oplus 	\t_2')$. By (a), (b) Lemma \ref{lemma:basicTC}.1 and 2, it holds that 
\begin{center}$p\cdot \t_1 = p' \cdot \t_1'$ and  $(1-p)\cdot \t_2 = (1-p') \cdot \t_2'$. \end{center}

There are now three cases:
\begin{itemize}
\item If $p=p'$, then by Lemma \ref{cancellativity}, $\t_1=\t_1'$ and $\t_2=\t_2'$. Thus, $\t = \diagp{U}; (\t_1 \oplus \t_2) $.
\item If $p>p'$, then we can rewrite the above equalities as  $p\cdot \t_1 = p \cdot \frac{p'}{p} \cdot \t_1'$ and $(1-p')\frac{(1-p)}{1-p'}\cdot \t_2 = (1-p') \cdot \t_2'$. Thus by Lemma \ref{cancellativity}, we have that $\t_1=\frac{p'}{p}\t_1'$ and $\frac{1-p}{1-p'} \cdot \t_2 =\t_2'$. Thus, by Lemma \ref{lemma:fractions}, $\t= \diagp{U}; (\t_1 \oplus \t_2)$.
\item The case $p<p'$ is symmetrical to the one above.
\end{itemize}
\end{proof}

The lemma above is generalised to arbitrary arrows as follows. The proof substantially relies on the fact that $\CatTapeC$ has coproducts.

\begin{lemma}\label{lemma:comvexproductalmost}
Let $\t_1\colon \bigoplus_{i=1}^nU_i\to P$ and $\t_2\colon \bigoplus_{i=1}^nU_i \to Q$ be arrows of $\CatTapeC$ and $p\in (0,1)$. Then there exists a unique arrow $\t \colon \bigoplus_{i=1}^n U_i \to P\oplus Q$ such that
$p\cdot \t_1 = \t; (\id{P} \oplus \bang{Q})$ and $(1-p)\cdot \t_2 = \t; (\bang{P} \oplus \id{Q})$.
\end{lemma}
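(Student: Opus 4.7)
\medskip

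\noindent\textbf{Proof plan for Lemma~\ref{lemma:comvexproductalmost}.} The plan is to proceed by induction on $n$, exploiting the fact (recalled immediately after Lemma~\ref{decomposition}) that $\CatTapeC$ has $\oplus$ as a coproduct with injections $\iota_i$, so that every arrow out of $\bigoplus_{i=1}^n U_i$ is determined by its precompositions with the injections. The base case $n=0$ is immediate, since $\bigoplus_{i=1}^0 U_i = \zero$ is initial, so $\t \defeq \cobang{P\oplus Q}$ is the only candidate, and both conditions reduce to equalities between arrows out of $\zero$, which hold by initiality (note that for any $f\colon \zero \to X$, $p\cdot f = f$ since $\cobang{X}$ is the unique such arrow).

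\medskip

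\noindent For the inductive step $n+1$, I would set $P' = \bigoplus_{i=1}^n U_i$ and decompose $\t_1 = [\t_1^a,\t_1^b]$ and $\t_2 = [\t_2^a,\t_2^b]$ along the coproduct $P' \oplus U_{n+1}$, with $\t_k^a \colon P' \to X_k$ and $\t_k^b \colon U_{n+1} \to X_k$. The induction hypothesis applied to $(\t_1^a,\t_2^a)$ produces a unique $\t^a \colon P' \to P\oplus Q$ with
\[p\cdot \t_1^a = \t^a;(\id{P}\oplus \bang{Q}),\qquad (1-p)\cdot \t_2^a = \t^a;(\bang{P}\oplus \id{Q}),\]
and Lemma~\ref{lemma:comvexproductpreliminary} applied to $(\t_1^b,\t_2^b)$ produces the analogous $\t^b\colon U_{n+1}\to P\oplus Q$. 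Setting $\t \defeq [\t^a,\t^b]$, I would verify the two required equalities by precomposing with each injection: the key small calculation is that precomposition with $\iota_i$ commutes with scalar multiplication, i.e.\ $\iota_i;(p\cdot f) = (\iota_i;f) +_p (\iota_i;\star) = p\cdot (\iota_i;f)$ by \eqref{eq:enr}, so that $\iota;\bigl(\t;(\id{P}\oplus \bang{Q})\bigr) = \t^{\bullet};(\id{P}\oplus\bang{Q}) = p\cdot \t_1^{\bullet} = p\cdot(\iota;\t_1) = \iota;(p\cdot \t_1)$ for $\iota \in \{\iota_{P'},\iota_{U_{n+1}}\}$. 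Coproduct universality then yields $\t;(\id{P}\oplus\bang{Q}) = p\cdot \t_1$, and symmetrically for the other equation.

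\medskip

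\noindent Uniqueness follows by the same strategy in reverse: if $\t'\colon P'\oplus U_{n+1}\to P\oplus Q$ satisfies the two equalities, precomposing with $\iota_{P'}$ (respectively $\iota_{U_{n+1}}$) and using the calculation above shows that $\iota_{P'};\t'$ satisfies the hypotheses of the induction hypothesis for $(\t_1^a,\t_2^a)$, hence $\iota_{P'};\t' = \t^a = \iota_{P'};\t$; analogously $\iota_{U_{n+1}};\t' = \t^b = \iota_{U_{n+1}};\t$ by Lemma~\ref{lemma:comvexproductpreliminary}. Coproduct universality gives $\t' = \t$. There is no real obstacle here: once Lemma~\ref{lemma:comvexproductpreliminary} and the basic compatibility of $\Cat{PCA}$-enrichment with coproducts are in hand, the argument is a mechanical lifting along the copairing decomposition, and the only point that needs to be stated carefully is the commutation $\iota_i;(p\cdot f) = p\cdot(\iota_i;f)$ that makes scalars pass through copairings.
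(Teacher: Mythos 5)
Your proposal is correct and matches the paper's proof in all essentials: the same induction on $n$, the same coproduct decomposition of $\t_1,\t_2$ into components on $\bigoplus_{i=1}^n U_i$ and $U_{n+1}$, the same candidate $\t=[\t^a,\t^b]$ (the paper writes it explicitly as $\t^a \oplus (\diagp{U_{n+1}};(\s_1^2\oplus\s_2^2));\codiag{P\oplus Q}$), and the same uniqueness argument combining the induction hypothesis with Lemma~\ref{lemma:comvexproductpreliminary}. The only cosmetic difference is that you push scalars through copairings by precomposing with injections and invoking the enrichment equations \eqref{eq:enr} (which is legitimate, since the $\Cat{PCA}$-enrichment of $\CatTapeC$ is established before this lemma), whereas the paper performs the analogous computation directly via naturality of $\codiag{}$ and $\bang{}$ together with its Lemma~\ref{lemma:pcopairing}.
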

\begin{proof}
The proof proceed by induction on $n$. The case $n=0$ is trivial since $\bigoplus_{i=1}^0U_i$ is the initial object $\zero$.

We consider the case $n+1$. Since $\CatTapeC$ has coproducts then 
\begin{equation} \label{eq:local2}
\t_1 = (\s_1^1 \oplus \s_1^2) ; \codiag{P} \qquad \text{ and } \qquad \t_2 = (\s_2^1 \oplus \s_2^2) ; \codiag{Q}
\end{equation}
for $\s_1^1 \colon  \bigoplus_{i=1}^nU_i \to P$, $\s_1^2\colon U_{n+1} \to P$, $\s_2^1 \colon  \bigoplus_{i=1}^nU_i \to Q$, $\s_2^2\colon U_{n+1} \to Q$.

By induction hypothesis there exists a unique arrow $\t \colon  \bigoplus_{i=1}^nU_i \to P \oplus Q$ such that
\begin{equation}\label{eq:local3}
\t; (\id{P} \oplus \bang{Q}) = p\cdot \s_1^1 \qquad \text{ and } \qquad \t; (\bang{P} \oplus \id{Q}) = (1-p)\cdot \s_2^1
\end{equation}

We now focus on  $\t \oplus ( \diagp{U_{n+1}} ; (\s_1^2 \oplus \s_2^2) ); \codiag{P\oplus Q} \colon \bigoplus_{i=1}^{n+1} U_i \to P\oplus Q$.
\begin{align*}
 & \t \oplus ( \diagp{U_{n+1}} ; (\s_1^2 \oplus \s_2^2) ); \codiag{P\oplus Q} ; (\id{P} \oplus \bang{Q}) \\& = ( \, (\t; (\id{P} \oplus \bang{Q})) \oplus (\diagp{U_{n+1}} ; (\s_1^2 \oplus \s_2^2) ; (\id{P} \oplus \bang{Q})) \,) ; \codiag{P\oplus Q} \tag{nat. of $\codiag{P\oplus Q}$}\\
&= ( \, (\t; (\id{P} \oplus \bang{Q})) \oplus (\diagp{U_{n+1}} ; (\s_1^2 \oplus \bang{U_{n+1}})) \,) ; \codiag{P\oplus Q} \tag{nat. of $\bang{}$}\\
&= ( \, (\t; (\id{P} \oplus \bang{Q})) \oplus p\cdot \s_1^2  \,) ; \codiag{P\oplus Q} \tag{Lemma \ref{lemma:pcdot}}\\
&= ( \, p\cdot \s_1^1 \oplus p\cdot \s_1^2  \,) ; \codiag{P\oplus Q} \tag{\ref{eq:local3}}\\
&=  p\cdot  ( \, ( \s_1^1 \oplus \s_1^2 ) ; \codiag{P\oplus Q} \, ) \tag{Lemma \ref{lemma:pcopairing}}\\
&=  p\cdot  \t_1 \tag{\ref{eq:local2}}\\
\end{align*}
Similar computations proves that $\t \oplus ( \diagp{U_{n+1}} ; (\s_1^2 \oplus \s_2^2) ); \codiag{P\oplus Q} ; (\bang{P} \oplus \id{Q}) = (1-p) \cdot \t_2$.

Now we need to prove that $\t \oplus ( \diagp{U_{n+1}} ; (\s_1^2 \oplus \s_2^2) ); \codiag{P\oplus Q}$ is the unique arrows with such property.
Assume that $\t'\colon \bigoplus_{i=1}^{n+1} U_i \to P\oplus Q$ is such that
\begin{equation}\label{eq:local4}
\t'; (\id{P} \oplus \bang{Q}) = p\cdot  \t_1 \qquad \text{ and }\qquad \t'; (\bang{P} \oplus \id{Q}) = (1-p)\cdot  \t_2
\end{equation}
Since $\CatTapeC$ has coproducts $\t' = (\t_1' \oplus \t_2' ); \codiag{P\oplus Q}$ for some $\t_1'\colon \bigoplus_{i=1}^{n} U_i \to P\oplus Q $ and $\t_2'\colon U_{n+1} \to P \oplus Q$.
 By using naturality of $\codiag{P\oplus Q}$ one readily has
 \begin{equation}\label{eq:local5}
 \t' ; (\id{P} \oplus \bang{Q}) = (\, (\t_1'; (\id{P} \oplus \bang{Q})) \oplus (\t_2' ; (\id{P} \oplus \bang{Q})) \,); \codiag{P\oplus Q}\text{.}
 \end{equation}
% and
% \begin{equation*}
% \t' ; (\bang{P} \oplus \id{Q}) = (\, (\t_1'; (\bang{P} \oplus \id{Q})) \oplus (\t_2' ; (\bang{P} \oplus \id{Q})) \,); \codiag{P\oplus Q}
% \end{equation*}
 Now, it holds that
 \begin{align*}
  (\, (\t_1'; (\id{P} \oplus \bang{Q})) \oplus (\t_2' ; (\id{P} \oplus \bang{Q})) \,); \codiag{P\oplus Q} & = \t' ; (\id{P} \oplus \bang{Q}) \tag{\ref{eq:local5}} \\
 &= p\cdot  \t_1 \tag{\ref{eq:local4}}\\
 &= ( \, p\cdot \s_1^1 \oplus p\cdot \s_1^2  \,) ; \codiag{P\oplus Q} \tag{derivation above}
 \end{align*}
 and thus, by the universal property of coproducts, we can conclude that
 \begin{equation}\label{eq:local6}
  \t_1'; (\id{P} \oplus \bang{Q}) = p\cdot \s_1^1 \qquad \text{ and } \qquad \t_2' ; (\id{P} \oplus \bang{Q}) = p\cdot \s_1^2\text{.}
 \end{equation}
 With the symmetric argument, one obtains that
 \begin{equation}\label{eq:local7}
  \t_1'; (\bang{P} \oplus \id{Q}) = (1-p)\cdot \s_2^1 \qquad \text{ and } \qquad \t_2' ; (\bang{P} \oplus \id{Q}) = (1-p)\cdot \s_2^2\text{.}
 \end{equation}
Now, by the leftmost equations in \eqref{eq:local6} and \eqref{eq:local7} and the fact that $\t$ is the unique satisfying \eqref{eq:local3}, we obtain that $\t_1' = t$.

By Lemma~\ref{lemma:comvexproductpreliminary}, $\diagp{U_{n+1}} ; (\s_1^2 \oplus \s_2^2)$ is the unique arrow $h\colon  U_{n+1} \to P \oplus Q $ such that $h; (\id{P} \oplus \bang{Q}) = p\cdot \s_1^2$ and  $h; (\bang{P} \oplus \id{Q}) = (1-p)\cdot \s_2^2$. Thus, the rightmost equations in \eqref{eq:local6} and \eqref{eq:local7} entails that $\t_2'= \diagp{U_{n+1}} ; (\s_1^2 \oplus \s_2^2)$. This proves that $\t'= \t \oplus ( \diagp{U_{n+1}} ; (\s_1^2 \oplus \s_2^2) ); \codiag{P\oplus Q}$.
\end{proof}

We can finally state our main result.
\begin{proposition}\label{prop:TChasconvexcoprduct}
$\CatTapeC$ has convex products.
\end{proposition}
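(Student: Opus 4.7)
The plan is to reduce everything to the binary case. By Lemma~\ref{lemma:naryconvexproduct}, once I establish that every $P_1 \oplus P_2$ is a binary convex product (with projections $\pi_1 \defeq \id{P_1} \oplus \bang{P_2}$ and $\pi_2 \defeq \bang{P_1} \oplus \id{P_2}$) and that $\zero$ is final, the $n$-ary case follows by induction. Finality of $\zero$ is immediate from naturality of $\bang{}$ (axiom \ref{eq:bangp nat}), so I focus on the binary case.

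Given $\t_i \colon P \to Q_i$ and weights $p_1, p_2 \in [0,1]$ with $p_1 + p_2 \leq 1$, I will split into three regimes. First, the boundary cases $(p_1, p_2) \in \{(1,0), (0,1)\}$: take $\langle \t_1, \t_2\rangle_{1,0} \defeq \t_1 \oplus \cobang{Q_2}$ and $\langle \t_1, \t_2\rangle_{0,1} \defeq \cobang{Q_1} \oplus \t_2$. The projection equations reduce to $\cobang{}; \bang{} = \star$ and unitality, and uniqueness is exactly the content of Lemma~\ref{keyLemma01}. Second, the \emph{probabilistic} regime $p_1 + p_2 = 1$ with $p_1 \in (0,1)$: set $\langle \t_1, \t_2\rangle_{p,1-p} \defeq \diagp{P}; (\t_1 \oplus \t_2)$. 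The projection equations follow from Lemma~\ref{lemma:basicTC}.1 and~\ref{lemma:basicTC}.2, and uniqueness is precisely Lemma~\ref{lemma:comvexproductalmost}, which in turn was bootstrapped from the single-source case (Lemma~\ref{lemma:comvexproductpreliminary}) using the coproduct decomposition.

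Third, the genuinely \emph{sub-probabilistic} regime $r \defeq p_1 + p_2 < 1$. The degenerate sub-case $r = 0$ is handled by $\langle \t_1, \t_2\rangle_{0,0} \defeq \bang{P}; \cobang{Q_1 \oplus Q_2}$, whose projections are both $\star$ as required. For $r \in (0,1)$, I will define $\langle \t_1, \t_2\rangle_{p_1, p_2} \defeq r \cdot \langle \t_1, \t_2\rangle_{p_1/r, p_2/r}$, reusing the already constructed arrow for weights summing to one. The projection equations then follow from functoriality of scalar multiplication: $(r \cdot k);\pi_i = r \cdot (k;\pi_i) = r \cdot ((p_i/r)\cdot \t_i) = p_i \cdot \t_i$, using Lemma~\ref{lemma:initialproperties}.6 and~\ref{lemma:initialproperties}.7.

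The main obstacle is uniqueness in the sub-probabilistic regime. Given any $h' \colon P \to Q_1 \oplus Q_2$ with $h';\pi_i = p_i \cdot \t_i$, I will invoke Lemma~\ref{decomposition} to decompose $h' = \diagpn{\vec{q}}{P}{2};(h_1 \oplus h_2)$ for some $\vec{q} = (q_1, q_2)$ with $q_1 + q_2 \leq 1$, which yields $q_i \cdot h_i = p_i \cdot \t_i$ by the identity recalled in the proof of Theorem~\ref{thm:TCconvexbiproductcategory}. A case analysis on whether $q_i = p_i$ or not, combined with the cancellativity of Lemma~\ref{cancellativity} and a sub-probabilistic analogue of Lemma~\ref{lemma:fractions} (obtainable by the same rewiring of $\diagp{}; (\id{} \oplus \diagq{})$ used there, now accounting for the extra $\bang{}$-summand produced when $q_1 + q_2 < 1$), forces $h'$ to agree with the candidate. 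This last step is the most delicate one, but it is a mechanical extension of the arguments already in place for the probabilistic case.
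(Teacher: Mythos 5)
Your treatment of the boundary cases and of the regime $p_1+p_2=1$ matches the paper exactly (Lemma~\ref{keyLemma01} and Lemma~\ref{lemma:comvexproductalmost}), but your uniqueness argument in the sub-probabilistic regime has a genuine gap, and it is created by your own choice of reduction. Defining $\langle \t_1,\t_2\rangle_{p_1,p_2} \defeq r\cdot\langle \t_1,\t_2\rangle_{p_1/r,\,p_2/r}$ does give existence, but uniqueness cannot be transferred this way: an arbitrary competitor $h'$ satisfying $h';\pi_i = p_i\cdot \t_i$ is not a priori of the form $r\cdot(-)$, and there is no ``division by $r$'' available. You are therefore forced into the from-scratch argument you sketch, and that sketch does not go through as written. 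First, Lemma~\ref{decomposition} is stated only for arrows whose \emph{domain} is a single object $U\in\ob{\Cat{C}}$, whereas you apply it to $h'\colon P\to Q_1\oplus Q_2$ with $P$ an arbitrary polynomial; repairing this requires the copairing reduction $h'=[\iota_1;h',\dots,\iota_n;h']$, after which the component-wise weight vectors $\vec{q}$ can differ across components --- exactly the bookkeeping that makes the proof of Lemma~\ref{lemma:comvexproductalmost} long, and which you would be redoing. Second, the ``sub-probabilistic analogue of Lemma~\ref{lemma:fractions}'', which carries the entire case analysis, is asserted to be mechanical rather than proved. Third, uniqueness in the degenerate case $r=0$ is never addressed (it does follow from Lemma~\ref{keyLemma01} with $\s=\star$, but it needs saying).

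The missing idea --- and it is how the paper disposes of the whole case in two lines --- is to rewrite the \emph{constraints} rather than scale the pairing: since $q\cdot g = (1-p)\cdot\bigl(\tfrac{q}{1-p}\cdot g\bigr)$ by Lemma~\ref{lemma:initialproperties}.\ref{lemma:pq}, the equations $h;\pi_1 = p\cdot f$ and $h;\pi_2 = q\cdot g$ are \emph{literally} the mediating-arrow equations for the pair $\bigl(f,\tfrac{q}{1-p}\cdot g\bigr)$ with weights $(p,1-p)$. One can therefore set $\langle f,g\rangle_{p,q} \defeq \bigl\langle f,\tfrac{q}{1-p}\cdot g\bigr\rangle_{p,1-p}$, and both existence \emph{and} uniqueness are inherited instantly from the case $p+q=1$ already established, with no new decomposition, cancellativity argument, or fractions-style lemma needed. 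Your own toolbox already contains everything required for this repair; substituting it for your scaling reduction closes the gap.
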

\begin{proof}
We need to prove that for all arrows $\t_1\colon R\to P$ and $\t_2\colon R \to Q$ and for all $p,q\in[0,1]$ such that $p+q\leq 1$, there exists a unique $\t\colon R \to P\oplus Q$ such that $\t;\pi_1 =p\cdot \t_1$ and $\t;\pi_2 = q \cdot \t_2$.

The case when either $p=1$ or $q=1$ is proved by Lemma~\ref{keyLemma01}.
The case when $p,q\in (0,1)$ and $p+q=1$ is proved by Lemma~\ref{lemma:comvexproductalmost}.
The case when $p,q\in (0,1)$ and $p+q<1$ easily follows from the previous case: if $p\geq 1-q$ one can take $\langle f,g \rangle_{p,q} = \langle f, \frac{q}{1-p} \cdot g \rangle_{p,1-p}$.
If $p\leq 1-q$ one can take $\langle f,g \rangle_{p,q} = \langle \frac{p}{1-q} \cdot f, g \rangle_{1-q,q}$.
\end{proof}

\subsection{Proof of Proposition \ref{cor: quotient category is convex biproduct }}
We conclude this appendix with a proof of Proposition  \ref{cor: quotient category is convex biproduct } that turns out to be useful when considering diagrammatic languages (see Section \ref{sec:probboolcircuits}). First, we need the following result.

\begin{proposition}\label{prop: convex products quoziente}
    Given a category $\Cat{C}$ and a congruence relation $\sim$ on $\CatTapeC$, then the functor $Q_\sim\colon\CatTapeC\to \CatTapeC_\sim$ satisfies the following property: if $[p_1\cdot f_1]_\sim =[q_1\cdot g_1]_\sim$ and $[p_2\cdot f_2]_\sim =[q_2\cdot g_2]_\sim$  then $[\langle f_1,f_2\rangle_{(p_1,p_2)}]_\sim= [\langle g_1,g_2\rangle_{(q_1,q_2)}]_\sim$.
\end{proposition}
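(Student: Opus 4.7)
The plan is to leverage the uniqueness of convex pairings in $\CatTapeC$ (Theorem~\ref{thm:TCconvexbiproductcategory}) together with the fact that $\sim$, being a congruence with respect to $;$ and $\oplus$, automatically respects the derived pca-enrichment operations $+_r$ and $r \cdot (-)$: by \eqref{eq:enrichmentTC} both unfold into composites of the structural maps $\diagp{}, \codiag{}, \bang{}$, and $\cobang{}$.

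The first step is to realise the pairing as a pcm-style sum whose value in $\CatTapeC$ manifestly depends only on the pair of projections. Writing $h_i := p_i \cdot f_i$ and $h_i' := q_i \cdot g_i$, Lemma~\ref{lemma:initialproperties}.\ref{lemma:p;} gives $h_i;\iota_i = p_i \cdot (f_i;\iota_i)$, so the two arrows $h_1;\iota_1, h_2;\iota_2 \colon A \to X_1 \oplus X_2$ are pcm-compatible (their ``masses'' sum to $p_1 + p_2 \le 1$). A direct computation, analogous to the one carried out in the proof of Lemma~\ref{lemma:comvexproductpreliminary}, would show
\[
\langle f_1, f_2 \rangle_{(p_1, p_2)} \;=\; (f_1;\iota_1) \,+_{p_1}\, \bigl( (f_2;\iota_2) \,+_{p_2/(1-p_1)}\, \star \bigr),
\]
since composing with $\pi_1$ or $\pi_2$ and using Lemma~\ref{lemma:basicTC} together with the naturality of $\bang{}$ and $\cobang{}$ recovers $h_1$ and $h_2$ respectively, so uniqueness in $\CatTapeC$ justifies the identity. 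Symmetrically, $\langle g_1, g_2\rangle_{(q_1, q_2)}$ admits the same shape of expression with $(g_i, q_i)$ in place of $(f_i, p_i)$.

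The second step is to compare the two expressions via $\sim$. Functoriality of $Q_\sim$ combined with the hypothesis $[h_i]_\sim = [h_i']_\sim$ yields $[h_i;\iota_i]_\sim = [h_i';\iota_i]_\sim$; the right-hand side of the displayed identity is then assembled from these arrows using only the pca-enrichment operations $+_r$, $r \cdot (-)$, and $\star$, all of which (as noted above) are $\sim$-respecting, so chaining congruences yields $[\langle f_1, f_2 \rangle_{(p_1, p_2)}]_\sim = [\langle g_1, g_2 \rangle_{(q_1, q_2)}]_\sim$.

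The hard part will be that the explicit tape expression for the pcm sum depends syntactically on the chosen decomposition $h_i = p_i \cdot f_i$: two different decompositions of the same $h_i$ give distinct-looking tape expressions whose values in $\CatTapeC$ agree (by the uniqueness just invoked) but whose $\sim$-classes do not, a priori, coincide through the formula alone. I would resolve this by combining Lemma~\ref{decomposition} with cancellativity (Lemma~\ref{cancellativity}) to normalise any representative of $h_i$ to a canonical form within $\CatTapeC$ before quotienting, so that the congruence argument of the previous paragraph applies uniformly across all admissible choices of $(f_i, p_i)$.
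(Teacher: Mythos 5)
Your Step~1 is sound: the identity $\langle f_1,f_2\rangle_{(p_1,p_2)} = (f_1;\iota_1)+_{p_1}\bigl((f_2;\iota_2)+_{p_2/(1-p_1)}\star\bigr)$ is just the $n$-ary form of \eqref{eq: assioma aggiuntivo}, and the projection check via the uniqueness in Theorem~\ref{thm:TCconvexbiproductcategory} goes through. The genuine gap is Step~2. The displayed expression is \emph{not} assembled from the arrows $h_i;\iota_i$ by enrichment operations: it is assembled from $f_i;\iota_i$ and the scalars $p_i$ \emph{separately}, i.e., from a chosen decomposition of $h_i$. The hypothesis only identifies the composites $[p_i\cdot f_i]_\sim=[q_i\cdot g_i]_\sim$ and gives no $\sim$-relation between $f_i$ and $g_i$, nor between $p_i$ and $q_i$, so there is nothing to chain. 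You correctly flag this as the hard part, but your proposed repair misdiagnoses it: the obstruction is not that a single arrow $h_i$ of $\CatTapeC$ admits several decompositions (that well-definedness does hold \emph{inside} $\CatTapeC$, by Lemma~\ref{cancellativity} and the uniqueness of pairings); it is that $p_i\cdot f_i$ and $q_i\cdot g_i$ are genuinely \emph{different} arrows of $\CatTapeC$, identified only by $\sim$. Since $\sim$ is an arbitrary $(;,\oplus)$-congruence, it need not respect any normal form you fix in $\CatTapeC$, and cancellativity is a property of $\CatTapeC$ that does not descend to the quotient. Hence ``normalising before quotienting'' can never connect the two sides.

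The paper's proof takes a different route that works directly with the syntactic normal form of Theorem~\ref{thm:TCconvexbiproductcategory}: it writes $\langle f_1,f_2\rangle_{(p_1,p_2)}=\,\diagpX{p_1}{};\bigl(f_1\oplus(\diagpX{\frac{p_2}{1-p_1}}{};(f_2\oplus\,\bangp{}))\bigr)$, introduces the \emph{mixed} pairing $\langle f_1,g_2\rangle_{(p_1,q_2)}$ (choosing per coordinate the representative with the smaller weight, WLOG $p_1\le q_1$ and $q_2\le p_2$), and connects each given pairing to it by substituting one hypothesis at a time inside this nested expression, using the axioms \eqref{eq:diagp assoc} and \eqref{eq:diagp symmetry} to re-pivot so that the substituted subterm sits where the hypothesis applies. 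Be aware, however, that even on this route there is a delicate point which the paper justifies only tersely: the substituted subterm is $\frac{p_2}{1-p_1}\cdot f_2$ versus $\frac{q_2}{1-p_1}\cdot g_2$, i.e., the hypothesis rescaled by $\frac{1}{1-p_1}\ge 1$. Rescaling by $\lambda\in[0,1]$ is a congruence context (this is the preliminary observation in the paper's proof), but scaling \emph{up} is not, so this step needs a separate argument — and it is precisely the spot where your instinct to invoke cancellativity would have to act in the quotient, where it is unavailable. Any complete write-up, yours or the paper's, must confront this rescaling step explicitly rather than fold it into the congruence property.
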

\begin{proof}\label{proof-prop: convex products quoziente}
    Before proceeding with the proof, we make the trivial observation that if $[f]_\sim=[g]_\sim$ then $[p\cdot f]_\sim=[p\cdot g]_\sim$ for all $p\in [0,1]$. Indeed,
\begin{align}
    p\cdot f&=\, \diagpX{p}{}; (f\piu\, \bangp{}) \tag{Thm. \ref{thm:TCconvexbiproductcategory}}\\
    &\sim \, \diagpX{p}{}; (g\piu\, \bangp{}) \tag{by $[f]_\sim=[g]_\sim$}\\
    &=\, p\cdot g \notag
\end{align}
hence $[p\cdot f]_\sim=[p\cdot g]_\sim$.
The cases with $p=0$ and $p=1$ are trivial.
  Now, define  $r_i\defeq \begin{cases}f_i &\text{if }p_i\leq q_i\\ g_i & \text{else} \end{cases}$, $u_i\defeq \begin{cases}p_i &\text{if }p_i\leq q_i\\ q_i & \text{else} \end{cases}$ and consider $\langle r_1,r_2\rangle_{(u_1,u_2)}$. We prove that $[\langle f_1,f_2\rangle_{(p_1,p_2)}]_\sim=[\langle r_1,r_2\rangle_{(u_1,u_2)}]_\sim=[\langle g_1,g_2\rangle_{(q_1,q_2)}]_\sim$. Without loss of generality, we can assume that $p_1\leq q_1$  and $q_2\leq p_2$, hence $\langle r_1,r_2\rangle_{(u_1,u_2)}=\langle f_1,g_2\rangle_{(p_1,q_2)}$.  The first equality is obtained as follows:
  \begin{align*}
        \langle f_1,f_2\rangle_{(p_1,p_2)}&=\, \diagpX{p_1}{}; (f_1\piu\, (\diagpX{\frac{p_2}{1-p_1}}{};(f_2\piu\, \bangp{})) ) \tag{Thm. \ref{thm:TCconvexbiproductcategory}}\\
        &\sim \, \diagpX{p_1}{}; (f_1\piu\, (\diagpX{\frac{q_2}{1-p_1}}{};(g_2\piu\, \bangp{})) ) \tag{by $[p_2\cdot f_2]_\sim =[q_2\cdot g_2]_\sim$}\\
        &=\, \langle f_1,g_2\rangle_{(p_1,q_2)}
  \end{align*}

hence $[\langle f_1,f_2\rangle_{(p_1,p_2)}]_\sim=[\langle f_1,g_2\rangle_{(p_1,q_2)}]_\sim$. The second equality is obtained as follows:
\begin{align*}
    \langle g_1,g_2\rangle_{(q_1,q_2)}&=\, \diagpX{q_1}{}; (g_1\piu\, (\diagpX{\frac{q_2}{1-q_1}}{};(g_2\piu\, \bangp{})) ) \tag{Thm. \ref{thm:TCconvexbiproductcategory}}\\
    &= \, \diagpX{q_2}{};(g_2\piu\, (\diagpX{\frac{q_1}{1-q_2}}{};(g_1\piu\, \bangp{})) ) \tag{\ref{eq:diagp assoc} + \ref{eq:diagp symmetry}}\\
    &\sim \, \diagpX{q_2}{};(g_2\piu\, (\diagpX{\frac{p_1}{1-q_2}}{};(f_1\piu\, \bangp{})) ) \tag{by $[p_1\cdot f_1]_\sim =[q_1\cdot g_1]_\sim$}\\
    &=  \diagpX{p_1}{}; (f_1\piu\, (\diagpX{\frac{q_2}{1-p_1}}{};(g_2\piu\, \bangp{})) ) \tag{\ref{eq:diagp assoc} + \ref{eq:diagp symmetry}}     \\
    &=\, \langle f_1,g_2\rangle_{(p_1,q_2)}
\end{align*}
hence $[\langle g_1,g_2\rangle_{(q_1,q_2)}]_\sim=[\langle f_1,g_2\rangle_{(p_1,q_2)}]_\sim$.
\end{proof}

\begin{proof}[Proof of Proposition \ref{cor: quotient category is convex biproduct }]\label{proof-cor: quotient category is convex biproduct }
    Since $\sim$ is a congruence relation, $\CatTapeC_\sim$ is a monoidal category and $Q_\sim$ is a monoidal functor. The object $0$ is initial and terminal in $\CatTapeC_\sim$ since it is so in $\CatTapeC$. The diagram  $A\overset{[\iota_1^{A\piu B}]_\sim}{\rightarrow}A\piu B\overset{[\iota_2^{A\piu B}]_\sim}{\leftarrow} B$ is the coproduct of $A$ and $B$ since for every pair of arrows $[f]_\sim\colon A\to X$ and $[g]_\sim:B\to X$ there exists the arrow $[(f\piu g);\codiag{X}]_\sim$ that gives $[\iota_1^{A\piu B}]_\sim;[(f\piu g);\codiag{X}]_\sim=[f]_\sim$ and $[\iota_B]_\sim;[(f\piu g);\codiag{X}]_\sim=[g]_\sim$. Such arrow is the unique satisfying this property, indeed if $[h]\colon A\piu B\to X$ satisfies $[\iota_1^{A\piu B}]_\sim;[h]_\sim=[f]_\sim$ and $[\iota_2^{A\piu B}]_\sim;[h]_\sim=[g]_\sim$, then coproducts in $\CatTapeC$ imply that there exist $f',g'$ such that $h\coloneqq [f',g'];\codiag{X}$ and then
    \begin{align*}
[\iota_1^{A\piu B}]_\sim;[h]_\sim&=[\iota_1^{A\piu B}]_\sim;[[f',g'];\codiag{X}]_\sim \tag{$h\coloneqq [f',g'];\codiag{X}$}\\
&=[f']_\sim \tag{$\sim$ congruence relation}\\
&=[f]_\sim \tag{by hp.}
        \end{align*}
        similarly $[\iota_2^{A\piu B}]_\sim;[h]_\sim=[g']_\sim=[g]_\sim$. Hence, we obtain
        \begin{align*}
            [h]_\sim&=[[f',g'];\codiag{X}]_\sim \tag*{}\\
            &=[f']_\sim\piu [g']_\sim;[\codiag{X}]_\sim \tag{$\sim$ congruence relation}\\
            &=[f]_\sim\piu [g]_\sim;[\codiag{X}]_\sim \tag{above equalities}\\
            &=[(f\piu g);\codiag{X}]_\sim \tag{$\sim$ congruence relation}
        \end{align*}
    $\CatTapeC_\sim$ is also enriched over convex algebras. Indeed, for all $X,Y\in \CatTapeC_\sim$, one defines $[f]_\sim +_p [g]_\sim \defeq [f +_p g]_\sim$ for all $[f]_\sim,[g]_\sim \in \CatTapeC_\sim(X,Y)$ and $p\in [0,1]$. This is well defined since if $[f]_\sim=[f']_\sim$ and $[g]_\sim=[g']_\sim$ then
    \begin{align*}
        [f +_p g]_\sim &= [\diagpX{p}{X}; (f\piu g);\codiag{Y}]_\sim \tag{Thm.~\ref{thm:TCconvexbiproductcategory}}\\
        &=[\diagpX{p}{X}]_\sim;( [f]_\sim\piu [g]_\sim);[\codiag{Y}]_\sim \tag{$\sim$ congruence relation}\\
        &=[\diagpX{p}{X}]_\sim;( [f']_\sim\piu [g']_\sim);[\codiag{Y}]_\sim \tag{by $[f]_\sim=[f']_\sim$ and $[g]_\sim=[g']_\sim$}\\
        &=[f' +_p g']_\sim
        \end{align*}
     
While $\star_{X,Y}$ is just $[\star_{X,Y}]_\sim$. Axioms of PCA are satisfied since they are so in $\CatTapeC$ and $\sim$ is a congruence relation. Now we prove that $A\overset{[\pi_1]_\sim}{\leftarrow}A\piu B\overset{[\pi_2]_\sim}{\rightarrow} B$ is a binary convex product also in $\CatTapeC$. Indeed, given $[f]_\sim \colon X\to A$ and $[g]_\sim \colon X\to B$, $p_1,p_2\in [0,1]$ such that $p_1+p_2\leq 1$, we prove that there is a unique $[h]_\sim \colon X\to A\oplus B$ such that $[h]_\sim;[\pi_1]_\sim = p_1\cdot [f]_\sim$ and $[h]_\sim;[\pi_2]_\sim = p_2\cdot [g]_\sim$. Existence is provided by $h\defeq \langle f,g\rangle_{(p_1,p_2)}$, for uniqueness, suppose that there is also $[h']_\sim$ with the same property. Then, by Theorem~\ref{thm:TCconvexbiproductcategory}, $h'=\langle f',g'\rangle_{(q_1,q_2)}$ for some $f'\colon X\to A$, $g'\colon X\to B$ and $q_1,q_2\in [0,1]$ such that $q_1+q_2\leq 1$. Now, since $[h']_\sim;[\pi_1]_\sim = [q_1\cdot f']_\sim=[p_1\cdot f]_\sim$ and $[h']_\sim;[\pi_2]_\sim = [q_2\cdot g' ]_\sim = p_2\cdot [g]_\sim$, by Proposition~\ref{prop: convex products quoziente} we have that $[h']_\sim=[\langle f,g\rangle_{(p_1,p_2)}]_\sim$. Hence, $\langle [f],[g]\rangle_{(p_1,p_2)}\coloneqq[\langle f,g\rangle_{(p_1,p_2)}]$ is the unique arrow with the required property. Observe also that, by definition, $[f]_\sim +_p [g]_\sim = [\diagp{}]; ([f]_\sim \piu [g]_\sim); [\codiag{}]_\sim$. The fact that $Q_\sim$ is a convex biproduct functor because is PCA-enriched and preserves coproducts.
 \end{proof}

\section{Appendix to Section \ref{sec:probboolcircuits}}\label{app:sec:probboolcircuits}

\begin{proof}[Proof of Lemma \ref{lemma:booleans}]
By induction on $n\in \mathbb{N}$ we prove that
\begin{center}
for all $n,m\in\mathbb{N}$, for all tapes $\s,\t\colon A^n \to A^m$, if  for all $\bool\in \Cat{B}[1,A^n]$, $\bool;\s \eqsyn\bool; \t$, then $\s \eqsyn \t$.
\end{center}

For the base case $n=0$, there is only one $\bool\in \Cat{B}[1,A^0]$, which is $\id{\uno}$. Thus it is trivial.

Consider now the case $n+1$. Take $\s,\t\colon A^{n+1} \to A^m$ be two tapes such that for all $\bool\in \Cat{B}[1,A^{n+1}]$, 
\begin{equation}\label{eq:blablabla}\bool;\s \eqsyn\bool; \t\text{.}\end{equation} 
%Define $\s_0,\s_1,\t_0,\t_1 \colon A^n \to A^m$ as follows.
%\begin{equation*}
%\begin{array}{cc}
%\s_0\defeq ( \Flip{0}[t]\otimes \id{A^n}  ); \s & \s_1\defeq ( \Flip{1}[t]\otimes \id{A^n} ); \s \\
 %\t_0\defeq ( \Flip{0}[t]\otimes \id{A^n} ); \t & \t_1\defeq ( \Flip{1}[t]\otimes \id{A^n} ); \t
%\end{array}
%\end{equation*}
We have that for all $\bool' \in \Cat{B}[1,A^n] $
\begin{align*}
\bool'; \s_0  &=   \bool';( \Flip{0}[t]\otimes \id{A^n} ); \s  \tag{def. of $\s_0$}\\
&= (\Flip{0}[t]\otimes \bool'); \s \tag{Symmetric Monoidal Category}\\
&\eqsyn (\Flip{0}[t]\otimes \bool'); \t \tag{\ref{eq:blablabla}}\\
&= \bool';( \Flip{0}[t]\otimes \id{A^n} ); \t \tag{Symmetric Monoidal Category}\\
& =  \bool';\t_0   \tag{def. of $\t_0$}
\end{align*}
Since $\s_0,\t_0$ have type $A^n \to A^m$ we can apply our induction hypothesis to derive that $\s_0 \eqsyn \t_0$. With the symmetric argument, we have that $\s_1 \eqsyn \t_1$.
Now, it is enough to use the encoding of axiom (\ref{eq:natmultiplexer}) for tapes 
\begin{align*}
\s &\eqsyn(\, \id{A} \otimes (\ncopier[]; \s_1 \otimes \s_0)\,); \Ifgatem[] \tag{ax. (\ref{eq:natmultiplexer})} \\
& \eqsyn  (\, \id{A} \otimes (\ncopier[]; \t_1 \otimes \t_0)\,); \Ifgatem[] \tag{$\s_0 \eqsyn \t_0$, $\s_1 \eqsyn \t_1$}\\
& \eqsyn \t \tag{ax. (\ref{eq:natmultiplexer})}
\end{align*}
\end{proof}

\end{document}